\tikzset{curlybrace/.style={decoration=brace,decorate}}
\tikzset{trinode/.style={draw,triangle,minimum width=2.0cm}}
\tikzset{snake/.style={decorate,decoration=snake}}
\tikzset{curlybrace/.style={decoration=brace,decorate}}
\tikzset{triangle/.style={regular polygon,regular polygon sides=3}}
\tikzset{edge from parent path={(\tikzparentnode) -- (\tikzchildnode.north)}}
\newtheorem{observation}[definition]{Observation}
\newcommand{\cA}{\mathcal A}
\newcommand{\cB}{\mathcal B}
\newcommand{\cC}{\mathcal C}
\newcommand{\C}{\mathrm{C}}
\newcommand{\T}{\mathrm{T}}
\newcommand{\im}{\mathrm{im}}
\newcommand{\supp}{\mathrm{supp}}
\newcommand{\e}{\mathrm{e}} 
\DeclareMathOperator{\n}{-}
\newcommand{\Pol}{\mathrm{Pol}}
\newcommand{\sfPol}{\mathsf{Pol}}
\newcommand{\Rec}{\mathrm{Rec}}
\newcommand{\budRec}{\mathrm{bud}\mathord{\n}\mathrm{Rec}}
\newcommand{\budwta}{\mathrm{wta}}
\newcommand{\A}{\mathsf{A}}
\newcommand{\B}{\mathsf{B}}
\newcommand{\sfFtwo}{\mathsf{F}_2}
\newcommand{\sfM}{\mathsf{M}}
\newcommand{\Ratnum}{\mathsf{Rat}}
\newcommand{\St}{\mathsf{S}}
\newcommand{\sfT}{\mathsf{T}}
\newcommand{\V}{\mathsf{V}}
\newcommand{\0}{\mathbb{0}}
\newcommand{\1}{\mathbb{1}}
\newcommand{\h}{\mathrm{h}}
\newcommand{\sfh}{\mathsf{h}}
\newcommand{\height}{\mathrm{height}}
\newcommand{\rk}{\mathrm{rk}}
\newcommand{\sem}[1]{[\![#1]\!]}
\DeclareMathOperator*{\bigplus}{\scalerel*{+}{\sum}}
\newcommand{\nf}{\mathrm{nf}}
\newcommand{\state}{\mathrm{state}}
\newcommand{\id}{\mathrm{id}}
\newcommand{\ttop}{\mathrm{top}}
\tikzset{small circle/.style={circle, draw=black, inner sep=0pt,outer sep=0pt, minimum size=2.5pt}}
\newcommand{\sfMon}{\mathsf{Mon}}
\newcommand{\rmMon}{\mathrm{Mon}}
\newcommand{\mysucc}{\mathrm{succ}}
\title{Characterization of deterministically recognizable weighted tree languages over commutative semifields by 
  finitely generated and cancellative scalar algebras}
\date{\today}
\author{Zolt\'an F\"ul\"op\thanks{Project no TKP2021-NVA-09 has been implemented with the support provided by the Ministry of Culture and Innovation of Hungary from the National Research, Development and Innovation Fund, financed under the TKP2021-NVA funding scheme.}\corresponding\\Institute of Informatics\\University of Szeged\\ Hungary \and
Heiko Vogler\\Faculty of Computer Science\\Technische Universit\"at Dresden\\ Germany}
\begin{document}
\maketitle
\address{fulop@inf.u-szeged.hu}

\runninghead{Z. F\"ul\"op, H. Vogler}{Characterization of deterministically recognizable weighted tree languages}

\begin{abstract} Due to the works of S. Bozapalidis and A. Alexandrakis,
  there is a well-known characterization of  recognizable weighted tree languages over fields in terms of finite-dimensionality of syntactic vector spaces. Here we prove a characterization of bottom-up deterministically recognizable weighted tree languages over commutative semifields in terms of the requirement that the respective m-syntactic scalar algebras are finitely generated. The concept of scalar algebra is introduced in this paper; it is obtained from the concept of vector space by disregarding the addition of vectors.
  Moreover, we prove a minimization theorem for bottom-up-deterministic weighted tree automata and we construct the minimal automaton.
  \end{abstract}

  \section{Introduction}

  In formal language theory, the Myhill-Nerode theorem provides an algebraic characterization of the set of recognizable string languages:
  a language is recognizable if and only if  its syntactic congruence over the free monoid of all strings has finite index \cite{myh57,ner58} (cf. also \cite[Thm.~3.9]{hopull79}, \cite[Thm.~16.3]{koz97}). A corresponding characterization of the set of recognizable tree languages (the  Myhill-Nerode theorem for trees) was given in \cite{magmor70,ste92,koz92} and \cite[Thm.~2.7.1]{gecste97}:
  a tree language over a ranked alphabet $\Sigma$ is recognizable if and only if its syntactic congruence (or: Myhill-Nerode congruence) over the initial $\Sigma$-term algebra $\sfT_\Sigma$ has finite index.

  Later, also for weighted tree languages which are recognizable by weighted tree automata (wta) such an algebraic characterization was proved. A $(\Sigma,\B)$-weighted tree language is a mapping $r: \T_\Sigma \to B$  which associates to each tree over some ranked alphabet $\Sigma$  a value in an algebra  $\B=(B,\oplus,\otimes,\0,\1)$ of weights; this algebra has two binary operations (a summation $\oplus$ and a multiplication $\otimes$) with unit elements $\0$ and~$\1$ each.
  Weighted tree automata were investigated for several classes of weight algebras:  wta over the bounded lattice with  the real number interval $[0,1]$ and $\max$ and $\min$ as operations  \cite{inafuk75}, wta over fields \cite{berreu82},  wta over semirings \cite{aleboz87,esikui03,fulvog09new}, wta over pairs of t-conorm and t-norm on the unit interval $[0,1]$ \cite{bozlou10}, and wta over strong bimonoids \cite{rad10} (cf. \cite{fulvog24} for a survey). Moreover, wta over weight algebras in which the multiplication is generalized are wta over distributive multioperator monoids \cite{kui98,mal05a}, wta over multioperator monoids \cite{fulmalvog09,stuvogfue09,fulstuvog12}, and wta over tree valuation monoids \cite{drogoemaemei11,teiost15}. Weighted tree languages which are recognizable by wta are called recognizable weighted tree languages.

  The first generalization of the Myhill-Nerode theorem to $(\Sigma,\B)$-weighted tree languages, where $\B$ is a field,  was given in \cite{bozale89} and \cite{boz91} (cf. Theorem~\ref{th:MN-fields-det}).  There the algebra used for the characterization is a $(\Sigma,\B)$-vector space. A $(\Sigma,\B)$-vector space is a vector space over the field $\B$ \cite{axl24,mooyaq98} together with a $\Sigma$-algebra on its carrier set; the operations of the $\Sigma$-algebra are compatible with the scalar multiplication and the addition of the vector space (cf. e.g. \cite[Sec.~2.7.7]{fulvog24}). In \cite{bozale89, boz91} and \cite{fulste11} such algebras are called $\B$-$\Sigma$ algebras and $\B\Sigma$-algebras, respectively.
  The characterization result is as follows:  a $(\Sigma,\B)$-weighted tree language $r$ is recognizable if and only if the quotient of the initial $(\Sigma,\B)$-vector space of polynomial
weighted tree languages induced by the syntactic congruence $\approx_r$ of $r$ is finite-dimensional. We will call this theorem  ``B-A's theorem'' and we refer also to \cite[Thm.~18.4.1]{fulvog24} for an alternative proof.

Then, in  \cite{bor03,bor04b}, the  generalization of the Myhill-Nerode theorem to weighted tree languages  aimed at the characterization of the set of $(\Sigma,\B)$-weighted tree languages recognizable by bottom-up-deterministic wta, denoted by $\budRec(\Sigma,\B)$, where $\B$ is a commutative semifield. The algebra used for the characterization
is a $\Sigma$-algebra (as in the unweighted case). The characterization result says that a  $(\Sigma,\B)$-weighted tree language $r$ is bottom-up deterministically recognizable if and only if its Myhill-Nerode congruence $\equiv_r$ on the $\Sigma$-term algebra $\sfT_\Sigma$ has finite index, where $\equiv_r$ is a generalization of the corresponding concept in the unweighted case. In \cite[Cor.~3(a)]{bor03} it is shown how to define, for each bottom-up deterministic wta, an equivalent one which is minimal with respect to the number of states. For a similar characterization of bottom-up deterministically recognizable weighted tree languages, we also refer to \cite{mal08c} where the set of weight algebras was extended to  multiplicatively cancellative and commutative semirings (under some mild changes of the scenario).

As the first main result of our paper, we prove a B-A-like theorem also for the set $\budRec(\Sigma,\B)$, where $\B=(B,\oplus,\otimes,\0,\1)$ is a commutative semifield (cf. Theorem~\ref{thm:budRec-mono-det}). However, for the characterization we use the new concept of $(\Sigma,\B)$-scalar algebra. These algebras are in between $(\Sigma,\B)$-vector spaces used in  \cite{bozale89,boz91} and  $\Sigma$-algebras used in 
\cite{bor03,bor04b} and they are obtained as follows. First,  we introduce the concept of $\B$-scalar algebra,
which is obtained from the concept of $\B$-vector space by disregarding the addition of vectors.
Dropping the addition from $\B$-vector spaces is suggested by the more or less well known fact that, in the computation of the semantics of a bottom-up-deterministic $(\Sigma,\B)$-wta, the summation $\oplus$ of the weight algebra $\B$ does not play a role  (cf. Corollary~\ref{cor:budetwta-addition-irrelevant-det-new}). 

Formally, a \emph{$\B$-scalar algebra} is an algebra $(V,0)$ where $V$ is a set and $0 \in V$; moreover, there is a scalar multiplication $\cdot: B \times V \to V$ which satisfies,
 for every $b,b' \in B$ and $v \in V$, the axioms:
    \begin{eqnarray}
&(b \otimes b') \cdot v = b \cdot (b' \cdot v)  \\
&\1 \cdot v = v  \\
&      b \cdot 0 = \0 \cdot v = 0  \enspace.
    \end{eqnarray}
    Alternatively, a $\B$-scalar algebra might be considered as a left group action \cite{ker99,hum04} extended by an annihilating zero. Second, in analogy to $(\Sigma,\B)$-vector spaces, a \emph{$(\Sigma,\B)$-scalar algebra} is obtained by equipping a $\B$-scalar algebra  with a $\Sigma$-algebra where the operations are compatible with the scalar multiplication.

It seems that a $\B$-scalar algebra is a weak algebraic structure. However, the existence of a
pair-independent generating set $H$ of a cancellative $\B$-scalar algebra $(V,0)$ guarantees a valuable property, viz., that each $v \in V\setminus\{0\}$ can be decomposed in a unique way as $v=d \cdot u$ for some $d \in B\setminus\{\0\}$ and $u \in H\setminus\{0\}$ (cf. Lemma~\ref{lm:crucial-canc-pair-ind-imply-uniqueness-det}(2)). This is similar to the fact that, in a $\B$-vector space with basis $H$, each vector can be represented in a unique way as a linear combination of the base vectors.
And this analogy inspired us to call a pair-independent generating set of a cancellative $\B$-scalar algebra a \emph{scalar basis}.
Among others, this property makes $(\Sigma,\B)$-scalar algebras  adequate to formulate a deterministic version of  B-A's theorem.

For our characterization result, we consider the $(\Sigma,\B)$-scalar algebra $(\rmMon(\Sigma,\B),\widetilde{\0},\ttop)$ where $(\rmMon(\Sigma,\B),\widetilde{\0})$ is the $\B$-scalar algebra  of monomial $(\Sigma,\B)$-weighted tree languages, which we denote by $\sfMon(\Sigma,\B)$. Each such monomial is a mapping $r: \T_\Sigma \to B$ for which there exist a $b \in B$ and a $\Sigma$-tree~$\xi$ such that $r(\xi)=b$ and $r(\zeta)=\0$ for each $\Sigma$-tree $\zeta \ne \xi$; we denote this monomial by $b.\xi$. The zero $\widetilde{\0}$ is the constant weighted tree language which maps each $\Sigma$-tree to $\0$.
The interpretation  $\ttop$ assigns to each $k$-ary $\sigma \in \Sigma$ the usual top-concatenation $\ttop(\sigma)$ with $\sigma$. We show that $(\rmMon(\Sigma,\B),\widetilde{\0},\ttop)$ is initial in the set of all $(\Sigma,\B)$-scalar algebras (cf. Theorem~\ref{lm:Mon-SigmaB-initial-det}).
For a given weighted tree language $r: \T_\Sigma \to B$, we factorize $(\rmMon(\Sigma,\B),\widetilde{\0},\ttop)$ by the m-syntactic congruence  $\sim_r$ of $r$ which is defined by
\[
b_1.\xi_1 \sim_r b_2.\xi_2 \ \text{ if } \ \text{for each $\Sigma$-context $c$}:  \ b_1 \otimes r(c[\xi_1]) = b_2 \otimes r(c[\xi_2])\enspace,
  \]
  where $c[\xi_i]$ denotes the tree obtained by plugging $\xi_i$ into $c$. The m-syntactic congruence $\sim_r$ is the restriction of $\approx_r$ (used in B-A's theorem) to \underline{m}onomial $(\Sigma,\B)$-weighted tree languages, which is the reason for the prefix ``m-''. Our characterization theorem is as follows:

\

{\bf Theorem \ref{thm:budRec-mono-det}} Let $\Sigma$ be a ranked alphabet and $\B=(B,\oplus,\otimes,\0,\1)$ a commutative semifield. Moreover, let $r: \T_\Sigma \to B$. The following three statements are equivalent.
\begin{compactenum}
 \item[(A)] $r \in \budRec(\Sigma,\B)$.
  \item[(B)] There exists a congruence  $\sim$ on the $(\Sigma,\B)$-scalar algebra $(\rmMon(\Sigma,\B),\widetilde{\0},\ttop)$ such that  $\sim$ respects~$r$ and the $\B$-scalar algebra $\sfMon(\Sigma,\B)/_\sim$ is finitely generated and cancellative.
 \item[(C)] The $\B$-scalar algebra $\sfMon(\Sigma,\B)/_{\sim_r}$ is finitely generated.
\end{compactenum}

\

In our second main result we show how to construct, for each bottom-up-deterministic $(\Sigma,\B)$-wta, an equivalent bottom-up-deterministic $(\Sigma,\B)$-wta which is minimal with respect to the number of states (cf. Theorem~\ref{thm:minimization-theorem-new-2}). For the construction, we assume a reasonable decidability condition and a computability condition, both on $\sim_r$.

In Table~\ref{table:overview} we summarize the main ingredients of the mentioned characterizations of  (a)~\cite{bozale89,boz91}, (b) \cite{bor03,bor04b}, and (c) the present paper, where we abbreviate bottom-up-deterministic by bu-deterministic.
\begin{table}
  \centering
{\tiny
\begin{tabular}{l|c|c|c|}
 & (a) \cite{bozale89,boz91} & (b) \cite{bor03,bor04b} & (c) present paper \\[2mm]
\hline
weighted tree language $r$ &   & &  \\
recognized by: &  wta & bu-deterministic wta & bu-deterministic wta \\[2mm]
\hline
type of algebra used &   & &  \\
for the characterization: &   $(\Sigma,\B)$-vector space &  $\Sigma$-algebra & $(\Sigma,\B)$-scalar algebra\\[2mm]
\hline
congruence used &   & & \\                                            
for the characterization: &  syntactic congruence $\approx_r$ &  Myhill-Nerode congruence $\equiv_r$  &  m-syntactic congruence $\sim_r$\\[2mm]
\hline
the characterization  & the quotient of the initial  & the quotient of the initial &  the quotient of the initial \\
of recognizability: &   $(\Sigma,\B)$-vector space  by $\approx_r$ & $\Sigma$-term algebra $\sfT_\Sigma$ by $\equiv_r$ & $(\Sigma,\B)$-scalar algebra  by $\sim_r$\\
& is finite-dimensional & is finite  & is finitely generated
\end{tabular}
}
\caption{\label{table:overview} Summary of characterizations of  (a)~\cite{bozale89,boz91}, (b) \cite{bor03,bor04b}, and (c) the present paper, where we abbreviate bottom-up-deterministic by bu-deterministic.}
\end{table}
As can be seen, the main differences between the algebraic characterizations in  \cite{bor03,bor04b} and in the present paper are as follows: (i) in the choice of the algebra: we use $(\Sigma,\B)$-scalar algebras, and (ii) as the congruence used for characterization: we use the m-syntactic congruence $\sim_r$.
Therefore our results are rather in the style of \cite{bozale89,boz91}. It might be a benefit of our approach that $(\Sigma,\B)$-scalar algebras are very close to group actions, which are well known in algebra.

The paper is organized as follows. In Section \ref{sec:preliminaries-det}, we recall some notions from universal algebra, trees, weighted sets and weighted tree languages, and well-founded induction. In Section~\ref{sec:wta-det}, we recall the concept of wta and  we prove properties of bottom-up-deterministic wta.
For a better understanding of our characterization theorem, in Section~\ref{sect:B-A-BLs-result-det} we recall the original B-A's theorem (cf. Theorem~\ref{th:MN-fields-det}). In Section~\ref{subsect:mono-spaces-det}, we introduce the concepts of $\B$-scalar algebra and of $(\Sigma,\B)$-scalar algebra and we show some important properties of $\B$-scalar algebras. Moreover, we define the particular $(\Sigma,\B)$-scalar algebra of monomial weighted tree languages.
In Section~\ref{sec:syntactic-congruence-det}, we define the m-syntactic congruence $\sim_r$ for any weighted tree language~$r$ and show some of its important properties.
In Section~\ref{sec:result-det}, we prepare and prove our characterization theorem for bottom-up-deterministic wta over commutative semifields (cf. Theorem~\ref{thm:budRec-mono-det}),  we prove our minimization theorem (cf. Theorem~\ref{thm:minimization-theorem-new-2}), and we give a characterization of minimality (cf. Corollary~\ref{minimal-iff-degrees-equal}).

\section{Preliminaries}
\label{sec:preliminaries-det}

\paragraph{Basic notions.} We denote by $\mathbb{N}$ the set $\{0,1,2,\ldots\}$ of natural numbers; moreover, we let $\mathbb{N}_+= \mathbb{N} \setminus \{0\}$. For each $k \in \mathbb{N}$, we let $[k]= \{1,\ldots,k\}$. In particular, $[0]= \emptyset$.
    
Let $A$ and $B$ be sets. We denote by $|A|$ the cardinality of $A$. 
     We denote by $B^A$ the set of all mappings $f: A \to B$. For each mapping $f: A \to B$, we denote by $\im(f)$ the set $\{f(a)\mid a\in A\}$. For each $A'\subseteq A$ 
    the \emph{restriction of $f$ to $A'$} is the mapping $f|_{A'}: A' \to B$ defined by $f|_{A'}(a)=f(a)$ for each $a\in A'$.
     
    Let $k\in \mathbb{N}$ and $f: A^k \to A$ be a $k$-ary operation on $A$ and $A' \subseteq A$. We say that \emph{$A'$ is closed under $f$} if, for every $a_1,\ldots,a_k \in A'$, we  have $f(a_1,\ldots,a_k) \in A'$.
    We denote by $\mathrm{Ops}(A)$ the set of all (finitary) operations on $A$. The identity mapping on $A$ is denoted by~$\id_A$.
    
Let $A$ be a set and  $\rho\subseteq A\times A$  a  relation on $A$.
For every $a_1,a_2\in A$ we write $a_1\rho a_2$ to denote that $(a_1,a_2)\in\rho$. Let $\rho$ be an equivalence relation. For each $a\in A$, the \emph{equivalence class of $a$ (modulo~$\rho$)}, denoted by $[a]_\rho$, is the set $\{b\in A \mid a\rho b\}$. For each $A'\subseteq A$, we put $A'/_\rho =\{[a]_\rho \mid a\in A'\}$. We say that \emph{$\rho$ has finite index} if the set $A/_\rho$ is finite.
 If this is the case, then the number $|A/_\rho|$ is the \emph{index of $\rho$}.

\paragraph{Universal algebras.}
We assume that the reader is familiar with basic concepts of universal algebra, like subalgebra, congruence relation (for short: congruence), quotient algebra, homomorphism, kernel, and free algebra with a generating set \cite{gra68,gogthawagwri77,bursan81,wec92}. We only recall some notions.

  A {\em universal algebra} is a pair $\A=(A,\theta)$ where $A$ is a set and $\theta: I \to \mathrm{Ops}(A)$ for some set $I$.
  The \emph{type of $\A$} (or: \emph{signature of $\A$}) is the mapping $\tau: I \to \mathbb{N}$ such that, for each $i\in I$, the number $\tau(i)$ is  the arity of the operation $\theta(i)$. Then we also say that \emph{$\A$ is of type~$\tau$}. For each mapping $\tau: I \to \mathbb{N}$, we denote by~$\mathrm{K}(\tau)$ the \emph{set of all universal algebras of type $\tau$}.

  For the rest of this paragraph, let $\A=(A,\theta)$ be a universal algebra of type $\tau:I\to \mathbb{N}$.

Let $B \subseteq A$ and $\mathsf{B} = (B,\theta_B)$ be a universal algebra of type $\tau$. If, for every $i \in I$ and $b_1,\ldots,b_{\tau(i)}\in B$, the equality 
$\theta_{B}(i)(b_1,\ldots,b_{\tau(i)})=\theta(i)(b_1,\ldots,b_{\tau(i)})$
holds, then we say that $\mathsf{B}$ is a {\em subalgebra of $\A$}. In order to avoid complex subscripts, we denote $ (B,\theta_B)$ also by $(B,\theta)$.

For each $H \subseteq A$ and $O\subseteq \im(\theta)$, we denote by $\langle H \rangle_{O}$ the smallest subset of $A$ which contains $H$ and is closed under the operations in
$O$. We say that $H$ is a \emph{generating set of $\A$} (or: \emph{$H$ generates $\A$}) if $\langle H \rangle_{\im(\theta)}=A$. If this is the case and $H$ is finite, then $\A$ is \emph{finitely generated}.

Now let $I' \subseteq I$. Then the universal algebra $(A,\theta')$ with $\theta'=\theta|_{I'}$ is a \emph{reduct of $\A$}.

Free algebras and initial algebras play a fundamental role in our paper.
Let $\tau: I \to \mathbb{N}$ be a mapping and let  $\mathcal{K} \subseteq \mathrm{K}(\tau)$ be a set of universal algebras of type $\tau$. Moreover,  let $H \subseteq A$.
The algebra~$\A$ is called \emph{free in~$\mathcal{K}$ with generating set $H$} if the following conditions hold:
\begin{compactenum}
  \item[(a)] $\A \in \mathcal{K}$,
  \item[(b)] $H$ generates $\A$, and
\item[(c)] for every universal algebra $\A'=(A',\theta')$ in $\mathcal{K}$ and mapping $f\colon H \rightarrow A'$, there exists an extension of $f$ to an algebra homomorphism $h\colon A \rightarrow A'$ from $\A$ to $\A'$.
  \end{compactenum}
If the extension $h'$ in (c) exists, then it is unique (cf. e.g. \cite[Lm.~3.3.1]{baanip98} and \cite[Thm.~II.10.7]{bursan81}).
If $\A$ is free  in~$\mathcal{K}$ with generating set $H=\emptyset$, then, for each~$\A'=(A',\theta')$ in $\mathcal{K}$, there exists exactly one algebra homomorphism $h\colon A \rightarrow A'$  from $\A$ to $\A'$. In this case $\A$ is called \emph{initial in~$\mathcal{K}$}.

Let $\sim$ a congruence on $\A$. The \emph{quotient algebra of $\A$ (with respect to $\sim$)} is the universal algebra $\A/_\sim = (A/_\sim,\theta/_\sim)$ of type $\tau$  where $\theta/_\sim: I \to \mathrm{Ops}(A/_\sim)$ such that, for each $i \in I$ with $\tau(i)=k$ for some $k \in \mathbb{N}$, and for every $a_1,\ldots,a_k \in A$ we have   
\(
(\theta/\!_\sim)(i)([a_1]_\sim,\ldots,[a_k]_\sim) = [\theta(i)(a_1,\ldots,a_k)]_\sim
\).
Then we call the mapping $\pi_\sim: A \to A/_\sim$ with $\pi_\sim(a)=[a]_\sim$ for each $a \in A$ the \emph{canonical mapping}.

\begin{lemma} \label{thm:canonical-map-of-congr-is-hom} \rm \cite[\S~7, Lm.~2]{gra68} Let $\A=(A,\theta)$ be a universal algebra and $\sim$ a congruence on $\A$. The canonical mapping $\pi_\sim: A \to A/_\sim$ is an algebra homomorphism from $\A$ to $\A/\!_\sim$. 
  \end{lemma}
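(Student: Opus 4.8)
The plan is to unwind the definition of ``algebra homomorphism'' and the definition of the quotient algebra $\A/_\sim$, and then to observe that the required identity holds immediately by construction. Since $\A$ and $\A/_\sim$ have the same type $\tau\colon I \to \mathbb{N}$, showing that $\pi_\sim$ is an algebra homomorphism from $\A$ to $\A/_\sim$ amounts to verifying that, for every $i \in I$ with $\tau(i)=k$ and all $a_1,\dots,a_k \in A$,
\[
\pi_\sim\bigl(\theta(i)(a_1,\dots,a_k)\bigr) \;=\; (\theta/_\sim)(i)\bigl(\pi_\sim(a_1),\dots,\pi_\sim(a_k)\bigr).
\]

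First I would compute the left-hand side: by the definition of the canonical mapping, $\pi_\sim(\theta(i)(a_1,\dots,a_k)) = [\theta(i)(a_1,\dots,a_k)]_\sim$. Next I would compute the right-hand side: again $\pi_\sim(a_j) = [a_j]_\sim$ for each $j \in [k]$, and by the defining equation of the quotient operations, namely $(\theta/_\sim)(i)([a_1]_\sim,\dots,[a_k]_\sim) = [\theta(i)(a_1,\dots,a_k)]_\sim$, the right-hand side also equals $[\theta(i)(a_1,\dots,a_k)]_\sim$. Hence the two sides coincide, which is exactly the homomorphism property, and the proof is complete.

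There is essentially no obstacle here; the statement is a routine consequence of the definitions. The only point worth making explicit is where the hypothesis that $\sim$ is a \emph{congruence} is used: it is not needed for the displayed equation as such, but it is precisely what ensures that $\theta/_\sim$ is a well-defined family of operations on $A/_\sim$ (independent of the choice of representatives $a_1,\dots,a_k$), so that $\A/_\sim$ is a genuine algebra of type $\tau$ and the equation above even makes sense. This well-definedness is already built into the definition of $\A/_\sim$ recalled above, so once it is invoked, the homomorphism property follows at once.
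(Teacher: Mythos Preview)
Your proof is correct and is the standard argument. The paper does not give its own proof of this lemma; it simply cites it from Gr\"atzer's book, so there is nothing further to compare.
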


\begin{lemma}\rm\cite[§7, Lm.~3]{gra68} \label{lm:hom-image=subalgebra} Let $(A_1,\theta_1)$ and  $(A_2,\theta_2)$ be two universal algebras of the same type. Moreover, let $h: A_1 \rightarrow A_2$ be an algebra homomorphism. Then $\im(h)$ is closed under the operations in $\im(\theta_2)$. Hence $(\im(h),\theta_2|_{\im(h)})$ is a subalgebra of $(A_2,\theta_2)$.
\end{lemma}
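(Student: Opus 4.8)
The plan is to verify the statement directly against the definition of \emph{subalgebra} recalled in this paragraph; it is the textbook fact that a homomorphic image is a subalgebra. First I would fix an index $i \in I$, put $k = \tau(i)$, and abbreviate the corresponding $k$-ary operation on $A_2$ by $g = \theta_2(i)$. To show that $\im(h)$ is closed under $g$, take arbitrary $b_1,\ldots,b_k \in \im(h)$; by definition of the image there are $a_1,\ldots,a_k \in A_1$ with $h(a_j) = b_j$ for each $j \in [k]$.

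The key step is then to invoke the homomorphism property of $h\colon (A_1,\theta_1) \to (A_2,\theta_2)$, which yields $g(b_1,\ldots,b_k) = \theta_2(i)(h(a_1),\ldots,h(a_k)) = h(\theta_1(i)(a_1,\ldots,a_k))$. Since $\theta_1(i)(a_1,\ldots,a_k) \in A_1$, its $h$-value $g(b_1,\ldots,b_k)$ lies in $\im(h)$. As $i$ and $b_1,\ldots,b_k$ were arbitrary, $\im(h)$ is closed under every operation in $\im(\theta_2)$, which is the first assertion.

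For the ``hence'' clause one defines $\theta_2|_{\im(h)}$ by $(\theta_2|_{\im(h)})(i) = \theta_2(i)|_{\im(h)^{\tau(i)}}$ for every $i \in I$; the closure just established makes each such restriction a genuine operation on $\im(h)$, so $(\im(h),\theta_2|_{\im(h)})$ is a universal algebra of type $\tau$, and since its operations agree with those of $(A_2,\theta_2)$ on $\im(h)$ by construction, it is a subalgebra of $(A_2,\theta_2)$ in the sense defined above. I do not anticipate any genuine obstacle; the only point deserving a word of care is the nullary case $k = 0$, where the argument degenerates to $g = h(\theta_1(i)) \in \im(h)$, i.e.\ the distinguished elements $\theta_2(i)$ of $A_2$ with $\tau(i) = 0$ automatically belong to $\im(h)$.
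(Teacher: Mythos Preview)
Your proof is correct and is exactly the standard argument; the paper itself does not supply a proof for this lemma but merely cites it from \cite[\S 7, Lm.~3]{gra68}, so there is nothing further to compare.
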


\begin{lemma} \label{thm:congruence-restricted-to-subalgebra} \rm \cite[\S~7, Lm.~4]{gra68} and \cite[Lm.~6.17]{bursan81} Let $\A=(A,\theta)$ be a universal algebra and $\sim$ a congruence on $\A$. Moreover, let $\B=(B,\theta)$ be a subalgebra of $\A$. Then the relation $\sim\cap \,(B\times B)$ is a congruence on $\B$.
\end{lemma}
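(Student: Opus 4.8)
The plan is to verify directly the two defining properties of a congruence on $\B$ — that it is an equivalence relation on $B$ and that it is compatible with the operations of $\B$ — using that $\sim$ already enjoys these properties on $\A$ together with the fact, recalled in the definition of subalgebra above, that the operations of $\B$ are the restrictions to $B$ of those of $\A$ and that $B$ is closed under them.

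First I would check that $\sim \cap (B \times B)$ is an equivalence relation on $B$. Reflexivity on $B$ is immediate from reflexivity of $\sim$ on $A$, since $B \subseteq A$. Symmetry and transitivity are inherited verbatim: if $(b,b') \in \sim \cap (B\times B)$ then $(b',b) \in \sim$ and both components lie in $B$; likewise, if $(b,b'),(b',b'') \in \sim \cap (B\times B)$ then $(b,b'') \in \sim$ and $b,b'' \in B$. Hence $\sim \cap (B\times B)$ is an equivalence relation on $B$.

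Next I would check compatibility with the operations of $\B$. Fix $i \in I$ with $\tau(i)=k$ and take $b_1,\ldots,b_k,b_1',\ldots,b_k' \in B$ with $(b_j,b_j') \in \sim \cap (B\times B)$ for each $j \in [k]$; in particular $b_j \sim b_j'$ in $\A$. Because $\B$ is a subalgebra of $\A$, the elements $\theta(i)(b_1,\ldots,b_k)$ and $\theta(i)(b_1',\ldots,b_k')$ lie in $B$ and agree with the values $\theta_B(i)(b_1,\ldots,b_k)$ and $\theta_B(i)(b_1',\ldots,b_k')$ computed in $\B$. Since $\sim$ is a congruence on $\A$, applying $\theta(i)$ to the componentwise $\sim$-related tuples gives $\theta(i)(b_1,\ldots,b_k) \sim \theta(i)(b_1',\ldots,b_k')$. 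As both sides are in $B$, this says precisely that $(\theta_B(i)(b_1,\ldots,b_k),\theta_B(i)(b_1',\ldots,b_k')) \in \sim \cap (B\times B)$, which is the required compatibility.

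There is no genuine obstacle here: the argument is a routine transfer from $\A$ to $\B$. The only point that needs care is the bookkeeping that on arguments from $B$ the operations of the subalgebra $\B$ coincide with those of $\A$, and that $B$ is closed under them — both of which are built into the notion of subalgebra as recalled in the excerpt. Combining the two parts yields that $\sim \cap (B\times B)$ is a congruence on $\B$.
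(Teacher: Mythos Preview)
Your proof is correct and is the standard direct verification. The paper itself does not supply a proof of this lemma; it simply cites it from Gr\"atzer and Burris--Sankappanavar, so there is nothing to compare against beyond noting that your argument is exactly the routine one those references give.
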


Let $\A_1$ and $\A_2$ be universal algebras of the same type. If there exists a bijective algebra homomorphism from $\A_1$ to $\A_2$, then $\A_1$ and $\A_2$ are \emph{isomorphic}, for short: $\A_1 \cong \A_2$.

\begin{theorem}\label{thm:kernel-is-congruence} {\rm \cite[\S 7, Lm. 6]{gra68} and \cite[\S 11,~Thm.~1]{gra68}} Let $\A_1=(A_1,\theta_1)$ and $\A_2=(A_2,\theta_2)$ be universal algebras of the same type.
Moreover, let $h$ be an algebra homomorphism from $\A_1$ to $\A_2$.
Then the kernel of $h$, i.e., the set $\mathrm{ker}(h)=\{(a,a') \in A_1 \times A_2 \mid h(a)=h(a')\}$, is a congruence relation on $\A_1$. Moreover, if $h$ is surjective, then $\A_1/_{\ker(h)} \cong \A_2$.
\end{theorem}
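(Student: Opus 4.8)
The plan is to treat the two assertions separately: first establish that $\ker(h)$ is a congruence on $\A_1$, and then, under the surjectivity hypothesis, exhibit an explicit isomorphism $\A_1/_{\ker(h)} \cong \A_2$.

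For the first part, observe that $\ker(h)$ is the preimage under $h\times h$ of the diagonal $\{(b,b)\mid b\in A_2\}$ of $A_2$, hence it is an equivalence relation on $A_1$ (reflexive, symmetric and transitive, since the diagonal is and preimages of equivalence relations are equivalence relations). It remains to verify compatibility with the operations. Fix $i\in I$ with $\tau(i)=k$ and let $(a_1,a_1'),\ldots,(a_k,a_k')\in\ker(h)$, so that $h(a_j)=h(a_j')$ for all $j\in[k]$. Since $h$ is an algebra homomorphism,
\[
h\bigl(\theta_1(i)(a_1,\ldots,a_k)\bigr)=\theta_2(i)\bigl(h(a_1),\ldots,h(a_k)\bigr)=\theta_2(i)\bigl(h(a_1'),\ldots,h(a_k')\bigr)=h\bigl(\theta_1(i)(a_1',\ldots,a_k')\bigr),
\]
whence $\bigl(\theta_1(i)(a_1,\ldots,a_k),\theta_1(i)(a_1',\ldots,a_k')\bigr)\in\ker(h)$. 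Therefore $\ker(h)$ is a congruence on $\A_1$, and the quotient algebra $\A_1/_{\ker(h)}$ is well defined in the sense recalled above.

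For the second part, assume $h$ is surjective and define $g\colon A_1/_{\ker(h)}\to A_2$ by $g\bigl([a]_{\ker(h)}\bigr)=h(a)$. This is well defined, since $[a]_{\ker(h)}=[a']_{\ker(h)}$ means $(a,a')\in\ker(h)$, i.e.\ $h(a)=h(a')$; reading the same equivalence in the opposite direction shows $g$ is injective, and $g$ is surjective because $h$ is. Finally, using the defining equation of $\theta_1/_{\ker(h)}$ via representatives together with the homomorphism property of $h$, for each $i\in I$ with $\tau(i)=k$ and all $a_1,\ldots,a_k\in A_1$ we obtain
\[
g\bigl((\theta_1/_{\ker(h)})(i)([a_1]_{\ker(h)},\ldots,[a_k]_{\ker(h)})\bigr)=g\bigl([\theta_1(i)(a_1,\ldots,a_k)]_{\ker(h)}\bigr)=h\bigl(\theta_1(i)(a_1,\ldots,a_k)\bigr),
\]
and the right-hand side equals $\theta_2(i)\bigl(h(a_1),\ldots,h(a_k)\bigr)=\theta_2(i)\bigl(g([a_1]_{\ker(h)}),\ldots,g([a_k]_{\ker(h)})\bigr)$. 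Thus $g$ is a bijective algebra homomorphism from $\A_1/_{\ker(h)}$ to $\A_2$, which yields $\A_1/_{\ker(h)}\cong\A_2$.

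I do not anticipate a genuine obstacle: the statement is the homomorphism (first isomorphism) theorem for universal algebras, and every step is a routine verification. The only points deserving a little care are the observation that $\ker(h)$, as a diagonal preimage, is an equivalence relation, and the well-definedness of $g$ — both of which reduce immediately to the definition of $\ker(h)$. One could alternatively simply cite \cite[\S~7, Lm.~6]{gra68} and \cite[\S~11,~Thm.~1]{gra68}, as indicated in the statement.
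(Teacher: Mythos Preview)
Your proof is correct and entirely standard. The paper does not actually give a proof of this theorem: it is stated as a citation of \cite[\S~7, Lm.~6]{gra68} and \cite[\S~11, Thm.~1]{gra68} and used as a black box, so there is nothing to compare against beyond noting that your argument is precisely the classical one those references contain. (Incidentally, the statement as printed has a typo---$\ker(h)$ should be a subset of $A_1\times A_1$, not $A_1\times A_2$---which you implicitly corrected.)
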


\begin{corollary} \label{cor:image-of-hom-isomorphic-to-quotient-of-kernel}\rm Let $\A_1=(A_1,\theta_1)$ and $\A_2=(A_2,\theta_2)$ be universal algebras of the same type. Moreover, let $h$ be an algebra homomorphism from $\A_1$ to $\A_2$. Then $\A_1/_{\ker(h)} \cong (\im(h),\theta_2|_{\im(h)})$.
\end{corollary}
\begin{proof} By Lemma \ref{lm:hom-image=subalgebra}, $(\im(h),\theta_2|_{\im(h)})$ is an algebra of the same type as $\A_2$. Then $h': A_1 \to \im(h)$ defined, for each $a \in A_1$ by $h'(a)=h(a)$ is a homomorphism from $\A_1$ to $(\im(h),\theta_2|_{\im(h)})$. Clearly, $\ker(h)=\ker(h')$. Since $h'$ is surjective, by Theorem \ref{thm:kernel-is-congruence}, we obtain that $\A_1/_{\ker(h)} \cong (\im(h),\theta_2|_{\im(h)})$.
  \end{proof}

Let $\sim$ and $\approx$ be equivalence relations on $A$ such that $\sim \subseteq \approx$. We define the relation $\approx\!/_\sim$ on $A/_\sim$ such that, for every $[a_1]_\sim,[a_2]_\sim\in A/_\sim$, we let $[a_1]_\sim \approx\!/_\sim \,[a_2]_\sim$ if $a_1\approx a_2$. Clearly, the relation $\approx\!/_\sim$ is well defined because $\sim \subseteq \approx$ and it is an equivalence relation.

\begin{theorem}\label{thm:snd-isom-theorem} {\rm \cite[\S 11, Lm.~2]{gra68} and \cite[\S 11, Thm.~4]{gra68}}  
  Let $\A$ be a universal algebra. Moreover, let $\sim$ and $\approx$ be congruences on $\A$ such that $\sim \subseteq \approx$. Then the following two statements hold.
  \begin{compactenum}
  \item[(1)] The relation $\approx\!/_\sim$ is a congruence on $\A/_\sim$.
  \item[(2)] $(\A/_\sim)/_{(\approx/_\sim)} \cong \A/_\approx$.
    \end{compactenum}
\end{theorem}

\paragraph{Commutative semifields and fields.}
    A \emph{commutative semifield } is an algebra  $\B=(B,\oplus,\otimes,\mathbb{0},\mathbb{1})$ where $(B,\oplus,\mathbb{0})$ is a commutative monoid and $(B\setminus\{\0\},\otimes,\mathbb{1})$ is an Abelian group, $\otimes$ distributes over $\oplus$ (from both sides), and  $\mathbb{0}$ is an annihilator for $\otimes$, i.e.,  $b \otimes \mathbb{0} = \mathbb{0} \otimes b= \mathbb{0}$ holds for every $b \in B$. As usual, $\otimes$ has higher binding priority than $\oplus$. 
 It is easy to see that each commutative semifield is \emph{zero-divisor free}, i.e., $a\otimes b =\0$ implies that $a=\0$ or $b=\0$, for every $a,b\in B$.
    A \emph{field} is a commutative semifield $\B=(B,\oplus,\otimes,\mathbb{0},\mathbb{1})$ in which $(B,\oplus,\0)$ is an Abelian group.
For example, $\Ratnum = (\mathbb{Q},+,\cdot,0,1)$ is the field of rational numbers.

  \paragraph{Vector spaces.}  We assume that the reader is familiar with basic concepts concerning vector spaces like linear mappings, linear dependency, linear independency,  and basis \cite{axl24,mooyaq98}.  We only recall some notions. Let $\B=(B,\oplus,\otimes,\0,\1)$ be a field. A \emph{$\B$-vector space} is a commutative group $(V,+,0)$ together with a mapping  $\cdot: B\times V \rightarrow V$, called \emph{scalar multiplication}. The scalar multiplication satisfies certain algebraic laws on the compatibility with the operations $\oplus$ and $\otimes$ of $\B$ (cf. \cite[Def.~1.12]{axl24} and \cite[p.~319 and Thm.~5.1]{mooyaq98}).

  Let $\V=(V,+,0)$ and $\V'=(V',+',0')$ be $\B$-vector spaces. A mapping $f: V\to V'$ is \emph{linear (from $\V$ to $\V'$)} if, for every $b_1,b_2 \in B$ and $u,v\in V$, the equation $f(b_1\cdot u+b_2\cdot v)=b_1 \cdot f(u)+'b_2 \cdot f(v)$
holds. The $\B$-vector spaces $\V$ and $\V'$  are  \emph{isomorphic} if there exists a bijective linear mapping from $\V$ to $\V'$.
A {\em linear form (over $\V$)} is a linear mapping from the $\B$-vector space $\V=(V,+,0)$ to the $\B$-vector space  $(B,\oplus,\0)$.  We note that, by viewing $\B$-vector spaces as universal algebras, linear mappings and linear forms are nothing else but algebra homomorphisms.

    \paragraph{Ranked alphabets, trees, and contexts.}
 A {\em ranked alphabet} is a pair $(\Sigma,\rk)$ where $\Sigma$ is a non-empty and finite set and $\rk: \Sigma \rightarrow \mathbb{N}$ is a mapping, called \emph{rank  mapping},  such that $\rk^{-1}(0)\not= \emptyset$. For each $k\in \mathbb{N}$, we denote the set $\rk^{-1}(k)$ by $\Sigma^{(k)}$. Sometimes we write $\sigma^{(k)}$ to indicate that $\sigma \in \Sigma^{(k)}$. Moreover, we abbreviate $(\Sigma,\rk)$ by $\Sigma$ and assume that the rank mapping is known or irrelevant.

 Let $\Sigma$ be a ranked alphabet and let  $X$ be a set disjoint with $\Sigma$.
     The set of \emph{$\Sigma$-terms (or $\Sigma$-trees) over $X$, } denoted by $\T_\Sigma(X)$, is the smallest set $T$ such that (i) $\Sigma^{(0)} \cup X \subseteq T$ and (ii) for every $k \in \mathbb{N}_+$, $\sigma \in \Sigma^{(k)}$, and $\xi_1,\ldots, \xi_k \in T$, we have $\sigma(\xi_1,\ldots,\xi_k) \in T$. We abbreviate $\T_\Sigma(\emptyset)$ by $\T_\Sigma$. Due to the condition $\Sigma^{(0)}\ne\emptyset$ we have $\T_\Sigma\ne \emptyset$.

Obviously, for each $\xi \in \T_\Sigma$ there exist unique $k\in \mathbb{N}$, $\sigma\in\Sigma^{(k)}$, and $\xi_1,\ldots,\xi_k \in \T_\Sigma$ such that $\xi=\sigma(\xi_1,\ldots,\xi_k)$. (In case $k=0$ we identify $\sigma()$ and $\sigma$.) In order to avoid the repetition of all these quantifications, we henceforth only write ``$\xi=\sigma(\xi_1,\ldots,\xi_k)$'' when we consider a tree in $\T_\Sigma$ with the above mentioned quantifications.

The mapping $\height: \T_\Sigma \to \mathbb{N}$ is defined by induction as usual by $\height(\alpha) = 0$ for each $\alpha \in \Sigma^{(0)}$ and $\height(\sigma(\xi_1,\ldots,\xi_k)) = 1 + \max( \height(\xi_i) \mid i \in [k])$ for each $k \in \mathbb{N}_+$, $\sigma \in \Sigma^{(k)}$, and $\xi_1,\ldots,\xi_k \in \T_\Sigma$.

Let $z\not\in \Sigma$ be a symbol. We will consider $z$ as a variable.
  We denote by $\C_\Sigma$ the set of all trees in $\T_\Sigma(\{z\})$ in which $z$ occurs exactly once. 
  The elements of $\C_\Sigma$ are called \emph{$\Sigma$-contexts} (or, simply, \emph{contexts}). In particular, $z$ is a context. A context $c \in \C_\Sigma$ is called \emph{elementary} if there exist $k \in \mathbb{N}_+$, $\sigma \in \Sigma^{(k)}$, $i \in [k]$, and $\xi_1,\ldots,\xi_{i-1},\xi_{i+1},\ldots,\xi_k \in \T_\Sigma$ such that
$e =  \sigma(\xi_1,\ldots,\xi_{i-1},z,\xi_{i+1},\ldots,\xi_k)$.
We denote by $\e\C_\Sigma$ the set of all elementary $\Sigma$-contexts.
Similarly to the specification of a  tree, we drop obvious quantifications and specify  an elementary context by writing only
``$e \in \e\C_\Sigma$ with $e=\sigma(\xi_1,\ldots,\xi_{i-1},z,\xi_{i+1},\ldots,\xi_k)$.''

For every context $c \in \C_\Sigma$ and tree $\xi \in \T_\Sigma(\{z\})$, we denote by $c[\xi]$ the tree obtained by substituting the  only occurrence of $z$ in $c$ by $\xi$. For every $c_1,c_2 \in \C_\Sigma$, we denote $c_1[c_2]$ also by $c_1 \circ_z c_2$. Since $c_1\circ_z c_2 \in \C_\Sigma$, we think of $\circ_z$ as a binary operation on $\C_\Sigma$. In fact, $\circ_z$ is associative and $(\C_\Sigma,\circ_z,z)$ is a monoid.

\begin{lemma}\rm \cite[Prop.~9.1]{berreu82} (also cf. \cite[Lm.~6.2.1]{fulvog24} \label{lm:freely-generated-context-det}  The monoid $(\C_\Sigma,\circ_z,z)$ is free in the set of all monoids with generating set $\e\C_\Sigma$.
\end{lemma}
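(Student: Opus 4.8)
The plan is to show that $(\C_\Sigma,\circ_z,z)$ satisfies the three defining conditions of a free algebra (applied here to the variety of monoids, i.e.\ associative binary structures with a two-sided unit): namely that it is a monoid, that $\e\C_\Sigma$ generates it, and that every map from $\e\C_\Sigma$ into any monoid extends to a monoid homomorphism. The first condition is already recorded in the text: $\circ_z$ is associative and $z$ is its unit, so $(\C_\Sigma,\circ_z,z)$ is a monoid. The second condition is the combinatorial heart of the statement and I would prove it by induction on the structure (equivalently, on the height) of a context $c\in\C_\Sigma$. If $c=z$, then $c$ is the empty product in the monoid generated by $\e\C_\Sigma$. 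Otherwise $c=\sigma(\xi_1,\ldots,\xi_{i-1},c',\xi_{i+1},\ldots,\xi_k)$ where $c'\in\C_\Sigma$ is the unique subcontext containing $z$; setting $e=\sigma(\xi_1,\ldots,\xi_{i-1},z,\xi_{i+1},\ldots,\xi_k)\in\e\C_\Sigma$ we get $c=e\circ_z c'$, and $c'$ lies in $\langle\e\C_\Sigma\rangle$ by the induction hypothesis, so $c$ does too.

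For the universal property, let $\sfM=(M,\cdot,1)$ be any monoid and let $f\colon\e\C_\Sigma\to M$ be a map. I would define $h\colon\C_\Sigma\to M$ by the same structural recursion: $h(z)=1$, and for $c=e\circ_z c'$ with $e\in\e\C_\Sigma$ as above, $h(c)=f(e)\cdot h(c')$. This is well defined because the decomposition $c=e\circ_z c'$ is unique — the outermost symbol of $c$ together with the position of the argument containing $z$ determines $e$, and then $c'$ is determined as well. Clearly $h$ extends $f$: for $e\in\e\C_\Sigma$ we have $e=e\circ_z z$, hence $h(e)=f(e)\cdot h(z)=f(e)\cdot 1=f(e)$. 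To see that $h$ is a monoid homomorphism, I would prove $h(c_1\circ_z c_2)=h(c_1)\cdot h(c_2)$ by induction on $c_1$: if $c_1=z$ this reads $h(c_2)=1\cdot h(c_2)$, and if $c_1=e\circ_z c_1'$ then $c_1\circ_z c_2=e\circ_z(c_1'\circ_z c_2)$ by associativity of $\circ_z$, so $h(c_1\circ_z c_2)=f(e)\cdot h(c_1'\circ_z c_2)=f(e)\cdot h(c_1')\cdot h(c_2)=h(c_1)\cdot h(c_2)$ using the induction hypothesis; and $h(z)=1$ by definition.

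Finally, uniqueness of the extension is automatic (cf.\ the remark after condition (c) in the preliminaries: any two extensions agree on the generating set $\e\C_\Sigma$ and hence everywhere), but it also falls out of the recursion: any homomorphism extending $f$ must send $z$ to $1$ and must satisfy $h(e\circ_z c')=f(e)\cdot h(c')$, which pins it down on all of $\C_\Sigma$ by induction on structure. Altogether the three conditions hold, so $(\C_\Sigma,\circ_z,z)$ is free in the class of all monoids with generating set $\e\C_\Sigma$.

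The only genuinely delicate point — the ``main obstacle'', such as it is — is the unique-readability of contexts: one must be careful that every non-trivial context $c$ has exactly one factorization $c=e\circ_z c'$ with $e\in\e\C_\Sigma$, so that the recursive definitions of $h$ are unambiguous. This follows from unique readability of $\Sigma$-trees (every $\xi\in\T_\Sigma(\{z\})$ has a unique outermost symbol and a unique list of immediate subtrees) together with the defining property of $\C_\Sigma$ that $z$ occurs exactly once, which forces $z$ to sit inside exactly one of those immediate subtrees. I would state this unique-decomposition fact as a short preliminary observation and then let the two inductions above run without friction.
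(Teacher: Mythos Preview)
Your proof is correct and is precisely the standard argument; the paper itself does not prove this lemma but simply cites \cite[Prop.~9.1]{berreu82} and \cite[Lm.~6.2.1]{fulvog24}, so there is no in-paper proof to compare against. Your identification of unique readability of the decomposition $c=e\circ_z c'$ as the crux is exactly right, and the two inductions you set up are clean and complete.
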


\paragraph{$\Sigma$-algebras.} A \emph{$\Sigma$-algebra} is a universal algebra $(A,\theta)$ where $\theta: \Sigma \to \mathrm{Ops}(A)$ and, for every $k \in \mathbb{N}$ and $\sigma \in \Sigma^{(k)}$, the operation $\theta(\sigma)$ is $k$-ary. The \emph{$\Sigma$-term algebra} is the $\Sigma$-algebra $\sfT_\Sigma=(\T_\Sigma,\ttop_\Sigma)$ where,
for every $k \in \mathbb{N}$, $\sigma \in \Sigma^{(k)}$, and $\xi_1,\ldots,\xi_k \in \T_\Sigma$, we have $\ttop_\Sigma(\sigma)(\xi_1,\ldots,\xi_k) = \sigma(\xi_1,\ldots,\xi_k)$. 

\begin{theorem}\label{thm:initial-iso-det} {\rm \cite[Prop.~2.3]{gogthawagwri77} and \cite[Thm. 1.3.11]{gecste84}} The $\Sigma$-term algebra $\sfT_\Sigma$ is  initial in the set of all $\Sigma$-algebras.
\end{theorem}

\paragraph{Weighted  sets and weighted tree languages.}\label{subsect:weighted-sets-and-operations-det}
Let $\B$ be a commutative semifield and $A$ be a set.  We call each mapping $r: A \to B$ a \emph{$\B$-weighted set}. The {\em support of $r$}, denoted by $\supp_\B(r)$ (for short: $\supp(r)$), is defined by $\supp_\B(r) = \{a \in A \, | \, r(a) \not= \mathbb{0}\}$. 

Let $Q$ be a finite set. We call the elements of the weighted set $B^Q$ \emph{vectors}. For every vector $v\in B^Q$ and $q\in Q$, we denote $v(q)$ also by $v_q$. We denote by $\0^Q$ the vector which maps each $q\in Q$ to $\0$, i.e., $v_q=\0$ for each each $q\in Q$. Moreover, for every $q\in Q$ and $b \in B$, we denote by $b_q$ the vector in $B^Q$ defined for each $p \in Q$ by $(b_q)_p=b$ if $p=q$ and $\0$ otherwise. Hence $\0_q=\0^Q$ for each $q\in Q$. We call  $\1_q$ the \emph{$q$-unit vector}.

We define the mapping $\cdot : B\times B^Q \to B^Q$, called \emph{scalar multiplication} (of vectors) such that, for every $b\in B$, 
$v\in B^Q$, and $q\in Q$, we let $(b\cdot v)_q=b\otimes v_q$.

A weighted set of the form $r:\T_\Sigma \to B$ is called a \emph{$(\Sigma,\B)$-weighted tree language}. We call $r$ 
\begin{compactitem}
\item \emph{polynomial} if $\supp(r)$ is finite.
\item \emph{monomial} if $\supp(r) \subseteq \{\xi\}$ for some $\xi \in \T_\Sigma$; then we denote $r$ also by $r(\xi).\xi$.
\item \emph{constant} if there exists a $b \in B$ such that, for every $\xi \in \T_\Sigma$, we have $r(\xi) = b$; then we denote $r$ by $\widetilde{b}$.
\end{compactitem}
We note that $\widetilde{\0}$ is a monomial because $\supp(\widetilde{\0})=\emptyset$ and $\T_\Sigma\ne \emptyset$. Moreover,  $\widetilde{\0} = \0.\xi$ for each $\xi \in \T_\Sigma$.

\paragraph{Well-founded induction.}

A \emph{reduction system} \cite{hue80,baanip98} is a pair $(A,\succ)$ where $A$ is a set and $\succ \subseteq A \times A$. A reduction system $(A,\succ)$ is \emph{terminating} if there does not exist an $\mathbb{N}$-indexed family $(a_i \mid i\in \mathbb{N})$ over $A$ such that, for each $i \in \mathbb{N}$, the condition $a_i \succ a_{i+1}$ holds. 

Let $(A,\succ)$ be a terminating reduction system.  Moreover, let $P \subseteq A$ be a subset. We will abbreviate the fact that $a \in P$ by $P(a)$. Then the following holds:
\begin{equation}
\Big( (\forall a \in A): \ [(\forall a' \in A): (a \succ a') \Rightarrow P(a')] \Rightarrow P(a)\Big) \Rightarrow \Big((\forall a \in A): P(a)\Big)\label{equ:well-founded-induction-det} \enspace.
\end{equation}
The formula \eqref{equ:well-founded-induction-det} is called the principle of \emph{proof by well-founded induction on $(A,\succ)$} (for short: \emph{proof by induction on $(A,\succ)$}) \cite{baanip98}.
Often, the proof of the premise of \eqref{equ:well-founded-induction-det} is split into two parts. 
The first part is the \emph{induction base} (for short: I.B.):
\begin{equation*}
(\forall a \in \nf_{\succ}(A)):  P(a) \label{equ:wfi-base-det}
\end{equation*}
where $\nf_\succ(A) = \{a \in A \mid \neg (\exists b \in A): a \succ b\}$ is the set of \emph{normal forms of $\succ$}.
The second part is the  \emph{induction step} (for short: I.S.):
\begin{equation*}\label{eq:wfi-step-det}
\Big( (\forall a \in A \setminus  \nf_{\succ}(A)): \ [(\forall a' \in A): (a \succ a') \Rightarrow P(a')] \Rightarrow P(a)\Big) \enspace.
\end{equation*}
Usually, the subformula $(\forall a' \in A): (a \succ a') \Rightarrow P(a')$ is called \emph{induction hypothesis} (for short: I.H.).

For instance, for each ranked alphabet $\Sigma$, the reduction system $(\T_\Sigma,\succ_\Sigma)$ is terminating where $\succ_\Sigma$ is defined as follows. For every $\xi,\xi' \in \T_\Sigma$, we let $\xi \succ_\Sigma \xi'$ if $\xi$ has the form $\sigma(\xi_1,\ldots,\xi_k)$ and $\xi' = \xi_i$ for some $i \in [k]$. Then $\nf_{\succ_\Sigma}(\T_\Sigma) = \Sigma^{(0)}$. We abbreviate the notion ``proof by well-founded induction on $(\T_\Sigma,\succ_\Sigma)$'' by ``proof by induction on $\T_\Sigma$''. 

As another example, the reduction system $(\C_\Sigma,\succ_{\C_\Sigma})$ is terminating where, for every $c_1, c_2 \in \C_\Sigma$, we let $c_1 \succ_{\C_\Sigma} c_2$ if there exists an elementary context $e =  \sigma(\xi_1,\ldots,\xi_{i-1},z,\xi_{i+1},\ldots,\xi_k)$ such that $c_1 = e[c_2]$. Then $\nf_{\succ_{\C_\Sigma}}(\C_\Sigma) = \{z\}$.

\paragraph{Conventions.}
In the rest of this paper $\B=(B,\oplus,\otimes,\0,\1)$ denotes an arbitrary commutative semifield, if not specified otherwise, and we let $B^{-\0}$ denote the set $B \setminus\{\0\}$. Moreover, we let $\Sigma$ denote an arbitrary ranked alphabet, if not specified otherwise.
  
Whenever we write that we can construct an object, like a weighted tree automaton (cf., e.g., Lemma \ref{lm:wta(r,sim,H)}(3), Theorems \ref{lm:construction-of-simple-det} and \ref{thm:minimization-theorem-new-2}), we assume that the input objects of the construction are effectively given and we can compute freely with them, in particular, with the operations of the weight algebra $\B=(B,\oplus,\otimes,\0,\1)$. Also, we assume that we can decide equality of elements in $B$.


\section{Weighted tree automata}
\label{sec:wta-det}

In this section, we recall the definition of weighted tree automaton from \cite{fulvog24}.
 For each wta $\cA$, we define the transformation monoid of $\cA$, which allows to express in a succinct manner the image of trees of the form $c[\xi]$ for a context $c$ and a tree $\xi$ (cf. Lemma~\ref{lm:hcACchAxi=hAcxi-det}). In Lemma~\ref{lm:properties-hA-of-budet-wta-det-new} we show that the semantics of a bottom-up-deterministic wta can be expressed without the use of the addition of the underlying weight algebra.

\subsection{The basic model}
\label{subs:basic-model-det}

A \emph{weighted tree automaton over $\Sigma$ and~$\B$} (or: $(\Sigma,\B)$-wta, or: wta) is a triple $\cA = (Q,\delta,F)$ where 
\begin{compactitem}
\item $Q$ is a finite nonempty set (\emph{states}) such that $Q \cap \Sigma = \emptyset$,
\item $\delta=(\delta_k\mid k\in\mathbb{N})$ is a family of mappings $\delta_k: Q^k\times \Sigma^{(k)}\times Q \to B$ (\emph{transition mappings})  where we consider $Q^k$ as the set of strings over $Q$ of length $k$, and 
\item $F: Q \rightarrow B$ is a mapping (\emph{root weight vector}).
\end{compactitem}
For each $k \in \mathbb{N}$, we view $Q^k$ as set of strings over $Q$ of length $k$; thus, in particular, $Q^0 = \{\varepsilon\}$ and $\varepsilon$ denotes the empty string.
We say that $\cA$ is
\begin{compactitem}
  \item \emph{bottom-up-deterministic} (for short: bu-deterministic) if for every $k \in \mathbb{N}$, $\sigma \in \Sigma^{(k)}$, and $w \in Q^k$ there exists at most one $q \in Q$ such that $\delta_k(w,\sigma,q) \not = \mathbb{0}$, and
\item  \emph{total} if for every $k \in \mathbb{N}$, $\sigma \in \Sigma^{(k)}$, and $w \in Q^k$ there exists at least one state $q$ such that $\delta_k(w,\sigma,q) \not= \mathbb{0}$.
\end{compactitem}

We recall the initial algebra semantics of a $(\Sigma,\B)$-wta $\cA = (Q,\delta,F)$ .
The {\em vector algebra of $\cA$} is the $\Sigma$-algebra $\V(\cA)=(B^Q,\delta_\cA)$ where, for every $k \in \mathbb{N}$ and $\sigma \in \Sigma^{(k)}$, the $k$-ary operation $\delta_\cA(\sigma): B^Q \times \cdots \times B^Q \to B^Q$ is defined, for every $v_1,\ldots,v_k \in B^Q$ and $q \in Q$, by
\[
\delta_\cA(\sigma)(v_1,\dots,v_k)_q 
  = \bigoplus_{q_1\cdots q_k \in Q^k} \Big(\bigotimes_{i\in[k]} (v_i)_{q_i}\Big) \otimes \delta_k(q_1\cdots q_k,\sigma,q) \enspace.
\]
Thus, in particular, for $\sigma \in \Sigma^{(0)}$, we have $\delta_\cA(\sigma)()_q= \delta_0(\varepsilon,\sigma,q)$, because $Q^0 = \{\varepsilon\}$, $[0]=\emptyset$, and $\bigotimes_{i \in \emptyset} f_\emptyset = \1$ where $f_\emptyset$ denotes the mapping of type $\emptyset \to B$.

Since the $\Sigma$-term algebra $\sfT_\Sigma$ is initial, there exists a unique 
$\Sigma$-algebra homomorphism from $\sfT_\Sigma$ to the vector algebra $\V(\cA)$; we denote this homomorphism by $\h_\cA$. Then, for every $\xi = \sigma(\xi_1,\ldots,\xi_k)$ in $\T_\Sigma$ and $q\in Q$, we have
\begin{align*}
  \h_\cA(\sigma(\xi_1,\ldots,\xi_k))_q &= \h_\cA\big(\ttop_\Sigma(\sigma)(\xi_1,\ldots,\xi_k)\big)_q
  = \delta_\cA(\sigma)\big(\h_\cA(\xi_1),\ldots,\h_\cA(\xi_k)\big)_q\\
                                      &=   \bigoplus_{q_1 \cdots q_k \in Q^k} \Big( \bigotimes_{i\in [k]} \h_\cA(\xi_i)_{q_i}\Big) \otimes \delta_k(q_1\cdots q_k,\sigma,q) \enspace.
  \end{align*}
In the particular case that $k=0$, this means that $\h_\cA(\sigma)_q = \delta_0(\varepsilon,\sigma,q)$ .

The \emph{initial algebra semantics of $\cA$} (for short: the \emph{semantics of $\cA$}), denoted by $\sem{\cA}$, is the weighted tree language $\sem{\cA}: \T_\Sigma \rightarrow B$ defined by  
\begin{equation*}
  \sem{\cA}(\xi) =  \h_\cA(\xi) \cdot F\enspace, 
\end{equation*}
where $\cdot$ is the usual scalar product of two $Q$-vectors over $B$, i.e., $
\sem{\cA}(\xi) = \bigoplus_{q \in Q} \h_\cA(\xi)_q \otimes F_q$.

Let $r: \T_\Sigma \to B$. We say that $r$ is \emph{recognizable over $\Sigma$ and $\B$} (and \emph{bu-deterministically recognizable  over $\Sigma$ and $\B$}) if there exists a $(\Sigma,\B)$-wta $\cA$ (and a bu-deterministic  $(\Sigma,\B)$-wta $\cA$, respectively) such that $\sem{\cA}= r$. We denote by $\Rec(\Sigma,\B)$ (and $\budRec(\Sigma,\B)$) the set of all weighted tree languages which are recognizable over $\Sigma$ and $\B$ (and bu-recognizable recognizable over $\Sigma$ and $\B$, respectively).

  \begin{example}\rm \label{ex:bu-det+total-wta-det} We consider the ranked alphabet $\Sigma = \{\sigma^{(2)}, \alpha^{(0)}\}$ and the field $\Ratnum = (\mathbb{Q},+,\cdot,0,1)$ of rational numbers. Moreover, we define $r: \T_\Sigma \to \mathbb{Q}$ such that, for each $\xi \in \T_\Sigma$, we let
    \[
      r(\xi) = \begin{cases} 2 \cdot 2^{\#_\alpha(\xi)} & \text{ if $\#_\alpha(\xi)$ is even}\\
       3 \cdot 2^{\#_\alpha(\xi)} & \text{ otherwise} \end{cases}
   \]
      where $\#_\alpha(\xi)$ denotes the number of occurrences of $\alpha$ in $\xi$.

      Next we define the $(\Sigma,\Ratnum)$-wta $\cA=(Q,\delta,F)$ with the intention that $\sem{\cA}=r$. 
   We let 
    \begin{compactitem}
    \item $Q = \{e,o\}$, where $e$ and $o$ are standing for {\em e}ven and {\em o}dd, respectively,
    \item $\delta_0(\varepsilon,\alpha,o) = 2$,  $\delta_0(\varepsilon,\alpha,e)=0$ and
      \[
        \delta_2(q_1q_2,\sigma,o) = \begin{cases} 1 & \text{ if $q_1\not= q_2$}\\
          0 & \text{ otherwise }
        \end{cases}
       \ \ \text{ and } \ \
         \delta_2(q_1q_2,\sigma,e) = \begin{cases} 1 & \text{ if $q_1= q_2$}\\
          0 & \text{ otherwise }\enspace,
        \end{cases}
      \]
      \item $F_o=3$ and $F_e=2$.
      \end{compactitem}
      Clearly, $\cA$ is total and bu-deterministic. 
      In Figure~\ref{fig:even-odd-two-three-det} we  represent $\cA$  as a hypergraph.

 \begin{figure}[t]
   \begin{center}
\begin{tikzpicture}
\tikzset{node distance=7em, scale=0.4, transform shape}
\node[state, rectangle] (1) {\Large $\alpha$};
\node[state, right of=1] (2){\Large $o$};
\node[state, rectangle, right of=2] (3)[right=1em]{\Large $\sigma$};
\node[state, rectangle, above of=3] (4)[above=1em]{\Large $\sigma$};
\node[state, rectangle, below of=3] (5)[below=1em]{\Large $\sigma$};
\node[state, right of=3] (6)[right=1em]{\Large $e$};
\node[state, rectangle, right of=6] (7) {\Large $\sigma$};

\tikzset{node distance=2em}
\node[above of=1] (w1)[yshift=0.7em] {\Large 2};
\node[above of=2] (w2)[left=0.1em, yshift=0.7em] {\Large 3};
\node[above of=3] (w3)[yshift=0.7em] {\Large 1};
\node[above of=4] (w4)[yshift=0.7em] {\Large 1};
\node[above of=5] (w5)[yshift=0.7em] {\Large 1};
\node[above of=6] (w6)[yshift=0.7em] {\Large 2};
\node[above of=7] (w7)[yshift=0.7em] {\Large 1};

\draw[->,>=stealth] (1) edge (2);
\draw[->,>=stealth] (2) edge[out=20, in=155, looseness=1.1] (3);
\draw[->,>=stealth] (2) edge[out=-20, in=205, looseness=1.1] (3);
\draw[->,>=stealth] (2) edge (4);
\draw[->,>=stealth] (2) edge (5);
\draw[->,>=stealth] (3) edge (6);
\draw[->,>=stealth] (6) edge (4);
\draw[->,>=stealth] (4) edge[out=110, in=80, looseness=1.4] (2);
\draw[->,>=stealth] (6) edge (5);
\draw[->,>=stealth] (5) edge[out=250, in=-80, looseness=1.4] (2);
\draw[->,>=stealth] (7) edge (6);
\draw[->,>=stealth] (6) edge[out=60, in=30, looseness=2.7] (7);
\draw[->,>=stealth] (6) edge[out=-60, in=-30, looseness=2.7] (7);
\end{tikzpicture} 
\caption{\label{fig:even-odd-two-three-det} The $(\Sigma,\Ratnum)$-wta $\cA=(Q,\delta,F)$.}
   \end{center}
   \end{figure}
     
      By induction on $\T_\Sigma$, we prove that
      \begin{equation}\label{equ:number-of-alphas-det}
        \begin{aligned}
          &\text{for every $\xi \in \T_\Sigma$ and $q \in Q$, we have:}\\
          &\text{$\h_\cA(\xi)_q = \begin{cases} 2^{\#_\alpha(\xi)} & \text{ if \big($\#_\alpha(\xi)$ is even and $q=e$\big) or
                \big($\#_\alpha(\xi)$ is odd and $q=o$\big)  } \\ 0 & \text{ otherwise} \end{cases}$}
          \end{aligned}
        \end{equation}

        I.B.: Let $\xi = \alpha$. Then $\h_\cA(\xi)_o = \delta_0(\varepsilon,\alpha,o)= 2 = 2^1 = 2^{\#_\alpha(\xi)}$. Moreover, $\h_\cA(\xi)_e= \delta_0(\varepsilon,\alpha,e) = 0$.

        I.S.: Let $\xi = \sigma(\xi_1,\xi_2)$ and $q=o$. Then
        \[
\h_\cA(\xi)_o = \h_\cA(\xi_1)_e \cdot  \h_\cA(\xi_2)_o \cdot \delta_2(eo,\sigma,o)   + \h_\cA(\xi_1)_o \cdot  \h_\cA(\xi_2)_e \cdot \delta_2(oe,\sigma,o) \enspace,
\]
because $\delta_2(ee,\sigma,o)=\delta_2(oo,\sigma,o)=0$.
We proceed by case analysis.

$\#_\alpha(\xi_1)$ is even and $\#_\alpha(\xi_2)$ is even: By the I.H., we have $\h_\cA(\xi_1)_o=\h_\cA(\xi_2)_o=0$, hence $\h_\cA(\xi)_o = 0$.

$\#_\alpha(\xi_1)$ is even and $\#_\alpha(\xi_2)$ is odd:  By the I.H., we have $\h_\cA(\xi_1)_o=0$, hence also by the I.H. we obtain
\[
  \h_\cA(\xi)_o = \h_\cA(\xi_1)_e \cdot  \h_\cA(\xi_2)_o \cdot \delta_2(eo,\sigma,o) = 2^{\#_\alpha(\xi_1)} \cdot 2^{\#_\alpha(\xi_2)} = 2^{\#_\alpha(\xi)} \enspace.
  \]

  $\#_\alpha(\xi_1)$ is odd and $\#_\alpha(\xi_2)$ is even: Again by I.H. we obtain
\[
  \h_\cA(\xi)_o = \h_\cA(\xi_1)_o \cdot  \h_\cA(\xi_2)_e \cdot \delta_2(eo,\sigma,o) = 2^{\#_\alpha(\xi_1)} \cdot 2^{\#_\alpha(\xi_2)} = 2^{\#_\alpha(\xi)} \enspace.
  \]

  $\#_\alpha(\xi_1)$ is odd and $\#_\alpha(\xi_2)$ is odd: Again by I.H. we obtain $\h_\cA(\xi)_o = 0$.
  
 We can prove \eqref{equ:number-of-alphas-det} for the case that $\xi = \sigma(\xi_1,\xi_2)$ and $q=e$ similarly. Then, for each $\xi \in \T_\Sigma$ we have
  \[
\sem{\cA}(\xi) = \bigplus_{q \in Q} \h_\cA(\xi)_q \cdot F_q = \h_\cA(\xi)_e \cdot 2
+ \h_\cA(\xi)_o \cdot 3 = 
\begin{cases} 2 \cdot 2^{\#_\alpha(\xi)} & \text{ if $\#_\alpha(\xi)$ is even}\\
  3 \cdot 2^{\#_\alpha(\xi)} & \text{ otherwise}  \end{cases}
\ \ = \ \ r(\xi) \enspace.\]
Hence $r \in \budRec(\Sigma,\B)$.
\hfill $\Box$  
\end{example}

The next lemma shows that a state $q_0$ of a wta $\cA$ can be dropped without changing the semantics of~$\cA$ if for each tree $\xi\in\T_\Sigma$ we have $\h_\cA(\xi)_{q_0}= \0$. We will use this lemma when dealing with the characterization of  minimality in Subsection \ref{subsect:characterization-of-minimlity}.

\begin{lemma}\label{lm:wta-getting-rid-of-useless-state} \rm
Let $\cA= (Q,\delta,F)$ be a $(\Sigma,\B)$-wta and $q_0 \in Q$ such that, for each $\xi \in \T_\Sigma$, we have $\h_\cA(\xi)_{q_0}=\0$. Moreover, let $\cA_{-q_0}=(Q',\delta',F')$ be the $(\Sigma,\B)$-wta with $Q' = Q \setminus \{q_0\}$, $(\delta')_k = \delta_k \cap \big((Q')^k \times \Sigma^{(k)} \times Q'\big)$ for each $k \in \mathbb{N}$ and $F' = F|_{Q'}$.
Then $\sem{\cA} = \sem{\cA_{-q_0}}$.
\end{lemma}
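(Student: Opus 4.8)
The plan is to reduce the whole statement to a single claim proved by induction on $\T_\Sigma$, namely:
\[
\text{for every } \xi \in \T_\Sigma \text{ and } q \in Q', \quad \h_{\cA_{-q_0}}(\xi)_q = \h_\cA(\xi)_q \enspace.
\]
Here we use tacitly that $\cA_{-q_0}$ is required to be a $(\Sigma,\B)$-wta, so in particular $Q' = Q\setminus\{q_0\}$ is nonempty; this is the only role played by the well-formedness of $\cA_{-q_0}$, and it merely excludes the degenerate case $Q=\{q_0\}$. Once this claim is established, the conclusion follows by a short computation with the root weights: for each $\xi \in \T_\Sigma$ we have $\sem{\cA}(\xi) = \bigoplus_{q \in Q} \h_\cA(\xi)_q \otimes F_q = \big(\bigoplus_{q \in Q'} \h_\cA(\xi)_q \otimes F_q\big) \oplus \big(\h_\cA(\xi)_{q_0} \otimes F_{q_0}\big)$, and the last summand equals $\0 \otimes F_{q_0} = \0$ by hypothesis, so it may be dropped; using the claim together with $F' = F|_{Q'}$, the remaining sum is exactly $\bigoplus_{q \in Q'} \h_{\cA_{-q_0}}(\xi)_q \otimes F'_q = \sem{\cA_{-q_0}}(\xi)$.

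For the inductive claim, the base case $\xi = \alpha \in \Sigma^{(0)}$ is immediate, since $\h_{\cA_{-q_0}}(\alpha)_q = (\delta')_0(\varepsilon,\alpha,q) = \delta_0(\varepsilon,\alpha,q) = \h_\cA(\alpha)_q$ for $q \in Q'$. For the induction step, let $\xi = \sigma(\xi_1,\ldots,\xi_k)$ and $q \in Q'$. Expanding the definition of the vector algebra operation $\delta_{\cA}(\sigma)$ (resp. $\delta_{\cA_{-q_0}}(\sigma)$), the value $\h_{\cA_{-q_0}}(\xi)_q$ is a $\bigoplus$-sum indexed by words $q_1\cdots q_k \in (Q')^k$, whereas $\h_\cA(\xi)_q$ is the analogous sum indexed by words $q_1\cdots q_k \in Q^k$. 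By the induction hypothesis $\h_{\cA_{-q_0}}(\xi_i)_{q_i} = \h_\cA(\xi_i)_{q_i}$ for all $q_i \in Q'$, and since $(\delta')_k$ is the restriction of $\delta_k$ to $(Q')^k \times \Sigma^{(k)} \times Q'$, the two sums have identical summands for indices in $(Q')^k$. It remains to see that every word $q_1\cdots q_k \in Q^k$ with $q_j = q_0$ for some $j \in [k]$ contributes the summand $\0$ to $\h_\cA(\xi)_q$: by hypothesis $\h_\cA(\xi_j)_{q_0} = \0$, so the factor $\h_\cA(\xi_j)_{q_j}$ of $\bigotimes_{i\in[k]} \h_\cA(\xi_i)_{q_i}$ is $\0$, hence the whole product is $\0$ because $\0$ is an annihilator for $\otimes$, and then $\0 \otimes \delta_k(q_1\cdots q_k,\sigma,q) = \0$. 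As $\0$ is neutral for $\oplus$, these summands can be discarded from the sum, which yields $\h_\cA(\xi)_q = \h_{\cA_{-q_0}}(\xi)_q$ and completes the induction.

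I expect no genuine obstacle here: the argument is a routine structural induction, and the only point that needs a little care is the bookkeeping that isolates, inside the sum defining $\h_\cA(\xi)_q$, precisely those summands indexed by words over $Q$ that use the state $q_0$ in some position, together with the observation that the hypothesis on $q_0$ — applied to the proper subtrees $\xi_j$ — forces exactly those summands to vanish via the annihilation law of $\0$. (Note that this uses only that $\0$ annihilates $\otimes$, not zero-divisor freeness of $\B$.)
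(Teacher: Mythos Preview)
Your proof is correct and follows essentially the same approach as the paper: both establish the auxiliary claim $\h_{\cA_{-q_0}}(\xi)_q = \h_\cA(\xi)_q$ for $q\in Q'$ by induction on $\T_\Sigma$ and then derive $\sem{\cA}=\sem{\cA_{-q_0}}$ by dropping the vanishing $q_0$-summand. The paper merely asserts the inductive claim without details, whereas you spell out the induction step (in particular why summands indexed by words containing $q_0$ vanish), so your version is a more explicit rendition of the same argument.
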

\begin{proof} First, by induction on $\T_\Sigma$, we can easily show:
  \begingroup
\allowdisplaybreaks
\begin{equation}\label{equ:dropping-a-useless-state-leaves-hA-invariant}
\text{For every $\xi \in \T_\Sigma$ and $q \in Q'$, we have: } \h_{\cA'}(\xi)_q = \h_\cA(\xi)_q\enspace.
\end{equation}
\endgroup
Then, for each $\xi \in \T_\Sigma$, we have:
\begin{align*}
\sem{\cA}(\xi) & = \bigoplus_{q \in Q} \h_\cA(\xi)_q \otimes F_q = \bigoplus_{q \in Q'} \h_\cA(\xi)_q \otimes F_q
\tag{because $\h_\cA(\xi)_{q_0} \otimes F_{q_0} = \0$} \\
& = \bigoplus_{q \in Q'} \h_{\cA'}(\xi)_q \otimes (F')_q
                                                            \tag{by \eqref{equ:dropping-a-useless-state-leaves-hA-invariant} and definition of $F'$}\\
  &= \sem{\cA_{-q_0}}(\xi) \enspace.  
\end{align*}
\end{proof}

  \subsection{Transformation monoid of a wta}
  \label{ssec:transformatio-monoid-det}

  \emph{In this subsection, we let $\cA=(Q,\delta,F)$ be an arbitrary $(\Sigma,\B)$-wta.}

The transformation monoid of the $(\Sigma,\B)$-wta $\cA$ shows how vectors in $B^Q$ are transformed by plugging them into a context. To formalize this ``plugging in'', we define a modification of the mapping $\h_\cA: \T_\Sigma \to B^Q$, which can handle contexts.  First, we define the mapping $g: \e\C_\Sigma \to (B^Q \to B^Q)$ for every $e \in \e\C_\Sigma$ with $e = \sigma(\xi_1,\ldots,\xi_{i-1},z,\xi_{i+1},\ldots,\xi_k)$ as follows. For each  $v \in B^Q$,  we let
  \begin{equation}\label{equ:hAC-on-generator-set-det}
g(e)(v) = \delta_\cA(\sigma)\Big(\h_\cA(\xi_1),\ldots,\h_\cA(\xi_{i-1}),v,\h_\cA(\xi_{i+1}),\ldots,\h_\cA(\xi_k)\Big) \enspace.
\end{equation}
Second, since $(\C_\Sigma,\circ_z,z)$ is free in the set of all monoids with generating set $\e\C_\Sigma$ (cf. Lemma~\ref{lm:freely-generated-context-det}), there exists a unique monoid homomorphism
\(
\h_\cA^\C: \C_\Sigma \to (B^Q \to B^Q)
\)
from $(\C_\Sigma,\circ_z,z)$ to the monoid $(B^Q \to B^Q,\circ,\id_{B^Q})$ which extends $g$. Here $\circ$ denotes the usual composition of mappings.
Thus, for every $c_1,c_2 \in \C_\Sigma$, we have 
\[
  \h_\cA^\C(c_1[c_2]) = \h_\cA^\C(c_1 \circ_z c_2) = \h_\cA^\C(c_1)\circ \h_\cA^\C(c_2) \enspace.
\]
As usual, function application associates to the left, i.e., $\h_\cA^\C(c)(v)$ stands for $(\h_\cA^\C(c))(v)$.

The \emph{transformation monoid of $\cA$}, denoted by $\mathrm{Trans}(\cA)$, is the monoid
\[
\mathrm{Trans}(\cA) = (\im(\h_\cA^\C),\circ,\id_{\im(\h_\cA^\C)}) \enspace.
\]
In a natural way, we can use $\h_\cA^\C$ to express the $\h_\cA$-image of a tree $c[\xi]$ where $c \in \C_\Sigma$ and $\xi \in \T_\Sigma$.
  
\begin{lemma}\rm \label{lm:hcACchAxi=hAcxi-det} For every $c \in \C_\Sigma$ and $\xi \in \T_\Sigma$, we have  $\h_\cA(c[\xi]) = \h_\cA^\C(c)\big(\h_\cA(\xi)\big)$. 
\end{lemma}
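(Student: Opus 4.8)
The plan is to argue by well-founded induction on the context $c$, using the terminating reduction system $(\C_\Sigma,\succ_{\C_\Sigma})$ introduced above, whose set of normal forms is $\{z\}$. The statement to be proved for a given $c$ is $P(c)$: "for every $\xi \in \T_\Sigma$, $\h_\cA(c[\xi]) = \h_\cA^\C(c)\big(\h_\cA(\xi)\big)$".

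For the induction base, take $c = z$. Then $c[\xi] = \xi$ for every $\xi$, and since $\h_\cA^\C$ is a monoid homomorphism from $(\C_\Sigma,\circ_z,z)$ to $(B^Q \to B^Q,\circ,\id_{B^Q})$, it maps the unit $z$ to the unit $\id_{B^Q}$; hence $\h_\cA^\C(z)\big(\h_\cA(\xi)\big) = \h_\cA(\xi) = \h_\cA(c[\xi])$, so $P(z)$ holds.

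For the induction step, let $c \in \C_\Sigma \setminus \{z\}$. Then there exist an elementary context $e = \sigma(\xi_1,\ldots,\xi_{i-1},z,\xi_{i+1},\ldots,\xi_k) \in \e\C_\Sigma$ and a context $c' \in \C_\Sigma$ with $c = e[c'] = e \circ_z c'$ and $c \succ_{\C_\Sigma} c'$, so the induction hypothesis gives $P(c')$. Fix $\xi \in \T_\Sigma$. By associativity of context substitution, $c[\xi] = (e \circ_z c')[\xi] = e[c'[\xi]] = \sigma(\xi_1,\ldots,\xi_{i-1},c'[\xi],\xi_{i+1},\ldots,\xi_k)$. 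Applying the fact that $\h_\cA$ is a $\Sigma$-algebra homomorphism from $\sfT_\Sigma$ to $\V(\cA)$, together with the defining equation \eqref{equ:hAC-on-generator-set-det} for $g$ and the fact that $\h_\cA^\C$ extends $g$ on $\e\C_\Sigma$, we compute
\[
  \h_\cA(c[\xi]) = \delta_\cA(\sigma)\big(\h_\cA(\xi_1),\ldots,\h_\cA(c'[\xi]),\ldots,\h_\cA(\xi_k)\big) = g(e)\big(\h_\cA(c'[\xi])\big) = \h_\cA^\C(e)\big(\h_\cA(c'[\xi])\big).
\]
Now apply the induction hypothesis $P(c')$ to rewrite $\h_\cA(c'[\xi])$ as $\h_\cA^\C(c')\big(\h_\cA(\xi)\big)$, and then use that $\h_\cA^\C$ is a monoid homomorphism (so $\h_\cA^\C(e)\circ\h_\cA^\C(c') = \h_\cA^\C(e\circ_z c') = \h_\cA^\C(c)$) to conclude $\h_\cA(c[\xi]) = \h_\cA^\C(e)\big(\h_\cA^\C(c')(\h_\cA(\xi))\big) = \h_\cA^\C(c)\big(\h_\cA(\xi)\big)$. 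This establishes $P(c)$ and completes the induction.

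I do not expect a genuine obstacle here: the argument is a routine structural induction. The only points requiring a little care are bookkeeping ones — correctly decomposing a non-trivial context as $e \circ_z c'$ with $e$ elementary (justified by Lemma~\ref{lm:freely-generated-context-det} and the description of $\succ_{\C_\Sigma}$), invoking associativity of substitution to get $(e\circ_z c')[\xi] = e[c'[\xi]]$, and keeping straight that $g(e) = \h_\cA^\C(e)$ by construction of the monoid homomorphism $\h_\cA^\C$.
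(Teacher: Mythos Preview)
Your proof is correct and follows essentially the same approach as the paper's own proof: both argue by well-founded induction on $(\C_\Sigma,\succ_{\C_\Sigma})$, handle the base case $c=z$ via $\h_\cA^\C(z)=\id_{B^Q}$, and in the induction step decompose $c=e[c']$, use associativity of substitution, the defining equation for $g(e)=\h_\cA^\C(e)$, the induction hypothesis, and the monoid-homomorphism property of $\h_\cA^\C$. The only difference is cosmetic ordering (you apply the induction hypothesis after wrapping in $\h_\cA^\C(e)$, whereas the paper applies it one step earlier inside the $\delta_\cA(\sigma)$ expression).
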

\begin{proof} We prove the statement by induction on $(\C_\Sigma,\succ_{\C_\Sigma})$.

  I.B.: Let $c= z$ and $\xi \in \T_\Sigma$. Then $\h_\cA(c[\xi]) = \h_\cA(\xi) = \id_{B^Q}(\h_\cA(\xi)) = \h_\cA^\C(c)(\h_\cA(\xi))$.

  I.S.: Let $c=e[c']$ for some $e \in \e\C_\Sigma$ and $c' \in \C_\Sigma$. We let $e \in \e\C_\Sigma$ \\ with $e= \sigma(\xi_1,\ldots,\xi_{i-1},z,\xi_{i+1},\ldots,\xi_k)$. Then we have
  \begingroup
  \allowdisplaybreaks
  \begin{align*}
   \h_\cA\big((e[c'])[\xi]\big) 
    &= \h_\cA(e[c'[\xi]])
      \tag{by associativity of tree substitution}\\
    &=  \delta_\cA(\sigma)\Big(\h_\cA(\xi_1),\ldots,\h_\cA(\xi_{i-1}),\h_\cA(c'[\xi]),\h_\cA(\xi_{i+1}),\ldots,\h_\cA(\xi_k)\Big) \\
    &= \delta_\cA(\sigma)\Big(\h_\cA(\xi_1),\ldots,\h_\cA(\xi_{i-1}),\h_\cA^\C(c')(\h_\cA(\xi)),\h_\cA(\xi_{i+1}),\ldots,\h_\cA(\xi_k)\Big)
      \tag{by I.H.} \\
    &= \h_\cA^\C(e) \big( \h_\cA^\C(c')(\h_\cA(\xi)) \big)
      \tag{by definition of $\h_\cA^\C$}\\
    &=  \big(\h_\cA^\C(e) \circ  \h_\cA^\C(c')\big)(\h_\cA(\xi)) \big)\\
    &=  \h_\cA^\C(e[c'])(\h_\cA(\xi)) \enspace.
      \tag{because $\h_\cA^\C$ is a monoid homomorphism}
  \end{align*}
  \endgroup
  \end{proof}

  For each $c\in \C_\Sigma$, the mapping $\h_\cA^\C(c)$ is compatible with respect to scalar multiplication.

        \begin{lemma}\rm \label{lm:haC-linear-mapping-det}
          \begin{compactenum}
          \item[(1)] (\cite{boz99}, cf. also \cite[Lm.~3.6.4]{fulvog24}) For every $k \in \mathbb{N}$, $\sigma \in \Sigma^{(k)}$, $v_1,\ldots,v_k \in B^Q$, $b\in B$, $q \in Q$, and $i\in[k]$, we have 
 \[\delta_\cA(\sigma)(v_1,\ldots,v_{i-1},b \cdot v_i,v_{i+1},\ldots,v_k) =
   b \cdot  \delta_\cA(\sigma)(v_1,\ldots,v_{i-1},v_i,v_{i+1},\ldots,v_k).\]
   \item[(2)] For every $c \in \C_\Sigma$, $b\in B$, and $v \in B ^Q$, we have $\h_\cA^\C(c)(b \cdot v)  = b \cdot \h_\cA^\C(c)(v)$. In particular, $\h_\cA^\C(c)(\0^Q) = \0^Q$.
 \end{compactenum}
\end{lemma}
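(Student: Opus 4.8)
The plan is to prove part~(1) by a direct computation from the definition of $\delta_\cA(\sigma)$, and then to prove part~(2) by well-founded induction on $(\C_\Sigma,\succ_{\C_\Sigma})$, using part~(1) in the induction step.

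For part~(1), I would simply expand both sides using the defining formula
\[
\delta_\cA(\sigma)(v_1,\dots,v_k)_q = \bigoplus_{q_1\cdots q_k\in Q^k}\Big(\bigotimes_{j\in[k]}(v_j)_{q_j}\Big)\otimes\delta_k(q_1\cdots q_k,\sigma,q).
\]
Replacing $v_i$ by $b\cdot v_i$ means replacing $(v_i)_{q_i}$ by $b\otimes(v_i)_{q_i}$ in each summand. Since $\otimes$ is commutative and associative, each summand acquires exactly one extra factor $b$, so the whole sum becomes $b\otimes\big(\delta_\cA(\sigma)(v_1,\dots,v_k)_q\big)$ by distributivity of $\otimes$ over $\oplus$. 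This holds coordinatewise for every $q\in Q$, which is the claim (and is the content of the cited \cite[Lm.~3.6.4]{fulvog24}).

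For part~(2), the induction base is $c=z$: then $\h_\cA^\C(z)=\id_{B^Q}$, so $\h_\cA^\C(z)(b\cdot v)=b\cdot v=b\cdot\h_\cA^\C(z)(v)$. For the induction step, let $c=e[c']$ with $e=\sigma(\xi_1,\dots,\xi_{i-1},z,\xi_{i+1},\dots,\xi_k)\in\e\C_\Sigma$ and $c'\in\C_\Sigma$. Using that $\h_\cA^\C$ is a monoid homomorphism, $\h_\cA^\C(e[c'])=\h_\cA^\C(e)\circ\h_\cA^\C(c')$, so
\[
\h_\cA^\C(e[c'])(b\cdot v)=\h_\cA^\C(e)\big(\h_\cA^\C(c')(b\cdot v)\big)=\h_\cA^\C(e)\big(b\cdot\h_\cA^\C(c')(v)\big)
\]
by the induction hypothesis. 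Now unfold $\h_\cA^\C(e)$ by \eqref{equ:hAC-on-generator-set-det}: the argument $b\cdot\h_\cA^\C(c')(v)$ sits in the $i$-th slot of $\delta_\cA(\sigma)$, so by part~(1) the scalar $b$ can be pulled out, yielding $b\cdot\delta_\cA(\sigma)(\h_\cA(\xi_1),\dots,\h_\cA^\C(c')(v),\dots,\h_\cA(\xi_k))=b\cdot\h_\cA^\C(e)(\h_\cA^\C(c')(v))=b\cdot\h_\cA^\C(e[c'])(v)$. The ``in particular'' clause follows by taking $b=\0$ and using the axiom $\0\cdot w=\0^Q$ (equivalently, $\0\otimes b'=\0$ for all $b'$).

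I do not anticipate a genuine obstacle here; both parts are routine once the homomorphism property of $\h_\cA^\C$ (from Lemma~\ref{lm:freely-generated-context-det}) and the coordinatewise definition of $\delta_\cA(\sigma)$ are in hand. The only point requiring minor care is bookkeeping in the induction step: one must apply the induction hypothesis to $c'$ \emph{before} unfolding the elementary context $e$, so that part~(1) is invoked with the scalar already factored to the outside of $\h_\cA^\C(c')(v)$ but still inside the $i$-th argument of $\delta_\cA(\sigma)$.
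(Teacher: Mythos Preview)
Your proposal is correct and follows essentially the same approach as the paper: part~(1) is a direct computation from the definition of $\delta_\cA(\sigma)$ using commutativity and distributivity, and part~(2) is proved by induction on $(\C_\Sigma,\succ_{\C_\Sigma})$, with the induction step decomposing $c=e[c']$, applying the monoid homomorphism property of $\h_\cA^\C$, the induction hypothesis to $c'$, then unfolding $\h_\cA^\C(e)$ via \eqref{equ:hAC-on-generator-set-det} and invoking part~(1). The order of steps and the bookkeeping you flag match the paper's proof exactly.
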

\begin{proof} Proof of (1): The proof follows easily by the definition of $\delta_\cA$ by using that $\B$ is commutative.

\

  Proof of (2):   We prove this by induction on $(\C_\Sigma\succ_{\C_\Sigma})$.
 
    I.B.:  Let $c=z$. Then
\(\h_\cA^\C(c)(b \cdot v)= b \cdot v =b \cdot  \h_\cA^\C(c)(v)\).

I.S.:         Let $c= e[c']$ with $e \in \e\C_\Sigma$ and  $c' \in \C_\Sigma$. Let  $e= \sigma(\xi_1,\ldots,\xi_{i-1},z,\xi_{i+1},\ldots,\xi_k)$. Then we can calculate as follows.
\begingroup
\allowdisplaybreaks
\begin{align*}
  & \h_{\cA}^\C(e[c'])(b \cdot v)\\
  &=  \h_{\cA}^\C(e) \Big( \h_\cA^\C(c')(b \cdot v) \Big)
  \tag{because $\h_\cA^\C$ is a monoid homomorphism}\\
  &=  \delta_{\cA}(\sigma)\Big(\h_{\cA}(\xi_1),\ldots,\h_{\cA}(\xi_{i-1}), \ \h_\cA^\C(c')(b \cdot v) \ ,\h_{\cA}(\xi_{i+1}),\ldots,\h_{\cA}(\xi_k)\Big)
  \tag{by  $\h_{\cA}^\C(e)=g(e)$ and \eqref{equ:hAC-on-generator-set-det}}\\
  &= \delta_{\cA}(\sigma)\Big(\h_{\cA}(\xi_1),\ldots,\h_{\cA}(\xi_{i-1}), \ b \cdot \h_{\cA}^\C(c')(v) \ ,\h_{\cA}(\xi_{i+1}),\ldots,\h_{\cA}(\xi_k)\Big) \tag{by I.H.}\\
  &=  b \cdot \delta_{\cA}(\sigma)\Big(\h_{\cA}(\xi_1),\ldots,\h_{\cA}(\xi_{i-1}), \ \h_{\cA}^\C(c')(v) \ ,\h_{\cA}(\xi_{i+1}),\ldots,\h_{\cA}(\xi_k)\Big)
    \tag{by Statement (1) of the lemma}\\
  &= b \cdot \h_{\cA}^\C(e)\Big(\h_\cA^\C(c')(v)\Big)
  \tag{by \eqref{equ:hAC-on-generator-set-det} and  $\h_{\cA}^\C(e)=g(e)$}\\
  &= b \cdot \h_{\cA}^\C(e[c'])(v) \enspace.
    \tag{because $\h_\cA^\C$ is a monoid homomorphism}
\end{align*}
\endgroup 
\end{proof}

   \subsection{Bu-deterministic wta}\label{subsect:bu-det-wta-new}

 Here we show that the semantics of a bu-deterministic $(\Sigma,\B)$-wta $\cA$ does not use the summation of the underlying weight algebra $\B$. In fact, for the computation of the value $\sem{\cA}(\xi)$ for any $\xi \in \T_\Sigma$, it only uses the values which occur as transition weights and as root weights in $\cA$ and the multiplication $\otimes$ of $\B$.  In the following we will make this statement precise.

    \emph{In this subsection, we let $\cA=(Q,\delta,F)$ be an arbitrary bu-deterministic $(\Sigma,\B)$-wta, if not specified otherwise.}

      We define $\im(\delta)=\bigcup_{k \in \mathbb{N}}\im(\delta_k)$ and we recall that the vector algebra of $\cA$ is the $\Sigma$-algebra $\V(\cA) = (B^Q,\delta_\cA)$.
We also recall that, for each $v\in B^Q$, we denote by $\supp(v)$ the set $\{q\in Q\mid v_q\ne \0\}$ and that, for the element $\0^Q$ of $B^Q$, we have $\supp(\0^Q)=\emptyset$. Then we define
\[B^Q_{=1}=\{v \in B^Q \mid |\supp(v)| = 1\} \ \text{ and } \ \ B^Q_{\le 1}= B^Q_{= 1} \cup \{\0^Q\}\enspace.\] 
For each $v \in B^Q_{= 1}$ we denote by $q_v$ the only state of $\supp(v)$. 

Next we prove that $(B^Q_{\le 1},\delta_\cA)$ is a $\Sigma$-algebra and it is a subalgebra of $\V(\cA)$.

\begin{lemma}\rm \label{lm:properties-sem-of-budet-wta-det-new} The following statements hold.
    \begin{compactenum}
      \item[(1)] For every $k \in \mathbb{N}$, $\sigma \in \Sigma^{(k)}$, $v_1,\ldots,v_k \in B^Q_{\le 1}$, and $q \in Q$, we have
        \[
          \delta_\cA(\sigma)(v_1,\ldots,v_k)_q =
          \begin{cases}
            (v_1)_{q_{v_1}} \otimes \ldots \otimes (v_k)_{q_{v_k}} \otimes  \delta_k(q_{v_1}\cdots q_{v_k},\sigma,q) 
             &\text{ if $(\forall i \in [k]): v_i \in B^Q_{=1}$}\\
            \0 & \ \ \text{ otherwise}
            \end{cases}
          \]
        \item[(2)] $B^Q_{\le 1}$ is closed under the operations in $\delta_\cA(\Sigma)$, i.e.,  $(B^Q_{\le 1},\delta_\cA)$ is a $\Sigma$-algebra and it is a subalgebra of the $\Sigma$-algebra $\V(\cA)$.
        \end{compactenum}
      \end{lemma}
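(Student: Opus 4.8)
The plan is to prove the two statements in sequence, with statement~(1) doing essentially all the work and statement~(2) following as an immediate corollary.

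For statement~(1), I would start from the defining formula
\[
\delta_\cA(\sigma)(v_1,\dots,v_k)_q = \bigoplus_{q_1\cdots q_k \in Q^k} \Big(\bigotimes_{i\in[k]} (v_i)_{q_i}\Big) \otimes \delta_k(q_1\cdots q_k,\sigma,q)
\]
and analyze which summands can be nonzero. Fix $v_1,\dots,v_k \in B^Q_{\le 1}$ and $q \in Q$. First case: some $v_j = \0^Q$. Then for every string $q_1\cdots q_k$ the factor $(v_j)_{q_j} = \0$, so by zero-divisor freeness every summand is $\0$, hence the whole sum is $\0$; this matches the ``otherwise'' branch. Second case: every $v_i \in B^Q_{=1}$, so $\supp(v_i) = \{q_{v_i}\}$. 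Then in the sum over strings $q_1\cdots q_k$, a summand can be nonzero only if $q_i \in \supp(v_i)$ for every $i$, i.e.\ only if $q_i = q_{v_i}$ for every $i$; all other summands vanish by the same zero-divisor-free argument. Hence the sum collapses to the single term indexed by the string $q_{v_1}\cdots q_{v_k}$, which is exactly $(v_1)_{q_{v_1}} \otimes \dots \otimes (v_k)_{q_{v_k}} \otimes \delta_k(q_{v_1}\cdots q_{v_k},\sigma,q)$, as claimed. This handles both branches of the case distinction, and note it does not even need bu-determinism yet.

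For statement~(2), I need to show $B^Q_{\le 1}$ is closed under each $\delta_\cA(\sigma)$. Take $v_1,\dots,v_k \in B^Q_{\le 1}$ and put $w = \delta_\cA(\sigma)(v_1,\dots,v_k)$; I must check $|\supp(w)| \le 1$. If some $v_i = \0^Q$ then by statement~(1) every component $w_q = \0$, so $w = \0^Q \in B^Q_{\le 1}$. Otherwise every $v_i \in B^Q_{=1}$, so by statement~(1), $w_q = (v_1)_{q_{v_1}} \otimes \dots \otimes (v_k)_{q_{v_k}} \otimes \delta_k(q_{v_1}\cdots q_{v_k},\sigma,q)$; since each $(v_i)_{q_{v_i}} \ne \0$ and $\B$ is zero-divisor free, $w_q \ne \0$ exactly when $\delta_k(q_{v_1}\cdots q_{v_k},\sigma,q) \ne \0$. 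Here is the one place bu-determinism enters: by definition of bu-determinism there is at most one $q$ with $\delta_k(q_{v_1}\cdots q_{v_k},\sigma,q) \ne \0$, hence $|\supp(w)| \le 1$, i.e.\ $w \in B^Q_{\le 1}$. Thus $B^Q_{\le 1}$ is closed under all operations in $\delta_\cA(\Sigma)$, so $(B^Q_{\le 1},\delta_\cA)$ is a $\Sigma$-algebra, and since its operations are the restrictions of those of $\V(\cA)$ it is by definition a subalgebra of $\V(\cA)$.

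I do not anticipate a genuine obstacle here; the lemma is a bookkeeping fact whose whole content is ``in a bu-deterministic wta, applying a transition to unit-like vectors yields a unit-like vector.'' The only subtlety to be careful about is the edge case $k=0$ in statement~(1): then there are no $v_i$, the condition ``$(\forall i \in [k])\colon v_i \in B^Q_{=1}$'' is vacuously true, the empty product is $\1$, and the formula correctly reduces to $\delta_\cA(\sigma)()_q = \delta_0(\varepsilon,\sigma,q)$, which is consistent with the convention recalled in the text; and in statement~(2) for $k=0$, closure just says $\delta_\cA(\alpha)() \in B^Q_{\le 1}$, which again follows from bu-determinism applied to the empty string. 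I would mention this $k=0$ case explicitly but not belabor it.
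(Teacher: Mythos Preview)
Your proof is correct and follows essentially the same approach as the paper's: the same case analysis for (1), and for (2) the same use of (1) together with bu-determinism (the paper phrases it as a proof by contradiction, you phrase it directly, but the content is identical). One terminological nit: when a factor is $\0$ the product vanishes by the \emph{annihilator} property of $\0$, not by zero-divisor freeness; you do use zero-divisor freeness correctly in (2) for the converse direction.
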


    \begin{proof} 
 Proof of (1):  If $k=0$, then we are done, because the equality in Statement (1) reduces to $\delta_\cA(\sigma)()_q=\delta_0(\varepsilon,\sigma,q)$, and this holds by the definition of $\delta_\cA$.      
 Thus, let $k \in \mathbb{N}_+$, $\sigma \in \Sigma^{(k)}$, $v_1,\ldots,v_k \in B^Q_{\le 1}$, and $q \in Q$. We recall that 
 \begingroup
      \allowdisplaybreaks
      \begin{equation}\label{eq:semantics-recalled}
\delta_\cA(\sigma)(v_1,\ldots,v_k)_q = \bigoplus_{q_1 \cdots q_k \in Q^k} (v_1)_{q_1} \otimes \ldots \otimes (v_k)_{q_k} \otimes \delta_k(q_1 \cdots q_k,\sigma,q)\enspace.
          \end{equation}
        \endgroup     
We continue by case analysis. If, for each $i \in [k]$, we have $v_i \in B^Q_{= 1}$, then $q_{v_i}$ is defined. Hence, the right-hand side of  \eqref{eq:semantics-recalled} is equal to $(v_1)_{q_{v_1}} \otimes \ldots \otimes (v_k)_{q_{v_k}} \otimes \delta_k(q_{v_1} \cdots q_{v_k},\sigma,q)$  because all other members of the sum are $\0$.  Otherwise there exists $i \in [k]$ such that $v_i = \0^Q$.
 Then the right-hand side of  \eqref{eq:semantics-recalled} is equal to $\0$ because $(v_i)_{q_i}=\0$ for each $q_i \in Q$.
 
           \
           
         Proof of (2): We prove by contradiction. For this, let $k \in \mathbb{N}$, $\sigma \in \Sigma^{(k)}$, $v_1,\ldots,v_k \in B^Q_{\le 1}$, and $q_1,q_2 \in Q$ such that $q_1\ne q_2$ and
         \(
\delta_\cA(\sigma)(v_1,\ldots,v_k)_{q_1} \ne \0 \ne \delta_\cA(\sigma)(v_1,\ldots,v_k)_{q_2} 
           \).
           By Statement (1), we have $v_i \in B^Q_{=1}$ for each $i \in [k]$, and thus
           \begin{align*}
             & \delta_\cA(\sigma)(v_1,\ldots,v_k)_{q_1} = (v_1)_{q_{v_1}} \otimes \ldots \otimes (v_k)_{q_{v_k}} \otimes  \delta_k(q_{v_1}\cdots q_{v_k},\sigma,q_1) \ne \0 \text{ and}\\
             & \delta_\cA(\sigma)(v_1,\ldots,v_k)_{q_2} = (v_1)_{q_{v_1}} \otimes \ldots \otimes (v_k)_{q_{v_k}} \otimes  \delta_k(q_{v_1}\cdots q_{v_k},\sigma,q_2)  \ne \0.
             \end{align*}
This implies $\delta_k(q_{v_1}\cdots q_{v_k},\sigma,q_1) \ne \0 \ne \delta_k(q_{v_1}\cdots q_{v_k},\sigma,q_2)$.   Since $\cA$ is bu-deterministic, we obtain $q_1 = q_2$. This is a contradiction, hence $\delta_\cA(\sigma)(v_1,\ldots,v_k) \in B^Q_{\le 1}$. Thus $(B^Q_{\le 1},\delta_\cA)$ is a $\Sigma$-algebra. In fact, it is a subalgebra of the vector algebra $\V(\cA)$ of $\cA$.
\end{proof}

Now we consider the set $\im(\h_\cA)$ of images of $\h_\cA$. In order to identify, for each $\h_\cA(\xi) \in B_{=1}^Q$, the component  which is non-zero, we introduce state algebras.
  We begin with defining an auxiliary concept.
Let $k \in \mathbb{N}$, $\sigma \in \Sigma^{(k)}$, and $w \in Q^k$. We define the set of successor states by
\[
\mathrm{succ}_\cA(w,\sigma) = \{p \in Q \mid \delta_k(w,\sigma,p) \ne \0\} \enspace.
\]
Since $\cA$ is bu-deterministic, we have $|\mathrm{succ}_\cA(w,\sigma)| \le 1$. 
If $\mathrm{succ}_\cA(w,\sigma)$ is a singleton, then we denote its only element by $\mathrm{esucc}_\cA(w,\sigma)$.

Let $Q_\bot=Q \cup \{\bot\}$ where $\bot$ is an element not in $Q$.  The {\em state algebra of~$\cA$}, denoted by $\St(\cA)$,  is the $\Sigma$-algebra $\St(\cA)=(Q_\bot,\theta_\cA)$ such that,
for every  $k \in \mathbb{N}$, $\sigma \in \Sigma^{(k)}$, and $q_1,\ldots,q_k \in Q_\bot$, we define
\begin{align*}
  \theta_\cA(\sigma)(q_1,\ldots,q_k)=
  \begin{cases}
    \mathrm{esucc}_\cA(q_1 \cdots q_k,\sigma)  & \text{ if $q_1,\ldots,q_k \in Q$ and $|\mathrm{succ}_\cA(q_1 \cdots q_k,\sigma)|=1$}\\
    \bot & \text{ otherwise} \enspace.
  \end{cases}
  \end{align*}
We denote by $\state_\cA$ the unique $\Sigma$-algebra homomorphism from $\sfT_\Sigma$ to $\St(\cA)$. If $\cA$ is clear from the context, then we abbreviate $\mathrm{succ}_\cA$, $\mathrm{esucc}_\cA$, and $\state_\cA$ by $\mathrm{succ}$, $\mathrm{esucc}$, and $\state$, respectively.

\begin{example}\rm \label{ex:for-state-algebra-new} Let $\Sigma = \{\alpha^{(0)},\beta^{(0)}\}$ and $\cA=(Q,\delta,F)$ be the bu-deterministic $(\Sigma,\Ratnum)$-wta  with $Q=\{p_1,p_2\}$, and  $\delta_0(\varepsilon,\alpha,p_1)=1$,  $\delta_0(\varepsilon,\alpha,p_2)=\delta_0(\varepsilon,\beta,p_1)=\delta_0(\varepsilon,\beta,p_2)=0$, and $F_{p_1} = F_{p_2}=1$. Then the state algebra $\St(\cA) = (Q_\bot,\theta_\cA)$ has the carrier set $Q_\bot= \{p_1,p_2,\bot\}$ and the nullary operations $\theta_\cA(\alpha)$ and $\theta_\cA(\beta)$ with
  $\theta_\cA(\alpha)()=p_1$ and $\theta_\cA(\beta)()=\bot$.
  \hfill $\Box$
\end{example}

 \begin{lemma}\rm \label{lm:properties-hA-of-budet-wta-det-new} Let $\xi \in \T_\Sigma$. The following four statements hold.
    \begin{compactenum}
    \item[(1)] $\h_\cA(\xi) \in B^Q_{\le 1} \cap \big(\langle \im(\delta) \rangle_{\{\otimes\}} \big)^Q$.

      \item[(2)] For each $q \in Q$, we have $\h_\cA(\xi)_q \not= \0$ if and only if $q=\state(\xi)$.
      \item[(3)] 
    $\h_\cA(\xi) = \begin{cases}
     \h_\cA(\xi)_{\state(\xi)} \cdot \1_{\state(\xi)} & \text{ if $\state(\xi)\in Q$} \\
     \0^Q & \text{ otherwise.}
    \end{cases}
    $
       \item[(4)] 
       $\sem{\cA}(\xi)= \begin{cases}
         \h_\cA(\xi)_{\state(\xi)} \otimes F_{\state(\xi)} & \text{ if $\state(\xi)\in Q$} \\
         \0 & \text{ otherwise,}
    \end{cases}
  $ 
  
  \noindent and thus, in particular, $\sem{\cA}(\xi) \in \langle \im(\delta) \rangle_{\{\otimes\}} \otimes \im(F)$.         
      \end{compactenum}
    \end{lemma}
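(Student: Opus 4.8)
The plan is to prove the four statements together by a single well-founded induction on $\T_\Sigma$, since the later statements depend on the earlier ones and all of them are naturally computed by the same recursive unfolding of $\h_\cA$. I would fix $\xi = \sigma(\xi_1,\ldots,\xi_k)$ and assume statements (1)--(4) hold for each $\xi_i$ (the base case $k=0$ being the special instance where there are no subtrees, handled by the observation that $\h_\cA(\sigma)_q = \delta_0(\varepsilon,\sigma,q)$ and $\state(\sigma) = \theta_\cA(\sigma)()$).

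For statement (1), the membership $\h_\cA(\xi) \in B^Q_{\le 1}$ is exactly Lemma~\ref{lm:properties-sem-of-budet-wta-det-new}(2) applied to $\h_\cA(\xi) = \delta_\cA(\sigma)(\h_\cA(\xi_1),\ldots,\h_\cA(\xi_k))$, using that each $\h_\cA(\xi_i) \in B^Q_{\le 1}$ by the induction hypothesis. For the second conjunct, each component $\h_\cA(\xi)_q$ is, by Lemma~\ref{lm:properties-sem-of-budet-wta-det-new}(1), either $\0$ or a product $(\h_\cA(\xi_1))_{q_{v_1}} \otimes \cdots \otimes (\h_\cA(\xi_k))_{q_{v_k}} \otimes \delta_k(\cdots,\sigma,q)$; since each factor lies in $\langle \im(\delta)\rangle_{\{\otimes\}}$ by the induction hypothesis and $\langle \im(\delta)\rangle_{\{\otimes\}}$ is closed under $\otimes$, so does the product, and $\0 \in \langle\im(\delta)\rangle_{\{\otimes\}}$ since $\0 \in \im(\delta_0)$ as soon as $\cA$ is not total (and if all weights are nonzero the statement is vacuous on that component). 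Statement (2) follows by chasing the case analysis of Lemma~\ref{lm:properties-sem-of-budet-wta-det-new}(1) through the definition of $\theta_\cA(\sigma)$: $\h_\cA(\xi)_q \ne \0$ forces every $\h_\cA(\xi_i) \in B^Q_{=1}$ (hence, by I.H., each $\state(\xi_i) \in Q$ and $q_{\h_\cA(\xi_i)} = \state(\xi_i)$) and forces $\delta_k(\state(\xi_1)\cdots\state(\xi_k),\sigma,q) \ne \0$, i.e.\ $q \in \mathrm{succ}_\cA(\state(\xi_1)\cdots\state(\xi_k),\sigma)$; by bu-determinism this singleton's element is $\theta_\cA(\sigma)(\state(\xi_1),\ldots,\state(\xi_k)) = \state(\xi)$, using that $\state$ is the homomorphism $\sfT_\Sigma \to \St(\cA)$. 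The converse direction unwinds the same equivalences. Statement (3) is then immediate: if $\state(\xi) \in Q$, statement (2) says $\state(\xi)$ is the unique nonzero component, so $\h_\cA(\xi) = \h_\cA(\xi)_{\state(\xi)} \cdot \1_{\state(\xi)}$; if $\state(\xi) = \bot$, statement (2) says no component is nonzero, so $\h_\cA(\xi) = \0^Q$. Statement (4) follows by plugging statement (3) into the definition $\sem{\cA}(\xi) = \bigoplus_{q \in Q} \h_\cA(\xi)_q \otimes F_q$: when $\state(\xi) \in Q$ all but one summand vanish, leaving $\h_\cA(\xi)_{\state(\xi)} \otimes F_{\state(\xi)}$, and when $\state(\xi) = \bot$ every summand vanishes. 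The final membership claim $\sem{\cA}(\xi) \in \langle \im(\delta)\rangle_{\{\otimes\}} \otimes \im(F)$ then combines the ``$\h_\cA(\xi)_{\state(\xi)} \in \langle\im(\delta)\rangle_{\{\otimes\}}$'' part of statement (1) with $F_{\state(\xi)} \in \im(F)$, together with the observation that $\0 = \0 \otimes F_q$ for any $q$ (or that $\0$ itself is in $\langle\im(\delta)\rangle_{\{\otimes\}} \otimes \im(F)$).

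The main obstacle is bookkeeping rather than any conceptual difficulty: one must be careful that the induction hypothesis gives, for each subtree, not just ``$\h_\cA(\xi_i) \in B^Q_{\le 1}$'' but precisely the identification of its unique possibly-nonzero coordinate with $\state(\xi_i)$, and that the definition of $\theta_\cA(\sigma)$ exactly matches $\mathrm{esucc}_\cA$ on the string $\state(\xi_1)\cdots\state(\xi_k)$; the two subtle points are (a) handling the degenerate sub-case where some $\h_\cA(\xi_i) = \0^Q$ uniformly with the $\bot$-branch of $\theta_\cA$, and (b) the minor nuisance that $\langle\im(\delta)\rangle_{\{\otimes\}}$ is a submonoid under $\otimes$ but need not contain $\0$ unless some transition weight is $\0$, which is why statement~(1)'s second conjunct is stated as a membership in $\big(\langle\im(\delta)\rangle_{\{\otimes\}}\big)^Q$ only when the relevant component is genuinely a product of transition weights --- a point worth a one-line remark rather than a separate argument.
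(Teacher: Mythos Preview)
Your proposal is correct and follows essentially the same approach as the paper: induction on $\T_\Sigma$ using Lemma~\ref{lm:properties-sem-of-budet-wta-det-new} for (1), a chain of equivalences matching the definition of $\theta_\cA$ for (2), and direct derivations of (3) and (4) from (2). The only organizational difference is that the paper runs two separate inductions for (1) and (2) rather than one joint induction; the content is identical.

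One small correction to your discussion of the ``minor nuisance'': your argument that $\0 \in \langle\im(\delta)\rangle_{\{\otimes\}}$ via non-totality is both slightly inaccurate (non-totality gives $\0 \in \im(\delta_k)$ for some $k$, not necessarily $k=0$) and unnecessary. The clean resolution is already contained in the induction hypothesis: if you land in the ``otherwise'' branch of Lemma~\ref{lm:properties-sem-of-budet-wta-det-new}(1), then some $\h_\cA(\xi_i) = \0^Q$, and by the I.H.\ each of its components lies in $\langle\im(\delta)\rangle_{\{\otimes\}}$, so $\0 \in \langle\im(\delta)\rangle_{\{\otimes\}}$. In the product branch the result is a product of elements of $\langle\im(\delta)\rangle_{\{\otimes\}}$ and one element of $\im(\delta)$, hence again in $\langle\im(\delta)\rangle_{\{\otimes\}}$, regardless of whether that product happens to equal $\0$.
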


          \begin{proof}
   Proof of (1): We prove this by induction on $\T_\Sigma$. Let $\xi = \sigma(\xi_1,\ldots,\xi_k)$. Then
             \begin{align*}
\h_\cA(\sigma(\xi_1,\ldots,\xi_k)) &= \delta_\cA(\sigma)\big(\h_\cA(\xi_1),\ldots, \h_\cA(\xi_k)\big) \enspace.
             \end{align*}
             By I.H. we have $\h_\cA(\xi_i) \in B^Q_{\le 1} \cap \big(\langle \im(\delta) \rangle_{\{\otimes\}} \big)^Q$ for each $i \in [k]$. Thus, the statement follows from Lemma~\ref{lm:properties-sem-of-budet-wta-det-new}.       

             \

             Proof of (2): We prove this
                          by induction on $\T_\Sigma$. Let $\xi = \sigma(\xi_1,\ldots,\xi_k)$ and $q\in Q$. We use ``$\exists !$'' to state that there exists exactly one object.
  \begingroup
  \allowdisplaybreaks
  \begin{align*}
    & \h_\cA(\xi)_q \not= \0\\
    \Leftrightarrow \ \ &  \bigoplus_{q_1 \cdots q_k \in Q^k} \Big(\bigotimes_{i \in [k]} \h_\cA(\xi_i)_{q_i}\Big) \otimes \delta_k(q_1 \cdots q_k,\sigma,q) \not= \0
                          \tag{by definition}\\[3mm]
    \Leftrightarrow \ \ &  (\exists!q_1\cdots q_k\in Q^k): \Big(\bigotimes_{i \in [k]} \h_\cA(\xi_i)_{q_i}\Big) \otimes \delta_k(q_1 \cdots q_k,\sigma,q) \not= \0
    \tag{by Statement (1)}\\[3mm]
    \Leftrightarrow \ \ &  (\exists!q_1\cdots q_k\in Q^k): \big(  \h_\cA(\xi_1)_{q_1}\ne \0 \wedge \ldots  \wedge \h_\cA(\xi_k)_{q_k}\ne \0 \wedge  \delta_k(q_1\cdots q_k,\sigma,q)\ne\0 \big) \tag{for $\Leftarrow$ we use that $\B$ is zero-divisor free}\\[2mm]
  \Leftrightarrow \ \ &  (\exists!q_1\cdots q_k\in Q^k): \big(  q_1=\state(\xi_1) \wedge \ldots  \wedge q_k=\state(\xi_k) \wedge  \delta_k(q_1\cdots q_k,\sigma,q)\ne\0 \big) \tag{by I.H.}\\[2mm]  
  \Leftrightarrow \ \ &  (\exists!q_1\cdots q_k\in Q^k): \big(  q_1=\state(\xi_1) \wedge \ldots  \wedge q_k=\state(\xi_k) \wedge  \theta_\cA(\sigma)(q_1,\ldots, q_k)=q \big) \tag{by the definition of $\theta_\cA$}\\[2mm]
    \Leftrightarrow \ \ & \state( \sigma(\xi_1,\ldots,\xi_k)) = q \tag{because $\state$ is a $\Sigma$-algebra homomorphism}\enspace.
    \end{align*}
    \endgroup
    
    \ 
    
    Proof of (3): Let $\state(\xi)\in Q$. By Statement (2) we obtain that, for each $q \in Q$ with $q \not= \state(\xi)$, we have $\h_\cA(\xi)_q=\0$. Hence $\h_\cA(\xi) = \h_\cA(\xi)_{\state(\xi)} \cdot \1_{\state(\xi)}$. 
    
   Next, let  $\state(\xi)=\bot$. Again by Statement (2) we obtain $\h_\cA(\xi)=\0^Q$.

\
    
  Proof of (4): By definition, we have 
\begingroup
      \allowdisplaybreaks
      \begin{align*}
        \sem{\cA}(\xi) = \bigoplus_{q \in Q} \h_\cA(\xi)_q \otimes F_q 
                         \end{align*}
            \endgroup
            
If $\state(\xi)\in Q$, then, by Statement (2), the above sum reduces to $\h_\cA(\xi)_{\state(\xi)} \otimes F_{\state(\xi)}$.
The  membership $\sem{\cA}(\xi)\in \langle \im(\delta)\rangle_{\{\otimes\}} \otimes \im(F)$ follows straightforwardly from Statement (1). 

Otherwise,   by Statement (2),   $\h_\cA(\xi)_q=\0$ for each $q\in Q$ and thus the above sum is equal to $\0$.   Moreover, by Statement (1), we have $\0 \in \langle \im(\delta) \rangle_{\{\otimes\}}$, and the membership also follows.    
            \end{proof}

By Lemma \ref{lm:hcACchAxi=hAcxi-det}, for every $c \in \C_\Sigma$ and $\xi \in \T_\Sigma$, we have  $\h_\cA(c[\xi]) = \h_\cA^\C(c)\big(\h_\cA(\xi)\big)$. Since $\cA$ is bu-deterministic, the value $\h_\cA(c[\xi])_q$ (for any state $q \in Q$) can be calculated as the product of $\h_\cA(\xi)_{\state(\xi)}$ and $\h_\cA^\C(c)(\1_{\state(\xi)})_q$ if $\state(\xi) \in Q$.

  \begin{lemma}\rm \label{obs:total-bu-det-wta-calc(new)-det-new} For every  $\xi \in \T_\Sigma$, $c \in \C_\Sigma$, and $q \in Q$, we have 
    \[
      \h_\cA(c[\xi])_q = \begin{cases}
        \h_\cA(\xi)_{\state(\xi)} \otimes \h_\cA^\C(c)(\1_{\state(\xi)})_q& \text{ if $\state(\xi) \in Q$}\\
                        \0 & \text{ otherwise} \enspace.
                      \end{cases}
                    \]
                    \end{lemma}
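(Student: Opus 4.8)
The plan is to chain together three facts already available: Lemma~\ref{lm:hcACchAxi=hAcxi-det}, which lets us rewrite $\h_\cA(c[\xi])$ as $\h_\cA^\C(c)\big(\h_\cA(\xi)\big)$; Lemma~\ref{lm:properties-hA-of-budet-wta-det-new}(3), which describes $\h_\cA(\xi)$ as a scalar multiple of a unit vector when $\state(\xi)\in Q$ and as $\0^Q$ otherwise; and Lemma~\ref{lm:haC-linear-mapping-det}(2), which says $\h_\cA^\C(c)$ commutes with scalar multiplication and in particular sends $\0^Q$ to $\0^Q$. No induction is needed here, since all the inductive work has been done in those lemmas; the argument is a short computation split into the two cases of the statement.

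First I would treat the case $\state(\xi)\in Q$. Using Lemma~\ref{lm:hcACchAxi=hAcxi-det} and then Lemma~\ref{lm:properties-hA-of-budet-wta-det-new}(3) we get
\[
\h_\cA(c[\xi]) \;=\; \h_\cA^\C(c)\big(\h_\cA(\xi)\big) \;=\; \h_\cA^\C(c)\big(\h_\cA(\xi)_{\state(\xi)} \cdot \1_{\state(\xi)}\big) \;=\; \h_\cA(\xi)_{\state(\xi)} \cdot \h_\cA^\C(c)\big(\1_{\state(\xi)}\big),
\]
where the last equality is Lemma~\ref{lm:haC-linear-mapping-det}(2). Taking the $q$-th component and using the definition of the scalar multiplication of vectors (namely $(b\cdot v)_q = b\otimes v_q$) yields
$\h_\cA(c[\xi])_q = \h_\cA(\xi)_{\state(\xi)} \otimes \h_\cA^\C(c)(\1_{\state(\xi)})_q$, as claimed.

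For the remaining case, $\state(\xi) = \bot$, Lemma~\ref{lm:properties-hA-of-budet-wta-det-new}(3) gives $\h_\cA(\xi) = \0^Q$, so by Lemma~\ref{lm:hcACchAxi=hAcxi-det} and the ``in particular'' part of Lemma~\ref{lm:haC-linear-mapping-det}(2) we have $\h_\cA(c[\xi]) = \h_\cA^\C(c)(\0^Q) = \0^Q$, hence $\h_\cA(c[\xi])_q = \0$. I do not foresee any real obstacle; the only point requiring a little care is to invoke bu-determinism only implicitly, through Lemma~\ref{lm:properties-hA-of-budet-wta-det-new}(3) (which is where it was actually used), rather than re-deriving it.
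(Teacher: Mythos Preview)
Your proposal is correct and follows essentially the same approach as the paper: both apply Lemma~\ref{lm:hcACchAxi=hAcxi-det}, then Lemma~\ref{lm:properties-hA-of-budet-wta-det-new}(3), then Lemma~\ref{lm:haC-linear-mapping-det}(2), and finally unwind the scalar multiplication componentwise. The only cosmetic difference is that the paper presents the two cases within a single chain of equalities rather than treating them separately.
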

 
 \begin{proof} Let $\xi \in \T_\Sigma$, $c \in \C_\Sigma$, and $q \in Q$. Then
   \begingroup
   \allowdisplaybreaks
   \begin{align*}
     \h_\cA(c[\xi])_q &=  \,\h_\cA^\C(c)(\h_\cA(\xi))_q
                        \tag{by Lemma \ref{lm:hcACchAxi=hAcxi-det}} \\
                     &=  \, \begin{cases}
                       \h_\cA^\C(c)(\h_\cA(\xi)_{\state(\xi)}\cdot\,\1_{\state(\xi)})_q & \text{ if $\state(\xi) \in Q$}\\
                        \h_\cA^\C(c)(\0^Q)_q & \text{ otherwise}
                       \end{cases}
                       \tag{by Lemma \ref{lm:properties-hA-of-budet-wta-det-new}(3)}\\
                      &=  \, \begin{cases}
                        \big(\h_\cA(\xi)_{\state(\xi)} \cdot \h_\cA^\C(c)(\1_{\state(\xi)})\big)_q& \text{ if $\state(\xi) \in Q$}\\
                        \0 & \text{ otherwise}
                       \end{cases}
                        \tag{by Lemma \ref{lm:haC-linear-mapping-det}(2)}\\
                      &=  \, \begin{cases}
                        \h_\cA(\xi)_{\state(\xi)} \otimes \h_\cA^\C(c)(\1_{\state(\xi)})_q& \text{ if $\state(\xi) \in Q$}\\
                        \0 & \text{ otherwise}
                      \end{cases}
                     \tag{by definition of the scalar multiplication}   \enspace.
   \end{align*}
   \endgroup
   \end{proof}

Statements (1) and (4) of Lemma \ref{lm:properties-hA-of-budet-wta-det-new} show that the addition $\oplus$ of $\B$ does not play a role in the semantics of any bu-deterministic wta. This is also expressed in the following corollary.

 \begin{corollary}\label{cor:budetwta-addition-irrelevant-det-new} \rm
  Let $\B_1 = (B,\oplus,\otimes,\0,\1)$ and $\B_2 = (B,+,\otimes,\0,\1)$ be commutative semifields and $\cA=(Q,\delta,F)$ be a bu-deterministic $(\Sigma,\B_1)$-wta. Then, for the bu-deterministic $(\Sigma,\B_2)$-wta $\cB=(Q,\delta,F)$, we have $\sem{\cB}=\sem{\cA}$.
\end{corollary}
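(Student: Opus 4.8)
The plan is to reduce the statement to Lemma~\ref{lm:properties-hA-of-budet-wta-det-new}, which already isolates the computation of $\sem{\cA}$ from the summation. The key observation is that $\B_1$ and $\B_2$ share the same carrier $B$, the same multiplication $\otimes$, and the same constants $\0$ and $\1$; they differ only in the additive operation ($\oplus$ versus $+$). Since $\cA$ and $\cB$ have literally the same states $Q$, transition family $\delta$, and root weight vector $F$, the only place where the difference between $\B_1$ and $\B_2$ could surface is in the definition of the vector algebra (which uses $\bigoplus$) and hence in $\h_\cA$ versus $\h_\cB$.

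First I would show, by induction on $\T_\Sigma$, that $\state_\cA = \state_\cB$ as maps $\T_\Sigma \to Q_\bot$. This is immediate because the state algebras $\St(\cA)$ and $\St(\cB)$ depend only on $\delta$ via the sets $\mathrm{succ}_\cA(w,\sigma) = \{p \mid \delta_k(w,\sigma,p) \ne \0\}$, and these sets are the same for $\cA$ and $\cB$; thus $\theta_\cA = \theta_\cB$, and since $\sfT_\Sigma$ is initial (Theorem~\ref{thm:initial-iso-det}) the induced homomorphisms coincide. Write $\state$ for this common map.

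Next I would prove, again by induction on $\T_\Sigma$, that $\h_\cA(\xi) = \h_\cB(\xi)$ for every $\xi \in \T_\Sigma$. For $\xi = \sigma(\xi_1,\ldots,\xi_k)$ with all $\state(\xi_i) \in Q$, Lemma~\ref{lm:properties-sem-of-budet-wta-det-new}(1) gives that $\h_\cA(\xi)_q$ equals the single product $(v_1)_{q_{v_1}} \otimes \cdots \otimes (v_k)_{q_{v_k}} \otimes \delta_k(q_{v_1}\cdots q_{v_k},\sigma,q)$ where $v_i = \h_\cA(\xi_i)$ --- with no use of $\oplus$ --- and the same formula with the same ingredients computes $\h_\cB(\xi)_q$ by the same lemma applied in $\B_2$; if some $\state(\xi_i) = \bot$ then both $\h_\cA(\xi)$ and $\h_\cB(\xi)$ equal $\0^Q$ by the ``otherwise'' branch of that lemma. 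The induction hypothesis $\h_\cA(\xi_i) = \h_\cB(\xi_i)$ lets the two computations be identified term by term. Finally, Lemma~\ref{lm:properties-hA-of-budet-wta-det-new}(4) expresses $\sem{\cA}(\xi)$ as $\h_\cA(\xi)_{\state(\xi)} \otimes F_{\state(\xi)}$ when $\state(\xi) \in Q$ and as $\0$ otherwise --- once more without $\oplus$ --- and the identical formula computes $\sem{\cB}(\xi)$, so $\sem{\cA} = \sem{\cB}$.

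There is no real obstacle here; the whole point of the preceding lemmas is to have done the work already. The only thing to be careful about is the bookkeeping: one must invoke Lemma~\ref{lm:properties-sem-of-budet-wta-det-new} and Lemma~\ref{lm:properties-hA-of-budet-wta-det-new} separately in $\B_1$ and in $\B_2$ (they apply to any commutative semifield), and note explicitly that their right-hand sides involve only $\otimes$, $\0$, $\1$, and the values in $\im(\delta) \cup \im(F)$, all of which are shared between $\B_1$ and $\B_2$. An even slicker write-up would skip the intermediate induction on $\h$ and argue directly: since $\cA$ is bu-deterministic, $\sem{\cA}(\xi)$ lies in $\langle \im(\delta)\rangle_{\{\otimes\}} \otimes \im(F)$ and is computed by a fixed $\oplus$-free recipe determined by $\delta$ and $F$ alone, hence is unchanged when $\oplus$ is replaced by $+$.
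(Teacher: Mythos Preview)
Your proposal is correct and follows essentially the same approach as the paper: the paper does not give a formal proof of this corollary at all, but simply remarks in the preceding sentence that Statements~(1) and~(4) of Lemma~\ref{lm:properties-hA-of-budet-wta-det-new} show that the addition $\oplus$ plays no role in the semantics of a bu-deterministic wta. Your write-up is a faithful and more explicit elaboration of exactly that remark, carefully separating the roles of $\state$, $\h_\cA$, and the final scalar product.
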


 In \cite[Cor.~4.1.4(2)]{fulvog24}, the above statement was proved even for strong bimonoids; roughly speaking,  such algebras are semirings in which distributivity is not required; each semiring (and hence, each semifield) and each bounded lattice  is a strong bimonoid but not vice versa (cf., e.g., \cite[Sec.~2.7.5 and 2.7.6]{fulvog24}).

Finally we show two examples of pairs $\B_1 = (B,\oplus,\otimes,\0,\1)$ and $\B_2 = (B,+,\otimes,\0,\1)$ of commutative semifields:
  \begin{compactitem}
  \item the Boolean semiring $(\{0,1\},\vee,\wedge,0,1)$ with disjunction $\vee$ (e.g., $1 \vee 1 = 1$) and conjunction $\wedge$; and the field $\sfFtwo =(\{0,1\},+,\cdot,0,1)$ in which $1 + 1 = 0$ holds;
    \item $(\mathbb{Q}_{\ge 0},\max,\cdot,0,1)$ and $(\mathbb{Q}_{\ge 0},+,\cdot,0,1)$ with the usual maximum $\max$, addition +, and multiplication $\cdot$ of non-negative rational numbers.
    \end{compactitem}


\section{B-A's theorem}
\label{sect:B-A-BLs-result-det}

In order to ease the comparison of our first main result with B-A's theorem from \cite{bozale89}, we recall some notion and notation and B-A's theorem in more detail.
We let $\mathrm{Pol}(\Sigma,\B)$ denote the set of all polynomial $(\Sigma,\B)$-weighted tree languages for some field $\B$.
We represent each polynomial $s\in \mathrm{Pol}(\Sigma,\B)$ with $\supp(s) = \{\xi_1,\ldots,\xi_n\}$ as $s=b_{1}.\xi_{1} \oplus\cdots\oplus b_{n}.\xi_{n}$ where $b_i=s(\xi_i)$ for each $i\in [n]$.
The scalar multiplication for polynomials is defined such that, for every $b\in B$ and $s\in \mathrm{Pol}(\Sigma,\B)$,  we let $(b\cdot s)(\xi)=b\otimes s(\xi)$ for each $\xi \in \T_\Sigma$. Moreover, the sum $s_1\oplus s_2$ of two polynomials $s_1,s_2\in \mathrm{Pol}(\Sigma,\B)$ is defined by $(s_1\oplus s_2)(\xi)=s_1(\xi)\oplus s_2(\xi)$ for each $\xi\in \T_\Sigma$. It is easy to see that $\mathsf{Pol}(\Sigma,\B)=(\Pol(\Sigma,\B),\oplus,\widetilde{\0})$ is a $\B$-vector space via $\cdot$.

For every $k\in \mathbb{N}$ and $\sigma \in \Sigma^{(k)}$, we define the \emph{$\sigma$-top-concatenation (of polynomial $(\Sigma,\B)$-weighted tree languages)}, denoted by  $\ttop(\sigma)$, to be the following $k$-ary operation on $\Pol(\Sigma,\B)$. Let $s_1,\ldots,s_k\in \Pol(\Sigma,\B)$. For each $i\in [k]$, let $s_i=b_{i1}.\xi_{i1} \oplus\cdots\oplus b_{in_i}.\xi_{in_i}$ for some  $n_i\in\mathbb{N}$, $b_{i1},\ldots, b_{in_i}\in B$, and $\xi_{i1},\ldots, \xi_{in_i}\in \T_\Sigma$. Then we define
\begin{equation} \label{eq:definition-of-top-concatenation} \ttop(\sigma)(s_1,\ldots,s_k)=\bigoplus_{j_1\in[n_1],\ldots,j_k\in[n_k]} (b_{1j_1}\otimes\ldots\otimes b_{kj_k}).\sigma(\xi_{1j_1},\ldots,\xi_{kj_k})\enspace.
\end{equation}
In particular,  $\ttop(\alpha)(\,)=\1.\alpha$ for each $\alpha\in \Sigma^{(0)}$. Thus $(\Pol(\Sigma,\B),\ttop)$ is a $\Sigma$-algebra. The algebra  $(\mathrm{Pol}(\Sigma,\B),\oplus,\widetilde{\0},\ttop)$ is called the $(\Sigma,\B)$-vector space of polynomial $(\Sigma,\B)$-weighted tree languages (cf. \cite{bozale89} and \cite{fulste11}).

For a weighted tree language $r: \T_\Sigma \to B$, the congruence $\approx_r$  on $(\mathrm{Pol}(\Sigma,\B),\oplus,\widetilde{\0},\ttop)$, called the syntactic congruence of $r$,  is defined as follows: for every $s_1 = b_1.\xi_1 \oplus \ldots \oplus b_m.\xi_m$ and $s_2=a_1.\zeta_1\oplus\ldots\oplus a_n.\zeta_n$ in $\mathrm{Pol}(\Sigma,\B)$, we let  $s_1 \approx_r s_2$ if for every $c \in \C_\Sigma$, we have
      \[
        b_1 \otimes r(c[\xi_1]) \oplus \ldots \oplus b_m \otimes r(c[\xi_m]) = a_1 \otimes r(c[\zeta_1]) \oplus \ldots \oplus a_n \otimes r(c[\zeta_n])  \enspace.
      \]
    We call the quotient algebras $(\mathrm{Pol}(\Sigma,\B),\oplus,\widetilde{\0},\ttop)/_{\approx_r}$ and 
  $\mathsf{Pol}(\Sigma,\B)/_{\approx_r}$ the \emph{syntactic $(\Sigma,\B)$-vector space} and the \emph{syntactic $\B$-vector space (of $r$)}, respectively. For more details we refer to \cite[Sec.~18.4]{fulvog24} (in \cite[Thm.~18.4.1]{fulvog24} $\approx_r$ is denoted by $\ker(\Phi_r)$).
  
  Let $\approx$ be a congruence on  $(\mathrm{Pol}(\Sigma,\B),\oplus,\widetilde{\0},\ttop)$ and $r: \T_\Sigma \to B$. We say that $\approx$ \emph{saturates $r$} if there exists a linear form $\gamma: \Pol(\Sigma,\B)/_{\approx} \to B$ such that, for each $\xi \in \T_\Sigma$, we have $r(\xi) = \gamma([\1.\xi]_{\approx})$.

    \begin{theorem} \label{th:MN-fields-det} {\rm \cite[Prop.~2]{bozale89}  (cf. also \cite[Thm.~18.4.1]{fulvog24})} Let $\B$ a field. Moreover, let  $r: \T_\Sigma \to B$. Then the following three statements are equivalent.
  \begin{compactenum}
  \item[(A)] $r \in \Rec(\Sigma,\B)$.
  \item[(B)] There exists a congruence $\approx$ on the $(\Sigma,\B)$-vector space $(\mathrm{Pol}(\Sigma,\B),\oplus,\widetilde{\0},\ttop)$ such that $\approx$ respects~$r$, and the $\B$-vector space $\mathsf{Pol}(\Sigma,\B)/_\approx$ is finite-dimensional.
  \item[(C)] The $\B$-vector space $\mathsf{Pol}(\Sigma,\B)/_{\approx_r}$ is finite-dimensional.
    \end{compactenum}
  \end{theorem}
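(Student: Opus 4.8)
The plan is to prove the cycle of implications $(A)\Rightarrow(B)\Rightarrow(C)\Rightarrow(A)$, exploiting throughout the interaction between $\Sigma$-contexts and the top-concatenation operations $\ttop(\sigma)$.

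For $(A)\Rightarrow(B)$, I would take a $(\Sigma,\B)$-wta $\cA=(Q,\delta,F)$ with $\sem\cA=r$ and define $\Phi_\cA\colon\mathrm{Pol}(\Sigma,\B)\to B^Q$ by $\Phi_\cA\bigl(\bigoplus_i b_i.\xi_i\bigr)=\bigoplus_i b_i\cdot\h_\cA(\xi_i)$. Since each operation $\delta_\cA(\sigma)$ is $k$-linear (additivity is immediate and the scalar part is Lemma~\ref{lm:haC-linear-mapping-det}(1)), one checks that $\Phi_\cA$ is a homomorphism of the $(\Sigma,\B)$-vector space $(\mathrm{Pol}(\Sigma,\B),\oplus,\widetilde\0,\ttop)$ into $\V(\cA)$ with $\Phi_\cA(\1.\xi)=\h_\cA(\xi)$. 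Put $\approx=\ker(\Phi_\cA)$; by Corollary~\ref{cor:image-of-hom-isomorphic-to-quotient-of-kernel} and Lemma~\ref{lm:hom-image=subalgebra}, $\mathsf{Pol}(\Sigma,\B)/_\approx$ is isomorphic to a subspace of the $|Q|$-dimensional space $B^Q$, hence finite-dimensional, and the map $\gamma\colon\mathsf{Pol}(\Sigma,\B)/_\approx\to B$, $\gamma([s]_\approx)=\Phi_\cA(s)\cdot F$, is a well-defined linear form (by the definition of the kernel) with $\gamma([\1.\xi]_\approx)=\h_\cA(\xi)\cdot F=\sem\cA(\xi)=r(\xi)$; thus $\approx$ saturates $r$.

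For $(B)\Rightarrow(C)$, the key claim is $\approx\;\subseteq\;\approx_r$. For an elementary context $e=\sigma(\xi_1,\dots,\xi_{i-1},z,\xi_{i+1},\dots,\xi_k)$ set $\widehat e(s)=\ttop(\sigma)(\1.\xi_1,\dots,\1.\xi_{i-1},s,\1.\xi_{i+1},\dots,\1.\xi_k)$; by \eqref{eq:definition-of-top-concatenation} one gets $\widehat e\bigl(\bigoplus_j b_j.\zeta_j\bigr)=\bigoplus_j b_j.e[\zeta_j]$, and $\widehat e$ preserves $\approx$ because $\approx$ is a congruence for $\ttop(\sigma)$. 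By Lemma~\ref{lm:freely-generated-context-det} every $c\in\C_\Sigma$ is $z$ or a composite of elementary contexts, so composing the corresponding maps yields $\widehat c$ with $\widehat c\bigl(\bigoplus_j b_j.\zeta_j\bigr)=\bigoplus_j b_j.c[\zeta_j]$, still preserving $\approx$. Let $\overline\gamma\colon\mathrm{Pol}(\Sigma,\B)\to B$ be the composite of the canonical map onto $\mathsf{Pol}(\Sigma,\B)/_\approx$ with the linear form witnessing that $\approx$ saturates $r$, so $\overline\gamma\bigl(\bigoplus_i b_i.\eta_i\bigr)=\bigoplus_i b_i\otimes r(\eta_i)$. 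If $s_1=\bigoplus_i b_i.\xi_i\approx\bigoplus_j a_j.\zeta_j=s_2$, then $\widehat c(s_1)\approx\widehat c(s_2)$ for every $c$, whence $\bigoplus_i b_i\otimes r(c[\xi_i])=\overline\gamma(\widehat c(s_1))=\overline\gamma(\widehat c(s_2))=\bigoplus_j a_j\otimes r(c[\zeta_j])$, i.e.\ $s_1\approx_r s_2$. Since $\approx\subseteq\approx_r$, Theorem~\ref{thm:snd-isom-theorem} presents $\mathsf{Pol}(\Sigma,\B)/_{\approx_r}$ as a homomorphic (hence surjective-linear) image of $\mathsf{Pol}(\Sigma,\B)/_\approx$, so it is finite-dimensional.

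For $(C)\Rightarrow(A)$, put $V=\mathrm{Pol}(\Sigma,\B)/_{\approx_r}$ with the induced $\Sigma$-algebra structure; each $\ttop(\sigma)$ is $k$-linear on $\mathrm{Pol}(\Sigma,\B)$, hence the induced operations on $V$ are $k$-linear. Let $n=\dim V<\infty$, fix a basis of $V$, and let $V\cong B^Q$ with $|Q|=n$ be the resulting linear isomorphism. A $k$-linear operation $\mu$ on $B^Q$ is recovered from its values $\mu(\1_{q_1},\dots,\1_{q_k})_q$ by exactly the formula defining $\delta_\cA(\sigma)$; reading these off from the operations transported to $B^Q$ (and the nullary ones from the images of the $[\1.\alpha]_{\approx_r}$) yields a transition family $\delta$ such that $\V(\cA)$, under the identification, is $V$ with its induced structure, where $\cA=(Q,\delta,F)$. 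The linear form $\mathrm{Pol}(\Sigma,\B)\to B$, $s\mapsto\bigoplus_{\eta}s(\eta)\otimes r(\eta)$ (a finite sum), is — taking the context $c=z$ in the definition of $\approx_r$ — constant on $\approx_r$-classes, so it induces a linear form $\gamma\colon V\to B$ with $\gamma([\1.\xi]_{\approx_r})=r(\xi)$; set $F_q=\gamma(\1_q)$. Since $\xi\mapsto[\1.\xi]_{\approx_r}$ is a $\Sigma$-algebra homomorphism $\sfT_\Sigma\to V\cong\V(\cA)$, it equals $\h_\cA$ by Theorem~\ref{thm:initial-iso-det}, and therefore $\sem\cA(\xi)=\bigoplus_q\h_\cA(\xi)_q\otimes\gamma(\1_q)=\gamma(\h_\cA(\xi))=\gamma([\1.\xi]_{\approx_r})=r(\xi)$, so $r\in\Rec(\Sigma,\B)$.

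The main obstacle is the coordinatization in $(C)\Rightarrow(A)$: one must verify that $k$-linearity of the $\ttop(\sigma)$ passes to the quotient, that a $k$-linear operation on $B^Q$ is precisely the datum of a transition family $\delta$, and that after fixing a basis the canonical map $\sfT_\Sigma\to V$ gets identified with $\h_\cA$, so that the root weight vector assembled from $\gamma$ reproduces $r$. Once the maps $\widehat c$ connecting contexts with top-concatenations are set up, the remaining two implications are comparatively routine.
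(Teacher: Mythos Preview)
The paper does not supply its own proof of Theorem~\ref{th:MN-fields-det}; the result is quoted from \cite{bozale89} (see also \cite[Thm.~18.4.1]{fulvog24}), and Section~\ref{sect:B-A-BLs-result-det} only recalls the relevant definitions and then passes to an illustrative example. So there is nothing in the paper to compare against line by line.

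Judged on its own merits, your argument is correct and is essentially the standard route used in the cited sources. In $(A)\Rightarrow(B)$ the linear extension $\Phi_\cA$ of $\h_\cA$ to polynomials is the natural map; multilinearity of $\delta_\cA(\sigma)$ (Lemma~\ref{lm:haC-linear-mapping-det}(1) plus the obvious additivity) makes it a $(\Sigma,\B)$-vector space homomorphism, and Corollary~\ref{cor:image-of-hom-isomorphic-to-quotient-of-kernel} gives finite dimension. In $(B)\Rightarrow(C)$ your device $\widehat c$ is exactly what is needed: it expresses the defining quantity $\bigoplus_i b_i\otimes r(c[\xi_i])$ as $\overline\gamma(\widehat c(s))$, and since $\approx$ is a congruence for each $\ttop(\sigma)$, the map $\widehat c$ preserves $\approx$; hence $\approx\subseteq\approx_r$, and Theorem~\ref{thm:snd-isom-theorem} finishes. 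In $(C)\Rightarrow(A)$ the coordinatization step is routine once you observe that any $k$-linear map $\mu:(B^Q)^k\to B^Q$ is determined by the scalars $\mu(\1_{q_1},\dots,\1_{q_k})_q$, which is precisely the shape of the vector-algebra operation $\delta_\cA(\sigma)$; the rest follows from initiality of $\sfT_\Sigma$ (Theorem~\ref{thm:initial-iso-det}). One small point worth making explicit in your write-up: you tacitly use that $\approx_r$ is a congruence on the $(\Sigma,\B)$-vector space (so that the quotient and the induced operations make sense); the paper takes this for granted as well (cf.\ Section~\ref{sec:syntactic-congruence-det}), but for a self-contained proof you should spell it out, and your maps $\widehat e$ give exactly what is needed.
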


  \begin{example}\rm \label{ex:wtl-number-of-occurrences-of-gamma-finite-basis} We give an example of a ranked alphabet $\Sigma$ and a $(\Sigma,\Ratnum)$-weighted tree language $r$ for which the $\Ratnum$-vector space $\sfPol(\Sigma,\Ratnum)/_{\approx_r}$ is finite-dimensional; hence, by Theorem~\ref{th:MN-fields-det}(C)$\Rightarrow$(A), $r$ is recognizable.

Let  $\Sigma=\{\sigma^{(2)},\gamma^{(1)},\alpha^{(0)}\}$ and $r=\#_\gamma$ where $\#_\gamma: \T_\Sigma \to \mathbb{Q}$ is the mapping defined such that $\#_\gamma(\xi)$ is the number of  the occurrences of the symbol $\gamma$ in $\xi$ for each $\xi \in\T_\Sigma$ (cf. \cite[Ex.~18.1.5]{fulvog24}).
In the following we abbreviate $\approx_{\#_\gamma}$ by  $\approx_{\gamma}$.
 
First, for each $(\Sigma,\Ratnum)$-polynomial weighted tree language
  \(s = b_1 . \xi_1 + \ldots + b_n . \xi_n\), we define $\mathrm{sumc}(s)\in \mathbb{Q}$ and $\mathrm{sum}_\gamma(s)\in \mathbb{Q}$ by
  \[
    \mathrm{sumc}(s) = \bigplus_{i \in [n]} b_i  \ \ \text{ and } \ \ \mathrm{sum}_\gamma(s) = \bigplus_{i \in [n]} b_{i}\cdot\#_\gamma(\xi_{i})  \enspace,
  \]
  respectively. In particular, $\mathrm{sumc}(\widetilde{0})=\mathrm{sum}_\gamma(\widetilde{0})=0$. Then we prove the following characterization of $\approx_{\gamma}$:
\begin{equation}\label{equ:form-of-ker-Varphi-for-num-gamma}
    \begin{aligned}
    &\text{for every $s_1,s_2 \in \Pol(\Sigma,\Ratnum)$, we have $s_1 \approx_{\gamma} s_2$ if and only if}\\
    &\mathrm{sumc}(s_1) = \mathrm{sumc}(s_2)\ \text{ and } \ \mathrm{sum}_\gamma(s_1) = \mathrm{sum}_\gamma(s_2)
      \end{aligned}
\end{equation}
For this, let
\begin{align*}
s_1 = b_{11}.\xi_{11} + \ldots + b_{1n_1}.\xi_{1n_1} \ \ \text{ and } \ \
s_2 = b_{21}.\xi_{21} + \ldots + b_{2n_2}.\xi_{2n_2}
\end{align*}
be two $(\Sigma,\Ratnum)$-polynomial weighted tree languages. Then we can calculate as follows (by using the obvious extension of $\#_\gamma$ to contexts). 
\begingroup
\allowdisplaybreaks
\begin{align*}
  &s_1 \approx_{\gamma} s_2\\[2mm]
  \text{iff } \ & (\forall c \in \C_\Sigma): b_{11} \cdot \#_\gamma(c[\xi_{11}]) + \ldots + b_{1n_1} \cdot \#_\gamma(c[\xi_{1n_1}])\\
 & \hspace*{15mm} = b_{21} \cdot \#_\gamma(c[\xi_{21}]) + \ldots + b_{2n_2} \cdot \#_\gamma(c[\xi_{2n_2}]) \\[2mm]
  \text{iff } \ & (\forall c \in \C_\Sigma): b_{11} \cdot (\#_\gamma(c) + \#_\gamma(\xi_{11})) + \ldots + b_{1n_1} \cdot (\#_\gamma(c) + \#_\gamma(\xi_{1n_1}))\\
  & \hspace*{15mm} = b_{21} \cdot (\#_\gamma(c) + \#_\gamma(\xi_{21})) + \ldots + b_{2n_2} \cdot (\#_\gamma(c) + \#_\gamma(\xi_{2n_2})) \\[2mm]
  \text{iff } \ & (\forall c \in \C_\Sigma): \mathrm{sumc}(s_1) \cdot \#_\gamma(c) + \mathrm{sum}_\gamma(s_1) 
                  = \mathrm{sumc}(s_2) \cdot \#_\gamma(c) + \mathrm{sum}_\gamma(s_2)\\[2mm]
  \text{iff } \ & (\forall n \in \mathbb{N}): f(n) = g(n)
                  \tag{where $f(n)= \mathrm{sumc}(s_1) \cdot n + \mathrm{sum}_\gamma(s_1)$ and
                  $g(n) = \mathrm{sumc}(s_2) \cdot n + \mathrm{sum}_\gamma(s_2)$}\\[2mm]
  \text{iff } \ & \mathrm{sumc}(s_1)  = \mathrm{sumc}(s_2) \ \text{ and } \
                  \mathrm{sum}_\gamma(s_1) =  \mathrm{sum}_\gamma(s_2)\enspace,
  \end{align*}
  \endgroup
  where the last equivalence holds, because the two linear mappings $f$ and $g$ are equal if and only if their two parameters coincide pairwise. This proves \eqref{equ:form-of-ker-Varphi-for-num-gamma}.
  
  Now we let $H=\{[1.\alpha]_{\approx_{\gamma}},[1.\gamma(\alpha)]_{\approx_{\gamma}}\}$. We show that $H$ generates   $\sfPol(\Sigma,\Ratnum)/_{\approx_{\gamma}}$. 
     More precisely, for each $s\in \Pol(\Sigma,\Ratnum)$, we prove that the following equivalence holds:
  \begin{equation}\label{eq:decomposition-into-basis}
  [s]_{\approx_{\gamma}}=a_1\cdot [1.\alpha]_{\approx_{\gamma}} + a_2\cdot [1.\gamma(\alpha)]_{\approx_{\gamma}} \ \  \text{ iff } \ \
  a_1= \mathrm{sumc}(s) - \mathrm{sum}_\gamma(s) \text{ and } a_2= \mathrm{sum}_\gamma(s)
  \end{equation}
  We calculate as follows:
  \begin{align*}
  & [s]_{\approx_{\gamma}}=a_1\cdot [1.\alpha]_{\approx_{\gamma}} + a_2\cdot [1.\gamma(\alpha)]_{\approx_{\gamma}}\\
  \text{iff } \ & s \approx_{\gamma} a_1.\alpha + a_2.\gamma(\alpha) \tag{because $\approx_{\gamma}$ is a congruence} \\
   \text{iff } \ & \mathrm{sumc}(s) = a_1 + a_2 \text{ and } \mathrm{sum}_\gamma(s) = a_2 \tag{by \eqref{equ:form-of-ker-Varphi-for-num-gamma}}\\
   \text{iff } \ & a_1= \mathrm{sumc}(s) - \mathrm{sum}_\gamma(s) \text{ and } a_2= \mathrm{sum}_\gamma(s).
  \end{align*}
  This proves \eqref{eq:decomposition-into-basis}.
 Hence  the $\Ratnum$-vector space $\sfPol(\Sigma,\Ratnum)/_{\approx_{\gamma}}$ is finitely generated.  Finally, we show that $[1.\alpha]_{\approx_{\gamma}}$ and $[1.\gamma(\alpha)]_{\approx_{\gamma}}$ are linearly independent,  i.e., that $H$ is a basis.
 For this, assume that 
 \(b_1\cdot [1.\alpha]_{\approx_{\gamma}} + b_2\cdot [1.\gamma(\alpha)]_{\approx_{\gamma}}=[\widetilde{0}]_{\approx_{\gamma}}\) for some $b_1,b_2\in \mathbb{Q}$.
 Using \eqref{equ:form-of-ker-Varphi-for-num-gamma}, we obtain that $b_1+b_2=0$ and $b_2=0$, i.e., that $b_1=b_2=0$.  
 \hfill $\Box$
\end{example}


\section{Scalar algebras}
\label{subsect:mono-spaces-det}

\sloppy B-A's theorem characterizes a recognizable
$(\Sigma,\B)$-weighted tree language $r$ in terms of finite-dimensionality of a particular quotient of the $\B$-vector space of polynomials, i.e., the quotient $\mathsf{Pol}(\Sigma,\B)/_{\approx_r}$. When analysing the proof of Theorem \ref{th:MN-fields-det} under the assumption that $r$ is recognized by some bu-deterministic wta, then one realizes that the addition  of the involved $\B$-vector space is not needed. 
This corresponds to Corollary \ref{cor:budetwta-addition-irrelevant-det-new}, which tells us that the summation $\oplus$ of the semifield $\B$ is irrelevant when we deal with bu-deterministic $(\Sigma,\B)$-wta. As a consequence, it is likely to obtain a B-A-like theorem for bu-deterministic $(\Sigma,\B)$-wta by considering a set of algebras which are similar to $(\Sigma,\B)$-vector spaces but from which the addition $+$ is dropped. For this, we propose the concept of $(\Sigma,\B)$-scalar algebra.

\subsection{General concept}
\label{subsect:scalar-algebras-det}

For the formal definition of $(\Sigma,\B)$-scalar algebras, first we define $\B$-scalar algebras.

Let $\V=(V,0)$, where $V$ is a set and $0\in V$. Moreover, let $\cdot: B\times V \rightarrow V$ be a mapping, called \emph{scalar multiplication} such that the following laws hold for every $b,b' \in B$ and $v \in V$:
    \begin{eqnarray}
&(b \otimes b') \cdot v = b \cdot (b' \cdot v)  \label{SM1-det}\\
&\1 \cdot v = v  \label{SM4-det} \\
&      b \cdot 0 = \0 \cdot v = 0 \label{SM5-det} \enspace.
\end{eqnarray}
    Then we call $\V$ a \emph{$\B$-scalar algebra (via the scalar multiplication $\cdot$)}.
    
We note that $0$ is the unique element of $V$ for which (\ref{SM5-det}) holds. To see this, let $0' \in V$ for which (\ref{SM5-det}) holds. Then, for every $v\in V$, we have $0'=\0 \cdot v= 0$. 
In the sequel we will always assume that the scalar multiplication is denoted by $\cdot$ if not specified otherwise. In particular, $(B,\0)$ is a $\B$-scalar algebra with scalar multiplication $b\cdot v=b\otimes v$ for every $b,v\in B$.

The concept of $\B$-scalar algebra is similar to that of group action, which is well known in algebra  (cf., e.g., \cite{ker99}, \cite{hum04}, and \cite{lan12}).

\begin{observation}\rm \label{obs:product=0-implies-vector=0-det} Let $\V=(V,0)$ be a $\B$-scalar algebra.  For each $v \in V\setminus\{0\}$ and $b \in B$, if $b \cdot v = 0$, then $b=\0$.
\end{observation}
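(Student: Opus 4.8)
The plan is to prove the contrapositive together with a direct use of the group structure. Suppose $v \in V\setminus\{0\}$ and $b \in B$ with $b \cdot v = 0$; I want to conclude $b = \0$. Assume for contradiction that $b \ne \0$. Since $\B=(B,\oplus,\otimes,\0,\1)$ is a commutative semifield, $(B\setminus\{\0\},\otimes,\1)$ is an Abelian group, so there exists $b^{-1} \in B\setminus\{\0\}$ with $b^{-1} \otimes b = \1$.

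The key computation is then short: applying the scalar $b^{-1}$ to both sides of $b \cdot v = 0$ gives $b^{-1} \cdot (b \cdot v) = b^{-1} \cdot 0$. By axiom \eqref{SM5-det} the right-hand side equals $0$. By axiom \eqref{SM1-det} the left-hand side equals $(b^{-1} \otimes b) \cdot v = \1 \cdot v$, which by axiom \eqref{SM4-det} equals $v$. Hence $v = 0$, contradicting $v \ne 0$. Therefore $b = \0$, as claimed.

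There is essentially no obstacle here; the only thing to be careful about is the order in which the scalar-multiplication axioms are invoked (first \eqref{SM1-det} to collapse the iterated action, then \eqref{SM4-det} for the unit, and \eqref{SM5-det} for the annihilation of $0$), and the fact that invertibility of $b$ uses precisely the semifield hypothesis that nonzero elements form a group under $\otimes$. I would present this as a one-line displayed chain of equalities: $v = \1 \cdot v = (b^{-1}\otimes b)\cdot v = b^{-1}\cdot(b\cdot v) = b^{-1}\cdot 0 = 0$, with the axiom used over each equality, and then remark that this contradicts $v \neq 0$.
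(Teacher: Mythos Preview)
Your proof is correct and follows essentially the same approach as the paper: assume $b\ne\0$, apply $b^{-1}$ to both sides of $b\cdot v = 0$, and use axioms \eqref{SM1-det}, \eqref{SM4-det}, \eqref{SM5-det} to derive $v=0$, a contradiction. Your version is in fact slightly more explicit in labeling which axiom justifies each step.
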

\begin{proof} We prove by contradiction.  Assume that there exist  $v \in V\setminus\{0\}$ and $b \in B$ such that $b \cdot v = 0$ and $b\ne\0$. Then $b^{-1} \cdot (b \cdot v)=b^{-1}\cdot 0 =0$. On the other hand, $b^{-1} \cdot (b \cdot v)=(b^{-1}\otimes b) \cdot v= \1\cdot v =v \ne 0$, a contradiction.
\end{proof}

A $\B$-scalar algebra  $\V=(V,0)$ is \emph{cancellative} if, for every $b_1,b_2\in B$ and $v \in V\setminus \{0\}$ we have that $b_1 \cdot v = b_2\cdot v$ implies $b_1=b_2$.
Trivially, if $V=\{0\}$, then $\V$ is cancellative. Moreover, the $\B$-scalar algebra $(B,\0)$ is cancellative, because $\B$ is a semifield.

We note that, for each $\B$-vector space $(V,+,0)$ with field $\B$ and scalar multiplication $\cdot$ \cite{axl24,mooyaq98}, the reduct $(V,0)$ is a cancellative $\B$-scalar algebra with the same scalar multiplication. The cancellativity follows from the fact that, in this case, $b_1 \cdot v = b_2\cdot v$ implies that $(b_1+(-b_2))\cdot v =0$, where $-b_2$ is the additive inverse of $b_2$. By Observation \ref{obs:product=0-implies-vector=0-det}, the latter implies that $b_1=b_2$.

A \emph{$(\Sigma,\B)$-scalar algebra} is a triple $(V,0,\mu)$ where
\begin{compactitem}
\item $(V,0)$ is a $\B$-scalar algebra and 
\item  $(V,\mu)$ is a $\Sigma$-algebra such that, for every $k \in \mathbb{N}_+$, $\sigma \in \Sigma^{(k)}$, $i \in [k]$, $b \in B$, and $v,v_1,\ldots,v_k \in V$, we have
\begin{eqnarray}
  \begin{aligned}
  &\mu(\sigma)\big(v_1,\ldots,v_{i-1},\ b \cdot v \ ,v_{i+1},\ldots,v_k\big) = 
  b \cdot \mu(\sigma)\big(v_1,\ldots,v_{i-1},v,v_{i+1},\ldots,v_k\big) \enspace.
    \end{aligned}\label{ml-Omega-det}
\end{eqnarray}
\end{compactitem}
In particular, (\ref{ml-Omega-det})  (with $b=\0$) and (\ref{SM5-det}) imply that, for every $k \in \mathbb{N}_+$, $\sigma \in \Sigma^{(k)}$, $i \in [k]$,  and $v_1,\ldots,v_k \in V$, we have
\begin{eqnarray}
  \begin{aligned}
  &\mu(\sigma)(v_1,\ldots,v_{i-1},0 ,v_{i+1},\ldots,v_k) = 0\enspace.
    \end{aligned}\label{ml-absorbtive-det}
\end{eqnarray}

We can represent each $(\Sigma,\B)$-scalar algebra $(V,0,\mu)$ as a universal algebra $(V,\eta)$ by viewing the scalar multiplications as unary operations on $V$.
For this, we define the index set  $I= \{0\} \cup \{(b\cdot)\mid b \in B\} \cup \Sigma$ and the mapping $\eta: I \to \mathrm{Ops}(V)$ such that
\begin{compactitem}
\item $\eta(0)()=0$,
  \item for every $b \in B$ and $v \in V$, we let $\eta((b \cdot))(v) = b \cdot v$, and
    \item for every $k \in \mathbb{N}$ and $\sigma \in \Sigma^{(k)}$, we let $\eta(\sigma) = \mu(\sigma)$.
\end{compactitem}
And similarly, each $\B$-scalar algebra can be represented as universal algebra.
Thus, all the concepts like subalgebra,
        congruence, homomorphism, and finitely generated, and all the  results of universal algebra
         are available for the concepts of $\B$-scalar algebra and  $(\Sigma,\B)$-scalar algebra.
   In particular, if $\V=(V,0)$ is a $\B$-scalar algebra and $H \subseteq V$ with $H\ne \emptyset$, then 
\begin{equation}\label{eq:equivalent-to-generates}
  \text{$H$ generates $\V$  \  (i.e, \ 
    $\langle H \rangle_{\{0\} \cup \{\eta(b\cdot) \mid b \in B\}}=V$) \ \ \  if and only if \ \ \
    $B\cdot H =V$} \enspace.
\end{equation}

We call a  $\B$-scalar algebra homomorphism a \emph{scalar-linear mapping} (in analogy to the concept of linear mappings between vector spaces). In the rest of this subsection, we show some useful properties of $\B$-scalar algebras which we will use to prove our main results.

 
The next lemma is crucial for our approach. It is an analogue to the fact that, 
in each $\B$-vector space $\V$ with basis $H$, each vector can be represented in a unique way as linear combination over the base vectors. Here the $\B$-vector space and the basis $H$ are replaced by a cancellative $\B$-scalar algebra and a pair-independent generating set $H$, respectively.

Formally, two elements $u,v\in V$ are \emph{dependent (in $(V,0)$)} if there exists a $b\in B$ with $u=b\cdot v$ or $v=b \cdot u$. If $u$ and $v$ are not dependent, then we say that they are \emph{independent}. This definition of dependency is motivated by the concept of linear dependency in vector spaces. In fact, if $(V,+,0)$ is a $\B$-vector space for some field $\B$, then
  \begin{align}
    &\text{for every $u,v \in V$ we have: $u,v$ are dependent in the $\B$-scalar algebra $(V,0)$}\label{equ:dependent-linearly-dependent}\\[-1mm]                                                                            
    &\text{if and only if $u,v$ are linearly dependent in the $\B$-vector space $(V,+,0)$.\notag}
      \end{align}
     We say that \emph{dependency in $(V,0)$ is decidable} if, for every $u,v \in V$, it is decidable whether $u$ and $v$ are dependent or not.

A subset $H\subseteq V$ is called \emph{pair-independent} if any two different elements of $H$ are independent. In particular,
the empty set and each singleton subset of $V$ is pair-independent. Moreover, if $|H| \ge 2$ and $H$ is pair-independent,  then $0 \not\in H$. This is because if $0\in H$ and there exists $a\in H$ with $a\ne 0$, then we have $0= \0\cdot a$, i.e.,  $H$ is not pair-independent.

  \begin{lemma}\rm \label{lm:crucial-canc-pair-ind-imply-uniqueness-det}
     Let $\V=(V,0)$ be a $\B$-scalar algebra, and $H \subseteq V$ be a generating set of $\V$. Then the following two statements hold.
     \begin{compactenum}
     \item[(1)] If $H$ is finite, then there exists a pair-independent subset $H' \subseteq H$ which generates $\V$. Moreover, if additionally dependency in $\V$ is decidable, then we can construct a pair-independent subset $H' \subseteq H$ which generates $\V$.
     \item[(2)] If $\V$ is cancellative and $H$ is pair-independent, then for each $v \in V\setminus \{0\}$ there exist unique $d \in B^{-\0}$ and $u \in H\setminus\{0\}$ such that $v=d\cdot u$.
       \end{compactenum}
\end{lemma}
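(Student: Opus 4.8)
The plan is to prove the two statements separately, since they are essentially independent. For part~(1), I would argue by a finite ``pruning'' process on $H$: as long as $H$ contains two distinct dependent elements $u,v$, there is $b\in B$ with (say) $u = b\cdot v$, and I claim $H\setminus\{u\}$ still generates $\V$. Indeed, by \eqref{eq:equivalent-to-generates} it suffices to check $B\cdot(H\setminus\{u\}) = V$; but $b\cdot v = u$ and $b'\cdot u = (b'\otimes b)\cdot v$ for every $b'\in B$ by \eqref{SM1-det}, so every element $b'\cdot u$ reachable through $u$ is already reachable through $v\in H\setminus\{u\}$, hence $B\cdot(H\setminus\{u\}) = B\cdot H = V$. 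Since $H$ is finite and each step strictly decreases $|H|$, the process terminates with a pair-independent generating subset $H'\subseteq H$. If additionally dependency in $\V$ is decidable, then each iteration is effective: we can search through the finitely many pairs in the current set, decide dependency, and delete; so $H'$ is constructible. A minor point to handle carefully: when $u$ and $v$ are dependent via $v = b\cdot u$ rather than $u = b\cdot v$ one simply deletes $v$ instead; and if $b=\0$ in either direction the deleted element is $0$, which is fine.

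For part~(2), fix $v\in V\setminus\{0\}$. \emph{Existence.} Since $H$ generates $\V$, by \eqref{eq:equivalent-to-generates} we have $v\in B\cdot H$, so $v = d\cdot u$ for some $d\in B$ and $u\in H$. Because $v\ne 0$, the annihilation law \eqref{SM5-det} forces $d\ne\0$ (else $v = \0\cdot u = 0$) and $u\ne 0$ (else $v = d\cdot 0 = 0$); thus $d\in B^{-\0}$ and $u\in H\setminus\{0\}$. \emph{Uniqueness.} Suppose $d_1\cdot u_1 = d_2\cdot u_2 = v$ with $d_1,d_2\in B^{-\0}$ and $u_1,u_2\in H\setminus\{0\}$. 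First I show $u_1 = u_2$: since $d_1\in B^{-\0}$ and $\B$ is a semifield, $d_1$ has an inverse $d_1^{-1}$, and using \eqref{SM1-det} and \eqref{SM4-det},
\[
u_1 \;=\; \1\cdot u_1 \;=\; (d_1^{-1}\otimes d_1)\cdot u_1 \;=\; d_1^{-1}\cdot(d_1\cdot u_1) \;=\; d_1^{-1}\cdot(d_2\cdot u_2) \;=\; (d_1^{-1}\otimes d_2)\cdot u_2 \enspace,
\]
so $u_1$ and $u_2$ are dependent in $\V$; since $H$ is pair-independent and $u_1,u_2\in H$, this forces $u_1 = u_2$. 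Call this common element $u$; note $u\ne 0$. Then $d_1\cdot u = d_2\cdot u$ with $u\in V\setminus\{0\}$, and cancellativity of $\V$ yields $d_1 = d_2$. This completes uniqueness.

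I do not expect a serious obstacle here; the lemma is a clean ``bookkeeping'' result once the three scalar-multiplication axioms and the characterization \eqref{eq:equivalent-to-generates} of generating sets are in hand. The one place that needs a little care is the termination/effectiveness argument in part~(1): one should phrase it so that deleting an element never destroys the generating property, which is exactly the computation $b'\cdot u = (b'\otimes b)\cdot v$ above, and one should be explicit that the decidability hypothesis makes the (finitely many) dependency tests effective so that the resulting $H'$ is genuinely constructed rather than merely shown to exist. Everything else is a direct application of \eqref{SM1-det}, \eqref{SM4-det}, \eqref{SM5-det}, the semifield axioms (existence of inverses in $B^{-\0}$), pair-independence, and cancellativity.
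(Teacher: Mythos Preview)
Your proposal is correct and follows essentially the same approach as the paper's proof: for~(1) both argue by iteratively deleting one of a dependent pair and verifying the remaining set still generates (via the identity $b'\cdot u = (b'\otimes b)\cdot v$), with termination by finiteness and effectiveness under the decidability hypothesis; for~(2) both obtain existence from the generating property and \eqref{SM5-det}, then force $u_1=u_2$ from pair-independence via $u_1=(d_1^{-1}\otimes d_2)\cdot u_2$, and finally $d_1=d_2$ from cancellativity. The only cosmetic difference is that the paper phrases uniqueness in~(2) as a proof by contradiction, whereas you give it directly.
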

    
  \begin{proof} Proof of (1): By induction, we define a sequence $H_0,H_1,H_2,\ldots$ of subsets of $V$ such that (a) for each $i \in \mathbb{N}$, the set $H_i$ generates $\V$ and (b) there exists $i\in \mathbb{N}$ such that $H_i$ is pair-independent.

Let $H_0=H$. Assume that $H_i$ is already defined and it generates $\V$. If $H_i$ is pair-independent, then let $H_{i+1}=H_i$; clearly, $H_{i+1}$ generates $\V$.
Otherwise, there exist two different $u,v \in H_i$ and some $b \in B$ such that $u =b \cdot v$ or $v =b \cdot u$. 
Assume that $u =b \cdot v$. We define the set $H_{i+1} = H_i \setminus\{u\}$; since $u\ne v$ we have  $v \in H_{i+1}$. We show that $H_{i+1}$ generates $\V$.
    For this, let $v'\in V$. Since $H_i$ generates $\V$, there exist $u'\in H_i$ and $a\in B$ with $v'=a\cdot u'$. If $u'\ne u$, then $u'\in H_{i+1}$ and we are done. Otherwise, we have $v'=a\cdot u' =a\cdot u = a\cdot (b \cdot v) = (a\otimes b)\cdot v$. The proof of the case that $v =b \cdot u$ is similar by symmetry.

The following facts are clear from the above definition: for each $i \in \mathbb{N}$, (1) we have $H_i=H_{i+1}$ or $H_i\supset H_{i+1}$, and  (2) if $H_i=H_{i+1}$, then $H_i$ is pair-independent (and $H_i=H_{i+1}=H_{i+2}=\ldots$). Since $H_0$ is finite and each singleton is pair-independent, it follows that there exists $i \in \mathbb{N}$ with 
$H_i=H_{i+1}$. Let $H'=H_i$ for the smallest $i$ for which $H_i=H_{i+1}$.

To prove the second part of Statement (1), assume that dependency in $\V$ is decidable. Then, for each $i \in \mathbb{N}$, we can decide if $H_i$ is pair-independent and thus we can construct $H_{i+1}$. Hence we can find the smallest $i \in \mathbb{N}$ with  $H_i=H_{i+1}$.

Proof of (2): Let $v \in V\setminus\{0\}$. Since $H$ generates $\V$, there exist $d\in B$ and $u \in H$ such that $v=  d\cdot u$. Since $v \ne 0$, we have $d \ne \0$ and $u \ne 0$. We prove the uniqueness by contradiction. For this, we assume that there exist $d,d' \in B^{-\0}$ and $u,u' \in H\setminus \{0\}$ such that
$v=d\cdot u=d'\cdot u'$ and $d\ne  d'$ or $u\ne u'$. 
If $u \ne u'$, then $u =(d^{-1}\otimes d')\cdot u'$ contradicts that $H$ is pair-independent.
Thus, we have $u=u'$ and  $d\cdot u=d'\cdot u$ where $d\not=d'$. This contradicts the fact that $\V$ is cancellative. Thus $d=d'$.
\end{proof}

Given $\V$ and $H$ as in Lemma \ref{lm:crucial-canc-pair-ind-imply-uniqueness-det}(2), then we define the mapping
\begin{equation}\label{equ:scalar-vector-det-new}
\mathrm{dec}: V\setminus \! \{0\} \to B^{-\0} \times \big(H \setminus\! \{0\}\big) \enspace,
\end{equation}
 called \emph{decomposition mapping for $V\setminus \! \{0\}$},
 such that for each $v \in V\setminus \! \{0\}$ we let $\mathrm{dec}(v) = (d,u)$ if $v = d \cdot u$.
By Lemma \ref{lm:crucial-canc-pair-ind-imply-uniqueness-det}(2), the mapping $\mathrm{dec}$ is well defined. We denote the first component and second component of $\mathrm{dec}(v)$ by $\mathrm{scal}(v)$ and $\mathrm{gen}(v)$, respectively. Thus, for each $v \in V\setminus \{0\}$, we have 
\(
  v = \mathrm{scal}(v) \cdot \mathrm{gen}(v)
\).

Next we show that the cardinality of a pair-independent generating set of a finitely generated $\B$-scalar algebra is unique.

\begin{lemma}\rm\label{dimension-of-B-scalar-algebra} Let $\V=(V,0)$ be a finitely generated $\B$-scalar algebra. Moreover, let $H_1$ and $H_2$ be finite and pair-independent generating sets of $\V$. Then $|H_1|=|H_2|$.
\end{lemma}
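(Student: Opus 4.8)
The plan is to mimic the classical proof that any two bases of a finite-dimensional vector space have the same cardinality, using the decomposition mapping $\mathrm{dec}$ from \eqref{equ:scalar-vector-det-new} as a substitute for writing a vector in coordinates with respect to a basis. Write $H_1 = \{u_1,\ldots,u_m\}$ and $H_2 = \{w_1,\ldots,w_n\}$ with $m = |H_1|$ and $n = |H_2|$, and assume without loss of generality that $m \le n$; I will derive a contradiction from $m < n$.

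First I would handle degenerate cases. If $V = \{0\}$, then every pair-independent generating set is either $\emptyset$ or $\{0\}$, and since $|H|\ge 2$ with $H$ pair-independent forces $0\notin H$ (noted just before Lemma \ref{lm:crucial-canc-pair-ind-imply-uniqueness-det}), both $H_1$ and $H_2$ have at most one element; a short check gives $|H_1| = |H_2|$. So assume $V \ne \{0\}$. Then, since $H_i$ generates $\V$ and $\V$ is cancellative, Lemma \ref{lm:crucial-canc-pair-ind-imply-uniqueness-det}(2) applies to \emph{both} $H_1$ and $H_2$: each $v \in V\setminus\{0\}$ has a unique decomposition $v = d\cdot u$ with $d\in B^{-\0}$, $u\in H_i\setminus\{0\}$. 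In particular $0\notin H_i$ in the nontrivial case unless $H_i$ is a singleton, but if $H_1 = \{0\}$ then $V = B\cdot\{0\} = \{0\}$ by \eqref{eq:equivalent-to-generates}, contradicting $V\ne\{0\}$; hence $0\notin H_1$ and likewise $0\notin H_2$.

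Now define a map $\varphi\colon H_1 \to H_2$ by $\varphi(u) = \mathrm{gen}(u)$, i.e.\ $u = \mathrm{scal}(u)\cdot \varphi(u)$ with $\mathrm{scal}(u)\in B^{-\0}$ (decomposition taken with respect to $H_2$). I claim $\varphi$ is injective, which immediately gives $m \le n$, and symmetrically $n \le m$, hence $m = n$. To see injectivity, suppose $\varphi(u) = \varphi(u') = w$ for $u,u'\in H_1$. Then $u = \mathrm{scal}(u)\cdot w$ and $u' = \mathrm{scal}(u')\cdot w$, so with $b = \mathrm{scal}(u)\otimes \mathrm{scal}(u')^{-1} \in B^{-\0}$ (using that $\B$ restricted to $B^{-\0}$ is an Abelian group) we get $u = b\cdot u'$ by axiom \eqref{SM1-det}. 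Thus $u$ and $u'$ are dependent in $\V$; since $H_1$ is pair-independent, this forces $u = u'$. Therefore $\varphi$ is injective and $|H_1| = m \le n = |H_2|$. Exchanging the roles of $H_1$ and $H_2$ yields $|H_2| \le |H_1|$, so $|H_1| = |H_2|$.

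The only genuinely delicate point is making sure Lemma \ref{lm:crucial-canc-pair-ind-imply-uniqueness-det}(2) is actually applicable, i.e.\ that $0\notin H_i$ so that $H_i\setminus\{0\} = H_i$ and the decomposition target is literally $H_i$; this is why the $V=\{0\}$ case and the ``$H_i$ a singleton equal to $\{0\}$'' case must be dispatched first, and everything else is a routine application of the group structure on $B^{-\0}$ together with pair-independence. I expect no further obstacle; finiteness of $H_1,H_2$ is used only to make ``$|H_1|=|H_2|$'' a statement about natural numbers, and cancellativity enters solely through its use inside Lemma \ref{lm:crucial-canc-pair-ind-imply-uniqueness-det}(2).
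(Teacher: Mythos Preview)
Your argument has a genuine gap: you assert that ``$\V$ is cancellative'' and then invoke Lemma~\ref{lm:crucial-canc-pair-ind-imply-uniqueness-det}(2), but cancellativity is \emph{not} among the hypotheses of the lemma you are proving, and $\B$-scalar algebras need not be cancellative in general. Hence the decomposition mapping $\mathrm{dec}$ and its components $\mathrm{scal},\mathrm{gen}$ are not available here.

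The good news is that you never actually use uniqueness. For each $u\in H_1\setminus\{0\}$, since $H_2$ generates $\V$ you can simply \emph{choose} some $w\in H_2$ and $d\in B$ with $u=d\cdot w$; then $u\ne 0$ forces $d\ne\0$ and $w\ne 0$. Setting $\varphi(u)=w$ for any such choice, your injectivity calculation goes through verbatim: if $\varphi(u)=\varphi(u')=w$ then $u=(d\otimes (d')^{-1})\cdot u'$, so $u,u'$ are dependent, hence $u=u'$ by pair-independence of $H_1$. With this repair your proof is essentially the paper's, just dualized: the paper assumes $|H_1|<|H_2|$, writes each $v\in H_2$ as $b\cdot u$ for some $u\in H_1$, applies pigeonhole to find $v_1\ne v_2$ sharing the same $u$, and derives $v_1=(b_2^{-1}\otimes b_1)\cdot v_2$, contradicting pair-independence of $H_2$.

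One further wrinkle: your treatment of $V=\{0\}$ is not correct. Both $\emptyset$ and $\{0\}$ are pair-independent generating sets of $(\{0\},0)$ (recall that $0$ is a nullary operation in the universal-algebra representation, so $\langle\emptyset\rangle=\{0\}$), and they have different cardinalities; thus ``a short check gives $|H_1|=|H_2|$'' fails. The paper's proof also glosses over this degenerate case.
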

\begin{proof} We give a proof by contradiction. For this, assume that $|H_1|<|H_2|$. Since $H_1$ generates $\V$, there exist $u\in H_1$, $v_1,v_2\in H_2$ with $v_1\ne v_2$, and
  $b_1,b_2\in B^{-\0}$ such that $v_1=b_1\cdot u$ and $v_2=b_2\cdot u$.  Hence
\[b_2\cdot v_1=b_2\cdot (b_1 \cdot u)=(b_2\otimes b_1)\cdot u= (b_1\otimes b_2)\cdot u=b_1\cdot( b_2\cdot u)=b_1\cdot v_2\enspace\]
Then we have $v_1=(b_2^{-1}\otimes b_1)\cdot v_2$, which contradicts the assumption that $H_2$ is pair-independent.
\end{proof}

For each  finitely generated $\B$-scalar algebra $\V$, we define the \emph{degree of $\V$}, denoted by $\deg(\V)$, to be the cardinality of a pair-independent generating set of $\V$. By Lemma \ref{dimension-of-B-scalar-algebra}, $\deg(\V)$ is well defined.

Due to the analogy to the concept of basis of $\B$-vector spaces, we call a pair-independent generating set of a cancellative $\B$-scalar algebra of $(V,0)$ a \emph{scalar-basis} (of $(V,0)$).

\begin{samepage}
\begin{lemma}\rm \label{obs:scalar-algebra-fin-gen-subalg-hom-Z}  Let $(V,0)$ be a $\B$-scalar algebra and $(V',0)$ a sub-$\B$-scalar algebra of it.  If $(V,0)$ is finitely generated, then $(V',0)$ is finitely generated
and $\deg((V',0))\le \deg((V,0))$.
\end{lemma}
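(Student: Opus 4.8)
The plan is to build a finite generating set of $(V',0)$ out of a finite pair-independent generating set of $(V,0)$ by splitting $V$ into the ``lines'' $B\cdot h$ through its generators $h$ and keeping at most one element of $V'$ on each such line. If $V=\{0\}$ the statement is trivial (then $V'=V$), so assume $V\ne\{0\}$. Since $(V,0)$ is finitely generated, by Lemma~\ref{lm:crucial-canc-pair-ind-imply-uniqueness-det}(1) it has a finite pair-independent generating set $H=\{h_1,\ldots,h_n\}$, and then $n=\deg((V,0))$ by definition of the degree (which is well defined by Lemma~\ref{dimension-of-B-scalar-algebra}). As $H\ne\emptyset$, \eqref{eq:equivalent-to-generates} gives $V=B\cdot H=\bigcup_{i\in[n]}B\cdot h_i$.

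The heart of the argument is the following single-line observation, which I would prove first. For each $i\in[n]$ with $(B\cdot h_i)\cap V'\ne\{0\}$, fix some $g_i\in\bigl((B\cdot h_i)\cap V'\bigr)\setminus\{0\}$; since $g_i\in B\cdot h_i$ and $g_i\ne 0$, we have $g_i=b\cdot h_i$ for some $b\in B^{-\0}$. Then $(B\cdot h_i)\cap V'=B\cdot g_i$: the inclusion $B\cdot g_i\subseteq(B\cdot h_i)\cap V'$ holds because $V'$ is closed under scalar multiplication (being a sub-$\B$-scalar algebra) and $B\cdot g_i\subseteq B\cdot h_i$, while conversely any $v\in(B\cdot h_i)\cap V'$ has the form $v=c\cdot h_i$ with $c\in B$, and using \eqref{SM1-det} together with the commutativity of $\otimes$ and the invertibility of $b$ one computes $v=c\cdot h_i=(c\otimes b^{-1}\otimes b)\cdot h_i=(c\otimes b^{-1})\cdot(b\cdot h_i)=(c\otimes b^{-1})\cdot g_i\in B\cdot g_i$.

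Finally I would assemble the result. Let $G=\{g_i\mid i\in[n],\ (B\cdot h_i)\cap V'\ne\{0\}\}$; it is finite with $|G|\le n$. If $G=\emptyset$ then no nonzero element of $V'$ lies on any line $B\cdot h_i$, so $V'=\{0\}$ is trivially finitely generated; otherwise any nonzero $v\in V'$ lies in some $B\cdot h_i$, hence in $(B\cdot h_i)\cap V'\ne\{0\}$, so $g_i$ is defined and $v\in(B\cdot h_i)\cap V'=B\cdot g_i\subseteq B\cdot G$, while $0\in B\cdot G$; combined with $B\cdot G\subseteq V'$ (each $g_i\in V'$ and $V'$ is closed), this gives $B\cdot G=V'$, so by \eqref{eq:equivalent-to-generates} $G$ generates $(V',0)$, which is therefore finitely generated. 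Applying Lemma~\ref{lm:crucial-canc-pair-ind-imply-uniqueness-det}(1) to $(V',0)$ and the finite generating set $G$, we obtain a pair-independent $G'\subseteq G$ that generates $(V',0)$, whence $\deg((V',0))=|G'|\le|G|\le n=\deg((V,0))$. I expect the only delicate point to be the single-line claim, where one must use the closure of $V'$ under scalar multiplication and the invertibility of nonzero scalars of the semifield $\B$; everything else is bookkeeping (plus the degenerate cases $V=\{0\}$ and $V'=\{0\}$).
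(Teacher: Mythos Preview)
Your proof is correct and follows essentially the same approach as the paper: take a finite pair-independent generating set $H$ of $(V,0)$, extract from each ``line'' $B\cdot h_i$ that meets $V'$ nontrivially a generator of $(V',0)$, and then pass to a pair-independent subset via Lemma~\ref{lm:crucial-canc-pair-ind-imply-uniqueness-det}(1). The paper's version is marginally slicker in that it observes the $h_i$ themselves already lie in $V'$ whenever some nonzero multiple does (since $h_i=b^{-1}\cdot(b\cdot h_i)\in V'$), so one can take $H'=\{h_i\in H\mid(\exists b\in B^{-\0}):b\cdot h_i\in V'\}\subseteq H$ directly instead of choosing auxiliary $g_i$.
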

\end{samepage}
\begin{proof}  The statement is obvious if $V'=\{0\}$. Therefore, we assume that $V'\not =\{0\}$. Let  $H$ be a finite and pair-independent generating set of $(V,0)$ (cf. Lemma \ref{lm:crucial-canc-pair-ind-imply-uniqueness-det}(1)).
  Let $H'=\{u\in H \mid (\exists b\in B^{-\0}): b\cdot u\in V'\}$. 
  Then $H'\ne \emptyset$ because $H$ generates $(V,0)$ and $\{0\}\ne V'\subseteq V$. We prove that $H'\subseteq V'$ and that $B\cdot H'=V'$ (cf. \eqref{eq:equivalent-to-generates}).
  
  For the proof of the inclusion, let $u\in H'$. By definition, there exist $u\in H$ and  $b\in B^{-\0}$ such that $b\cdot u\in V'$. Since $(V',0)$  is a $\B$-scalar algebra,
  we have that $b^{-1}\cdot(b\cdot u)\in V'$. But $b^{-1}\cdot(b\cdot u)=(b^{-1}\otimes b)\cdot u=\1\cdot u =u$, hence  $u\in V'$. 
  
  Now we prove the equality. First we note that $B\cdot H'\subseteq V'$ because $H'\subseteq V'$ and  $(V',0)$ is a $\B$-scalar algebra. 
  For the proof of $ V' \subseteq B\cdot H'$, let $v\in V'$. If $v=0$, then $v\in B\cdot V'$ obviously, so assume that $v\ne 0$. 
  Since $H$ generates $(V,0)$, there exists $u\in H$ and $b\in B^{-\0}$ such that $v=b\cdot u$. Then  by definition, $u\in H'$, i.e., $v\in B\cdot H'$.
  
  Hence $H'$ generates $(V',0)$. By Lemma \ref{lm:crucial-canc-pair-ind-imply-uniqueness-det}(1), there exists a pair-independent subset $H''\subseteq H'$ which generates 
  $(V',0)$. Then we obtain $\deg((V',0))=|H''| \le |H'| \le |H| = \deg((V,0))$.
 \end{proof}

\begin{observation}\rm \label{obs:isomorphism-preserves-degree} Let $(V,0)$ and $(V',0')$ be $\B$-scalar algebras, $(V,0)$ be finitely generated, and $f: V \to V'$ be a $\B$-scalar algebra isomorphism. Then $(V',0')$ is finitely generated and  $\deg((V,0)) = \deg((V',0'))$.
\end{observation}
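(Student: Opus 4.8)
This is the statement that an isomorphism of $\B$-scalar algebras transports finite generation and preserves the degree. The natural approach is to push a scalar-basis of $(V,0)$ across $f$ and check that the image is again a scalar-basis, then invoke Lemma~\ref{dimension-of-B-scalar-algebra} (uniqueness of the cardinality of a pair-independent generating set) to conclude equality of degrees. More precisely, first I would pick a finite pair-independent generating set $H$ of $(V,0)$, which exists by Lemma~\ref{lm:crucial-canc-pair-ind-imply-uniqueness-det}(1). Then I would set $H' = f(H) = \{f(u) \mid u \in H\}$ and argue that $H'$ is a finite pair-independent generating set of $(V',0')$; since $f$ is injective, $|H'| = |H|$.

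The key steps are: (i) $H'$ \emph{generates} $(V',0')$. Because $f$ is a $\B$-scalar algebra homomorphism, $\im(f)$ is a sub-$\B$-scalar algebra of $(V',0')$ by Lemma~\ref{lm:hom-image=subalgebra}, and since $f$ is surjective we get $\im(f) = V'$; then, using that $f$ commutes with scalar multiplication, $B \cdot H' = B \cdot f(H) = f(B \cdot H) = f(V) = V'$, so by \eqref{eq:equivalent-to-generates} $H'$ generates. (ii) $H'$ is \emph{pair-independent}. Take two distinct elements $f(u), f(v) \in H'$ with $u, v \in H$ (necessarily distinct, as $f$ is injective). If $f(u)$ and $f(v)$ were dependent, say $f(u) = b \cdot f(v)$ for some $b \in B$, then $f(u) = f(b \cdot v)$, hence $u = b \cdot v$ by injectivity of $f$, contradicting that $H$ is pair-independent; the case $f(v) = b \cdot f(u)$ is symmetric. (iii) Finiteness and the cardinality count: $H'$ is finite with $|H'| = |H|$ by injectivity. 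Therefore $(V',0')$ is finitely generated, and $H'$ is a pair-independent generating set witnessing $\deg((V',0')) = |H'| = |H| = \deg((V,0))$, where the two outer equalities use that the degree is well defined (Lemma~\ref{dimension-of-B-scalar-algebra}).

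**Main obstacle.** Honestly, there is no serious obstacle here — the result is a routine transport-of-structure argument, and each ingredient is already available in the excerpt (Lemma~\ref{lm:hom-image=subalgebra} for the image being a subalgebra, \eqref{eq:equivalent-to-generates} for the characterization of generation in terms of $B \cdot H = V$, and Lemma~\ref{dimension-of-B-scalar-algebra} for well-definedness of degree). The only points requiring a little care are: making sure that $f(u) = f(v)$ forces $u = v$ (injectivity, used both to keep $H'$ pair-independent and to get the exact cardinality), and that both directions of dependency ($f(u) = b \cdot f(v)$ and $f(v) = b \cdot f(u)$) are pulled back correctly via the homomorphism property $f(b \cdot x) = b \cdot f(x)$. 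One does \emph{not} even need cancellativity of either algebra for this statement, so I would not invoke it; everything goes through for arbitrary $\B$-scalar algebras, exactly as the statement claims.
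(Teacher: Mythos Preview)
Your argument is correct. The paper itself gives no proof at all for this observation --- it is stated as self-evident and used immediately thereafter. Your transport-of-structure argument (push a finite pair-independent generating set across $f$, check that injectivity preserves both the cardinality and pair-independence, surjectivity plus scalar-linearity gives generation, then invoke Lemma~\ref{dimension-of-B-scalar-algebra}) is exactly the natural way to spell out the details, and nothing in it is superfluous or missing.
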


Let $(V,0,\mu)$ be a $(\Sigma,\B)$-scalar algebra and let $\sim$ be a congruence on $(V,0,\mu)$. We recall that the quotient algebra of $(V,0,\mu)$ with respect to $\sim$ is the $(\Sigma,\B)$-scalar algebra
\((V,0,\mu)/_\sim =(V/_\sim,[0]_\sim,\mu/_\sim) \). In this quotient algebra the scalar multiplication is the mapping $\cdot/_\sim : B\times V/_\sim \to V/_\sim$ defined for every $b\in B$ and $[v]_\sim \in V/_\sim$ by $b \cdot\!/_\sim [v]_\sim = [b\cdot v]_\sim$. In the sequel we write $\cdot$ for  $\cdot/_\sim$. Moreover, $\mu/_\sim$ is defined, for every $k\in\mathbb{N}$, $\sigma\in\Sigma^{(k)}$, and $[v_1]_\sim,\ldots,[v_k]_\sim \in V/_\sim$, by
\(\mu/_\sim(\sigma)([v_1]_\sim,\ldots,[v_k]_\sim)=[\mu(\sigma)(v_1,\ldots,v_k)]_\sim.\).
The canonical mapping $\pi_\sim: V\to V/_\sim$, defined by $\pi_\sim(v)=[v]_\sim$ for each $v\in V$,  is a homomorphism from $\V$ to  $(\V,\mu)/_\sim$ (cf. Lemma~\ref{thm:canonical-map-of-congr-is-hom}).

\begin{lemma}\rm\label{dimension-of-quotient-B-scalar-algebra} Let $\V=(V,0)$ be a finitely generated $\B$-scalar algebra and let $\sim$ be a congruence on $\V$.
  Then the quotient $\B$-scalar algebra $\V/_\sim=(V/_\sim,[0]_\sim)$ is  finitely generated and $\deg(\V/_\sim)\le \deg(\V)$.
  \end{lemma}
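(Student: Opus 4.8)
The plan is to transport a finite pair-independent generating set of $\V$ forward along the canonical map and then thin it out once more.

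First I would use Lemma~\ref{lm:crucial-canc-pair-ind-imply-uniqueness-det}(1): since $\V$ is finitely generated it admits a finite pair-independent generating set $H$, and by the definition of degree we have $|H| = \deg(\V)$. Let $\pi_\sim\colon V \to V/_\sim$ be the canonical mapping; by Lemma~\ref{thm:canonical-map-of-congr-is-hom} it is a homomorphism of $\B$-scalar algebras, and it is plainly surjective. The next step is to check that $\pi_\sim(H)$ generates $\V/_\sim$: given $[v]_\sim \in V/_\sim$, write $v = b \cdot u$ with $b \in B$ and $u \in H$ (possible because $H$ generates $\V$, cf.\ \eqref{eq:equivalent-to-generates}); then $[v]_\sim = b \cdot [u]_\sim$, so $B \cdot \pi_\sim(H) = V/_\sim$, which by \eqref{eq:equivalent-to-generates} again says that $\pi_\sim(H)$ generates $\V/_\sim$. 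In particular $\V/_\sim$ is finitely generated, since $\pi_\sim(H)$ is finite and $|\pi_\sim(H)| \le |H|$.

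Then I would apply Lemma~\ref{lm:crucial-canc-pair-ind-imply-uniqueness-det}(1) a second time, now to the finite generating set $\pi_\sim(H)$ of $\V/_\sim$, to extract a pair-independent subset $H'' \subseteq \pi_\sim(H)$ that still generates $\V/_\sim$. This yields $\deg(\V/_\sim) = |H''| \le |\pi_\sim(H)| \le |H| = \deg(\V)$, which is the asserted inequality; the fact that $\deg(\V/_\sim)$ is well defined (independent of the choice of $H''$) is exactly Lemma~\ref{dimension-of-B-scalar-algebra}.

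I do not expect a real obstacle here: the argument is a routine image-of-a-generating-set computation. The only two points that need a line of justification are that the homomorphic image of a generating set is again a generating set (immediate from surjectivity and the characterization \eqref{eq:equivalent-to-generates} of generation in $\B$-scalar algebras) and that passing from $H$ to $\pi_\sim(H)$ cannot increase cardinality, so the second invocation of Lemma~\ref{lm:crucial-canc-pair-ind-imply-uniqueness-det}(1) keeps the degree at most $\deg(\V)$.
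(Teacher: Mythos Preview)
Your proof is correct and follows essentially the same approach as the paper's: take a finite pair-independent generating set $H$ of $\V$, push it forward to $H/_\sim = \pi_\sim(H)$, observe this still generates $\V/_\sim$, and bound the degree by $|H/_\sim| \le |H| = \deg(\V)$. The paper writes the last chain of inequalities without explicitly invoking Lemma~\ref{lm:crucial-canc-pair-ind-imply-uniqueness-det}(1) a second time, but your spelling out of that step is entirely in the same spirit.
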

\begin{proof} Let $H$ be a pair-independent generating set of $\V$. Then the set $H/_\sim=\{[u]_\sim\mid u\in H\}$ generates $\V/_\sim$. To see this, let 
$[v]_\sim \in V/_\sim$. Since $H$ generates $\V$, there are $u\in H$ and $b\in B$ such that $v=b\cdot u$. Then $ [v]_\sim=b\cdot [u]_\sim$ because $\sim$ is a congruence.
Moreover 
\(\deg(\V/_\sim) \le |H/_\sim| \le |H|= \deg(\V)\).
\end{proof}

Let $(V,0)$ be a $\B$-scalar algebra. A \emph{scalar-linear} form is a scalar-linear mapping from $(V,0)$ to the $\B$-scalar algebra $(B,\0)$, i.e., it is a mapping $\gamma: V \to B$ such that $\gamma(b\cdot v) = b \otimes \gamma(v)$ for every $b\in B$ and $v \in V$.

\subsection{Scalar algebras  of monomial weighted tree languages}
\label{subsec:sc-alg-general-and-monomials}

Here we  define a particular $\B$-scalar algebra and a particular $(\Sigma,\B)$-scalar algebra. Since monomial weighted tree languages play an important role,  we recall that a monomial $(\Sigma,\B)$-weighted tree language is a mapping $r:\T_\Sigma \to B$ such that there exist a $b \in B$ and a tree $\xi\in\T_\Sigma$ with $r(\xi)=b$ and $r(\zeta)=\0$ for each $\zeta\in\T_\Sigma$ with $\zeta \ne \xi$. We denote this monomial by $b.\xi$.

Now let
 \(
\sfMon(\Sigma,\B) = (\rmMon(\Sigma,\B),\widetilde{\0})
\) where $\rmMon(\Sigma,\B) = \{b.\xi \mid b \in B, \xi \in \T_\Sigma\}$ and $\widetilde{\0}$ is the constant zero $(\Sigma,\B)$-weighted tree language, and let
 \(\cdot: B\times \rmMon(\Sigma,\B) \rightarrow \rmMon(\Sigma,\B)\) 
be the scalar multiplication of $(\Sigma,\B)$-weighted tree languages, i.e.,  for each $a \in B$ and $b.\xi \in \rmMon(\Sigma,\B)$, we have $a \cdot (b.\xi) = (a \otimes b).\xi$.

 It is easy to check that the laws \eqref{SM1-det}, \eqref{SM4-det}, and \eqref{SM5-det} hold, hence $\sfMon(\Sigma,\B)$ is a 
 $\B$-scalar algebra. We call it the \emph{$\B$-scalar algebra  of monomial $(\Sigma,\B)$-weighted tree languages}.
  It is also easy to see that $\sfMon(\Sigma,\B)$ is cancellative and that it is generated by the set $\{\1.\xi\mid \xi\in \T_\Sigma\}$.

Given that $\B$ is a field, the relationship between the $\B$-scalar algebra $(\mathrm{Mon}(\Sigma,\B),\widetilde{\0})$ and the $\B$-vector space $(\mathrm{Pol}(\Sigma,\B),\oplus,\widetilde{\0})$ is the following: the $\B$-scalar algebra $(\rmMon(\Sigma,\B),\widetilde{\0})$ is a subalgebra of the  $\B$-scalar algebra $(\Pol(\Sigma,\B),\widetilde{\0})$; in its turn, the $\B$-scalar algebra $(\Pol(\Sigma,\B),\widetilde{\0})$ is a reduct of the $\B$-vector space $(\Pol(\Sigma,\B),\oplus,\widetilde{\0})$.

We recall that, for every $\sigma \in \Sigma$, the operation  $\ttop(\sigma)$ on $\Pol(\Sigma,\B)$,
called top-concatenation with $\sigma$, is defined in \eqref{eq:definition-of-top-concatenation} of Section~\ref{sect:B-A-BLs-result-det}. It is easy to see that $\rmMon(\Sigma,\B)$ is closed under $\ttop(\sigma)$, hence $\ttop(\sigma)$ is an operation also on $\rmMon(\Sigma,\B)$.

  \begin{lemma}\rm \label{lm:monomials-are-smallest-closed-under-scalar-m-top-det} (cf. \cite[Lm.~15.4.1]{fulvog24}) The set $\rmMon(\Sigma,\B)$ is the  smallest set of $(\Sigma,\B)$-weighted tree languages which is closed under scalar multiplications and top-concatenations. 
\end{lemma}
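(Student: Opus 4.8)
## Proof proposal

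The plan is to prove the two required properties separately: first that $\rmMon(\Sigma,\B)$ is closed under scalar multiplications and all top-concatenations, and second that any set $S$ of $(\Sigma,\B)$-weighted tree languages with these closure properties must contain $\rmMon(\Sigma,\B)$. The first part is essentially a bookkeeping check. Closure under scalar multiplication is immediate from the definition $a \cdot (b.\xi) = (a\otimes b).\xi$, which is again a monomial (with the convention that $\0.\xi = \widetilde{\0}$, which is also a monomial since $\supp(\widetilde{\0}) = \emptyset$). Closure under $\ttop(\sigma)$ was already noted in the paragraph just before the lemma: if $s_1,\dots,s_k \in \rmMon(\Sigma,\B)$, say $s_i = b_i.\xi_i$, then the sum in \eqref{eq:definition-of-top-concatenation} collapses to the single term $(b_1 \otimes \cdots \otimes b_k).\sigma(\xi_1,\dots,\xi_k)$ (or to $\widetilde{\0}$ if some $b_i = \0$), which is again a monomial. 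For $k=0$ we get $\ttop(\alpha)() = \1.\alpha$, a monomial.

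For the minimality part, I would let $S$ be any set of $(\Sigma,\B)$-weighted tree languages closed under scalar multiplications and top-concatenations, and show $\rmMon(\Sigma,\B) \subseteq S$. The natural first step is to show that $\1.\xi \in S$ for every $\xi \in \T_\Sigma$, by induction on $\T_\Sigma$. In the base case $\xi = \alpha \in \Sigma^{(0)}$: applying the nullary operation $\ttop(\alpha)$ (which $S$ is closed under) gives $\ttop(\alpha)() = \1.\alpha \in S$. In the induction step $\xi = \sigma(\xi_1,\dots,\xi_k)$ with $\sigma \in \Sigma^{(k)}$, $k \in \mathbb{N}_+$: by the induction hypothesis $\1.\xi_i \in S$ for each $i \in [k]$, and then $\ttop(\sigma)(\1.\xi_1,\dots,\1.\xi_k) = (\1 \otimes \cdots \otimes \1).\sigma(\xi_1,\dots,\xi_k) = \1.\sigma(\xi_1,\dots,\xi_k) = \1.\xi \in S$. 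Then, for an arbitrary monomial $b.\xi \in \rmMon(\Sigma,\B)$, closure of $S$ under scalar multiplication gives $b.\xi = b \cdot (\1.\xi) \in S$. This also covers $\widetilde{\0} = \0.\xi$. Hence $\rmMon(\Sigma,\B) \subseteq S$, and combined with the first part $\rmMon(\Sigma,\B)$ is the smallest such set.

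There is no real obstacle here; the only mild subtlety is keeping the $\widetilde{\0}$ cases straight (that $\widetilde{\0}$ counts as a monomial and equals $\0.\xi$ for every $\xi$), and making sure the base case of the induction is handled by the \emph{nullary} top-concatenation rather than scalar multiplication. Everything else follows the pattern of \cite[Lm.~15.4.1]{fulvog24}.

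\begin{proof}
First we show that $\rmMon(\Sigma,\B)$ is closed under scalar multiplications and top-concatenations. For scalar multiplication, if $b.\xi \in \rmMon(\Sigma,\B)$ and $a \in B$, then $a \cdot (b.\xi) = (a \otimes b).\xi \in \rmMon(\Sigma,\B)$. For top-concatenation, let $k \in \mathbb{N}$, $\sigma \in \Sigma^{(k)}$, and $s_1,\dots,s_k \in \rmMon(\Sigma,\B)$ with $s_i = b_i.\xi_i$ for each $i \in [k]$. Then, by \eqref{eq:definition-of-top-concatenation}, the sum defining $\ttop(\sigma)(s_1,\dots,s_k)$ consists of the single summand $(b_1 \otimes \cdots \otimes b_k).\sigma(\xi_1,\dots,\xi_k)$, which is a monomial (it equals $\widetilde{\0}$ if $b_i = \0$ for some $i \in [k]$, and $\widetilde{\0}$ is a monomial). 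In particular, for $k = 0$ and $\alpha \in \Sigma^{(0)}$ we have $\ttop(\alpha)() = \1.\alpha \in \rmMon(\Sigma,\B)$.

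Now let $S$ be any set of $(\Sigma,\B)$-weighted tree languages which is closed under scalar multiplications and top-concatenations. We show $\rmMon(\Sigma,\B) \subseteq S$. First, by induction on $\T_\Sigma$ we prove that $\1.\xi \in S$ for each $\xi \in \T_\Sigma$. For $\xi = \alpha \in \Sigma^{(0)}$, closure of $S$ under the nullary operation $\ttop(\alpha)$ gives $\1.\alpha = \ttop(\alpha)() \in S$. For $\xi = \sigma(\xi_1,\dots,\xi_k)$ with $k \in \mathbb{N}_+$ and $\sigma \in \Sigma^{(k)}$, the induction hypothesis yields $\1.\xi_i \in S$ for each $i \in [k]$, and hence
\[
\1.\xi = (\1 \otimes \cdots \otimes \1).\sigma(\xi_1,\dots,\xi_k) = \ttop(\sigma)(\1.\xi_1,\dots,\1.\xi_k) \in S \enspace.
\]
Finally, let $b.\xi \in \rmMon(\Sigma,\B)$ be arbitrary. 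Then $b.\xi = b \cdot (\1.\xi) \in S$ because $S$ is closed under scalar multiplications and $\1.\xi \in S$. In particular $\widetilde{\0} = \0 \cdot (\1.\alpha) \in S$ for any $\alpha \in \Sigma^{(0)}$. Hence $\rmMon(\Sigma,\B) \subseteq S$, and therefore $\rmMon(\Sigma,\B)$ is the smallest set of $(\Sigma,\B)$-weighted tree languages which is closed under scalar multiplications and top-concatenations.
\end{proof}
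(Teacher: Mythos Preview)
Your proof is correct and follows essentially the same approach as the paper's own proof: both show closure of $\rmMon(\Sigma,\B)$ (the paper just cites the remark preceding the lemma), then prove $\1.\xi$ belongs to any closed set by induction on $\T_\Sigma$, and finish with scalar multiplication to get $b.\xi$. The only cosmetic difference is that you argue with an arbitrary closed set $S$ while the paper works directly with the smallest closed set $\cC$.
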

\begin{proof}
  For convenience, we denote by $\cC$ the  smallest set of $(\Sigma,\B)$-weighted tree languages which is closed under scalar multiplications and top-concatenations.
  
  Since $\rmMon(\Sigma,\B)$ is closed under scalar multiplications and top-concatenations, we have $\cC \subseteq \rmMon(\Sigma,\B)$.  
  
  For  the proof of the other inclusion, first we show that the monomial $\1.\xi \in \cC$ for each $\xi \in \T_\Sigma$. We use induction on $\T_\Sigma$. Let $\xi = \sigma(\xi_1,\ldots,\xi_k)$ and assume that $\1.\xi_i$ is in $\cC$ for each $i \in [k]$. Since $\cC$ is closed under  top-concatenations, we obtain that 
  \(\1.\xi =
    \ttop(\sigma)(\1.\xi_1,\ldots,\1.\xi_k)\) is also in $\cC$. (In case $k=0$, we have $\1.\xi =\1.\sigma$.)
  
  Now let $b.\xi \in \rmMon(\Sigma,\B)$ for some $b\in B$ and $\xi\in\T_\Sigma$. Since  $\1.\xi \in \cC$ and $\cC$ is closed under scalar multiplications,
  we obtain that  $b.\xi=b\cdot (\1.\xi)$ is also in $\cC$.
    \end{proof}

It is also easy to see that, for each $\sigma \in \Sigma$, the operation  $\ttop(\sigma)$ on $\rmMon(\Sigma,\B)$ has property (\ref{ml-Omega-det}). Therefore
\(
    (\rmMon(\Sigma,\B),\widetilde{\0},\ttop) 
  \)
is a $(\Sigma,\B)$-scalar algebra. We call it the \emph{$(\Sigma,\B)$-scalar algebra  of monomial $(\Sigma,\B)$-weighted tree languages}.

The importance of $(\mathrm{Mon}(\Sigma,\B),\widetilde{\0},\ttop)$ is shown in the next lemma: it is initial in the set of all $(\Sigma,\B)$-scalar algebras. This corresponds to the fact that $(\mathrm{Pol}(\Sigma,\B),\oplus,\widetilde{\0},\ttop)$ is initial in the set of all $(\Sigma,\B)$-vector spaces, cf. \cite[p.~451]{bozale89} and \cite[p.~352]{boz91} (also see \cite[Lm.~18.2.4]{fulvog24}).

\begin{theorem} \label{lm:Mon-SigmaB-initial-det} Let $\B$ be a commutative semifield. The $(\Sigma,\B)$-scalar algebra $(\mathrm{Mon}(\Sigma,\B),\widetilde{\0},\ttop)$ is initial in the set of all $(\Sigma,\B)$-scalar algebras.
\end{theorem}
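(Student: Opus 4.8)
The plan is to verify the two defining clauses of initiality directly: for an arbitrary $(\Sigma,\B)$-scalar algebra $(V,0,\mu)$ I will first produce a $(\Sigma,\B)$-scalar-algebra homomorphism $h\colon \rmMon(\Sigma,\B)\to V$, and then show it is the unique one. The input data amounts to a $\Sigma$-algebra $(V,\mu)$ together with a compatible $\B$-scalar multiplication, and the key reusable fact is that $\sfT_\Sigma$ is initial among $\Sigma$-algebras (Theorem~\ref{thm:initial-iso-det}): it supplies a unique $\Sigma$-algebra homomorphism $\varphi\colon\sfT_\Sigma\to(V,\mu)$. I would then define $h$ by $h(b.\xi)=b\cdot\varphi(\xi)$ for all $b\in B$ and $\xi\in\T_\Sigma$. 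The first thing to check is well-definedness: the only monomial admitting several representations of the form $b.\xi$ is $\widetilde{\0}=\0.\xi$ (for any $\xi$), and here $h(\0.\xi)=\0\cdot\varphi(\xi)=0$ by \eqref{SM5-det}, regardless of $\xi$.

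Next I would check that $h$ respects the zero, the scalar multiplications, and each $\ttop(\sigma)$. Respecting the zero is the computation above. For scalar multiplication, $h(a\cdot(b.\xi))=h((a\otimes b).\xi)=(a\otimes b)\cdot\varphi(\xi)=a\cdot(b\cdot\varphi(\xi))=a\cdot h(b.\xi)$ by \eqref{SM1-det}. For $\sigma\in\Sigma^{(k)}$ I would first record that \eqref{eq:definition-of-top-concatenation} specializes on monomials to
\[
\ttop(\sigma)(b_1.\xi_1,\dots,b_k.\xi_k)=(b_1\otimes\cdots\otimes b_k).\sigma(\xi_1,\dots,\xi_k),
\]
which also holds when some $b_i=\0$ (both sides are then $\widetilde{\0}$), and for $k=0$ reads $\ttop(\sigma)()=\1.\sigma$. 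Applying $h$ and using that $\varphi$ is a $\Sigma$-algebra homomorphism yields $(b_1\otimes\cdots\otimes b_k)\cdot\mu(\sigma)(\varphi(\xi_1),\dots,\varphi(\xi_k))$; on the other side, $\mu(\sigma)(h(b_1.\xi_1),\dots,h(b_k.\xi_k))=\mu(\sigma)(b_1\cdot\varphi(\xi_1),\dots,b_k\cdot\varphi(\xi_k))$ equals the same thing after applying \eqref{ml-Omega-det} once per argument and \eqref{SM1-det} to collapse the $k$ nested scalars into a single product (for $k=0$ the empty scalar product is $\1$ and \eqref{SM4-det} finishes it). Hence the two sides agree and $h$ is a homomorphism of $(\Sigma,\B)$-scalar algebras.

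For uniqueness, suppose $h'\colon\rmMon(\Sigma,\B)\to V$ is any $(\Sigma,\B)$-scalar-algebra homomorphism. The map $\xi\mapsto\1.\xi$ is a $\Sigma$-algebra homomorphism $\sfT_\Sigma\to(\rmMon(\Sigma,\B),\ttop)$ (by the displayed top-concatenation formula, since $\1\otimes\cdots\otimes\1=\1$), so $h'\circ(\xi\mapsto\1.\xi)$ is a $\Sigma$-algebra homomorphism $\sfT_\Sigma\to(V,\mu)$ and therefore equals $\varphi$ by Theorem~\ref{thm:initial-iso-det}; thus $h'(\1.\xi)=\varphi(\xi)$ for every $\xi$. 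Since $\rmMon(\Sigma,\B)$ is generated as a $\B$-scalar algebra by $\{\1.\xi\mid\xi\in\T_\Sigma\}$ (every monomial is $b.\xi=b\cdot(\1.\xi)$), we obtain $h'(b.\xi)=b\cdot h'(\1.\xi)=b\cdot\varphi(\xi)=h(b.\xi)$, so $h'=h$. One could alternatively deduce uniqueness from Lemma~\ref{lm:monomials-are-smallest-closed-under-scalar-m-top-det}, which exhibits $(\rmMon(\Sigma,\B),\widetilde{\0},\ttop)$ as generated by the empty set when viewed as a universal algebra.

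I do not expect a genuine obstacle here; the only spots that need care are the interlocking of \eqref{ml-Omega-det} with \eqref{SM1-det} in the $\ttop(\sigma)$ step, the well-definedness of $h$ caused by the many representations $\0.\xi$ of $\widetilde{\0}$, and the degenerate case $k=0$. It is also worth noting explicitly that zero-divisor-freeness of $\B$ makes the specialized top-concatenation formula internally consistent (if all $b_i\neq\0$ then $b_1\otimes\cdots\otimes b_k\neq\0$), though this is only a consistency remark about the representation and is not used in the homomorphism identities themselves.
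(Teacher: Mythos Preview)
Your proposal is correct and follows essentially the same route as the paper: define $h(b.\xi)=b\cdot\h_\mu(\xi)$ via the unique $\Sigma$-algebra homomorphism $\h_\mu\colon\sfT_\Sigma\to(V,\mu)$ (your $\varphi$), and verify it is a $(\Sigma,\B)$-scalar-algebra homomorphism using \eqref{ml-Omega-det} and \eqref{SM1-det}. The only cosmetic difference is that the paper obtains uniqueness by first noting that $(\rmMon(\Sigma,\B),\widetilde{\0},\ttop)$ is generated by~$\emptyset$ (via Lemma~\ref{lm:monomials-are-smallest-closed-under-scalar-m-top-det}) and then invoking the general uniqueness of extensions from a generating set, whereas you give the direct argument through $\xi\mapsto\1.\xi$ and also mention this alternative; your explicit well-definedness check for the representations $\0.\xi$ of $\widetilde{\0}$ is a detail the paper leaves implicit.
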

\begin{proof} We use the representation of  $(\mathrm{Mon}(\Sigma,\B),\widetilde{\0},\ttop)$  as universal algebra $(\rmMon(\Sigma,\B),\eta)$ as defined in Subsection~\ref{subsect:scalar-algebras-det}, and we prove the three properties (a), (b), and (c) of the definition of initial  universal algebra.
    
(a) As mentioned, $(\mathrm{Mon}(\Sigma,\B),\widetilde{\0},\ttop)$ is a member of the set of all $(\Sigma,\B)$-scalar algebras.

(b) The universal algebra $(\mathrm{Mon}(\Sigma,\B),\eta)$ (cf. Subsection \ref{subsect:scalar-algebras-det}) is generated by $\emptyset$, i.e., $\langle \emptyset\rangle_{\im(\eta)} = \rmMon(\Sigma,\B)$. This follows by Lemma \ref{lm:monomials-are-smallest-closed-under-scalar-m-top-det}.
    
(c) Let $(V,0,\mu)$ be an arbitrary $(\Sigma,\B)$-scalar algebra. We  prove that there exists a scalar-linear mapping from $(\rmMon(\Sigma,\B),\widetilde{\0},\ttop)$ to $(V,0,\mu)$.     

We define the mapping $h:\rmMon(\Sigma,\B) \to V$ such that, for every $b \in B$ and $\xi \in \T_\Sigma$, we let
\begin{equation}\label{eq:from-h-to-hV-mon-det}
  h(b.\xi) = b \cdot \h_\mu(\xi)\enspace,
  \end{equation}
where $\h_\mu$ denotes the unique  $\Sigma$-algebra homomorphism from the $\Sigma$-term algebra  $\mathsf{T}_\Sigma$ to $(V,\mu)$. 
Now we prove that $h$ is a scalar-linear mapping from $(\rmMon(\Sigma,\B),\widetilde{\0},\ttop)$ to $(V,0,\mu)$.  

It is obvious that, for every $s\in \rmMon(\Sigma,\B)$ and $b\in B$, we have $h(b\cdot s)=b\cdot h(s)$. 

Next we show that $h$ is a  $\Sigma$-algebra homomorphism from $(\rmMon(\Sigma,\B),\ttop)$ to $(V,\mu)$, i.e., for every $k\in \mathbb{N}$, $\sigma\in \Sigma^{(k)}$, and $s_1,\ldots,s_k \in \rmMon(\Sigma,\B)$, we have $h\big(\ttop(\sigma)(s_1,\ldots,s_k)\big)=\mu(\sigma)\big(h(s_1),\ldots,h(s_k)\big)$.
For this, for each $i\in [k]$, let $s_i=b_i.\xi_i$ for some $b_i \in B$ and $\xi_i \in \T_\Sigma$.
Then we can calculate as follows.
\begingroup
\allowdisplaybreaks
\begin{align*}
  & h\big(\ttop(\sigma)(s_1,\ldots,s_k)\big)\\[2mm]
  &= h\big((b_1\otimes\ldots\otimes b_k).\sigma(\xi_1,\ldots,\xi_k)\big)
    \tag{by definition of $\ttop(\sigma)$}\\[2mm]
  &= (b_1\otimes\ldots\otimes b_k)\cdot \h_\mu(\sigma(\xi_1,\ldots,\xi_k))
  \tag{by \eqref{eq:from-h-to-hV-mon-det}}\\[2mm]
  &= (b_1\otimes\ldots\otimes b_k)\cdot \mu(\sigma)(\h_\mu(\xi_1),\ldots,\h_\mu(\xi_k))
  \tag{because $\h_\mu$ is a $\Sigma$-algebra homomorphism}\\[2mm]
  &= \mu(\sigma)\big(b_1\cdot \h_\mu(\xi_1),\ldots, b_k\cdot \h_\mu(\xi_k)  \big)
  \tag{by \eqref{ml-Omega-det}}\\[2mm]
  &= \mu(\sigma)\big(h(s_1),\ldots,h(s_k)\big)
    \tag{by \eqref{eq:from-h-to-hV-mon-det}}
    \enspace.
\end{align*}
\endgroup
This proves that $h$ is a $\Sigma$-algebra homomorphism. Hence $h$ is a scalar-linear mapping. 
\end{proof}


\section{The m-syntactic congruence of a weighted tree language}
\label{sec:syntactic-congruence-det}

   In B-A's theorem (cf. Theorem \ref{th:MN-fields-det}), the syntactic congruence $\approx_r$
of a $(\Sigma,\B)$-weighted tree language $r: \T_\Sigma \to B$  over the field $\B$ is defined as a congruence on the $(\Sigma,\B)$-vector space $(\Pol(\Sigma,\B),\oplus,\widetilde{\0},\ttop)$ of polynomial weighted tree languages. Here we consider the restriction of  $\approx_r$ to monomials, i.e., we let
\(\sim_r \, = \,\approx_r \cap \,(\rmMon(\Sigma,\B) \times \rmMon(\Sigma,\B))\).
In more detail, for every $b_1.\xi_1, b_2.\xi_2 \in \rmMon(\Sigma,\B)$, we let
\[
b_1.\xi_1 \sim_r b_2.\xi_2 \ \text{ iff } \ (\forall c \in \C_\Sigma): b_1 \otimes r(c[\xi_1]) = b_2 \otimes r(c[\xi_2]) \enspace.
  \]
  
  Since $\approx_r$ is a congruence on the $(\Sigma,\B)$-vector space $(\Pol(\Sigma,\B),\oplus,\widetilde{\0},\ttop)$,  it is also a congruence on the $(\Sigma,\B)$-scalar algebra $(\Pol(\Sigma,\B),\widetilde{\0},\ttop)$.
  Moreover, since $(\mathrm{Mon}(\Sigma,\B),\widetilde{\0},\ttop)$ is a subalgebra of $(\Pol(\Sigma,\B),\widetilde{\0},\ttop)$, by Lemma \ref{thm:congruence-restricted-to-subalgebra},
  the relation $\sim_r$ is a congruence on $(\mathrm{Mon}(\Sigma,\B),\widetilde{\0},\ttop)$.
   
  Since $\sim_r$ is the restriction of $\approx_r$ to \underline{m}onomials, we call it the \emph{m-syntactic congruence of $r$}. Moreover, we call the quotient algebras 
  $(\mathrm{Mon}(\Sigma,\B),\widetilde{\0},\ttop)/_{\sim_r}$ and  $\mathsf{Mon}(\Sigma,\B)/_{\sim_r}$ the \emph{m-syntactic $(\Sigma,\B)$-scalar algebra} and the \emph{m-syntactic $\B$-scalar algebra (of $r$)}, respectively.

  \begin{example}\rm \label{ex-syntactic-congruence} We consider the $(\Sigma,\Ratnum)$-weighted tree language $r$ in Example \ref{ex:bu-det+total-wta-det} and compute the congruence $\sim_r$. For this, we recall that, for each $\xi \in \T_\Sigma$, we have
    \[
      r(\xi) = \begin{cases} 2 \cdot 2^{\#_\alpha(\xi)} & \text{ if $\#_\alpha(\xi)$ is even}\\
       3 \cdot 2^{\#_\alpha(\xi)} & \text{ otherwise,} \end{cases}
   \]
   where $\#_\alpha(\xi)$ denotes the number of occurrences of $\alpha$ in $\xi$.

   Then, for every $b_1.\xi_1,b_2.\xi_2 \in \rmMon(\Sigma,\Ratnum)$, we have
   \begingroup
   \allowdisplaybreaks
\begin{align*} 
& \hspace*{5mm} b_1.\xi_1 \sim _r b_2.\xi_2   \\
&  \text{iff } (\forall c \in \C_\Sigma): b_1 \otimes r(c[\xi_1]) = b_2 \otimes r(c[\xi_2])\\
&  \text{iff }   \Big[(\forall c\in \C_\Sigma) : \Big( \big(b_1\cdot 2 \cdot 2^{\#_\alpha(c[\xi_1])} = b_2\cdot 2 \cdot 2^{\#_\alpha(c[\xi_2])} \text{ and } \#_\alpha(c[\xi_1]) \text{ and } \#_\alpha(c[\xi_2]) \text{ are even} \big) \text{ or }\\
& \hspace*{25mm} \big( b_1\cdot 3 \cdot 2^{\#_\alpha(c[\xi_1])} = b_2\cdot 3 \cdot 2^{\#_\alpha(c[\xi_2])} \text{ and } \#_\alpha(c[\xi_1]) \text{ and } \#_\alpha(c[\xi_2]) \text{ are odd}\big)\Big)\Big]\\
&  \hspace*{25mm} \text{or \ } b_1=b_2=0\\[2mm]
&  \text{iff } \Big(\big( b_1\cdot 2^{\#_\alpha(\xi_1)} = b_2\cdot  2^{\#_\alpha(\xi_2)} \text{ and } \#_\alpha(\xi_1) \text{ and } \#_\alpha(\xi_2) \text{ are even }\big) \text{ or }\\
 & \hspace*{5mm} \big(  b_1\cdot 2^{\#_\alpha(\xi_1)} = b_2\cdot  2^{\#_\alpha(\xi_2)} \text{ and } \#_\alpha(\xi_1) \text{ and } \#_\alpha(\xi_2) \text{ are odd } \big)\Big)\ \text{or \ } b_1=b_2=0\\[2mm]
 &   \text{iff } \Big(b_1 \cdot 2^{\#_\alpha(\xi_1)} =  b_2 \cdot 2^{\#_\alpha(\xi_2)} \text{\  and\  } \big(\#_\alpha(\xi_1) \! \mod(2)\big) = \big(\#_\alpha(\xi_2) \! \mod (2)\big)\Big)\text{ or \ } b_1=b_2=0\enspace,
\end{align*}
\endgroup
where at the third equivalence we extend $\#_\alpha$ to contexts in the obvious way and use the fact that $\#_\alpha(c[\xi])=\#_\alpha(c)+\#_\alpha(\xi)$ for each $\xi\in \T_\Sigma$. It is easy to see that in this particular case $\sim_r=\ker(\sfh_\cA)$, where $\cA$ is the bu-deterministic $(\Sigma,\Ratnum)$ which recognizes $r$ (cf. Example~\ref{ex:bu-det+total-wta-det}).
Moreover, dependency in $\sfMon(\Sigma,\B)/_{\sim_r}$ is decidable.
\hfill$\Box$
\end{example}

In the rest of this subsection we show several properties of the m-syntactic congruence $\sim_r$.
It is obvious that if $r=\widetilde{\0}$, then $\sim_r=\rmMon(\Sigma,\B) \times \rmMon(\Sigma,\B)$, i.e., the index of $\sim_r$ is 1.
If $r\ne \widetilde{\0}$, then for each $\xi\in \T_\Sigma$ with $r(\xi)\ne \0$ and $b_1,b_2\in B$, the relation $b_1.\xi\sim_r b_2.\xi$ implies that $b_1=b_2$.
This is because if $b_1.\xi\sim_r b_2.\xi$, then, with context $c=z$, we have  $b_1\otimes r(\xi) = b_2\otimes r(\xi)$ and
by multiplying with $r(\xi)^{-1}$ we obtain $b_1=b_2$. Hence $\sim_r=\rmMon(\Sigma,\B) \times \rmMon(\Sigma,\B)$ implies that $r=\widetilde{\0}$.
It also follows that if $r\ne \widetilde{\0}$, then $\sim_r$ has at least as many equivalence classes as the cardinality of $B$. In particular, if $r\ne \widetilde{\0}$
and $B$ is infinite, then $\sim_r$ does not have finite index.

    Intuitively, we can retrieve from $\sim_r$ the weighted tree language $r$ by means of a scalar-linear form. To formalize this, we consider an arbitrary $r: \T_\Sigma \to B$ and an arbitrary congruence $\sim$ on $(\rmMon(\Sigma,\B),\widetilde{\0},\ttop)$. We say that $\sim$ \emph{saturates $r$} if there exists a scalar-linear form $\gamma: \rmMon(\Sigma,\B)/_\sim \to B$ such that, for every $\xi \in \T_\Sigma$, we have that $r(\xi) = \gamma([\1.\xi]_\sim)$. If this is the case, then we say that  \emph{$\sim$ saturates $r$ via $\gamma$}.
    
    \begin{lemma}\rm \label{lm:uniqueness-of-scalar-linear-form} Let $r: \T_\Sigma \to B$ and $\sim$ be a congruence on $(\rmMon(\Sigma,\B),\widetilde{\0},\ttop)$ such that $\sim$ saturates $r$. Then there exists exactly one scalar-linear form $\gamma: \rmMon(\Sigma,\B)/_\sim \to B$ such that $\sim$ saturates $r$ via $\gamma$.
    \end{lemma}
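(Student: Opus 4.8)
The existence half of the statement is nothing but the hypothesis that $\sim$ saturates $r$, so the whole content of the lemma is \emph{uniqueness}. The plan is to use that the monomials $\1.\xi$ with $\xi\in\T_\Sigma$ generate $\sfMon(\Sigma,\B)$ (cf. Section~\ref{subsec:sc-alg-general-and-monomials}), so that their $\sim$-classes generate the quotient $\rmMon(\Sigma,\B)/_\sim$; since a scalar-linear form is an algebra homomorphism, it is then completely pinned down by its values on this generating set, and those values are prescribed by the saturation condition.

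Concretely, I would first recall that for every $b\in B$ and $\xi\in\T_\Sigma$ we have $b.\xi = b\cdot(\1.\xi)$ by the definition of the scalar multiplication on monomials, and that the canonical mapping $\pi_\sim$ is a $\B$-scalar algebra homomorphism (Lemma~\ref{thm:canonical-map-of-congr-is-hom}); hence $[b.\xi]_\sim = b\cdot[\1.\xi]_\sim$ in $\rmMon(\Sigma,\B)/_\sim$, and in particular every element of $\rmMon(\Sigma,\B)/_\sim$ has the form $b\cdot[\1.\xi]_\sim$ for suitable $b\in B$ and $\xi\in\T_\Sigma$. Next, let $\gamma_1,\gamma_2$ be scalar-linear forms such that $\sim$ saturates $r$ via $\gamma_1$ and via $\gamma_2$, i.e. $\gamma_1([\1.\xi]_\sim)=r(\xi)=\gamma_2([\1.\xi]_\sim)$ for every $\xi\in\T_\Sigma$. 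For an arbitrary element $[b.\xi]_\sim$ of the quotient, scalar-linearity gives, for $j\in\{1,2\}$,
\[
\gamma_j([b.\xi]_\sim)=\gamma_j\big(b\cdot[\1.\xi]_\sim\big)=b\otimes\gamma_j([\1.\xi]_\sim)=b\otimes r(\xi)\enspace,
\]
so $\gamma_1([b.\xi]_\sim)=\gamma_2([b.\xi]_\sim)$, and therefore $\gamma_1=\gamma_2$.

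I do not expect any genuine obstacle: the argument is a one-line consequence of the fact that $\sim$-classes of the $\1.\xi$ generate the quotient $\B$-scalar algebra together with the defining property of scalar-linear forms. The only point deserving a word of care is that the closed formula $\gamma([b.\xi]_\sim)=b\otimes r(\xi)$ is written in terms of a chosen representative $b.\xi$ of the class; but since we are only comparing two forms that are \emph{assumed} to exist (and hence are already well defined on classes), no independent well-definedness check is needed here.
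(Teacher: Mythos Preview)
Your proposal is correct and essentially identical to the paper's own proof: both take two scalar-linear forms $\gamma,\gamma'$ via which $\sim$ saturates $r$, write an arbitrary class as $[b.\xi]_\sim=b\cdot[\1.\xi]_\sim$, and use scalar-linearity together with $\gamma([\1.\xi]_\sim)=r(\xi)=\gamma'([\1.\xi]_\sim)$ to conclude $\gamma([b.\xi]_\sim)=b\otimes r(\xi)=\gamma'([b.\xi]_\sim)$. Your additional remark that no well-definedness check is needed (since both forms are assumed to exist) is a nice clarification but not required.
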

    \begin{proof} By definition, there exists a  scalar-linear form $\gamma: \rmMon(\Sigma,\B)/_\sim \to B$ such that  $\sim$ saturates $r$ via $\gamma$. Now let  $\gamma': \rmMon(\Sigma,\B)/_\sim \to B$ be a scalar-linear form such that  $\sim$ saturates $r$ via $\gamma'$. Then, for each $b.\xi \in \rmMon(\Sigma,\B)$, we have
      \begingroup
      \allowdisplaybreaks
      \begin{align*}
        \gamma([b.\xi]_\sim) = \gamma(b \cdot [\1.\xi]_\sim) = b \cdot \gamma([\1.\xi]_\sim) = b \cdot r(\xi)
                             = b \cdot \gamma'([\1.\xi]_\sim) = \gamma'(b \cdot [\1.\xi]_\sim)
         = \gamma'([b.\xi]_\sim).
     \end{align*}
      \endgroup 
      \end{proof}

  \begin{lemma}\rm \label{lm:sim-r-is-coarsest} Let $r:\T_\Sigma \to \B$. Then $\sim_r$ is the coarsest congruence among the congruences on $(\rmMon(\Sigma,\B),\widetilde{\0},\ttop)$ which saturate $r$.
\end{lemma}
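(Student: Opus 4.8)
The plan is to prove two things: first that $\sim_r$ itself saturates $r$, and second that every congruence on $(\rmMon(\Sigma,\B),\widetilde{\0},\ttop)$ that saturates $r$ is contained in $\sim_r$. Together these say that $\sim_r$ is the coarsest such congruence.

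For the first part, I would exhibit the scalar-linear form explicitly. Define $\gamma_r\colon \rmMon(\Sigma,\B)/_{\sim_r} \to B$ by $\gamma_r([b.\xi]_{\sim_r}) = b \otimes r(\xi)$. I must check this is well defined: if $b_1.\xi_1 \sim_r b_2.\xi_2$, then taking the context $c = z$ in the definition of $\sim_r$ gives $b_1 \otimes r(\xi_1) = b_2 \otimes r(\xi_2)$, so the value does not depend on the representative. Next, $\gamma_r$ is scalar-linear: $\gamma_r(a \cdot [b.\xi]_{\sim_r}) = \gamma_r([(a\otimes b).\xi]_{\sim_r}) = (a \otimes b) \otimes r(\xi) = a \otimes (b \otimes r(\xi)) = a \otimes \gamma_r([b.\xi]_{\sim_r})$, using associativity of $\otimes$. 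Finally, $\gamma_r([\1.\xi]_{\sim_r}) = \1 \otimes r(\xi) = r(\xi)$ for each $\xi \in \T_\Sigma$, so $\sim_r$ saturates $r$ via $\gamma_r$.

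For the second part, let $\sim$ be any congruence on $(\rmMon(\Sigma,\B),\widetilde{\0},\ttop)$ saturating $r$ via some scalar-linear form $\gamma$ (unique by Lemma~\ref{lm:uniqueness-of-scalar-linear-form}, though uniqueness is not needed here). Suppose $b_1.\xi_1 \sim b_2.\xi_2$; I must show $b_1.\xi_1 \sim_r b_2.\xi_2$, i.e. that $b_1 \otimes r(c[\xi_1]) = b_2 \otimes r(c[\xi_2])$ for every context $c \in \C_\Sigma$. The key point is that $\sim$ is a congruence on the $(\Sigma,\B)$-scalar algebra, hence closed under all the operations $\ttop(\sigma)$, and therefore respected by the action of any context. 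Concretely, from $\sim$ being a congruence one shows by induction on $(\C_\Sigma, \succ_{\C_\Sigma})$ that $c[b_1.\xi_1] \sim c[b_2.\xi_2]$, where $c$ applied to a monomial $b.\xi$ means the monomial $b.(c[\xi])$ obtained by iterated top-concatenations (the scalar factor is carried along unchanged, using that each $\ttop(\sigma)$ is scalar-multiplicative in each argument and absorbs on the fixed arguments filled with $\1.\zeta$'s). Then, applying $\gamma$ to both sides and using that $\sim$ saturates $r$ via $\gamma$ together with scalar-linearity of $\gamma$:
\[
b_1 \otimes r(c[\xi_1]) = \gamma([b_1.(c[\xi_1])]_\sim) = \gamma([b_2.(c[\xi_2])]_\sim) = b_2 \otimes r(c[\xi_2]).
\]
Since $c$ was arbitrary, $b_1.\xi_1 \sim_r b_2.\xi_2$, so $\sim \subseteq \sim_r$.

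The main obstacle is the bookkeeping in the induction showing $c[b_1.\xi_1] \sim c[b_2.\xi_2]$: one has to be careful that "plugging a monomial into a context via top-concatenations" is the right notion and that it interacts correctly with the scalar factor and with the absorptive law \eqref{ml-absorbtive-det}. This is genuinely routine but deserves to be spelled out; everything else is a short computation. (One should also handle the degenerate case $r = \widetilde{\0}$ separately if desired, where $\sim_r$ is the full relation and the statement is immediate, though the argument above covers it uniformly.)
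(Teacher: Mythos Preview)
Your proposal is correct and follows essentially the same approach as the paper: first exhibit the scalar-linear form $\gamma([b.\xi]_{\sim_r}) = b \otimes r(\xi)$ to show $\sim_r$ saturates $r$, then for an arbitrary saturating congruence $\sim$ use that $b_1.\xi_1 \sim b_2.\xi_2$ implies $b_1.c[\xi_1] \sim b_2.c[\xi_2]$ for every context $c$ (by the congruence property) and apply $\gamma$ to conclude $\sim \subseteq \sim_r$. The paper simply states the first implication as ``because $\sim$ is a congruence'' without spelling out the induction on contexts, but your explicit induction is the same argument unpacked.
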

\begin{proof} First, we prove that $\sim_r$ saturates $r$. We define the mapping $\gamma: \rmMon(\Sigma,\B)/_{\sim_r}  \to B$ for every $b \in B$ and $\xi \in \T_\Sigma$ by $\gamma([b.\xi]_{\sim_r}) = b\otimes r(\xi)$. We show that  $\gamma$ is well defined. Let $a \in B$ and $\zeta \in \T_\Sigma$ such that $[b.\xi]_{\sim_r}=[a.\zeta]_{\sim_r}$. Then, for each $c\in \C_\Sigma$, we have $b \otimes r(c[\xi])= a \otimes r(c[\zeta])$. For $c=z$, this yields $b\otimes r(\xi) = a \otimes r(\zeta)$, i.e., $\gamma([b.\xi]_{\sim_r}) = \gamma([a.\zeta]_{\sim_r})$. Thus $\gamma$ is well defined. Obviously, $\gamma$ is a scalar-linear form. Moreover, for each $\xi \in \T_\Sigma$, we have $\gamma([\1.\xi]_{\sim_r})=r(\xi)$, i.e., $\sim_r$ saturates $r$.

Now let $\sim$ be a congruence on $(\rmMon(\Sigma,\B),\widetilde{\0},\ttop)$ which saturates $r$  via the scalar-linear form 
$\gamma: \rmMon(\Sigma,\B)/_{\sim}  \to B$. Moreover, let $b_1.\xi_1, b_2.\xi_2 \in \rmMon(\Sigma,\B)$ such that
$b_1.\xi_1 \sim b_2.\xi_2$. Then
\begingroup
\allowdisplaybreaks
\begin{align*}
& b_1.\xi_1 \sim b_2.\xi_2 \ \Rightarrow \  (\forall c\in \C_\Sigma): b_1.c[\xi_1] \sim b_2.c[\xi_2] \tag{because $\sim$ is a congruence} \\
\Leftrightarrow \ \ & (\forall c\in \C_\Sigma): [b_1.c[\xi_1]]_\sim = [b_2.c[\xi_2]]_\sim \
\Rightarrow \  (\forall c\in \C_\Sigma): \gamma\big([b_1.c[\xi_1]]_\sim\big) = \gamma\big([b_2.c[\xi_2]]_\sim\big) \\
\Rightarrow \ \ & (\forall c\in \C_\Sigma): b_1\cdot\gamma\big([\1.c[\xi_1]]_\sim\big) = b_2\cdot\gamma\big([\1.c[\xi_2]]_\sim\big) \tag{because $\gamma$ is a scalar-linear form}\\
\Rightarrow \ \ & (\forall c\in \C_\Sigma): b_1\cdot r(c[\xi_1]) =  b_2\cdot r(c[\xi_2]) \tag{because $\sim$ saturates $r$ via $\gamma$}\\
  \Rightarrow \ \ &  b_1.\xi_1 \sim_r b_2.\xi_2 \enspace.
\end{align*}
\endgroup
\end{proof}

\begin{lemma} \rm \label{lm:MonSigmaB-modulo-simr-cancellative-det} Let $r:\T_\Sigma \to \B$. Then the $\B$-scalar algebra $\sfMon(\Sigma,\B)/_{\sim_r}$ is cancellative.
\end{lemma}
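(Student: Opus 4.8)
The plan is to verify the defining property of cancellativity directly. Recall that $\sfMon(\Sigma,\B)/_{\sim_r} = (\rmMon(\Sigma,\B)/_{\sim_r},[\widetilde{\0}]_{\sim_r})$ with scalar multiplication $a \cdot [b.\xi]_{\sim_r} = [(a\otimes b).\xi]_{\sim_r}$. Hence I must show: whenever $[b.\xi]_{\sim_r} \ne [\widetilde{\0}]_{\sim_r}$ and $b_1 \cdot [b.\xi]_{\sim_r} = b_2 \cdot [b.\xi]_{\sim_r}$ for some $b_1,b_2 \in B$, then $b_1 = b_2$.

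First I would unwind the hypothesis $[b.\xi]_{\sim_r} \ne [\widetilde{\0}]_{\sim_r}$. Since $\widetilde{\0} = \0.\xi$ and, by definition of $\sim_r$, $b.\xi \sim_r \0.\xi$ holds iff $b \otimes r(c[\xi]) = \0$ for every context $c \in \C_\Sigma$, the assumption $b.\xi \not\sim_r \widetilde{\0}$ yields a context $c_0 \in \C_\Sigma$ with $b \otimes r(c_0[\xi]) \ne \0$. As $\B$ is zero-divisor free (it is a commutative semifield), this forces both $b \ne \0$ and $r(c_0[\xi]) \ne \0$; in particular $b \otimes r(c_0[\xi]) \in B^{-\0}$, which is a group under $\otimes$.

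Next, the equality $b_1 \cdot [b.\xi]_{\sim_r} = b_2 \cdot [b.\xi]_{\sim_r}$ means $(b_1\otimes b).\xi \sim_r (b_2\otimes b).\xi$, i.e. $(b_1 \otimes b)\otimes r(c[\xi]) = (b_2\otimes b)\otimes r(c[\xi])$ for every $c \in \C_\Sigma$. Instantiating with $c = c_0$ and using commutativity and associativity of $\otimes$, we obtain $b_1 \otimes \big(b\otimes r(c_0[\xi])\big) = b_2 \otimes \big(b\otimes r(c_0[\xi])\big)$; multiplying both sides by the inverse $\big(b\otimes r(c_0[\xi])\big)^{-1}$ gives $b_1 = b_2$, as desired.

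The argument is essentially routine; the only point requiring a little care is the first step, namely extracting a witness context $c_0$ from the mere fact that $[b.\xi]_{\sim_r}$ is not the zero class, and observing that this same context simultaneously certifies $b \ne \0$ and $r(c_0[\xi]) \ne \0$, so that the relevant product is invertible in the semifield. No genuine obstacle is expected beyond this.
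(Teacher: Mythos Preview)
Your proof is correct and follows essentially the same approach as the paper: extract a context $c_0$ witnessing $b\otimes r(c_0[\xi])\ne\0$ from the assumption that the class is nonzero, instantiate the congruence at $c_0$, and cancel using the multiplicative inverse in the semifield. The paper's argument is identical in structure, differing only in notation (it calls the witness context $c_\xi$) and in omitting your side remark that $b\ne\0$ and $r(c_0[\xi])\ne\0$ individually, since only invertibility of their product is needed.
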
 
\begin{proof}  Let $[b.\xi]_{\sim_r} \in  \rmMon(\Sigma,\B)/_{\sim_r}\setminus \{[\widetilde{\0}]_{\sim_r}\}$ and $b_1,b_2 \in B$ such that $b_1 \cdot [b.\xi]_{\sim_r}= b_2 \cdot [b.\xi]_{\sim_r}$. We have to prove that $b_1=b_2$.

  Since $[b.\xi]_{\sim_r} \ne [\widetilde{\0}]_{\sim_r}$, the definition of $\sim_r$ implies that there exists $c_\xi \in \C_\Sigma$ such that $b \otimes r(c_\xi[\xi])\ne \0$.
Since $b_1 \cdot [b.\xi]_{\sim_r}= b_2 \cdot [b.\xi]_{\sim_r}$, we have $(b_1 \otimes b).\xi \sim_r (b_2 \otimes b).\xi$.  Thus, for each $c \in \C_\Sigma$, we have $b_1 \otimes b \otimes r(c[\xi]) = b_2 \otimes b \otimes r(c[\xi])$. In particular, we have $b_1 \otimes b \otimes r(c_\xi[\xi]) = b_2 \otimes b \otimes r(c_\xi[\xi])$. By dividing both sides  by  $b \otimes r(c_\xi[\xi])$ (which is not $\0$), we obtain that $b_1=b_2$.
  \end{proof}


\section{B-A's results for bu-deterministic wta}
\label{sec:result-det}

In  Subsection \ref{subsect:det-version-BA-theorem}
  we prove the deterministic version of B-A's theorem (cf. Theorem~\ref{thm:budRec-mono-det}). The two main proof tasks are the following: (1) For a given weighted tree language $r \in \budRec(\Sigma,\B)$, prove that the $\B$-scalar algebra $\sfMon(\Sigma,\B)/_{\sim_{r}}$ is finitely generated and (2) from the objects given in Theorem~\ref{thm:budRec-mono-det}(B), construct a bu-deterministic $(\Sigma,\B)$-wta $\cA$ and prove that $\cA$ recognizes $r$. In Subsections~\ref{subsect:B-scalar-algebra-finitely-generated} and \ref{sec:constr-bu-det-wta-from-congurence+pair-independent-gen-set}, we present these two proofs, respectively. Then in Subsection~\ref{subs:minimization} and \ref{subsect:characterization-of-minimlity}, we construct a minimal bu-deterministic wta for each given bu-determistic wta and prove a characterization of minimality, respectively.

\subsection{The $\B$-scalar algebra $\sfMon(\Sigma,\B)/_{\sim_{\sem{\cA}}}$ is finitely generated}
\label{subsect:B-scalar-algebra-finitely-generated}

Let $\cA=(Q,\delta,F)$ be a bu-deterministic $(\Sigma,\B)$-wta.
In a natural way, $\cA$  induces a $(\Sigma,\B)$-scalar algebra as follows. We consider the $\B$-scalar algebra $(B^Q_{\le 1},\0^Q)$ with the scalar multiplication $\cdot: B \times B^Q_{\le 1} \to B^Q_{\le 1}$ defined in the obvious way and also the $\Sigma$-algebra $(B^Q_{\le 1},\delta_\cA)$ as defined in Lemma \ref{lm:properties-sem-of-budet-wta-det-new}(2). 
Then the triple 
\(
  \sfM(\cA)=(B^Q_{\le 1},\0^Q,\delta_\cA)
\)
is a $(\Sigma,\B)$-scalar algebra via scalar multiplication $\cdot$ because,  by Lemma~\ref{lm:properties-sem-of-budet-wta-det-new}(1), the law \eqref{ml-Omega-det} holds for $\delta_\cA$.

Since, by Theorem~\ref{lm:Mon-SigmaB-initial-det}, $(\mathrm{Mon}(\Sigma,\B),\widetilde{\0},\ttop)$ is initial in the set of all $(\Sigma,\B)$-scalar algebras, there exists a unique scalar-linear mapping from $(\mathrm{Mon}(\Sigma,\B),\widetilde{\0},\ttop)$ to $\sfM(\cA)$. We denote it by~$\sfh_\cA$. By~\eqref{eq:from-h-to-hV-mon-det} (for $(V,0,\mu) = \sfM(\cA)$), for each $b.\xi \in \rmMon(\Sigma,\B)$, we have
\begin{equation}\label{equ:sfh-related-to-h}
\sfh_\cA(b.\xi) = b \cdot \h_\cA(\xi) \enspace,
\end{equation}
because, by Lemma \ref{lm:properties-hA-of-budet-wta-det-new}(1), $\h_\cA$ is the unique $\Sigma$-algebra homomorphism from $\sfT_\Sigma$ also to $(B^Q_{\le 1},\delta_\cA)$.
By Theorem~\ref{thm:kernel-is-congruence}, the kernel $\ker(\sfh_\cA)$ of $\sfh_\cA$ is a congruence on $(\rmMon(\Sigma,\B),\widetilde{\0},\ttop)$.

By Lemma \ref{lm:hom-image=subalgebra}, $(\im(\sfh_\cA),\0^Q,\delta_\cA)$ is a sub-$(\Sigma,\B)$-scalar algebra of $\sfM(\cA)$.
We denote this subalgebra by $\sfM_{\mathrm{im}}(\cA)$. By Corollary~\ref{cor:image-of-hom-isomorphic-to-quotient-of-kernel}, we have
\begin{equation}\label{eq:isomorphism}
(\rmMon(\Sigma,\B),\widetilde{\0},\ttop)/_{\ker(\sfh_\cA)} \cong  \sfM_{\mathrm{im}}(\cA) \enspace.
\end{equation}
In a part of Figure \ref{fig:overview-congruence-relations-isos} we visualize \eqref{eq:isomorphism}.

    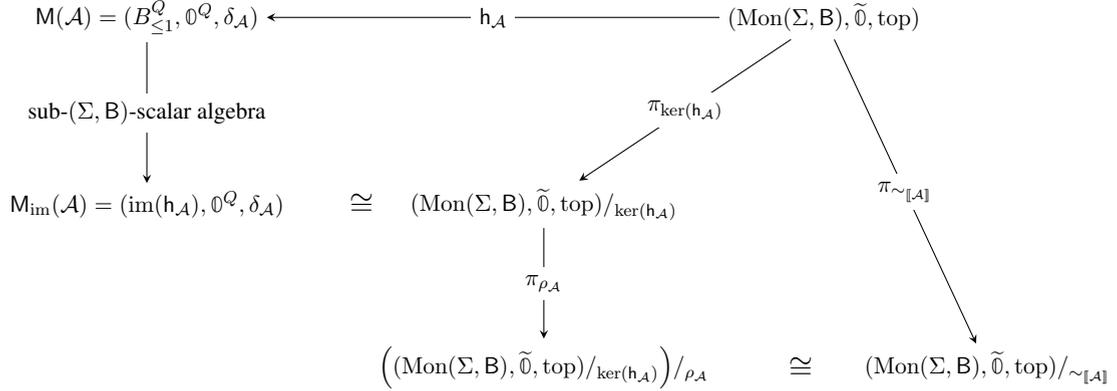
\begin{figure}[t]
 \begin{center}
   \begin{tikzpicture}
    \tikzset{node distance=4em, scale=0.8, transform shape}
     \node (1){$\sfM(\cA) =(B_{\le 1}^Q,\0^Q,\delta_\cA)$};
     \node[right of=1, xshift=25em] (2) {$(\rmMon(\Sigma,\B),\widetilde{\0},\ttop)$};
     \node[below of=1, yshift=-4em] (3) {$\sfM_{\mathrm{im}}(\cA) =(\im(\sfh_\cA),\0^Q,\delta_\cA)$};
     \node[right of=3, xshift=13em] (4) {$(\rmMon(\Sigma,\B),\widetilde{\0},\ttop)/_{\ker(\sfh_\cA)}$};
     \node[below of=4, yshift=-3em] (5) {$\Big((\rmMon(\Sigma,\B),\widetilde{\0},\ttop)/_{\ker(\sfh_\cA)}\Big)/_{\rho_\cA}$};
     \node[right of=5, xshift=15em] (6) {$(\rmMon(\Sigma,\B),\widetilde{\0},\ttop)/_{\sim_{\sem{\cA}}}$};

     \draw (2) edge[->,>=stealth] node[fill=white] {$\sfh_\cA$} (1);
        \draw (1) edge[->,>=stealth] node[fill=white] {sub-$(\Sigma,\B)$-scalar algebra} (3);
     \draw (2) edge[->,>=stealth] node[fill=white] {$\pi_{\ker(\sfh_\cA)}$} (4);
     \draw (2) edge[->,>=stealth] node[fill=white] {$\pi_{\sim_{\sem{\cA}}}$} (6);
     \draw (4) edge[->,>=stealth] node[fill=white] {$\pi_{\rho_\cA}$} (5);

          \node[right of=3, xshift=5.2em] (7) {\Large $\cong$};
          \node[right of=5, xshift=7em] (8) {\Large $\cong$};

   \end{tikzpicture}
 \end{center}
 \caption{\label{fig:overview-congruence-relations-isos} Overview of the relation among several $(\Sigma,\B)$-scalar algebras which occur in Theorem~\ref{lm:congruences-modulo-congruences-det}. }
      \end{figure}

For each bu-deterministic $(\Sigma,\B)$-wta $\cA$, we relate $\sim_{\sem{\cA}}$ and $\ker(\sfh_\cA)$ as follows, where  $\sfh_\cA$ is defined in \eqref{equ:sfh-related-to-h}.

\begin{lemma}\rm \label{lm:kerPsiA-subseteeq-sim-semA-det} Let $\cA$ be a bu-deterministic $(\Sigma,\B)$-wta. Then $\ker(\sfh_\cA) \subseteq \sim_{\sem{\cA}}$.
\end{lemma}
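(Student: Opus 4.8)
The plan is to show that $\ker(\sfh_\cA) \subseteq \sim_{\sem{\cA}}$ by unwinding both definitions and using the characterization of the semantics of a bu-deterministic wta developed in Section~\ref{subsect:bu-det-wta-new}. Recall that $\sim_{\sem{\cA}}$ is defined by $b_1.\xi_1 \sim_{\sem{\cA}} b_2.\xi_2$ iff $b_1 \otimes \sem{\cA}(c[\xi_1]) = b_2 \otimes \sem{\cA}(c[\xi_2])$ for every $c \in \C_\Sigma$, while $\ker(\sfh_\cA)$ relates $b_1.\xi_1$ and $b_2.\xi_2$ iff $\sfh_\cA(b_1.\xi_1) = \sfh_\cA(b_2.\xi_2)$, i.e., by~\eqref{equ:sfh-related-to-h}, iff $b_1 \cdot \h_\cA(\xi_1) = b_2 \cdot \h_\cA(\xi_2)$ in $B^Q_{\le 1}$.

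First I would fix $b_1.\xi_1, b_2.\xi_2 \in \rmMon(\Sigma,\B)$ with $(b_1.\xi_1, b_2.\xi_2) \in \ker(\sfh_\cA)$, so $b_1 \cdot \h_\cA(\xi_1) = b_2 \cdot \h_\cA(\xi_2)$, and then fix an arbitrary context $c \in \C_\Sigma$; the goal is $b_1 \otimes \sem{\cA}(c[\xi_1]) = b_2 \otimes \sem{\cA}(c[\xi_2])$. The key step is to express $\sem{\cA}(c[\xi_i])$ in terms of $\h_\cA(\xi_i)$ and the context map $\h_\cA^\C(c)$. We have $\sem{\cA}(c[\xi_i]) = \h_\cA(c[\xi_i]) \cdot F = \h_\cA^\C(c)(\h_\cA(\xi_i)) \cdot F$ by Lemma~\ref{lm:hcACchAxi=hAcxi-det}. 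Hence, using that $\h_\cA^\C(c)$ is compatible with scalar multiplication (Lemma~\ref{lm:haC-linear-mapping-det}(2)) and that the scalar product with $F$ also is (a routine computation, since $\B$ is commutative),
\[
b_i \otimes \sem{\cA}(c[\xi_i]) = b_i \cdot \big(\h_\cA^\C(c)(\h_\cA(\xi_i)) \cdot F\big) = \big(\h_\cA^\C(c)(b_i \cdot \h_\cA(\xi_i))\big) \cdot F \enspace.
\]
Now apply the hypothesis $b_1 \cdot \h_\cA(\xi_1) = b_2 \cdot \h_\cA(\xi_2)$: the right-hand sides for $i=1$ and $i=2$ coincide, which gives $b_1 \otimes \sem{\cA}(c[\xi_1]) = b_2 \otimes \sem{\cA}(c[\xi_2])$. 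Since $c$ was arbitrary, $(b_1.\xi_1, b_2.\xi_2) \in \sim_{\sem{\cA}}$, completing the argument.

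I do not expect any serious obstacle here; the proof is essentially a one-line calculation once the right facts are marshalled. The only point requiring a modicum of care is the manipulation $b_i \otimes (v \cdot F) = (b_i \cdot v) \cdot F$ for $v \in B^Q$, which follows immediately from the definition of scalar product $v \cdot F = \bigoplus_{q} v_q \otimes F_q$ together with commutativity and distributivity in $\B$; one might alternatively phrase this by noting that $v \mapsto v \cdot F$ is a scalar-linear form on $(B^Q_{\le 1}, \0^Q)$. A secondary subtlety is purely bookkeeping: the pairs in $\ker(\sfh_\cA)$ are monomials $b.\xi$, and one should keep the monomial notation and the underlying data $(b,\xi)$ aligned via~\eqref{equ:sfh-related-to-h}. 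Everything else is routine.
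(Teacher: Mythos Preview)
Your proof is correct, but it takes a different route from the paper's.

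The paper argues indirectly: it invokes Lemma~\ref{lm:sim-r-is-coarsest}, which says that $\sim_{\sem{\cA}}$ is the \emph{coarsest} congruence on $(\rmMon(\Sigma,\B),\widetilde{\0},\ttop)$ saturating $\sem{\cA}$, and then shows that $\ker(\sfh_\cA)$ saturates $\sem{\cA}$ by exhibiting the scalar-linear form $\gamma([b.\xi]_{\ker(\sfh_\cA)}) = b \otimes \sem{\cA}(\xi)$ and checking it is well defined. In that approach, the passage from the base case $c=z$ to arbitrary contexts $c$ is hidden inside the proof of Lemma~\ref{lm:sim-r-is-coarsest}, where it comes from the fact that any congruence on $(\rmMon(\Sigma,\B),\widetilde{\0},\ttop)$ is automatically compatible with all context substitutions. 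Your argument, by contrast, handles arbitrary $c$ directly via the transformation monoid: you use Lemma~\ref{lm:hcACchAxi=hAcxi-det} and Lemma~\ref{lm:haC-linear-mapping-det}(2) to factor $b_i \otimes \sem{\cA}(c[\xi_i])$ as $\big(\h_\cA^\C(c)(b_i \cdot \h_\cA(\xi_i))\big)\cdot F$, and then apply the kernel hypothesis. Your approach is more self-contained and does not need the ``coarsest congruence'' machinery; the paper's approach has the side benefit of explicitly recording that $\ker(\sfh_\cA)$ saturates $\sem{\cA}$, which fits the narrative around Lemma~\ref{lm:sim-r-is-coarsest}. Both are clean; there is no gap in yours.
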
 

\begin{proof}  By Lemma \ref{lm:sim-r-is-coarsest}, $\sim_{\sem{\cA}}$ is the coarsest congruence on $(\sfMon(\Sigma,\B),\oplus,\widetilde{\0},\ttop_\Sigma)$ which saturates $\sem{\cA}$. Thus it suffices to prove that $\ker(\sfh_\cA)$ saturates $\sem{\cA}$. For this we define the mapping $\gamma: \rmMon(\Sigma,\B)/_{\ker(\sfh_\cA)} \to B$ for each $b.\xi \in \rmMon(\Sigma,\B)$ by
  \(
\gamma\big([b.\xi]_{\ker(\sem{\cA})}  \big) = b \otimes \sem{\cA}(\xi) 
\).
We prove that $\gamma$ is well defined. For this let $b.\xi, a.\zeta \in \rmMon(\Sigma,\B)$. Then we can calculate as follows.
\begingroup
\allowdisplaybreaks
\begin{align*}
  b.\xi \ \ker(\sfh_\cA) \ a.\zeta
 \ \ \Leftrightarrow \ \ & \ \ b \otimes \h_\cA(\xi) = a \otimes \h_\cA(\zeta)
  \tag{by \eqref{equ:sfh-related-to-h}}\\
  \ \ \Rightarrow \ \ & \ \ \big(b \otimes \h_\cA(\xi) \big)\cdot F = \big(a \otimes \h_\cA(\zeta)\big)\cdot F
  \tag{where $\cdot$ is the scalar product of two vectors}\\
\ \  \Leftrightarrow \ \ & \ \ b \otimes \big(\h_\cA(\xi)\cdot F\big) = a \otimes \big(\h_\cA(\zeta)\cdot F\big)\\
\ \   \Leftrightarrow \ \  & \ \ b \otimes \sem{\cA}(\xi) = a \otimes \sem{\cA}(\zeta) \\
 \ \ \Leftrightarrow \ \  & \ \ \gamma([b.\xi]_{\ker(\sem{\cA})})=\gamma([a.\zeta]_{\ker(\sem{\cA})}) \enspace.
\end{align*}
\endgroup
Hence $\gamma$ is well defined. It is easy to see that $\gamma$ is a linear form. Since, for each $\xi \in \T_\Sigma$, we have $\gamma([\1.\xi]_{\ker(\sfh_\cA)})= \sem{\cA}(\xi)$, the congruence $\ker(\sfh_\cA)$ saturates $\sem{\cA}$ via $\gamma$.
    \end{proof}

    \begin{example}\rm \label{ex:synt-congr-of-running-example-new-det} We show a total and bu-deterministic $(\Sigma,\Ratnum)$-wta $\cA=(Q,\delta,F)$ such that $\ker(\sfh_\cA)\! \subset \ \sim_{\sem{\cA}}$, i.e., $\ker(\sfh_\cA)$ is a proper subset of  $\sim_{\sem{\cA}}$.
      Let $\Sigma = \{\alpha^{(0)},\beta^{(0)}\}$, $Q=\{p,q\}$,
    $\delta_0(\varepsilon,\alpha,p)=\delta_0(\varepsilon,\beta,q)=1$ and $\delta_0(\varepsilon,\alpha,q)=\delta_0(\varepsilon,\beta,p)=0$
    and $F_p=F_q=1$. Using the order $p$ above $q$, we have:
    \[
      \h_\cA(\alpha) = \left(\begin{matrix}1\\0\end{matrix}\right) \ \ \text{ and } \ \
      \h_\cA(\beta) = \left(\begin{matrix}0\\1\end{matrix}\right) \enspace.
    \]
 Hence, for every $b_1,b_2\in B$ and $\xi_1,\xi_2 \in \{\alpha,\beta\}$, we have $b_1\cdot \h_\cA(\xi_1)=b_2\cdot \h_\cA(\xi_2)$ if and only if
   $b_1=b_2$ and $\xi_1=\xi_2$. This means $\ker(\sfh_\cA)=\{(b.\xi,b.\xi)\mid b\in B, \xi \in \{\alpha,\beta\}\}$, i.e., $\ker(\sfh_\cA)$ is the identity on $\rmMon(\Sigma,\B)$. 
    
    On the other hand, since $\C_\Sigma = \{z\}$, for every $b \in \mathbb{Q}$ and $c\in \C_\Sigma$, we have
    \[
      b \otimes \sem{\cA}(c[\alpha]) = b \otimes \h_\cA(\alpha)_p \otimes F_p = b \otimes \h_\cA(\beta)_q \otimes F_q = b \otimes \sem{\cA}(c[\beta]) \enspace.
    \]
    Thus $b.\alpha \sim_{\sem{\cA}} b.\beta$. In fact,
    $\sim_{\sem{\cA}} = \{ (b.\xi_1,b.\xi_2) \mid b \in B, \xi_1,\xi_2 \in \{\alpha,\beta\}\}$.
    \hfill $\Box$
    \end{example} 
 
    For each bu-deterministic $(\Sigma,\B)$-wta $\cA$, we define the binary relation $\rho_\cA$ on $\rmMon(\Sigma,\B)/_{\ker(\sfh_\cA)}$ such that, for every $[b_1.\xi_1]_{\ker(\sfh_\cA)}, [b_2.\xi_2]_{\ker(\sfh_\cA)} \in \rmMon(\Sigma,\B)/_{\ker(\sfh_\cA)}$, we let
       \begin{equation}\label{equ:kerPsiA-versus-simA}
      [b_1.\xi_1]_{\ker(\sfh_\cA)} \ \rho_\cA \  [b_2.\xi_2]_{\ker(\sfh_\cA)} \ \text{ if  } \
    b_1.\xi_1 \ \sim_{\sem{\cA}} \  b_2.\xi_2 \enspace.
  \end{equation}

  The relation $\rho_\cA$ is well defined because $\ker(\sfh_\cA) \subseteq \sim_{\sem{\cA}}$ (cf. Lemma \ref{lm:kerPsiA-subseteeq-sim-semA-det}).

    \begin{theorem}\label{lm:congruences-modulo-congruences-det} (cf. Figure \ref{fig:overview-congruence-relations-isos}.) Let $\B$ be a commutative semifield. Moreover, let $\cA$ be a bu-deterministic $(\Sigma,\B)$-wta. Then the following statements hold.
      \begin{compactenum}
    \item[(1)]  The relation $\rho_\cA$ is a congruence on the $(\Sigma,\B)$-scalar algebra $(\rmMon(\Sigma,\B),\widetilde{\0},\ttop)/_{\ker(\sfh_\cA)}$.
    \item[(2)] $\Big((\rmMon(\Sigma,\B),\widetilde{\0},\ttop)/_{\ker(\sfh_\cA)}\Big)/_{\rho_\cA} \cong (\rmMon(\Sigma,\B),\widetilde{\0},\ttop)/_{\sim_{\sem{\cA}}}$.
    \end{compactenum}
  \end{theorem}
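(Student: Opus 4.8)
The whole statement is an instance of the second isomorphism theorem for universal algebras (Theorem~\ref{thm:snd-isom-theorem}), once one recognizes that $\rho_\cA$ is nothing but the relation $\sim_{\sem{\cA}}\!/_{\ker(\sfh_\cA)}$ in the notation of the Preliminaries. So the plan is to set things up so that this identification is transparent, and then simply invoke Theorem~\ref{thm:snd-isom-theorem}.

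First I would record the two ingredients needed to apply that theorem. The algebra in play is $\A = (\rmMon(\Sigma,\B),\widetilde{\0},\ttop)$, viewed as a universal algebra via the representation $(\rmMon(\Sigma,\B),\eta)$ from Subsection~\ref{subsect:scalar-algebras-det}, so that all of universal algebra (in particular Theorem~\ref{thm:snd-isom-theorem}) is available. Then: (i) $\ker(\sfh_\cA)$ is a congruence on $\A$ — this is Theorem~\ref{thm:kernel-is-congruence} applied to the scalar-linear mapping $\sfh_\cA$ of \eqref{equ:sfh-related-to-h}; (ii) $\sim_{\sem{\cA}}$ is a congruence on $\A$ — established in Section~\ref{sec:syntactic-congruence-det}; and (iii) $\ker(\sfh_\cA) \subseteq \sim_{\sem{\cA}}$ — this is exactly Lemma~\ref{lm:kerPsiA-subseteeq-sim-semA-det}. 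Because of (iii), the relation $\sim_{\sem{\cA}}\!/_{\ker(\sfh_\cA)}$ on $\rmMon(\Sigma,\B)/_{\ker(\sfh_\cA)}$ is well defined, and by its definition in the Preliminaries we have, for all $[b_1.\xi_1]_{\ker(\sfh_\cA)}, [b_2.\xi_2]_{\ker(\sfh_\cA)}$,
\[
[b_1.\xi_1]_{\ker(\sfh_\cA)} \ \big(\sim_{\sem{\cA}}\!/_{\ker(\sfh_\cA)}\big) \ [b_2.\xi_2]_{\ker(\sfh_\cA)}
\iff b_1.\xi_1 \sim_{\sem{\cA}} b_2.\xi_2 .
\]
Comparing with the defining equation \eqref{equ:kerPsiA-versus-simA} of $\rho_\cA$, this is precisely the statement $\rho_\cA = \sim_{\sem{\cA}}\!/_{\ker(\sfh_\cA)}$.

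With this identification in hand, statement~(1) is Theorem~\ref{thm:snd-isom-theorem}(1) applied to $\A$, $\sim \,=\, \ker(\sfh_\cA)$, $\approx \,=\, \sim_{\sem{\cA}}$: the relation $\approx\!/_\sim = \rho_\cA$ is a congruence on $\A/_\sim = (\rmMon(\Sigma,\B),\widetilde{\0},\ttop)/_{\ker(\sfh_\cA)}$. Likewise, statement~(2) is Theorem~\ref{thm:snd-isom-theorem}(2) for the same data: $(\A/_\sim)/_{(\approx/_\sim)} \cong \A/_\approx$, i.e.
\[
\Big((\rmMon(\Sigma,\B),\widetilde{\0},\ttop)/_{\ker(\sfh_\cA)}\Big)\!\big/_{\rho_\cA} \;\cong\; (\rmMon(\Sigma,\B),\widetilde{\0},\ttop)/_{\sim_{\sem{\cA}}} .
\]

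\textbf{Main obstacle.} There is essentially no hard computation here; the only genuine step is the bookkeeping observation that $\rho_\cA$ equals $\sim_{\sem{\cA}}\!/_{\ker(\sfh_\cA)}$, for which the hypothesis $\ker(\sfh_\cA)\subseteq\sim_{\sem{\cA}}$ (Lemma~\ref{lm:kerPsiA-subseteeq-sim-semA-det}) is indispensable. One should also make sure the reader is comfortable that Theorem~\ref{thm:snd-isom-theorem}, stated for universal algebras, applies to $(\Sigma,\B)$-scalar algebras — but this has already been justified in Subsection~\ref{subsect:scalar-algebras-det}, so a one-line reminder suffices.
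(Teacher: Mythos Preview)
Your proposal is correct and is exactly the paper's approach: the paper's proof simply notes that $\ker(\sfh_\cA)\subseteq\sim_{\sem{\cA}}$ by Lemma~\ref{lm:kerPsiA-subseteeq-sim-semA-det} and then invokes Theorem~\ref{thm:snd-isom-theorem}(1) and (2). Your only addition is making the identification $\rho_\cA=\sim_{\sem{\cA}}\!/_{\ker(\sfh_\cA)}$ explicit, which the paper leaves implicit.
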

  
  \begin{proof} By Lemma \ref{lm:kerPsiA-subseteeq-sim-semA-det}, we have $\ker(\sfh_\cA) \subseteq \sim_{\sem{\cA}}$.  Then  Statement (1) follows from Theorem~\ref{thm:snd-isom-theorem}(1) and Statement (2) follows Theorem~\ref{thm:snd-isom-theorem}(2).
    \end{proof}

  For instance, for the bu-deterministic $(\Sigma,\B)$-wta $\cA$ of Example \ref{ex:synt-congr-of-running-example-new-det}, for every $b_1.\xi_1, b_2.\xi_2 \in \rmMon(\Sigma,\B)$ we have: $[b_1.\xi_1]_{\ker(\sfh_\cA)} \rho_\cA [b_2.\xi_2]_{\ker(\sfh_\cA)}$ iff $b_1=b_2$.

   The next theorem shows that, for each bu-deterministically recognizable $(\Sigma,\B)$-weighted tree language~$r$, the number of states of any $(\Sigma,\B)$-wta which recognizes $r$, is at least the degree of $\sfMon(\Sigma,\B)/_{\sim_r}$. The steps of its proof can be tracked on Figure~\ref{fig:overview-congruence-relations-isos}.  This theorem can be compared to \cite[Cor.~3]{bor03}.
    
  \begin{theorem}\rm\label{lm:dimension-synt-factor-algebra-bounded} Let $\cA=(Q,\delta,F)$ be a  bu-deterministic $(\Sigma,\B)$-wta. Then the $\B$-scalar algebra $\sfMon(\Sigma,\B)/_{\sim_{\sem{\cA}}}$ is finitely generated and $\deg\big( \sfMon(\Sigma,\B)/_{\sim_{\sem{\cA}}}\big)\le |Q|$.
  \end{theorem}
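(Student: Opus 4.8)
The plan is to chain the isomorphisms collected in Figure~\ref{fig:overview-congruence-relations-isos} with the degree-monotonicity statements for sub-$\B$-scalar algebras and for quotient $\B$-scalar algebras. The first step is to pin down the degree of the ambient $\B$-scalar algebra $(B^Q_{\le 1},\0^Q)$. I would show that $H=\{\1_q \mid q \in Q\}$ is a pair-independent generating set of it. That $H$ generates $(B^Q_{\le 1},\0^Q)$ follows from \eqref{eq:equivalent-to-generates}: for $v \in B^Q_{\le 1}$, either $v=\0^Q=\0\cdot\1_q$ for an arbitrarily fixed $q\in Q$ (here $Q\ne\emptyset$ is used), or $v\in B^Q_{=1}$ and then $v=v_{q_v}\cdot\1_{q_v}$ with $v_{q_v}\ne\0$; in both cases $v\in B\cdot H$. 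Pair-independence holds because, for $p\ne q$ and any $b\in B$, the $p$-th component of $b\cdot\1_q$ is $b\otimes\0=\0\ne\1=(\1_p)_p$, so $\1_p$ and $\1_q$ are independent. Hence $(B^Q_{\le 1},\0^Q)$ is finitely generated, and as $|H|=|Q|$, Lemma~\ref{dimension-of-B-scalar-algebra} yields $\deg\big((B^Q_{\le 1},\0^Q)\big)=|Q|$.

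Next I would descend to the image of $\sfh_\cA$. Forgetting the $\Sigma$-operations, $(\im(\sfh_\cA),\0^Q)$ --- the reduct of $\sfM_{\mathrm{im}}(\cA)$ --- is a sub-$\B$-scalar algebra of $(B^Q_{\le 1},\0^Q)$ by Lemma~\ref{lm:hom-image=subalgebra}, hence by Lemma~\ref{obs:scalar-algebra-fin-gen-subalg-hom-Z} it is finitely generated and $\deg\big((\im(\sfh_\cA),\0^Q)\big)\le\deg\big((B^Q_{\le 1},\0^Q)\big)=|Q|$. Reading \eqref{eq:isomorphism} as an isomorphism of $\B$-scalar algebras and applying Observation~\ref{obs:isomorphism-preserves-degree} then shows that $\sfMon(\Sigma,\B)/_{\ker(\sfh_\cA)}$ is finitely generated with degree at most $|Q|$.

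Finally, Theorem~\ref{lm:congruences-modulo-congruences-det}(2) identifies (up to isomorphism) $\sfMon(\Sigma,\B)/_{\sim_{\sem{\cA}}}$ with the quotient of $\sfMon(\Sigma,\B)/_{\ker(\sfh_\cA)}$ by $\rho_\cA$. Applying Lemma~\ref{dimension-of-quotient-B-scalar-algebra} (a quotient of a finitely generated $\B$-scalar algebra is finitely generated, with degree no larger) and Observation~\ref{obs:isomorphism-preserves-degree} once more gives that $\sfMon(\Sigma,\B)/_{\sim_{\sem{\cA}}}$ is finitely generated and $\deg\big(\sfMon(\Sigma,\B)/_{\sim_{\sem{\cA}}}\big)\le|Q|$, as required. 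I do not expect a genuine obstacle: the only point requiring care is to pass consistently to the $\B$-scalar-algebra reducts before invoking the sub-algebra and quotient degree bounds, and to use $Q\ne\emptyset$ when writing $\0^Q=\0\cdot\1_q$.
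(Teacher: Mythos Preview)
Your proof is correct and follows essentially the same route as the paper's own argument: compute the degree of $(B^Q_{\le 1},\0^Q)$, pass to the sub-$\B$-scalar algebra $(\im(\sfh_\cA),\0^Q)$, transfer across the isomorphism \eqref{eq:isomorphism} to $\sfMon(\Sigma,\B)/_{\ker(\sfh_\cA)}$, and then quotient by $\rho_\cA$ and use Theorem~\ref{lm:congruences-modulo-congruences-det}(2). Your only addition is the explicit verification that $\{\1_q\mid q\in Q\}$ is a pair-independent generating set of $(B^Q_{\le 1},\0^Q)$, which the paper simply asserts.
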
 
\begin{proof} Clearly, the $\B$-scalar algebra $(B^Q_{\le 1},\0^Q)$ is finitely generated and its degree is $|Q|$, i.e.,
  \(
|Q| = \deg\big( (B^Q_{\le 1},\0^Q) \big) 
  \).
  By Lemma~\ref{lm:hom-image=subalgebra} $(\im(\sfh_\cA),\0^Q)$ is a sub-$\B$-scalar algebra of $(B^Q_{\le 1},\0^Q)$, and by Lemma \ref{obs:scalar-algebra-fin-gen-subalg-hom-Z},  it is also finitely generated and
  \(
\deg\big( (B^Q_{\le 1},\0^Q) \big) \ge  \deg\big( (\im(\sfh_\cA),\0^Q)\big) \enspace.
    \)
 
    By \eqref{eq:isomorphism}, $(\im(\sfh_\cA),\0^Q) \cong \sfMon(\Sigma,\B)/_{\ker(\sfh_\cA)}$ and thus, by Observation~\ref{obs:isomorphism-preserves-degree}, we have
  \(
\deg\big( (\im(\sfh_\cA),\0^Q)\big) = \deg\big( \sfMon(\Sigma,\B)/_{\ker(\sfh_\cA)}\big) \).
 By Theorem \ref{lm:congruences-modulo-congruences-det}(1), $\rho_\cA$ is a congruence on $\sfMon(\Sigma,\B)/_{\ker(\sfh_\cA)}$ and, by Lemma \ref{dimension-of-quotient-B-scalar-algebra}\,  we have
  \(
\deg\big( \sfMon(\Sigma,\B)/_{\ker(\sfh_\cA)}\big) \ge \deg\Big(\big(\sfMon(\Sigma,\B)/_{\ker(\sfh_\cA)}\big)/_{\rho_\cA}\Big) \enspace.
\)

    Lastly, by Theorem \ref{lm:congruences-modulo-congruences-det}(2), we obtain
$\big(\sfMon(\Sigma,\B)/_{\ker(\sfh_\cA)}\big)/_{\rho_\cA} \cong \sfMon(\Sigma,\B)/_{\sim_{\sem{\cA}}}$ and thus, by Observation~\ref{obs:isomorphism-preserves-degree}, we have 
 \(
\deg\Big(\big(\sfMon(\Sigma,\B)/_{\ker(\sfh_\cA)}\big)/_{\rho_\cA}\Big) = \deg\big( \sfMon(\Sigma,\B)/_{\sim_{\sem{\cA}}} \big) \enspace.
\)
     \end{proof}

\subsection{Construction of a bu-deterministic wta from a weighted tree language, a congruence, and a finite pair-independent generating set}
\label{sec:constr-bu-det-wta-from-congurence+pair-independent-gen-set}

As a preparation for the proof of Theorem~\ref{thm:budRec-mono-det}(B)$\Rightarrow$(A), here, for every
  \begin{compactitem}
  \item weighted tree language $r:\T_\Sigma \to B$,
  \item congruence $\sim$ on the $(\Sigma,\B)$-scalar algebra $(\rmMon(\Sigma,\B),\widetilde{\0},\ttop)$ such that $\sim$ saturates $r$ and $\sfMon(\Sigma,\B)/_\sim$ is finitely generated and cancellative, and
  \item finite scalar basis $H$ of $\sfMon(\Sigma,\B)/_\sim$ such that $H \subseteq  \{[\1.\zeta]_\sim \mid \zeta \in \T_\Sigma\}$,
       \end{compactitem}
we define a bu-deterministic $(\Sigma,\B)$-wta  $\budwta(r,\sim,H)=(H,\delta,F)$ and show that it recognizes $r$ (cf. Definition~\ref{def:wta(r,sim,H)} and Lemma \ref{lm:wta(r,sim,H)}(1) and (2)). Moreover, we show that, if  the decomposition mapping for $\big(\rmMon(\Sigma,\B)/_\sim\big) \setminus \{[\widetilde{\0}]_\sim\}$ is computable,
then we can even construct $\budwta(r,\sim,H)$ (cf. Lemma \ref{lm:wta(r,sim,H)}(3)).  
    
\begin{quote} {\em In the rest of this subsection,  we abbreviate $[\1.\xi]_\sim$ by $[\1.\xi]$ for each $\xi\in \T_\Sigma$.}
\end{quote}

Using the decomposition mapping 
\[
\mathrm{dec}: \big(\rmMon(\Sigma,\B)/_\sim\big) \setminus \{[\widetilde{\0}]\} \to B^{-\0} \times \big(H \setminus \{[\widetilde{\0}]\}\big)
\]
for $V=\rmMon(\Sigma,\B)/_\sim$ as it is defined in \eqref{equ:scalar-vector-det-new},
we have for each $[\1.\xi]\ne[\widetilde{\0}]$:
\begin{equation}\label{equ:decomposition-of-class}
  [\1.\xi] = \mathrm{scal}([\1.\xi]) \cdot \mathrm{gen}([\1.\xi]) \enspace.
  \end{equation}
Then, for every $\xi \in \T_\Sigma$ and $[\1.\zeta] \in H$, we introduce the notation  $[\1.\xi]_{[\1.\zeta]}$ for the value in $B$ defined by
\begin{equation}\label{equ:definition-of-coefficient-det-new-new}
  [\1.\xi]_{[\1.\zeta]}=\begin{cases} \mathrm{scal}([\1.\xi]) & \text{ if $[\widetilde{\0}]\ne[\1.\xi]$ and $\mathrm{gen}([\1.\xi]) = [\1.\zeta]$}\\
\0 & \text{ otherwise.}
\end{cases}
\end{equation}
We note that the well-definedness of the mapping $\mathrm{dec}$, and hence of the notation  $[\1.\xi]_{[\1.\zeta]}$, is due to Lemma~\ref{lm:crucial-canc-pair-ind-imply-uniqueness-det}(2).

\begin{definition}\rm \label{def:wta(r,sim,H)} Let $r$, $\sim$, and $H$ be given as in the above list of objects.  Moreover, let $\gamma: \rmMon(\Sigma,\B)/_\sim \to B$ be the scalar-linear form which is uniquely determined by $\sim$ and $r$ (cf. Lemma~\ref{lm:uniqueness-of-scalar-linear-form}).  We define the $(\Sigma,\B)$-wta $\budwta(r,\sim,H)=(H,\delta,F)$ where
  \begin{compactitem}
\item $\delta = (\delta_k:H^k\times \Sigma^{(k)}\times H \to B\mid k \in \mathbb{N})$ and for every $k \in \mathbb{N}$, $\sigma \in \Sigma^{(k)}$, $[\1.\zeta_{1}],\ldots,[\1.\zeta_{k}],[\1.\zeta]\in H$,  we let
  \begin{equation}\label{eq:delta-k-definition-det-new}
   \delta_k\big([\1.\zeta_{1}]\cdots[\1.\zeta_{k}],\sigma,[\1.\zeta]\big) = [\1.\sigma(\zeta_{1},\ldots,\zeta_{k})]_{[\1.\zeta]}
          \end{equation}
        \item $F: H \to B$ such that, for each $[\1.\zeta] \in H$, we define $F_{[\1.\zeta]} = \gamma([\1.\zeta])$. 
        \end{compactitem}
                        \hfill $\Box$
  \end{definition}

  \begin{lemma}\rm \label{lm:wta(r,sim,H)}  Let $r$, $\sim$, and $H$ be given as in Definition \ref{def:wta(r,sim,H)}. Then the following statements hold.
    \begin{compactenum}
    \item[(1)] The $(\Sigma,\B)$-wta $\budwta(r,\sim,H)$ is bu-deterministic.
      \item[(2)] $\sem{\budwta(r,\sim,H)}=r$.
      \item[(3)] If  the decomposition mapping for $\big(\rmMon(\Sigma,\B)/_\sim\big) \setminus \{[\widetilde{\0}]\}$ is computable, then we can  construct $\budwta(r,\sim,H)$.
              \end{compactenum}
    \end{lemma}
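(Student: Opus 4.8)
The plan is to prove the three statements in order; the only substantial work is an induction on $\T_\Sigma$ for statement~(2), so I would dispatch (1) quickly, do the induction carefully for (2), and then observe that (3) is just a matter of checking that every ingredient of Definition~\ref{def:wta(r,sim,H)} is effectively computable. For statement~(1): from \eqref{equ:definition-of-coefficient-det-new-new}, a transition weight $\delta_k([\1.\zeta_1]\cdots[\1.\zeta_k],\sigma,[\1.\zeta])=[\1.\sigma(\zeta_1,\ldots,\zeta_k)]_{[\1.\zeta]}$ is nonzero only if $[\1.\sigma(\zeta_1,\ldots,\zeta_k)]\ne[\widetilde{\0}]$ and $\mathrm{gen}([\1.\sigma(\zeta_1,\ldots,\zeta_k)])=[\1.\zeta]$. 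Because $\ttop(\sigma)(\1.\zeta_1,\ldots,\1.\zeta_k)=\1.\sigma(\zeta_1,\ldots,\zeta_k)$ and $\sim$ is a congruence, the class $[\1.\sigma(\zeta_1,\ldots,\zeta_k)]$ depends only on the states $[\1.\zeta_1],\ldots,[\1.\zeta_k]$ and the symbol $\sigma$ (so $\delta_k$ is also well defined, independently of the chosen tree representatives of the states), and since $\mathrm{gen}$ is single valued by Lemma~\ref{lm:crucial-canc-pair-ind-imply-uniqueness-det}(2), there is at most one target state $[\1.\zeta]$ with nonzero weight; hence $\budwta(r,\sim,H)$ is bu-deterministic.

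For statement~(2), write $\cA=\budwta(r,\sim,H)$. The key claim, proved by induction on $\T_\Sigma$, is that for every $\xi\in\T_\Sigma$ and $[\1.\zeta]\in H$ we have $\h_\cA(\xi)_{[\1.\zeta]}=[\1.\xi]_{[\1.\zeta]}$; equivalently, $\h_\cA(\xi)=\mathrm{scal}([\1.\xi])\cdot\1_{\mathrm{gen}([\1.\xi])}$ when $[\1.\xi]\ne[\widetilde{\0}]$, and $\h_\cA(\xi)=\0^H$ otherwise. In the inductive step for $\xi=\sigma(\xi_1,\ldots,\xi_k)$ (the case $k=0$ being immediate from \eqref{eq:delta-k-definition-det-new}), first suppose some $[\1.\xi_i]=[\widetilde{\0}]$: then $\h_\cA(\xi_i)=\0^H$ by the induction hypothesis, so $\h_\cA(\xi)=\0^H$, while $[\1.\xi]=[\widetilde{\0}]$ by \eqref{ml-absorbtive-det} applied in the quotient $(\rmMon(\Sigma,\B),\widetilde{\0},\ttop)/_\sim$, so $[\1.\xi]_{[\1.\zeta]}=\0$. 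Otherwise each $\h_\cA(\xi_i)$ is concentrated on the single state $\mathrm{gen}([\1.\xi_i])$ with value $\mathrm{scal}([\1.\xi_i])\in B^{-\0}$, so the sum defining $\h_\cA(\xi)_{[\1.\zeta]}$ collapses to the single term $w=\mathrm{gen}([\1.\xi_1])\cdots\mathrm{gen}([\1.\xi_k])$, giving $\h_\cA(\xi)_{[\1.\zeta]}=\mathrm{scal}([\1.\xi_1])\otimes\cdots\otimes\mathrm{scal}([\1.\xi_k])\otimes[\1.\sigma(\eta_1,\ldots,\eta_k)]_{[\1.\zeta]}$, where $[\1.\eta_i]=\mathrm{gen}([\1.\xi_i])$. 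On the other hand, since $\sim$ is a congruence and using \eqref{ml-Omega-det} in the quotient, $[\1.\xi]=\bigl(\mathrm{scal}([\1.\xi_1])\otimes\cdots\otimes\mathrm{scal}([\1.\xi_k])\bigr)\cdot[\1.\sigma(\eta_1,\ldots,\eta_k)]$; if $[\1.\sigma(\eta_1,\ldots,\eta_k)]=[\widetilde{\0}]$ both sides are $\0$, and otherwise, by zero-divisor-freeness of $\B$ together with uniqueness of the decomposition (Lemma~\ref{lm:crucial-canc-pair-ind-imply-uniqueness-det}(2), using cancellativity and pair-independence of $H$), $\mathrm{gen}([\1.\xi])=\mathrm{gen}([\1.\sigma(\eta_1,\ldots,\eta_k)])$ and $\mathrm{scal}([\1.\xi])=\mathrm{scal}([\1.\xi_1])\otimes\cdots\otimes\mathrm{scal}([\1.\xi_k])\otimes\mathrm{scal}([\1.\sigma(\eta_1,\ldots,\eta_k)])$, and unfolding the definition of $[\1.\sigma(\eta_1,\ldots,\eta_k)]_{[\1.\zeta]}$ via \eqref{equ:definition-of-coefficient-det-new-new} shows the two sides agree, proving the claim.

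Having the claim, I would finish (2) by computing $\sem{\cA}(\xi)=\bigoplus_{[\1.\zeta]\in H}\h_\cA(\xi)_{[\1.\zeta]}\otimes F_{[\1.\zeta]}=\bigoplus_{[\1.\zeta]\in H}[\1.\xi]_{[\1.\zeta]}\otimes\gamma([\1.\zeta])$ and checking it equals $\gamma([\1.\xi]_\sim)=r(\xi)$: when $[\1.\xi]\ne[\widetilde{\0}]$ only the term $[\1.\zeta]=\mathrm{gen}([\1.\xi])$ survives, giving $\mathrm{scal}([\1.\xi])\otimes\gamma(\mathrm{gen}([\1.\xi]))=\gamma(\mathrm{scal}([\1.\xi])\cdot\mathrm{gen}([\1.\xi]))=\gamma([\1.\xi]_\sim)$ by scalar-linearity of $\gamma$ and \eqref{equ:decomposition-of-class}; when $[\1.\xi]=[\widetilde{\0}]$ the sum is $\0=\gamma([\widetilde{\0}]_\sim)$ since $[\widetilde{\0}]_\sim=\0\cdot[\1.\xi]_\sim$ and $\gamma$ is scalar-linear; in both cases $\gamma([\1.\xi]_\sim)=r(\xi)$ because $\sim$ saturates $r$ via $\gamma$ (the form fixed in Definition~\ref{def:wta(r,sim,H)} by Lemma~\ref{lm:uniqueness-of-scalar-linear-form}). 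For statement~(3), I would run through the data of Definition~\ref{def:wta(r,sim,H)}: $H$ is given; since $H\subseteq\{[\1.\zeta]_\sim\mid\zeta\in\T_\Sigma\}$ each state carries an available tree representative $\zeta$, so $F_{[\1.\zeta]}=\gamma([\1.\zeta]_\sim)=r(\zeta)$ is computable because $r$ is effectively given; and a transition weight $[\1.\sigma(\zeta_1,\ldots,\zeta_k)]_{[\1.\zeta]}$ is obtained by forming the tree $\sigma(\zeta_1,\ldots,\zeta_k)$, deciding whether its $\sim$-class is $[\widetilde{\0}]$ (weight $\0$ in that case), and otherwise applying the computable mapping $\mathrm{dec}$ and comparing its second component against the element $[\1.\zeta]$ inside the finite, explicitly given set $H$, reading off $\mathrm{scal}$ or $\0$ accordingly. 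I expect the main obstacle to be the induction in statement~(2) — threading the product $\mathrm{scal}([\1.\xi_1])\otimes\cdots\otimes\mathrm{scal}([\1.\xi_k])$ through top-concatenation inside the quotient algebra and invoking uniqueness of the scalar-basis decomposition at precisely the right place, while keeping the $[\widetilde{\0}]$ degenerate cases and the representative-independence of $\delta_k$ under control; the effectivity issue in (3) is minor and rests on reading ``the decomposition mapping is computable'' as also supplying the decision of its domain.
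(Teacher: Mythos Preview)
Your proposal is correct and follows essentially the same approach as the paper: the same inductive claim $\h_\cA(\xi)_{[\1.\zeta]}=[\1.\xi]_{[\1.\zeta]}$ proved by the same case split on whether some $[\1.\xi_i]=[\widetilde{\0}]$, the same use of \eqref{ml-Omega-det} and \eqref{ml-absorbtive-det} in the quotient to factor out the scalars, and the same final computation of $\sem{\cA}(\xi)$ via scalar-linearity of $\gamma$. The only noteworthy difference is in (3): you rely on reading ``$\mathrm{dec}$ is computable'' as also deciding its domain, whereas the paper instead invokes the standing convention that $\sim$ is given effectively to decide whether $\1.\sigma(\zeta_1,\ldots,\zeta_k)\sim\widetilde{\0}$; either reading suffices.
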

  \begin{proof} Proof of (1):  It follows from the fact that, due to \eqref{equ:definition-of-coefficient-det-new-new}, in the definition~\eqref{eq:delta-k-definition-det-new} there exists at most one $[\1.\zeta] \in H$ with $[\1.\sigma(\zeta_{1},\ldots,\zeta_{k})]_{[\1.\zeta]}\ne\0$.

      Proof of (2): Let $\budwta(r,\sim,H) = (H,\delta,F)$. We abbreviate $\budwta(r,\sim,H)$ by $\cA$.  By induction on $\T_\Sigma$, we can prove the following statement.
    \begin{equation}\label{equ:hom-is-canonical-det-new}
\text{For each $\xi \in \T_\Sigma$ and $[\1.\zeta] \in H$, we have $\h_{\cA}(\xi)_{[\1.\zeta]} = [\1.\xi]_{[\1.\zeta]}$.}
\end{equation}

Let $\xi = \sigma(\xi_1,\ldots,\xi_k)$ and $[\1.\zeta] \in H$.  Then we can calculate as follows.
\begingroup
\allowdisplaybreaks
\begin{align*}
  & \ \ \h_{\cA}(\sigma(\xi_1,\ldots,\xi_k))_{[\1.\zeta]}\\
  &= \bigoplus_{[\1.\zeta_1]\cdots [\1.\zeta_k] \in H^k}  \h_{\cA}(\xi_1)_{[\1.\zeta_1]} \otimes \ldots \otimes \h_{\cA}(\xi_k)_{[\1.\zeta_k]} \otimes \delta_k([\1.\zeta_1] \cdots [\1.\zeta_k], \sigma, [\1.\zeta])\\
  &=\bigoplus_{[\1.\zeta_1]\cdots [\1.\zeta_k] \in H^k}  [\1.\xi_1]_{[\1.\zeta_1]} \otimes \ldots \otimes [\1.\xi_k]_{[\1.\zeta_k]} \otimes
    \delta_k([\1.\zeta_1] \cdots [\1.\zeta_k], \sigma, [\1.\zeta])
    \tag{by I.H.}\\
   &=\bigoplus_{[\1.\zeta_1]\cdots [\1.\zeta_k] \in H^k}  [\1.\xi_1]_{[\1.\zeta_1]} \otimes \ldots \otimes [\1.\xi_k]_{[\1.\zeta_k]} \otimes
    [\1.\sigma(\zeta_1, \ldots,\zeta_k)]_{[\1.\zeta]} \enspace.
    \tag{by the definition of $\delta_k$}
\end{align*}
\endgroup
If $k=0$, then we are ready. For $k\ge 1$ we continue by case analysis.

\underline{case (a):} There exists $i \in [k]$ such that $[\1.\xi_i]= [\widetilde{\0}]$, i.e.,  for each $[\1.\zeta] \in H$, we have $[\1.\xi_i]_{[\1.\zeta]} = \0$. Then  we can continue as follows:
 \begingroup
\allowdisplaybreaks
\begin{align*}
  &\bigoplus_{[\1.\zeta_1]\cdots [\1.\zeta_k] \in H^k}  [\1.\xi_1]_{[\1.\zeta_1]} \otimes \ldots \otimes [\1.\xi_k]_{[\1.\zeta_k]} \otimes [\1.\sigma(\zeta_1, \ldots,\zeta_k)]_{[\1.\zeta]} = \0\\
  &= \Big(\ttop/\!_\sim(\sigma)([\1.\xi_1],\ldots,[\1.\xi_{i-1}],[\widetilde{\0}], [\1.\xi_{i+1}],\ldots,[\1.\xi_k])\Big)_{[\1.\zeta]}\tag{by (\ref{ml-absorbtive-det}) for $\ttop/\!_\sim(\sigma)$}\\
      &= \Big(\ttop/\!_\sim(\sigma)([\1.\xi_1],\ldots,[\1.\xi_k])\Big)_{[\1.\zeta]} = [\1.\sigma(\xi_1,\ldots,\xi_k)]_{[\1.\zeta]} \enspace.
\end{align*}
\endgroup
  
\underline{case (b):}  For each $i \in [k]$ we have $[\1.\xi_i]\ne [\widetilde{\0}]$; thus, by \eqref{equ:decomposition-of-class}, we have $[\1.\xi_i] = \mathrm{scal}([\1.\xi_i]) \cdot \mathrm{gen}([\1.\xi_i])$. Since $\mathrm{gen}([\1.\xi_i]) \in H$, there exists $\widehat{\zeta_i} \in \T_\Sigma$ such that $\mathrm{gen}([\1.\xi_i]) = [\1.\widehat{\zeta_i}]$. Then, by \eqref{equ:definition-of-coefficient-det-new-new},  $[\1.\xi_i] = [\1.\xi_i]_{[\1.\widehat{\zeta_i}]}\cdot [\1.\widehat{\zeta_i}]$ for each $i \in [k]$. Now we can continue as follows:
   \begingroup
\allowdisplaybreaks
\begin{align*}
  &\bigoplus_{[\1.\zeta_1]\cdots [\1.\zeta_k] \in H^k}  [\1.\xi_1]_{[\1.\zeta_1]} \otimes \ldots \otimes [\1.\xi_k]_{[\1.\zeta_k]} \otimes [\1.\sigma(\zeta_1, \ldots,\zeta_k)]_{[\1.\zeta]}\\[2mm]
  &=   [\1.\xi_1]_{[\1.\widehat{\zeta_1}]} \otimes \ldots \otimes [\1.\xi_k]_{[\1.\widehat{\zeta_k}]} \otimes
    [\1.\sigma(\widehat{\zeta}_1,\ldots,\widehat{\zeta}_k)]_{[\1.\zeta]}
    \tag{because $ [\1.\xi_i]_{[\1.\zeta']}=\0 $ if $[\1.\zeta']\ne [\1.\widehat{\zeta}_i]$ for $i\in [k]$}\\
   &= [\1.\xi_1]_{[\1.\widehat{\zeta_1}]} \otimes \ldots \otimes [\1.\xi_k]_{[\1.\widehat{\zeta_k}]} \otimes
     \big(\ttop/\!_\sim(\sigma)([\1.\widehat{\zeta}_1],\ldots,[\1.\widehat{\zeta}_k])\big)_{[\1.\zeta]}
  \tag{by definition of $\ttop/\!_\sim(\sigma)$}\\
   &= \Big( \big([\1.\xi_1]_{[\1.\widehat{\zeta_1}]} \otimes \ldots \otimes [\1.\xi_k]_{[\1.\widehat{\zeta_k}]}\big) \cdot
     \ttop/\!_\sim(\sigma)([\1.\widehat{\zeta}_1],\ldots,[\1.\widehat{\zeta}_k]) \Big)_{[\1.\zeta]}
  \tag{by axioms for scalar multiplication of weighted tree languages}\\
  &= \Big( \ttop/\!_\sim(\sigma)([\1.\xi_1]_{[\1.\widehat{\zeta_1}]}\cdot [\1.\widehat{\zeta_1}],\ldots,
    [\1.\xi_k]_{[\1.\widehat{\zeta_k}]}\cdot [\1.\widehat{\zeta_k}]\Big)_{[\1.\zeta]}
  \tag{by \eqref{ml-Omega-det} for $\ttop/\!_\sim(\sigma)$}\\
  &= \Big(\ttop/\!_\sim(\sigma)([\1.\xi_1],\ldots,[\1.\xi_k])\Big)_{[\1.\zeta]}
    \tag{because  $[\1.\xi_i] = [\1.\xi_i]_{[\1.\widehat{\zeta_i}]}\cdot [\1.\widehat{\zeta_i}]$ for each $i \in [k]$}\\
  &= [\1.\sigma(\xi_1,\ldots,\xi_k)]_{[\1.\zeta]} \enspace.
     \tag{by definition of $\ttop/_\sim(\sigma)$}
  \end{align*}
  \endgroup
  This finishes the proof of \eqref{equ:hom-is-canonical-det-new}.
Now let $\xi \in \T_\Sigma$.   By Lemma \ref{lm:properties-hA-of-budet-wta-det-new}(4), we have
       \begingroup
    \allowdisplaybreaks
    \begin{align*}
      \sem{\cA}(\xi) &= \begin{cases}
                \h_{\cA}(\xi)_{\state(\xi)} \otimes F_{\state(\xi)} & \text{ if $\state(\xi) \in Q$}\\
               \0        & \text{ otherwise}
             \end{cases}                           
                           \end{align*}
                           \endgroup
                           We continue by case analysis.
                           
    \underline{$[\1.\xi]=[\widetilde{\0}]$:}
Then by \eqref{equ:hom-is-canonical-det-new} we have $\h_\cA(\xi)=\0^Q$ and thus
  $\sem{\cA}(\xi) = \0  = \gamma([\widetilde{\0}]) = \gamma([\1.\xi]) = r(\xi)$
    where the last equality holds because $\sim$ saturates $r$ via the scalar-linear form $\gamma$.

 \underline{$[\1.\xi] \ne [\widetilde{\0}]$:} Similarly as in case (b) above, there exists $\widehat{\zeta} \in \T_\Sigma$ such that $[\1.\widehat{\zeta}]\in H$ and $[\1.\xi] = [\1.\xi]_{[\1.\widehat{\zeta}]}\cdot [\1.\widehat{\zeta}]$. Since $[\1.\xi]_{[\1.\widehat{\zeta}]}\ne \0$, by  \eqref{equ:hom-is-canonical-det-new} and Lemma \ref{lm:properties-hA-of-budet-wta-det-new}(2), we have $\state(\xi)=[\1.\widehat{\zeta}]$.
 Then  we can continue as follows:
     \begingroup
    \allowdisplaybreaks
    \begin{align*}
      \sem{\cA}(\xi) &= \h_{\cA}(\xi)_{\state(\xi)} \otimes F_{\state(\xi)}= \h_{\cA}(\xi)_{[\1.\widehat{\zeta}]} \otimes F_{[\1.\widehat{\zeta}]} \\
                        &= [\1.\xi]_{[\1.\widehat{\zeta}]} \otimes \gamma([\1.\widehat{\zeta}]) 
      \tag{by \eqref{equ:hom-is-canonical-det-new} and definition of $F$}\\
                     &= \gamma\Big([\1.\xi]_{[\1.\widehat{\zeta}]} \cdot [\1.\widehat{\zeta}]\Big) 
                           \tag{because $\gamma$ is a scalar-linear form}\\
      &= \gamma([\1.\xi]) 
      \tag{by the above}\\
                          &= r(\xi)
                            \tag{because $\sim$ saturates $r$ via the scalar-linear form $\gamma$}\enspace. 
    \end{align*}
    \endgroup
        Proof of (3): Let $k \in \mathbb{N}$, $\sigma \in \Sigma^{(k)}$, and $[\1.\zeta_{1}],\ldots,[\1.\zeta_{k}],[\1.\zeta]\in H$. We show that we can compute the right-hand side of \eqref{eq:delta-k-definition-det-new}. For this, we decide if  $[\1.\sigma(\zeta_{1},\ldots,\zeta_{k})]=[\widetilde{\0}]$, i.e., if $\1.\sigma(\zeta_{1},\ldots,\zeta_{k})\sim \widetilde{\0}$. (This is possible because, by our general convention, $\sim$ is given effectively.)  If the answer is yes, then $[\1.\sigma(\zeta_{1},\ldots,\zeta_{k})]_{[\1.\zeta]}=\0$. Otherwise,
  we compute $\mathrm{scal}([\1.\sigma(\zeta_{1},\ldots,\zeta_{k})])$ and $\mathrm{gen}([\1.\sigma(\zeta_{1},\ldots,\zeta_{k})])$.
  Now if $\mathrm{gen}([\1.\sigma(\zeta_{1},\ldots,\zeta_{k})])=[\1.\zeta]$, then 
  $[\1.\sigma(\zeta_{1},\ldots,\zeta_{k})]_{[\1.\zeta]}=\mathrm{scal}([\1.\sigma(\zeta_{1},\ldots,\zeta_{k})])$,
  otherwise again  $[\1.\sigma(\zeta_{1},\ldots,\zeta_{k})]_{[\1.\zeta]}=\0$.
  
  Moreover, let $[\1.\zeta] \in H$. Then we can compute $F_{[\1.\zeta]}$ because $\gamma([\1.\zeta])=r(\zeta)$ and by our convention 
  the mapping $r$ is computable.
          \end{proof}

\begin{example}\rm \label{ex:bu-det+total-wta3-det} We illustrate the construction of $\budwta(r,\sim,H)$.
For this, we consider the $(\Sigma,\Ratnum)$-weighted tree language
$r: \T_\Sigma \to \mathbb{Q}$ of Example \ref{ex:bu-det+total-wta-det} and the m-syntactic congruence $\sim_r$ (cf. Example~\ref{ex-syntactic-congruence}). By Lemma~\ref{lm:sim-r-is-coarsest}, $\sim_r$ saturates $r$ via the scalar-linear form $\gamma: \rmMon(\Sigma,\Ratnum)/_{\sim_r} \to \mathbb{Q}$
  defined by
  \(
\gamma([b.\xi]_{\sim_r}) = b \cdot r(\xi) \ \text{ for each $b.\xi \in \rmMon(\Sigma,\Ratnum)$}\).
In the sequel, we abbreviate $[b.\xi]_{\sim_r}$ by $[b.\xi]$.

    By Lemma~\ref{lm:MonSigmaB-modulo-simr-cancellative-det}, the $\B$-scalar algebra $\sfMon(\Sigma,\Ratnum)/_{\sim_r}$ is cancellative.
In order to show that $\sfMon(\Sigma,\Ratnum)/_{\sim_r}$ is finitely generated, we consider the set
  \(
H = \{[1.\alpha], \ [1.\sigma(\alpha,\alpha)] \} 
\)
and show that it is a generating set of $\sfMon(\Sigma,\Ratnum)/_{\sim_r}$. By using the characterization of $\sim_r$ given in  Example  \ref{ex-syntactic-congruence}, 
for every $\xi \in \T_\Sigma$ and $b\in \mathbb{Q}$, we have
\begin{align}\label{eq:congruence-classes}
[b.\xi] = 
\begin{cases}
[(b\cdot 2^{\#_\alpha(\xi)-1}).\alpha]=(b\cdot 2^{\#_\alpha(\xi)-1})\cdot[1.\alpha] & \text{ if \ $\big(\#_\alpha(\xi)\!\mod(2)\big)=1$}\\
[(b\cdot 2^{\#_\alpha(\xi)-2}).\sigma(\alpha,\alpha)]=(b\cdot 2^{\#_\alpha(\xi)-2})\cdot[1.\sigma(\alpha,\alpha)] & \text{ if \ $\big(\#_\alpha(\xi)\!\mod(2)\big)=0$}.
\end{cases}
\end{align}
This shows that $H$ generates $\sfMon(\Sigma,\Ratnum)/_{\sim_r}$. By \eqref{eq:congruence-classes}, we also obtain that there does not exist $b \in \mathbb{Q}$ such that $[1.\alpha] = b \cdot [1.\sigma(\alpha,\alpha)]$ because $\big(\#_\alpha(\alpha) \! \mod(2)\big) \not = \big(\#_\alpha(\sigma(\alpha,\alpha)) \! \mod (2)\big)$. Thus  $H$ is pair-independent, and hence $H$ is a finite scalar-basis.
Hence $r$, $\sim_r$, and $H$ satisfy all the conditions of  Definition \ref{def:wta(r,sim,H)}.

Moreover, \eqref{eq:congruence-classes} also  shows that  the decomposition mapping $\mathrm{dec}$ for $\big(\rmMon(\Sigma,\Ratnum)/_{\sim_r}\big) \setminus \{[\widetilde{\0}]\}$ is computable. In particular, for each $\xi\in\T_\Sigma$, we have
\begin{align}\label{eq:dec-mapping}
\mathrm{dec}([1.\xi])=
\begin{cases} (2^{\#_\alpha(\xi)-1}, [1.\alpha]) & \text{ if \ $\big(\#_\alpha(\xi)\!\mod(2)\big)=1$}\\
(2^{\#_\alpha(\xi)-2}, [1.\sigma(\alpha,\alpha)]) & \text{ if \ $\big(\#_\alpha(\xi)\!\mod(2)\big)=0$}.
\end{cases}
\end{align}

Hence, by  Lemma \ref{lm:wta(r,sim,H)}(3), we can construct the $(\Sigma,\Ratnum)$-wta $\budwta(r,\sim_r,H) =(H,\delta,F)$ as follows:
\begin{compactitem}
\item $\delta_0(\varepsilon,\alpha,[1.\alpha]) = 1$, \
  $\delta_0(\varepsilon,\alpha,[1.\sigma(\alpha,\alpha)]) = 0$, and for every $q_1q_2 \in H$ we have
  \[
    \delta_2(q_1q_2,\sigma,[1.\alpha]) = 
    \begin{cases} 4  & \text{ if $q_1\not= q_2$}\\
           0 & \text{ otherwise }
           \end{cases}
    \]
        and
        \[
       \delta_2(q_1q_2,\sigma,[1.\sigma(\alpha,\alpha)]) = 
        \begin{cases} 4 & \text{ if $q_1= q_2= [1.\sigma(\alpha,\alpha)]$}\\
         1 & \text{ if $q_1= q_2= [1.\alpha]$}\\
          0 & \text{ otherwise }\enspace.
        \end{cases}
        \]
       For instance $\delta_2([1.\sigma(\alpha,\alpha)][1.\alpha],\sigma,[1.\alpha])=4$ because
       $[1.\sigma(\sigma(\alpha,\alpha),\alpha)]_{[1.\alpha]}=\mathrm{scal}([1.\sigma(\sigma(\alpha,\alpha),\alpha)])=2^{3-1}=4$.
       
        \item $F_{[1.\alpha]} = \gamma([1.\alpha]) = r(\alpha) = 6$ and
  $F_{[1.\sigma(\alpha,\alpha)]} = \gamma([1.\sigma(\alpha,\alpha)]) = r(\sigma(\alpha,\alpha)) = 8$.
\end{compactitem}
By Lemma \ref{lm:wta(r,sim,H)}(1), $\budwta(r,\sim_r,H)$ is a bu-deterministic $(\Sigma,\B)$-wta; it is even total.  By Lemma~\ref{lm:wta(r,sim,H)}(2), we have $\sem{\budwta(r,\sim_r,H)}=r$.
    \hfill$\Box$
\end{example}

 \begin{figure}[t]
   \begin{center}
\begin{tikzpicture}
\tikzset{node distance=7em, scale=0.4, transform shape}
\node[state, rectangle] (1) {\Large $\alpha$};
\node[state, right of=1] (2){\Large $[1.\alpha]$};
\node[state, rectangle, right of=2] (3)[right=1em]{\Large $\sigma$};
\node[state, rectangle, above of=3] (4)[above=1em]{\Large $\sigma$};
\node[state, rectangle, below of=3] (5)[below=1em]{\Large $\sigma$};
\node[state, right of=3] (6)[right=1em]{\Large $[1.\sigma(\alpha,\alpha)]$};
\node[state, rectangle, right of=6] (7) {\Large $\sigma$};

\tikzset{node distance=2em}
\node[above of=1] (w1)[yshift=0.7em] {\Large 1};
\node[above of=2] (w2)[left=0.1em,yshift=1em] {\Large 6};
\node[above of=3] (w3)[yshift=0.7em] {\Large 1};
\node[above of=4] (w4)[yshift=0.7em] {\Large 4};
\node[above of=5] (w5)[yshift=0.7em] {\Large 4};
\node[above of=6] (w6)[above=1.8em] {\Large 8};
\node[above of=7] (w7)[yshift=0.7em] {\Large 4};

\draw[->,>=stealth] (1) edge (2);
\draw[->,>=stealth] (2) edge[out=20, in=155, looseness=1.1] (3);
\draw[->,>=stealth] (2) edge[out=-20, in=205, looseness=1.1] (3);
\draw[->,>=stealth] (2) edge (4);
\draw[->,>=stealth] (2) edge (5);
\draw[->,>=stealth] (3) edge (6);
\draw[->,>=stealth] (6) edge (4);
\draw[->,>=stealth] (4) edge[out=110, in=80, looseness=1.4] (2);
\draw[->,>=stealth] (6) edge (5);
\draw[->,>=stealth] (5) edge[out=250, in=-80, looseness=1.4] (2);
\draw[->,>=stealth] (7) edge (6);
\draw[->,>=stealth] (6) edge[out=60, in=30, looseness=2.7] (7);
\draw[->,>=stealth] (6) edge[out=-60, in=-30, looseness=2.7] (7);
\end{tikzpicture} 
\caption{\label{fig:even-odd-two-three-3-det} The $(\Sigma,\Ratnum)$-wta $\budwta(r,\sim_r,H) = (H,\delta,F)$.}
   \end{center}
   \end{figure}

\subsection{The deterministic version of B-A's theorem}
\label{subsect:det-version-BA-theorem}

Here we present our first main result, which might be called the deterministic version of B-A's theorem.

\begin{samepage}
  \begin{theorem}\label{thm:budRec-mono-det}  Let $\B$  a commutative semifield. Moreover, let $r: \T_\Sigma \to B$. The following three statements are equivalent.
  \begin{compactenum}
  \item[(A)] $r \in \budRec(\Sigma,\B)$.
  \item[(B)] There exists a congruence $\sim$ on the $(\Sigma,\B)$-scalar algebra $(\rmMon(\Sigma,\B),\widetilde{\0},\ttop)$ such that  $\sim$ respects~$r$ and the $\B$-scalar algebra $\sfMon(\Sigma,\B)/_\sim$ is finitely generated and cancellative.
  \item[(C)]  The $\B$-scalar algebra $\sfMon(\Sigma,\B)/_{\sim_r}$ is finitely generated.
  \end{compactenum}
  \end{theorem}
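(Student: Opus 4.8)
The plan is to prove the cycle of implications \textrm{(A)}$\Rightarrow$\textrm{(C)}$\Rightarrow$\textrm{(B)}$\Rightarrow$\textrm{(A)}, since after the preparatory work of Sections~\ref{sec:syntactic-congruence-det} and~\ref{sec:result-det} each arrow is short.

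For \textrm{(A)}$\Rightarrow$\textrm{(C)} I would take a bu-deterministic $(\Sigma,\B)$-wta $\cA$ with $\sem{\cA}=r$; then $\sim_r=\sim_{\sem{\cA}}$, and Theorem~\ref{lm:dimension-synt-factor-algebra-bounded} tells us at once that $\sfMon(\Sigma,\B)/_{\sim_r}$ is finitely generated (of degree at most $|Q|$). For \textrm{(C)}$\Rightarrow$\textrm{(B)} I would simply exhibit $\sim\ :=\ \sim_r$ as the required congruence: it is a congruence on $(\rmMon(\Sigma,\B),\widetilde{\0},\ttop)$ (Section~\ref{sec:syntactic-congruence-det}), it saturates $r$ --- i.e.\ respects $r$ --- by Lemma~\ref{lm:sim-r-is-coarsest}, the quotient $\sfMon(\Sigma,\B)/_{\sim_r}$ is finitely generated by hypothesis~\textrm{(C)}, and it is cancellative by Lemma~\ref{lm:MonSigmaB-modulo-simr-cancellative-det}. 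So $\sim_r$ witnesses~\textrm{(B)}.

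The substantive direction is \textrm{(B)}$\Rightarrow$\textrm{(A)}. Here I would feed the given $\sim$ into the construction $\budwta(r,\sim,H)$ of Subsection~\ref{sec:constr-bu-det-wta-from-congurence+pair-independent-gen-set}; by Lemma~\ref{lm:wta(r,sim,H)}(1)--(2) that automaton is bu-deterministic and recognizes $r$, which gives $r\in\budRec(\Sigma,\B)$. The input that construction expects is a \emph{finite} scalar basis $H$ of $\sfMon(\Sigma,\B)/_\sim$ with $H\subseteq\{[\1.\zeta]_\sim\mid\zeta\in\T_\Sigma\}$, so the work is to produce such an $H$ from the bare hypothesis that $\sfMon(\Sigma,\B)/_\sim$ is finitely generated and cancellative. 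I would argue as follows. Since $\{\1.\xi\mid\xi\in\T_\Sigma\}$ generates $\sfMon(\Sigma,\B)$ (Subsection~\ref{subsec:sc-alg-general-and-monomials}), the family $\{[\1.\xi]_\sim\mid\xi\in\T_\Sigma\}$ generates $\sfMon(\Sigma,\B)/_\sim$. Fix any finite generating set; each of its members is some $[b.\xi]_\sim$, so I would delete those with $b=\0$ (they equal $[\widetilde{\0}]_\sim$ and are superfluous in a generating set) and replace each surviving $[b.\xi]_\sim$ by $[\1.\xi]_\sim=b^{-1}\cdot[b.\xi]_\sim$, which is legitimate because $b\in B^{-\0}$ is invertible in the semifield $\B$ and $\sim$ is a congruence. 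This produces a finite generating set $G'\subseteq\{[\1.\zeta]_\sim\mid\zeta\in\T_\Sigma\}$. If $G'=\emptyset$ then $\sfMon(\Sigma,\B)/_\sim=\{[\widetilde{\0}]_\sim\}$, which forces $r=\widetilde{\0}$ (for any $\xi$, $r(\xi)=\gamma([\1.\xi]_\sim)=\gamma([\widetilde{\0}]_\sim)=\0$ since the saturating form $\gamma$ is scalar-linear), and then $r\in\budRec(\Sigma,\B)$ trivially; otherwise I would extract from $G'$, via Lemma~\ref{lm:crucial-canc-pair-ind-imply-uniqueness-det}(1), a finite pair-independent subset $H$ that still generates $\sfMon(\Sigma,\B)/_\sim$ --- a scalar basis, still contained in $\{[\1.\zeta]_\sim\mid\zeta\in\T_\Sigma\}$ --- and apply the $\budwta$ construction to $(r,\sim,H)$.

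I expect the only real obstacle to be this normalization step inside \textrm{(B)}$\Rightarrow$\textrm{(A)}: turning an arbitrary finite generating set into a pair-independent one consisting of classes $[\1.\zeta]_\sim$, together with the small degenerate case $\sfMon(\Sigma,\B)/_\sim=\{[\widetilde{\0}]_\sim\}$ (which must be handled separately because the resulting wta needs a nonempty state set). Every other step is a direct citation of a result already established above.
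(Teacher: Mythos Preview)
Your proposal is correct and follows essentially the same route as the paper: the cycle \textrm{(A)}$\Rightarrow$\textrm{(C)}$\Rightarrow$\textrm{(B)}$\Rightarrow$\textrm{(A)}, with the same citations (Theorem~\ref{lm:dimension-synt-factor-algebra-bounded}, Lemmas~\ref{lm:sim-r-is-coarsest} and~\ref{lm:MonSigmaB-modulo-simr-cancellative-det}, then Lemma~\ref{lm:crucial-canc-pair-ind-imply-uniqueness-det}(1) followed by Lemma~\ref{lm:wta(r,sim,H)}). The paper absorbs your normalization argument into a single ``without loss of generality $H\subseteq\{[\1.\zeta]_\sim\mid\zeta\in\T_\Sigma\}$'' and does not single out the degenerate case $\sfMon(\Sigma,\B)/_\sim=\{[\widetilde{\0}]_\sim\}$, so your version is in fact slightly more careful on these points.
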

\end{samepage}
\begin{proof}  Proof of  (A)$\Rightarrow$(C): Let $\cA$ be a bu-deterministic $(\Sigma,\B)$-wta with $\sem{\cA} = r$.  Then the implication follows from Theorem \ref{lm:dimension-synt-factor-algebra-bounded}.
    
  Proof of (C)$\Rightarrow$(B): We choose $\sim\,=\,\sim_r$.
  Then  $\sim$ is a congruence on $(\rmMon(\Sigma,\B),\widetilde{\0},\ttop)$  and, by Lemma~\ref{lm:sim-r-is-coarsest}, it saturates $r$. By Lemma~\ref{lm:MonSigmaB-modulo-simr-cancellative-det}, the $\B$-scalar algebra $\sfMon(\Sigma,\B)/_{\sim}$ is cancellative, and by our choice it is finitely generated.
    
Proof of (B)$\Rightarrow$(A): Let $\B=(B,\oplus,\otimes,\0,\1)$. Moreover, let $\sim$ be a congruence on  $(\rmMon(\Sigma,\B),\widetilde{\0},\ttop)$ such that $\sim$ saturates $r$ via the scalar-linear form $\gamma: \rmMon(\Sigma,\B)/_{\sim}\to B$, i.e., for each $\xi \in \T_\Sigma$, we have $r(\xi) = \gamma([\1.\xi]_\sim)$. Moreover, let  $\sfMon(\Sigma,\B)/_{\sim}$ be cancellative and generated by the finite set $H$.  Without loss of generality, we can assume that $H \subseteq \{[\1.\zeta]_\sim \mid \zeta \in \T_\Sigma\}$.

By Lemma \ref{lm:crucial-canc-pair-ind-imply-uniqueness-det}(1) there exists a pair-independent subset $H' \subseteq H$ which generates $\sfMon(\Sigma,\B)/_{\sim}$ and it is a finite scalar-basis because $\sfMon(\Sigma,\B)/_\sim$ is cancellative.
Then, by Lemma \ref{lm:wta(r,sim,H)}, the $(\Sigma,\B)$-wta $\budwta(r,\sim,H')$ (defined in Definition \ref{def:wta(r,sim,H)}) is bu-deterministic  and  $\sem{\budwta(r,\sim,H')}=r$.  
    \end{proof}

\begin{example}\rm \label{ex:number-of-gammas}  We consider the ranked alphabet $\Sigma$ and the $(\Sigma,\Ratnum)$-weighted tree language $r$ of Example~\ref{ex:wtl-number-of-occurrences-of-gamma-finite-basis}, i.e., $r=\#_\gamma$. We prove that the $\Ratnum$-scalar algebra $\sfMon(\Sigma,\Ratnum)/_{\sim_r}$ is not finitely generated. This means  $r\not\in \budRec(\Sigma,\Ratnum)$  (cf. Theorem  \ref{thm:budRec-mono-det} (A)$\Leftrightarrow$(C)).
In the following we abbreviate $\sim_{\#_\gamma}$ by  $\sim_{\gamma}$.

First we give the following characterization of $\sim_{\gamma}$:
\begin{equation}\label{equ:char-of-sim-for-num-gamma}
    \begin{aligned}
    &\text{for every $b_1.\xi_1, b_2.\xi_2\in \mathrm{Mon}(\Sigma,\Ratnum)$, we have $b_1.\xi_1 \sim_{\gamma} b_2.\xi_2$ }\\
    &\text{if and only if }(b_1=b_2\ \text{ and } \ \#_\gamma(\xi_1)=\#_\gamma(\xi_2)) \text{ or } b_1=b_2=0\enspace.
      \end{aligned}
\end{equation} 
Let $b_1.\xi_1, b_2.\xi_2\in \mathrm{Mon}(\Sigma,\Ratnum)$
such that $b_1.\xi_1 \sim_{\gamma} b_2.\xi_2$. The proof of the if-part is obvious. 

We prove the only-if-part by case analysis.   

\underline{$b_1=0$ or $b_2=0$:} It follows from the definition of $ \sim_{\gamma}$ that
$b_1=b_2=0$, and hence the right-hand side of \eqref{equ:char-of-sim-for-num-gamma} holds. 

\underline{ $b_1\ne 0\ne b_2$:} Then
\begingroup
\allowdisplaybreaks
\begin{align*}
  &b_1.\xi_1 \sim_{\gamma} b_2.\xi_2\\[2mm]
  \text{iff } \ & (\forall c \in \C_\Sigma): b_1\cdot \#_\gamma(c[\xi_1]) =  b_2\cdot \#_\gamma(c[\xi_2])\\[2mm]
  \text{iff } \ & (\forall c \in \C_\Sigma): b_1\cdot \#_\gamma(c) + b_1\cdot \#_\gamma(\xi_1)=b_2\cdot \#_\gamma(c) + b_2\cdot \#_\gamma(\xi_2)\\[2mm]
  \text{iff } \ & (\forall n \in \mathbb{N}): b_1\cdot n + a_1 = b_2\cdot n + a_2 \tag{where $a_i=b_i\cdot \#_\gamma(\xi_i)$ for $i\in\{1,2\}$}\\[2mm]
  \text{iff } \ &  b_1=b_2 \text{ and }  a_1=a_2\\[2mm]
  \text{iff } \ &  b_1=b_2 \text{ and } \#_\gamma(\xi_1)=\#_\gamma(\xi_2)\enspace.
 \end{align*}
  \endgroup 
  This finishes the proof of \eqref{equ:char-of-sim-for-num-gamma}. It follows from \eqref{equ:char-of-sim-for-num-gamma} that, for every $b\in \mathbb{Q}\setminus\{0\}$ and $\xi\in\T_\Sigma$, we have 
  \begin{equation}\label{eq:eq-class-of-sim-gamma}
  [b.\xi]_{\sim_{\#_\gamma}}=\{b.\zeta\mid \xi\in\T_\Sigma, \#_\gamma(\xi)=\#_\gamma(\zeta)\}.
  \end{equation}
  
  Now we show by contradiction that $\sfMon(\Sigma,\Ratnum)/_{\sim_{\gamma}}$ is not finitely generated. For this, assume that
  there exists an $n\in\mathbb{N}$ and a finite subset $H=\{[b_1.\xi_1]_{\sim_{\gamma}},\ldots,[b_n.\xi_n]_{\sim_{\gamma}}\}$ of $\mathrm{Mon}(\Sigma,\Ratnum)/_{\sim_{\gamma}}$ such that $H$ generates $\sfMon(\Sigma,\Ratnum)/_{\sim_{\gamma}}$. We may assume without loss of generality that $b_i\ne 0$ for each $i\in[n]$. Now let $b\in \mathbb{Q}\setminus\{0\}$ and $\zeta\in \T_\Sigma$ such that
  $\#_\gamma(\zeta)\not \in \{\#_\gamma(\xi_1),\ldots,\#_\gamma(\xi_n)\}$. Then it follows from \eqref{eq:eq-class-of-sim-gamma}
  that there does not exist $a\in \mathbb{Q}$ and $i\in[n]$ such that $[b.\zeta]_{\sim_{\gamma}}=a\cdot [b_i.\xi_i]_{\sim_{\gamma}}$. This contradicts the assumption that $H$ generates $\sfMon(\Sigma,\Ratnum)/_{\sim_{\gamma}}$. Hence the $\Ratnum$-scalar algebra 
$\sfMon(\Sigma,\Ratnum)/_{\sim_{\gamma}}$ is not finitely generated.
 \hfill $\Box$
\end{example}

 
\subsection{Minimal bu-deterministic wta and its construction}
\label{subs:minimization}

Here we show how to construct, for a given bu-deterministic $(\Sigma,\B)$-wta and equivalent one which is minimal.

  {\em In this subsection $\cA=(Q,\delta,F)$ denotes an arbitrary bu-deterministic $(\Sigma,\B)$-wta, if not specified otherwise.}

We say that $\cA$ is \emph{minimal} if, for each bu-deterministic $(\Sigma,\B)$-wta  $\cB=(Q',\delta',F')$ with $\sem{\cA}=\sem{\cB}$, we have $|Q|\le |Q'|$.
The next lemma can be compared to \cite[Thm.~3(a)]{bor03}.

 \begin{lemma}\label{cor:min-du-det-wta-exists}\rm For each finite scalar-basis  $H$ of $\sfMon(\Sigma,\B)/_{\sim_{\sem{\cA}}}$, the $(\Sigma,\B)$-wta $\budwta(\sem{\cA},\sim_{\sem{\cA}},H)$ is defined and it is minimal.
 \end{lemma}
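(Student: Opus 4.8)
The plan is to prove the statement in two parts: first that $\budwta(\sem{\cA},\sim_{\sem{\cA}},H)$ is well-defined, and then that it is minimal. For the well-definedness, I would check that the triple $r = \sem{\cA}$, $\sim\,=\,\sim_{\sem{\cA}}$, and the chosen scalar-basis $H$ satisfy all the hypotheses listed in Definition~\ref{def:wta(r,sim,H)}. The congruence $\sim_{\sem{\cA}}$ is a congruence on $(\rmMon(\Sigma,\B),\widetilde{\0},\ttop)$ by the discussion in Section~\ref{sec:syntactic-congruence-det}, it saturates $\sem{\cA}$ by Lemma~\ref{lm:sim-r-is-coarsest}, and $\sfMon(\Sigma,\B)/_{\sim_{\sem{\cA}}}$ is cancellative by Lemma~\ref{lm:MonSigmaB-modulo-simr-cancellative-det}; moreover it is finitely generated by Theorem~\ref{lm:dimension-synt-factor-algebra-bounded} applied to $\cA$ (here we use $r = \sem{\cA} \in \budRec(\Sigma,\B)$). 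Since $H$ is assumed to be a finite scalar-basis, it is in particular a finite pair-independent generating set; and — as in the proof of Theorem~\ref{thm:budRec-mono-det}(B)$\Rightarrow$(A) — we may assume without loss of generality that $H \subseteq \{[\1.\zeta]_{\sim_{\sem{\cA}}} \mid \zeta \in \T_\Sigma\}$, since every element of $\rmMon(\Sigma,\B)/_{\sim_{\sem{\cA}}}$ has the form $[b.\zeta]_{\sim_{\sem{\cA}}} = b \cdot [\1.\zeta]_{\sim_{\sem{\cA}}}$, so we can replace any basis element by an associated one of the form $[\1.\zeta]_{\sim_{\sem{\cA}}}$ without changing pair-independence or the generated subalgebra. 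Hence all hypotheses of Definition~\ref{def:wta(r,sim,H)} are met and $\budwta(\sem{\cA},\sim_{\sem{\cA}},H)$ is defined.

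For the minimality claim, I would first use Lemma~\ref{lm:wta(r,sim,H)}(1) and (2) to note that $\budwta(\sem{\cA},\sim_{\sem{\cA}},H)$ is a bu-deterministic $(\Sigma,\B)$-wta with semantics $\sem{\cA}$, and that its state set is exactly $H$, so its number of states is $|H| = \deg\big(\sfMon(\Sigma,\B)/_{\sim_{\sem{\cA}}}\big)$ by definition of degree (this uses Lemma~\ref{dimension-of-B-scalar-algebra} to know the degree is well-defined and independent of the choice of scalar-basis). Now let $\cB = (Q',\delta',F')$ be an arbitrary bu-deterministic $(\Sigma,\B)$-wta with $\sem{\cB} = \sem{\cA}$. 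Applying Theorem~\ref{lm:dimension-synt-factor-algebra-bounded} to $\cB$ gives $\deg\big(\sfMon(\Sigma,\B)/_{\sim_{\sem{\cB}}}\big) \le |Q'|$. Since $\sem{\cB} = \sem{\cA}$, we have $\sim_{\sem{\cB}}\,=\,\sim_{\sem{\cA}}$, so $|H| = \deg\big(\sfMon(\Sigma,\B)/_{\sim_{\sem{\cA}}}\big) \le |Q'|$. This establishes that $\budwta(\sem{\cA},\sim_{\sem{\cA}},H)$ has at most as many states as any equivalent bu-deterministic wta, which is precisely minimality.

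I do not anticipate a serious obstacle here: the real work has already been done in Theorem~\ref{lm:dimension-synt-factor-algebra-bounded} (the lower bound $\deg \le |Q'|$) and Lemma~\ref{lm:wta(r,sim,H)} (the construction achieving the bound), and the present lemma is essentially the synthesis of these two facts with the observation that $\sim_{\sem{\cA}}$ depends only on the semantics. The one point that requires a little care is the reduction $H \subseteq \{[\1.\zeta]_{\sim_{\sem{\cA}}} \mid \zeta \in \T_\Sigma\}$ needed to feed $H$ into Definition~\ref{def:wta(r,sim,H)}; this is routine but should be spelled out, as Definition~\ref{def:wta(r,sim,H)} explicitly requires the scalar-basis to consist of classes of unit monomials. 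The only other subtlety is making sure the invariance $\sim_{\sem{\cA}}\,=\,\sim_{\sem{\cB}}$ under $\sem{\cA}=\sem{\cB}$ is invoked explicitly, since the whole minimality argument hinges on the degree being an invariant of the recognized weighted tree language rather than of the automaton.
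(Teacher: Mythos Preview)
Your proposal is correct and follows essentially the same route as the paper: verify the hypotheses of Definition~\ref{def:wta(r,sim,H)} via Lemmas~\ref{lm:sim-r-is-coarsest} and~\ref{lm:MonSigmaB-modulo-simr-cancellative-det}, then combine Lemma~\ref{lm:wta(r,sim,H)} with the lower bound from Theorem~\ref{lm:dimension-synt-factor-algebra-bounded} applied to an arbitrary equivalent $\cB$. You are in fact slightly more careful than the paper in two places: you spell out the reduction to $H \subseteq \{[\1.\zeta]_{\sim_{\sem{\cA}}} \mid \zeta \in \T_\Sigma\}$ (which Definition~\ref{def:wta(r,sim,H)} formally requires but the paper's proof silently assumes), and you make explicit the step $\sim_{\sem{\cB}}\,=\,\sim_{\sem{\cA}}$.
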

 \begin{proof}  Let us abbreviate $\sem{\cA}$ by $r$. By Lemma \ref{lm:sim-r-is-coarsest} the congruence $\sim_r$ saturates $r$ and, by Lemma \ref{lm:MonSigmaB-modulo-simr-cancellative-det},  the $\B$-scalar algebra $\sfMon(\Sigma,\B)/_{\sim_r}$ is  cancellative. Now let $H$ be a finite scalar-basis of  $\sfMon(\Sigma,\B)/_{\sim_{r}}$.
 Then $\budwta(r,\sim_r,H)$ is defined (cf. Definition \ref{def:wta(r,sim,H)}) and, by Lemma \ref{lm:wta(r,sim,H)}, it is a bu-deterministic $(\Sigma,\B)$-wta and it is equivalent to $\cA$.

Next we show that $\budwta(r,\sim_r,H)$ is minimal.  We recall that $H$ is the set of states of $\budwta(r,\sim_r,H)$ and $|H|=\deg\big( \mathsf{Mon}(\Sigma,\B)/_{\sim_r}\big)$.  Let $\cB=(Q',\delta',F')$ be an arbitrary bu-deterministic $(\Sigma,\B)$-wta with  $r=\sem{\cB}$. By Theorem~\ref{lm:dimension-synt-factor-algebra-bounded}, we have $|H|\le |Q'|$.  Thus $\budwta(r,\sim_r,H)$ is minimal.
\end{proof}

In the following we show that we can construct a finite generating set $\mathrm{H}_{\cA}$ of $\sfMon(\Sigma,\B)/_{\sim_{\sem{\cA}}}$ (cf. Lemma \ref{lm:tq-has-minimal-height}). Hence, if dependency in  $\sfMon(\Sigma,\B)/_{\sim_{\sem{\cA}}}$ is decidable, then, by Lemma \ref{lm:crucial-canc-pair-ind-imply-uniqueness-det}(1) and because $\sfMon(\Sigma,\B)/_{\sim_{\sem{\cA}}}$ is cancellative, we can construct a finite scalar-basis $H$ of $\sfMon(\Sigma,\B)/_{\sim_{\sem{\cA}}}$.
Moreover, if the decomposition mapping 
\[\mathrm{dec}: \big(\rmMon(\Sigma,\B)/_{\sim_{\sem{\cA}}}\big) \setminus \{[\widetilde{\0}]_{\sim_{\sem{\cA}}}\} \to B^{-\0} \times \big(H \setminus \{[\widetilde{\0}]_{\sim_{\sem{\cA}}}\}\big)\]
for $\big(\rmMon(\Sigma,\B)/_{\sim_{\sem{\cA}}}\big) \setminus \{[\widetilde{\0}]_{\sim_{\sem{\cA}}}\}$ is computable, then by Lemma~\ref{lm:wta(r,sim,H)}(3) we can construct the minimal $(\Sigma,\B)$-wta $\budwta(\sem{\cA},\sim_{\sem{\cA}},H)$.
The construction  of the minimal $(\Sigma,\B)$-wta  proceeds in three steps and summarized in Theorem \ref{thm:minimization-theorem-new-2}.

\underline{Step 1:} We define the concept of a slim $(\Sigma,\B)$-wta and we construct a slim $(\Sigma,\B)$-wta which is equivalent to $\cA$ (cf. Theorem~\ref{lm:construction-of-simple-det}).

We say that $\cA$ is \emph{slim} if for each $q\in Q$ there exists a tree $\xi\in \T_\Sigma$ with $q=\state_\cA(\xi)$, i.e., $Q \subseteq \im(\state_\cA)$.

\begin{theorem}\label{lm:construction-of-simple-det} Let $\B$ be a commutative semifield. Moreover, let  $\cA$ be a bu-deterministic $(\Sigma,\B)$-wta. We can construct a slim $(\Sigma,\B)$-wta $\cA'$ such that $\sem{\cA}=\sem{\cA'}$.
\end{theorem}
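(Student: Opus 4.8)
The plan is to build $\cA'$ by restricting $\cA$ to its reachable states. First I would analyse the state algebra $\St(\cA) = (Q_\bot,\theta_\cA)$: since $\sfT_\Sigma$ is initial in the class of $\Sigma$-algebras (Theorem~\ref{thm:initial-iso-det}) and $\state_\cA$ is the unique homomorphism into $\St(\cA)$, the set $\im(\state_\cA)$ is exactly the subalgebra of $\St(\cA)$ generated by $\emptyset$ (cf.\ Lemma~\ref{lm:hom-image=subalgebra}), i.e.\ the least subset of the finite set $Q_\bot$ that contains $\{\theta_\cA(\alpha)() \mid \alpha\in\Sigma^{(0)}\}$ and is closed under all $\theta_\cA(\sigma)$. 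This can be computed by the usual bottom-up fixed-point iteration, each step of which needs only finitely many $\0$-tests on transition weights, which is permitted by our standing effectivity convention. Set $Q' = \im(\state_\cA)\cap Q = \im(\state_\cA)\setminus\{\bot\}$.

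I would then split into two cases. If $Q' = \emptyset$, then (because $\T_\Sigma\ne\emptyset$) $\state_\cA(\xi) = \bot$ for every $\xi$, so by Lemma~\ref{lm:properties-hA-of-budet-wta-det-new}(4) we have $\sem{\cA} = \widetilde{\0}$; here I would take $\cA'$ to be the one-state $(\Sigma,\B)$-wta $(\{q_0\},\delta',F')$ in which every transition has weight $\1$ and $F'_{q_0} = \0$. It is bu-deterministic, every tree reaches $q_0$ (hence $\cA'$ is slim), and by Lemma~\ref{lm:properties-hA-of-budet-wta-det-new}(4) again $\sem{\cA'} = \widetilde{\0} = \sem{\cA}$. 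Otherwise $Q'\ne\emptyset$, and I take $\cA' = (Q',\delta',F')$ with $(\delta')_k = \delta_k\cap\big((Q')^k\times\Sigma^{(k)}\times Q'\big)$ for each $k$ and $F' = F|_{Q'}$; this is a well-formed $(\Sigma,\B)$-wta and it is bu-deterministic because $\delta'$ is a restriction of $\delta$.

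The heart of the argument is the following claim, proved by induction on $\T_\Sigma$: for every $\xi\in\T_\Sigma$ we have $\state_{\cA'}(\xi) = \state_\cA(\xi)$ if $\state_\cA(\xi)\in Q'$, and $\state_{\cA'}(\xi) = \bot$ otherwise. The key point in the induction step for $\xi = \sigma(\xi_1,\ldots,\xi_k)$ is that when all $\state_\cA(\xi_i)$ lie in $Q$, the string $w = \state_\cA(\xi_1)\cdots\state_\cA(\xi_k)$ is a string over $Q'$ and $\mathrm{succ}_\cA(w,\sigma)\subseteq Q'$ --- indeed its only possible element is $\mathrm{esucc}_\cA(w,\sigma) = \state_\cA(\xi)\in\im(\state_\cA)\cap Q = Q'$ --- so restricting the transitions to $Q'$ changes neither the emptiness nor the value of the successor set; the case where some $\state_\cA(\xi_i) = \bot$ is immediate from the induction hypothesis and the definition of $\theta_\cA$. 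Slimness of $\cA'$ then follows at once: for $q\in Q'$ there is $\xi$ with $\state_\cA(\xi) = q$, hence $\state_{\cA'}(\xi) = q$.

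For the semantics I would prove, again by induction on $\T_\Sigma$, that $\h_{\cA'}(\xi)_q = \h_\cA(\xi)_q$ for every $\xi\in\T_\Sigma$ and $q\in Q'$; here Lemma~\ref{lm:properties-hA-of-budet-wta-det-new}(3) collapses the defining sum for $\h_\cA(\sigma(\xi_1,\ldots,\xi_k))_q$ to the single summand indexed by $w = \state_\cA(\xi_1)\cdots\state_\cA(\xi_k)$ (all entries in $Q'$ by the reachability observation above), on which $\delta$ and $\delta'$ agree, so the induction goes through. Combining this with the state claim and Lemma~\ref{lm:properties-hA-of-budet-wta-det-new}(4) yields $\sem{\cA'} = \sem{\cA}$: when $\state_\cA(\xi) = q\in Q'$ both sides equal $\h_\cA(\xi)_q\otimes F_q$, and when $\state_\cA(\xi) = \bot$ both equal $\0$. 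I do not expect a serious obstacle here; the points that need care are the degenerate case $Q' = \emptyset$ and keeping the $\bot$-value consistently tracked through the two inductions, together with the routine verification that $\im(\state_\cA)$ is effectively computable.
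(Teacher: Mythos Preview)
Your proposal is correct and follows essentially the same route as the paper: compute $\im(\state_\cA)$ by a bottom-up fixed-point on $Q_\bot$, handle the degenerate case $\im(\state_\cA)=\{\bot\}$ (equivalently $Q'=\emptyset$) with a trivial one-state automaton, and otherwise restrict $\cA$ to $Q'=\im(\state_\cA)\cap Q$, proving first $\state_{\cA'}=\state_\cA$ (your formulation is equivalent since $\im(\state_\cA)\subseteq Q'\cup\{\bot\}$) and then $\h_{\cA'}(\xi)_q=\h_\cA(\xi)_q$ for $q\in Q'$, both by induction on $\T_\Sigma$. The paper isolates the equality $\mathrm{succ}_{\cA'}(w,\sigma)=\mathrm{succ}_\cA(w,\sigma)$ for $w\in(Q')^k$ as a separate statement, but your inline justification via $\mathrm{esucc}_\cA(w,\sigma)=\state_\cA(\xi)\in Q'$ is the same argument; the only cosmetic slip is that the sum-collapsing step you attribute to Lemma~\ref{lm:properties-hA-of-budet-wta-det-new}(3) is really an application of part~(2) of that lemma.
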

\begin{proof}  
First we compute the set $\im(\state_\cA)$. For this, for each $i\in \mathbb{N}$, we define the sequence 
$P_0\subseteq P_1 \subseteq \ldots \subseteq (Q\cup\{\bot\})$ of sets by induction as follows:
$P_0=\{\theta_\cA(\sigma)()\mid \sigma \in \Sigma^{(0)}\}$ and 
\begin{align*}
  P_{i+i}=P_i\cup\{\theta_\cA(\sigma)(p_1,\ldots,p_k)\mid k\in \mathbb{N}_+, \sigma \in \Sigma^{(k)}, \text{ and } p_1,\ldots,p_k\in P_i\}.
\end{align*}
It is easy to show that, for each $i\in \mathbb{N}$, we have $P_i\subseteq \im(\state_\cA)$ and that, if $P_i=P_{i+1}$, then 
$P_{i+1}=P_{i+2}$. Clearly, there exists $i\in \mathbb{N}$ with $P_i=P_{i+1}$ and it is also easy to see that, if this is the case, then $P_i=\im(\state_\cA)$.  Since, for each $i\in \mathbb{N}$, we can construct $P_i$, we can find the smallest integer $i_0\in [n]$ which satisfies $P_{i_0}=P_{i_0+1}$. 

For the construction of $\cA'$, we distinguish the following two cases.

\underline{$\im(\state_\cA)=\{\bot\}$}: By Lemma \ref{lm:properties-hA-of-budet-wta-det-new}(4), we have
$\sem{\cA}(\xi)=\0$ for each $\xi\in\T_\Sigma$, i.e., $\sem{\cA}=\widetilde{\0}$. We construct the bu-deterministic $(\Sigma,\B)$-wta $\cA'=(\{p\},\delta',F')$ where,
 for every $k\in\mathbb{N}$ and $\sigma\in\Sigma^{(k)}$, we let $\delta'_k(p^k,\sigma,p)=\1$, and $F'_p=\0$.
Then $\cA'$ is slim because $\im(\state_{\cA'} )=\{p\}$, and it is obvious that $\sem{\cA'}=\widetilde{\0}$.

\underline{$\im(\state_\cA)\not=\{\bot\}$}: We construct the $(\Sigma,\B)$-wta $\cA'=(Q',\delta',F')$ where
\begin{compactitem}
\item $Q'=\im(\state_\cA)\cap Q$,
\item $\delta'=(\delta'_k \mid k\in \mathbb{N})$ and, for each $k\in \mathbb{N}$, the mapping $\delta'_k$ is the restriction of $\delta_k$ to $(Q')^k \times \Sigma^{(k)}\times Q'$, 
\item $F'=F|_{Q'}$.
\end{compactitem}

It is obvious that $\cA'$ is bu-deterministic. We show that $\cA'$ is slim. As a first step, we show that
\begin{equation}\label{equ:succ-equ}
\text{for every $k\in \mathbb{N}$, $w\in (Q')^k$, and $\sigma\in\Sigma^{(k)}$, we have $\mysucc_{\cA'}(w,\sigma)=\mysucc_{\cA}(w,\sigma)$.}
\end{equation}

Let $w=q_1\cdots q_k$ with $q_1,\ldots,q_k\in Q'$.
Obviously, $\mysucc_{\cA'}(w,\sigma)\subseteq\mysucc_{\cA}(w,\sigma)$ by the definition of $\delta'$. 
We show that the inclusion $\mysucc_{\cA}(w,\sigma)\subseteq \mysucc_{\cA'}(w,\sigma)$ also holds. 
This is trivial if $\mysucc_{\cA}(w,\sigma)=\emptyset$, therefore assume that $\mysucc_{\cA}(w,\sigma)=\{p\}$ for some $p\in Q$.
Then $\delta_k(w,\sigma,p)\ne \0$, hence $\theta_\cA(\sigma)(q_1,\ldots,q_k)=p$.
For each  $i\in[k]$, there exists $\xi_i\in \T_\Sigma$ such that
$q_i=\state_\cA(\xi_i)$. Then, for $\xi=\sigma(\xi_1,\ldots,\xi_k)$ we have $\state_\cA(\xi)=p$, i.e., $p\in Q'$.
By the definition of $\delta'$, we have $\delta'_k(w,\sigma,p)=\delta_k(w,\sigma,p)\ne \0$, hence 
$\mysucc_{\cA'}(w,\sigma)=\{p\}$.
This proves \eqref{equ:succ-equ}.

Using \eqref{equ:succ-equ}, we can easily prove that,
for every $k\in \mathbb{N}$, $q_1,\ldots,q_k \in (Q')_{\bot}$, and $\sigma\in\Sigma^{(k)}$, we have $\theta_{\cA'}(\sigma)(q_1,\ldots,q_k)=\theta_{\cA}(\sigma)(q_1,\ldots,q_k)$.
Then it follows that $\mathsf{S}(\cA')=\big(\im(\state_\cA)\cup\{\bot\},\theta_\cA\big)$. Thus
\begin{equation}\label{equ:state-equ}
\text{for every $\xi \in \T_\Sigma$, we have
     $\state_{\cA}(\xi) = \state_{\cA'}(\xi)$.}
\end{equation}
By \eqref{equ:state-equ} and the definition of $Q'$ we obtain that $\cA'$ is slim.  Moreover, 
       \begin{equation}\label{equ:in-Q-iff-in-Q-prime}
   \text{for every $\xi \in \T_\Sigma$, we have
     $\state_{\cA}(\xi)\in Q$ if and only if $\state_{\cA'}(\xi)\in Q'$},
            \end{equation}
because $\state_{\cA}(\xi)\in Q$ iff $\state_{\cA}(\xi)\ne\bot$  iff
  $\state_{\cA'}(\xi)\ne\bot$ iff $\state_{\cA'}(\xi)\in Q'$ and the second equivalence is justified by \eqref{equ:state-equ}.
  
Next we show that $\sem{\cA}=\sem{\cA'}$.  For this, by induction on $\T_\Sigma$, we show the following:
\begin{equation}\label{equ:hAxia=hAprimexiq-Z}
  \text{for every $\xi \in \T_\Sigma$ and $q\in Q'$, we have
   $\h_\cA(\xi)_{q}=\h_{\cA'}(\xi)_{q}$.}
\end{equation}
Let $\xi=\sigma(\xi_1,\ldots,\xi_k)$. We proceed by case analysis.

\underline{$(\forall i\in[k]): \state_{\cA'}(\xi_i)\in Q'$:}
\begingroup
\allowdisplaybreaks
\begin{align*}
\h_{\cA'}(\xi)_{q}
  &=  \bigoplus_{q_1 \cdots q_k \in (Q')^k} \Big( \bigotimes_{i\in [k]} \h_{\cA'}(\xi_i)_{q_i}\Big) \otimes \delta'_k(q_1\cdots q_k,\sigma,q)\\[2mm]
  &= \Big(\bigotimes_{i\in[k]}  \h_{\cA'}(\xi_i)_{\state_{\cA'}(\xi_i)}\Big) \otimes \delta'_k(\state_{\cA'}(\xi_1)\cdots \state_{\cA'}(\xi_k),\sigma,q)
    \tag{because by Lemma  \ref{lm:properties-hA-of-budet-wta-det-new}(2), for each $i\in[k]$, we have $\h_{\cA'}(\xi_i)_{q_i}\ne \0$ iff $q_i=\state_{\cA'}(\xi_i)$}  \\[2mm]
  &= \Big(\bigotimes_{i\in[k]}  \h_{\cA'}(\xi_i)_{\state_{\cA}(\xi_i)}\Big) \otimes \delta'_k(\state_{\cA}(\xi_1)\cdots \state_{\cA}(\xi_k),\sigma,q)
    \tag{by \eqref{equ:state-equ}}\\[2mm]
&= \Big(\bigotimes_{i\in[k]}  \h_{\cA}(\xi_i)_{\state_{\cA}(\xi_i)}\Big) \otimes \delta'_k(\state_{\cA}(\xi_1)\cdots \state_{\cA}(\xi_k),\sigma,q) 
      \tag{by I.H.}\\[2mm]
&= \Big(\bigotimes_{i\in[k]}  \h_{\cA}(\xi_i)_{\state_{\cA}(\xi_i)}\Big) \otimes \delta_k(\state_{\cA}(\xi_1)\cdots \state_{\cA}(\xi_k),\sigma,q) 
          \tag{by the definition of $\delta'_k$}\\[2mm]
&=  \bigoplus_{q_1 \cdots q_k \in Q^k} \Big( \bigotimes_{i\in [k]} \h_{\cA}(\xi_i)_{q_i}\Big) \otimes \delta_k(q_1\cdots q_k,\sigma,q) \tag{by Lemma  \ref{lm:properties-hA-of-budet-wta-det-new}(2)}\\[2mm]
  &= \h_{\cA}(\xi)_{q} \enspace.
\end{align*}
\endgroup

\underline{$(\exists j \in[k]): \state_{\cA'}(\xi_j)\not\in Q'$:} By Lemma  \ref{lm:properties-hA-of-budet-wta-det-new}(3), we have $\h_{\cA'}(\xi_j) = \0^Q$. 
Moreover, by \eqref{equ:in-Q-iff-in-Q-prime}, we have $\state_{\cA}(\xi_j)\not\in Q$ and thus, by Lemma~\ref{lm:properties-hA-of-budet-wta-det-new}(3), we have $\h_{\cA}(\xi_j) = \0^Q$. 
Then  $\h_{\cA'}(\xi) = \0^Q = \h_{\cA'}(\xi)$. 
This proves \eqref{equ:hAxia=hAprimexiq-Z}.

Now we are ready to show that $\sem{\cA}=\sem{\cA'}$. For this, let $\xi \in \T_\Sigma$.  We proceed by case analysis.

   \underline{$\state_{\cA'}(\xi)\in Q'$:} By \eqref{equ:in-Q-iff-in-Q-prime}, we have $\state_{\cA}(\xi)\in Q$. 
   Then:
   \begingroup
\allowdisplaybreaks
\begin{align*}
  \sem{\cA'}(\xi) &= \h_{\cA'}(\xi)_{\state_{\cA'}(\xi)} \otimes F'_{\state_{\cA'}(\xi)}
   \tag{by Lemma \ref{lm:properties-hA-of-budet-wta-det-new}(4)}  \\
  &= \h_{\cA'}(\xi)_{\state_{\cA}(\xi)} \otimes F'_{\state_{\cA}(\xi)}
        \tag{by \eqref{equ:state-equ}}\\
                &=\h_{\cA}(\xi)_{\state_{\cA}(\xi)} \otimes F'_{\state_{\cA}(\xi)}
                  \tag{by \eqref{equ:hAxia=hAprimexiq-Z}}\\    
                &= \h_{\cA}(\xi)_{\state_{\cA}(\xi)} \otimes F_{\state_{\cA}(\xi)} \tag{by the definition of $F'$}\\
                  &=  \sem{\cA}(\xi) \enspace.  \tag{by Lemma \ref{lm:properties-hA-of-budet-wta-det-new}(4)}           
    \end{align*}
    \endgroup
    
      \underline{$\state_{\cA'}(\xi)\not\in Q'$:} By \eqref{equ:in-Q-iff-in-Q-prime}, we have $\state_{\cA}(\xi)\not\in Q$.
Then $\sem{\cA'}(\xi) = \0 = \sem{\cA}(\xi)$ by Lemma~\ref{lm:properties-hA-of-budet-wta-det-new}(4).
  \end{proof}

\begin{example}\rm \label{ex:example-of-slim-wta-for-a-given-bu.det-wta} We consider the bu-deterministic $(\Sigma,\Ratnum)$-wta $\cA$ of Example \ref{ex:for-state-algebra-new}. By using the construction described in the proof of Theorem \ref{lm:construction-of-simple-det},
    we obtain the bu-deterministic $(\Sigma,\Ratnum)$-wta $\cA'=(Q',\delta',F')$ which is slim and equivalent to $\cA$, where $Q'=\{p_1\}$ and $\delta'_0(\varepsilon,\alpha,p_1) =1$, $\delta'_0(\varepsilon,\beta,p_1)=0$, and $F'_{p_1}=1$.
    \hfill $\Box$
    \end{example}

In fact, for each bu-deterministic wta being slim is a necessary condition for being minimal.

\begin{lemma}\label{lm:minimal-implies-slim}\rm  If $\cA$ is minimal, then $\cA$ is slim.
\end{lemma}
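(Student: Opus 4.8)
The plan is to prove the contrapositive: if $\cA=(Q,\delta,F)$ is \emph{not} slim, then $\cA$ is not minimal, because we can exhibit an equivalent bu-deterministic $(\Sigma,\B)$-wta with strictly fewer states. So assume there exists $q_0 \in Q$ with $q_0 \notin \im(\state_\cA)$; equivalently, there is no tree $\xi \in \T_\Sigma$ with $\state_\cA(\xi)=q_0$.

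The key observation is that such a $q_0$ is useless in the sense of Lemma~\ref{lm:wta-getting-rid-of-useless-state}, i.e., $\h_\cA(\xi)_{q_0}=\0$ for every $\xi \in \T_\Sigma$. This follows directly from Lemma~\ref{lm:properties-hA-of-budet-wta-det-new}(2): for each $\xi \in \T_\Sigma$ and each $q \in Q$, we have $\h_\cA(\xi)_q \ne \0$ if and only if $q = \state(\xi)$. Since $q_0$ is never equal to $\state_\cA(\xi)$ for any $\xi$, we conclude $\h_\cA(\xi)_{q_0}=\0$ for all $\xi$. Hence $q_0$ satisfies the hypothesis of Lemma~\ref{lm:wta-getting-rid-of-useless-state}, and the wta $\cA_{-q_0}=(Q\setminus\{q_0\},\delta',F')$ constructed there is a bu-deterministic $(\Sigma,\B)$-wta (dropping transitions and restricting $\delta$ preserves bu-determinism) with $\sem{\cA_{-q_0}}=\sem{\cA}$ and $|Q\setminus\{q_0\}| = |Q|-1 < |Q|$. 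This contradicts the minimality of $\cA$.

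Therefore, if $\cA$ is minimal, it cannot have any state outside $\im(\state_\cA)$, so $Q \subseteq \im(\state_\cA)$, which is precisely the definition of $\cA$ being slim.

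I do not anticipate any real obstacle here: the entire argument is a short chain of already-established lemmas (Lemma~\ref{lm:properties-hA-of-budet-wta-det-new}(2) to verify uselessness, then Lemma~\ref{lm:wta-getting-rid-of-useless-state} to remove the useless state), and the only thing to check carefully is that $\cA_{-q_0}$ is still bu-deterministic — but that is immediate since its transition family is a restriction of $\delta$ and bu-determinism is preserved under such restrictions. The one subtlety worth a sentence is making sure $Q\setminus\{q_0\}$ is still nonempty, which holds because $\T_\Sigma\ne\emptyset$ forces $\im(\state_\cA)\ne\emptyset$, so there is at least one state in $\im(\state_\cA)\subseteq Q\setminus\{q_0\}$.
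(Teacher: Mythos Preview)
Your proof is essentially identical to the paper's: both argue by contraposition, use Lemma~\ref{lm:properties-hA-of-budet-wta-det-new}(2) to show that a state $q_0 \notin \im(\state_\cA)$ satisfies $\h_\cA(\xi)_{q_0}=\0$ for all $\xi$, and then invoke Lemma~\ref{lm:wta-getting-rid-of-useless-state} to drop $q_0$ and obtain an equivalent bu-deterministic wta with one fewer state. Your extra remark about nonemptiness of $Q\setminus\{q_0\}$ is a nice instinct, but the argument you give is not quite right: $\im(\state_\cA)$ may equal $\{\bot\}$, which is nonempty yet contributes no element of $Q$ --- the paper's proof simply does not address this edge case either.
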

\begin{proof} We prove the contraposition of the statement. Let $\cA$ be not slim. Then there exists a $q_0 \in Q$ such that, for each $\xi \in \T_\Sigma$, we have $q_0\ne \state_\cA(\xi)$, which is equivalent to $\h_\cA(\xi)_{q_0}=\0$ by Lemma~\ref{lm:properties-hA-of-budet-wta-det-new}(2). 

Now we consider the $(\Sigma,\B)$-wta $\cA_{-q_0}=(Q',\delta',F')$ as it is defined in Lemma~\ref{lm:wta-getting-rid-of-useless-state}. Clearly, $\cA_{-q_0}$ is bu-deterministic and $|Q'| = |Q|-1$. Since $\sem{\cA'} = \sem{\cA}$ by Lemma~\ref{lm:wta-getting-rid-of-useless-state}, we obtain that $\cA$ is not minimal.
\end{proof}

\underline{Step 2:} For each slim $(\Sigma,\B)$-wta $\cA$, we define the concept of a candidate set induced by $\cA$ for $\sfMon(\Sigma,\B)/_{\sim_{\sem{\cA}}}$. We prove that each such candidate set generates $\sfMon(\Sigma,\B)/_{\sim_{\sem{\cA}}}$.
Essentially, this is due to the fact that $\cA$ is slim. Then we show how to construct such a candidate set.

Let $\cA$
be slim with $Q= \{q_1,\ldots,q_n\}$ and let us abbreviate $\sem{\cA}$ by $r$.  Since $\cA$ is slim, none of the sets $\state^{-1}(q_1),\ldots,\state^{-1}(q_n)$ is empty. 
For every $\zeta_1 \in \state^{-1}(q_1), \ldots, \zeta_n \in \state^{-1}(q_n)$, we call the set
\(
\{[\1.\zeta_1]_{\sim_r}, \ldots,  [\1.\zeta_n]_{\sim_r}\},
\)
a \emph{candidate set induced by $\cA$ for $\sfMon(\Sigma,\B)/_{\sim_r}$}.
(Hence for each slim  $(\Sigma,\B)$-wta  there exists such a candidate set.)

\begin{lemma} \label{thm:bu-det-r-construction-ofVLQ(r)-nnew-det} \rm \sloppy Let  $\cA$
  be slim  with $|Q|= n$ and let us abbreviate $\sem{\cA}$ by $r$. Moreover, let $H = \{[\1.\zeta_1]_{\sim_r}, \ldots,  [\1.\zeta_n]_{\sim_r}\}$ be a candidate set induced by $\cA$ for $\sfMon(\Sigma,\B)/_{\sim_r}$. The following two statements hold.
\begin{compactitem}
\item[(1)] For every $b \in B$ and  $\xi \in \T_\Sigma$, we have:
  if $\state(\xi) \in Q$, then there exists $j \in [n]$ such that $[b.\xi]_{\sim_r} = d \cdot [\1.\zeta_j]_{\sim_r}$ where
  $d = b \otimes \h_\cA(\xi)_{\state(\xi)} \otimes \Big(\h_\cA(\zeta_j)_{\state(\zeta_j)}\Big)^{-1}$,
  otherwise $[b.\xi]_{\sim_r} = \0\cdot [\1.\zeta_j]_{\sim_r}$ for each $j\in [n]$.
 
\item[(2)]  $H$ generates the $\B$-scalar algebra $\sfMon(\Sigma,\B)/_{\sim_r}$.
  \end{compactitem}
\end{lemma}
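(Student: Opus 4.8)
The plan is to establish (1) by a direct computation with the transformation monoid, and then to read off (2) immediately. The main tool for (1) is Lemma~\ref{obs:total-bu-det-wta-calc(new)-det-new}, together with the fact (Lemma~\ref{lm:properties-hA-of-budet-wta-det-new}(3)) that for a bu-deterministic wta the $\h_\cA$-image of a tree is supported on at most the single state $\state(\xi)$. Concretely, for every $c\in\C_\Sigma$ I would expand $\sem{\cA}(c[\xi])=\bigoplus_{q\in Q}\h_\cA(c[\xi])_q\otimes F_q$ by means of Lemma~\ref{obs:total-bu-det-wta-calc(new)-det-new} and distributivity to obtain the factorisation
\[
\sem{\cA}(c[\xi]) =
\begin{cases}
\h_\cA(\xi)_{\state(\xi)}\otimes S^{\state(\xi)}_c & \text{if } \state(\xi)\in Q,\\[1mm]
\0 & \text{otherwise,}
\end{cases}
\qquad\text{where } S^q_c=\bigoplus_{p\in Q}\h_\cA^\C(c)(\1_q)_p\otimes F_p .
\]
The point of this display is that $S^q_c$ depends only on $c$ and $q$; so once two trees land in the same state, the $c$-dependent part of their (scaled) images under $\sem{\cA}$ coincides.

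For the case $\state(\xi)=\bot$: the factorisation gives $\sem{\cA}(c[\xi])=\0$ for every $c$, hence $b\otimes\sem{\cA}(c[\xi])=\0=\0\otimes\sem{\cA}(c[\zeta_j])$ for every $c$ and every $j\in[n]$, i.e.\ $b.\xi\sim_r\widetilde{\0}=\0.\zeta_j$, which is exactly $[b.\xi]_{\sim_r}=\0\cdot[\1.\zeta_j]_{\sim_r}$. For the case $\state(\xi)\in Q$: since $\cA$ is slim and $Q=\{q_1,\dots,q_n\}$, I pick $j\in[n]$ with $q_j=\state(\xi)$; then $\state(\zeta_j)=q_j=\state(\xi)$ because $\zeta_j\in\state^{-1}(q_j)$, and $\h_\cA(\zeta_j)_{\state(\zeta_j)}\ne\0$ by Lemma~\ref{lm:properties-hA-of-budet-wta-det-new}(2), so the scalar $d$ is well defined. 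Using the factorisation, the equality $S^{\state(\zeta_j)}_c=S^{\state(\xi)}_c$, and the identity $d\otimes\h_\cA(\zeta_j)_{\state(\zeta_j)}=b\otimes\h_\cA(\xi)_{\state(\xi)}$ (immediate from the definition of $d$ and cancellation in the Abelian group $(B^{-\0},\otimes,\1)$), one gets $b\otimes\sem{\cA}(c[\xi])=d\otimes\sem{\cA}(c[\zeta_j])$ for every $c\in\C_\Sigma$; hence $b.\xi\sim_r d.\zeta_j$, that is, $[b.\xi]_{\sim_r}=d\cdot[\1.\zeta_j]_{\sim_r}$.

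Statement (2) then follows at once: every element of $\rmMon(\Sigma,\B)$ has the form $b.\xi$, and by (1) its $\sim_r$-class lies in $B\cdot H$ (in the $\bot$-case one may take $j=1$, which is legitimate since $n\ge 1$). Thus $B\cdot H=\rmMon(\Sigma,\B)/_{\sim_r}$, and since $H\ne\emptyset$ this means, by~\eqref{eq:equivalent-to-generates}, that $H$ generates $\sfMon(\Sigma,\B)/_{\sim_r}$.

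I expect the only genuine work to be the bookkeeping in the factorisation display and keeping the two state-cases cleanly separated; once the factorisation is in hand the argument is pure group-cancellation and offers no surprises.
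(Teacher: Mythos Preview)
Your proof is correct and follows essentially the same approach as the paper: both arguments pick $j$ with $\state(\zeta_j)=\state(\xi)$, invoke Lemma~\ref{obs:total-bu-det-wta-calc(new)-det-new} to factor $\h_\cA(c[\xi])$ through $\h_\cA(\xi)_{\state(\xi)}$, and then cancel to obtain $b.\xi\sim_r d.\zeta_j$. Your packaging via the auxiliary quantity $S^q_c=\bigoplus_{p\in Q}\h_\cA^\C(c)(\1_q)_p\otimes F_p$ is in fact slightly cleaner than the paper's, which instead applies Lemma~\ref{lm:properties-hA-of-budet-wta-det-new}(4) and therefore needs an additional inner case split on whether $\state(c[\xi])\in Q$.
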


\begin{proof} Proof of (1):  Let $b \in B$ and $\xi \in \T_\Sigma$. 
We distinguish the following two cases.

\underline{$\state(\xi) \in Q$:}  Let $j \in [n]$ be the unique number such that $\state(\xi) = \state(\zeta_j)$.
  We prove that $b.\xi \sim_r d.\zeta_j$. Let $c \in \C_\Sigma$. We distinguish two subcases.
  
\underline{$\state(c[\xi]) \in Q$:}   
\begingroup
  \allowdisplaybreaks
\begin{align*}
  b \otimes r(c[\xi])  = & \ b \otimes \h_\cA(c[\xi])_{\state(c[\xi])} \otimes F_{\state(c[\xi])} 
   \tag{by Lemma~\ref{lm:properties-hA-of-budet-wta-det-new}(4)}\\[2mm]
= & \  b \otimes  \h_\cA(\xi)_{\state(\xi)} \otimes \h_\cA^\C(c)(\1_{\state(\xi)})_{\state(c[\xi])} \otimes F_{\state(c[\xi])} 
 \tag{by Lemma~\ref{obs:total-bu-det-wta-calc(new)-det-new}}\\[2mm]
= & \  b \otimes  \h_\cA(\xi)_{\state(\xi)} \otimes \h_\cA^\C(c)(\1_{\state(\zeta_j)})_{\state(c[\zeta_j])} \otimes F_{\state(c[\zeta_j])}
\tag{because  $\state(\xi) = \state(\zeta_j)$ and thus $\state(c[\xi]) = \state(c[\zeta_j])$} \\[2mm]
= & \ d \otimes \h_\cA(\zeta_j)_{\state(\zeta_j)} \otimes \h_\cA^\C(c)(\1_{\state(\zeta_j)})_{\state(c[\zeta_j])} \otimes F_{\state(c[\zeta_j])}
\tag{where $d$ is defined in Statement (1) of the theorem}\\[2mm]
= & \  d \otimes \h_\cA(c[\zeta_j])_{\state(c[\zeta_j])} \otimes F_{\state(c[\zeta_j])}
\tag{by Lemma~\ref{obs:total-bu-det-wta-calc(new)-det-new}}\\[2mm]
= & \  d \otimes r(c[\zeta_j]) \tag{by Lemma~\ref{lm:properties-hA-of-budet-wta-det-new}(4)}
      \enspace.
\end{align*}
\endgroup

\underline{$\state(c[\xi]) \not\in Q$:}  Then $\state(c[\xi]) = \bot$ and, by $\state(c[\xi]) = \state(c[\zeta_j])$, also $\state(c[\zeta_j]) = \bot$.
Thus, by  Lemma~\ref{lm:properties-hA-of-budet-wta-det-new}(4), we obtain $ b \otimes r(c[\xi])  = \0 = d \otimes r(c[\zeta_j])$.

Hence in both cases  $[b.\xi]_{\sim_r} =  [d.\zeta_j]_{\sim_r}$, i.e., $[b.\xi]_{\sim_r} =  d \cdot [\1.\zeta_j]_{\sim_r}$.

\underline{$\state(\xi) \not\in Q$:} Then $\state(\xi)=\bot$.
  We prove that $b.\xi \sim_r \widetilde{\0}$. Let $c \in \C_\Sigma$.
  \begin{align*}
    b \otimes r(c[\xi]) &= b \otimes \bigoplus_{q \in Q} \h_\cA(c[\xi])_q \otimes F_q = b \otimes \bigoplus_{q \in Q} \0 \otimes F_q
    \text{\ (by Lemma~\ref{obs:total-bu-det-wta-calc(new)-det-new})} = \0 \enspace.
  \end{align*}
Thus $[b.\xi]_{\sim_r} =  [\widetilde{\0}]_{\sim_r}$ and hence $[b.\xi]_{\sim_r} = \0\cdot [\1.\zeta_j]_{\sim_r}$ for each $j\in [n]$.

Proof of (2): By Statement (1), we have  $\rmMon(\Sigma,\B)/_{\sim_r} \subseteq B \cdot H$. Since $B \cdot H \subseteq \rmMon(\Sigma,\B)/_{\sim_r}$ and $H \ne \emptyset$, Statement(2) follows from \eqref{eq:equivalent-to-generates}.
\end{proof}

\begin{example}\rm \label{ex:candidate-set-det} The $(\Sigma,\Ratnum)$-wta $\cA$ of Example \ref{ex:bu-det+total-wta-det} is slim; it has the state set $Q=\{o,e\}$. An example of a candidate set induced by $\cA$ for $\sfMon(\Sigma,\Ratnum)/_{\sim_r}$ (with $r= \sem{\cA}$) is
  \(
  H = \{[1.\alpha]_{\sim_r}, \ [1.\sigma(\alpha,\alpha)]_{\sim_r} \}
\)
because $\state_\cA(\alpha) = o$ and $\state_\cA(\sigma(\alpha,\alpha))=e$.
(The set $\{[1.\sigma(\sigma(\alpha,\alpha),\alpha)]_{\sim_r}, \ [1.\sigma(\alpha,\alpha)]_{\sim_r} \}$ is another example of a candidate set induced by $\cA$ for $\sfMon(\Sigma,\Ratnum)/_{\sim_r}$ because also $\state_\cA\big(\sigma(\sigma(\alpha,\alpha),\alpha)\big)=o$.) We demonstrate Lemma \ref{thm:bu-det-r-construction-ofVLQ(r)-nnew-det}(1) as follows:
for each $b \in \mathbb{Q}$, we have
\begingroup
\allowdisplaybreaks
\begin{align*}
  & [b.\sigma(\sigma(\alpha,\alpha),\alpha)]_{\sim_r} =  \ d \cdot [1.\alpha]_{\sim_r} \\
  &\ \ \ \  \text{ with } d = b \cdot \h_\cA(\sigma(\sigma(\alpha,\alpha),\alpha))_o \cdot (\h_\cA(\alpha)_o)^{-1}
  = b \cdot 8 \cdot 2^{-1} = b \cdot 4 \ \text{ and }\\[2mm]
   &[b.\sigma(\sigma(\alpha,\alpha),\sigma(\alpha,\alpha))]_{\sim_r}
     = \ d' \cdot [1.\sigma(\alpha,\alpha)]_{\sim_r} \\
  &\ \ \ \ \text{ with } d' = \ b \cdot \h_\cA(\sigma(\sigma(\alpha,\alpha),\sigma(\alpha,\alpha)))_o \cdot
  (\h_\cA(\sigma(\alpha,\alpha))_o)^{-1} = b \cdot 16 \cdot 4^{-1} = b \cdot 4 \enspace.
\end{align*}
\endgroup
    By Lemma \ref{thm:bu-det-r-construction-ofVLQ(r)-nnew-det}(2), the set $H$ generates $\sfMon(\Sigma,\Ratnum)/_{\sim_r}$.
    \hfill $\Box$
  \end{example}
  
  The question arises whether, for each slim  $(\Sigma,\B)$-wta, each candidate set induced by $\cA$ is pair-independent. 
  By the following example we demonstrate that the answer is negative.

  \begin{example}\rm \label{ex:candidate-set-not-independent} Let $\Sigma=\{\alpha^{(0)},\beta^{(0)}\}$.  Then $\T_\Sigma=\{\alpha,\beta\}$ and $\C_\Sigma=\{z\}$.
    Let $\cA=(Q,\delta,F)$ be the $(\Sigma,\Ratnum)$-wta where
  \begin{compactitem}
  \item $Q=\{q_0,q_1\}$,
  \item  $\delta_0(\varepsilon,\alpha,q_0)=1$ and $\delta_0(\varepsilon,\beta,q_1)=1$,
  \item $F_{q_0}=2$ and $F_{q_1}=1$.
 \end{compactitem}
 Then $\sem{\cA}(\alpha)=2$ and $\sem{\cA}(\beta)=1$.

 The only candidate set is $\{[1.\alpha]_{\sem{\cA}},[1.\beta]_{\sem{\cA}}\}$.
We have $[1.\alpha]_{\sem{\cA}}\ne[1.\beta]_{\sem{\cA}}$ because (with the only context $c=z$) \[1\cdot \sem{\cA}(c[\alpha]) =1\cdot \sem{\cA}(\alpha) =2 \ne 1 = 1\cdot \sem{\cA}(\beta)= 1\cdot \sem{\cA}(c[\beta]).\]
 Moreover, $\{[1.\alpha]_{\sem{\cA}},[1.\beta]_{\sem{\cA}}\}$ is not pair-independent because $[1.\alpha]_{\sem{\cA}}=2\cdot[1.\beta]_{\sem{\cA}}$.
 This is because $2\cdot[1.\beta]_{\sem{\cA}}=[2.\beta]_{\sem{\cA}}$ and $1.\alpha \sim_{\sem{\cA}} 2.\beta$ because for every context (we have only $c=z$) we have
 \[1\cdot \sem{\cA}(c[\alpha])=1\cdot \sem{\cA}(\alpha)=1\cdot 2 = 2 = 2\cdot 1 = 2\cdot \sem{\cA}(\beta)=2\cdot \sem{\cA}(c[\beta]).\]
   \hfill $\Box$
  \end{example}

  \begin{lemma}\rm \label{lm:tq-has-minimal-height} Let  $\cA$ be slim. We can construct a finite generating set $\mathrm{H}_{\cA}$ of  $\sfMon(\Sigma,\B)/_{\sim_{\sem{\cA}}}$.
  \end{lemma}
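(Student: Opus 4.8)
The plan is to give an explicit, effective construction of a candidate set and then invoke Lemma~\ref{thm:bu-det-r-construction-ofVLQ(r)-nnew-det}(2) to conclude that it generates $\sfMon(\Sigma,\B)/_{\sim_{\sem{\cA}}}$. Recall that, by definition, a candidate set induced by $\cA$ has the form $\{[\1.\zeta_1]_{\sim_{\sem{\cA}}}, \ldots, [\1.\zeta_n]_{\sim_{\sem{\cA}}}\}$ where $Q = \{q_1,\ldots,q_n\}$ and $\zeta_i \in \state_\cA^{-1}(q_i)$ for each $i \in [n]$. So it suffices to compute, for every state $q \in Q$, one witness tree $\zeta_q \in \T_\Sigma$ with $\state_\cA(\zeta_q) = q$; this is possible precisely because $\cA$ is slim, so every such preimage is nonempty. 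I would then set $\mathrm{H}_\cA = \{[\1.\zeta_q]_{\sim_{\sem{\cA}}} \mid q \in Q\}$, which is finite (at most $|Q|$ elements), and the claim follows.

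The key step is therefore the effective computation of the witness trees. I would mirror the $P_i$-saturation argument from the proof of Theorem~\ref{lm:construction-of-simple-det}: build a sequence of finite sets of pairs $W_0 \subseteq W_1 \subseteq \cdots \subseteq Q \times \T_\Sigma$, where each $W_i$ records, for some subset of states already reached, a concrete witness tree. Start with $W_0 = \{(\theta_\cA(\sigma)(),\sigma) \mid \sigma \in \Sigma^{(0)}, \theta_\cA(\sigma)() \in Q\}$, keeping just one pair per state. In the inductive step, for every $k \in \mathbb{N}_+$, every $\sigma \in \Sigma^{(k)}$, and every choice of pairs $(p_1,\xi_1),\ldots,(p_k,\xi_k)$ already present in $W_i$, if $\theta_\cA(\sigma)(p_1,\ldots,p_k) = p \in Q$ and $p$ has no witness yet, add $(p, \sigma(\xi_1,\ldots,\xi_k))$. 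Since the set of first components is a subset of the finite set $Q$, the sequence stabilises; and because $\cA$ is slim and $\state_\cA$ is the unique $\Sigma$-algebra homomorphism $\sfT_\Sigma \to \St(\cA)$, the first components of the stable set $W$ exhaust $Q$. Each $\xi$ appearing in a pair $(q,\xi) \in W$ satisfies $\state_\cA(\xi) = q$ by a straightforward induction on the construction, using that $\state_\cA$ respects the operations. This gives the desired witness trees $\zeta_q$ constructively.

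Finally I would assemble the pieces: the set $W$ yields a candidate set $\mathrm{H}_\cA = \{[\1.\zeta_q]_{\sim_{\sem{\cA}}} \mid q \in Q\}$ induced by $\cA$ for $\sfMon(\Sigma,\B)/_{\sim_{\sem{\cA}}}$, and by Lemma~\ref{thm:bu-det-r-construction-ofVLQ(r)-nnew-det}(2) this candidate set generates the $\B$-scalar algebra $\sfMon(\Sigma,\B)/_{\sim_{\sem{\cA}}}$. Everything in the construction is effective under the paper's standing conventions on computability of $\cA$ (the operations of $\St(\cA)$ are computable since $\cA$ is effectively given and equality in $B$ is decidable), so $\mathrm{H}_\cA$ can indeed be constructed. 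The main obstacle — though a mild one — is the bookkeeping to ensure the iteration terminates and that slimness genuinely forces every state to acquire a witness; both follow from the same fixed-point reasoning already used for $\im(\state_\cA)$ in Theorem~\ref{lm:construction-of-simple-det}, so one can largely cite that argument rather than repeating it.
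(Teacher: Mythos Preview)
Your proposal is correct and follows essentially the same approach as the paper: compute a witness tree $\zeta_q$ with $\state_\cA(\zeta_q)=q$ for each $q\in Q$, form the candidate set $\mathrm{H}_\cA=\{[\1.\zeta_q]_{\sim_{\sem{\cA}}}\mid q\in Q\}$, and invoke Lemma~\ref{thm:bu-det-r-construction-ofVLQ(r)-nnew-det}(2). The only cosmetic difference is how the witness trees are found---the paper enumerates $\T_\Sigma$ in some linear order and records the first tree mapped to each state by $\state_\cA$, whereas you run a bottom-up saturation in the spirit of the $P_i$-construction from Theorem~\ref{lm:construction-of-simple-det}; both are effective and both terminate because $\cA$ is slim.
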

  \begin{proof} Let us abbreviate $\sem{\cA}$ by $r$. First we construct a mapping $\mathrm{t}_\cA: Q \to \T_\Sigma$ such that $\state(\mathrm{t}_\cA(q))=q$ for each $q\in Q$. The construction is as follows.

We assume that a linear order $\xi_1,\xi_2,\ldots$ on $\T_\Sigma$ is given.
  We construct a finite prefix of a sequence $t_0,t_1,t_2,\ldots$ of partial mappings from $Q$ to $\T_\Sigma$ as follows. We let $t_0=\emptyset$. Let $t_i$ be already constructed. If $t_i$ is defined for each $q\in Q$,  then we stop the construction and we let  $\mathrm{t}_\cA=t_i$. Otherwise we proceed by case analysis. If $\state(\xi_{i+1})=\bot$ or  $\state(\xi_{i+1})\in Q$ and $t_i(\state(\xi_{i+1}))$ is already defined, 
  then let $t_{i+1}=t_i$. Otherwise we let $t_{i+1}=t_i \cup \{(\state(\xi_{i+1}),\xi_{i+1})\}$.  Since $\cA$ is slim, and hence $Q\subseteq \im(\state)$, there exists an $i \in \mathbb{N}$ such that $t_i$ is a mapping, and hence the construction eventually stops.
  
Then the set $\mathrm{H}_{\cA} = \{[\1.\mathrm{t}_\cA(q)]_{\sim_{r}} \mid q \in Q\}$
is a candidate set induced by $\cA$ for $\sfMon(\Sigma,\B)/_{\sim_{r}}$.
By Lemma \ref{thm:bu-det-r-construction-ofVLQ(r)-nnew-det}(2), $\mathrm{H}_\cA$ generates $\sfMon(\Sigma,\B)/_{\sim_{r}}$.
\end{proof}

 \underline{Step 3:} 
Now we show how to construct, for a given  bu-deterministic $(\Sigma,\B)$-wta, an equivalent minimal bu-deterministic $(\Sigma,\B)$-wta.
  
 \begin{theorem}\label{thm:minimization-theorem-new-2} Let $\B$ be a commutative semifield. Moreover, let $\cA$ be a bu-deterministic $(\Sigma,\B)$-wta and let us abbreviate $\sem{\cA}$ by $r$. We assume that dependency in $\sfMon(\Sigma,\B)/_{\sim_r}$ is decidable and the decomposition mapping $\mathrm{dec}$ for $\big(\rmMon(\Sigma,\B)/_{\sim_r}\big)\setminus \{[\widetilde{\0}]_{\sim_r}\}$ is computable.
   Then we can construct a finite scalar-basis $H$ of $\sfMon(\Sigma,\B)/_{\sim_r}$
 and we can construct the minimal  bu-deterministic $(\Sigma,\B)$-wta $\budwta(r,\sim_r,H)$.
     \end{theorem}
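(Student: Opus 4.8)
The plan is to assemble the minimization construction from the three preparatory steps already laid out, combining Theorem~\ref{lm:construction-of-simple-det}, Lemma~\ref{lm:tq-has-minimal-height}, Lemma~\ref{lm:crucial-canc-pair-ind-imply-uniqueness-det}(1), Lemma~\ref{lm:MonSigmaB-modulo-simr-cancellative-det}, Lemma~\ref{lm:wta(r,sim,H)}, and Lemma~\ref{cor:min-du-det-wta-exists}. First I would apply Theorem~\ref{lm:construction-of-simple-det} to the given bu-deterministic $(\Sigma,\B)$-wta $\cA$ to construct an equivalent slim bu-deterministic $(\Sigma,\B)$-wta $\cA'$; note $\sem{\cA'}=\sem{\cA}=r$, so $\sim_{\sem{\cA'}}=\sim_r$, and the decidability and computability hypotheses carry over verbatim. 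Then I would apply Lemma~\ref{lm:tq-has-minimal-height} to $\cA'$ to construct a finite generating set $\mathrm{H}_{\cA'}$ of the $\B$-scalar algebra $\sfMon(\Sigma,\B)/_{\sim_r}$; by its construction in the proof of that lemma, $\mathrm{H}_{\cA'}$ has the form $\{[\1.\zeta]_{\sim_r}\mid \zeta\in\T_\Sigma\text{ for suitable }\zeta\}$, hence $\mathrm{H}_{\cA'}\subseteq\{[\1.\zeta]_{\sim_r}\mid\zeta\in\T_\Sigma\}$, which is exactly the side condition required in Definition~\ref{def:wta(r,sim,H)}.

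Next, since dependency in $\sfMon(\Sigma,\B)/_{\sim_r}$ is decidable, I would invoke the second part of Lemma~\ref{lm:crucial-canc-pair-ind-imply-uniqueness-det}(1) to construct, from the finite generating set $\mathrm{H}_{\cA'}$, a pair-independent subset $H\subseteq\mathrm{H}_{\cA'}$ which still generates $\sfMon(\Sigma,\B)/_{\sim_r}$; in particular $H\subseteq\{[\1.\zeta]_{\sim_r}\mid\zeta\in\T_\Sigma\}$ as well. By Lemma~\ref{lm:MonSigmaB-modulo-simr-cancellative-det} the $\B$-scalar algebra $\sfMon(\Sigma,\B)/_{\sim_r}$ is cancellative, so $H$ is in fact a finite scalar-basis. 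Moreover, by Lemma~\ref{lm:sim-r-is-coarsest}, $\sim_r$ saturates $r$. Thus $r$, $\sim_r$, and $H$ satisfy all the hypotheses of Definition~\ref{def:wta(r,sim,H)}, so the $(\Sigma,\B)$-wta $\budwta(r,\sim_r,H)=(H,\delta,F)$ is defined.

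It remains to argue that $\budwta(r,\sim_r,H)$ can actually be constructed and that it is the minimal equivalent wta. For constructibility, I would apply Lemma~\ref{lm:wta(r,sim,H)}(3): since the decomposition mapping $\mathrm{dec}$ for $\big(\rmMon(\Sigma,\B)/_{\sim_r}\big)\setminus\{[\widetilde{\0}]_{\sim_r}\}$ is computable by hypothesis (and $\sim_r$, together with $r$ via $\gamma([b.\xi]_{\sim_r})=b\otimes r(\xi)$, is effectively given by our general conventions), all transition weights $\delta_k$ and all root weights $F_{[\1.\zeta]}=\gamma([\1.\zeta])=r(\zeta)$ can be computed. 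By Lemma~\ref{lm:wta(r,sim,H)}(1) and~(2), $\budwta(r,\sim_r,H)$ is bu-deterministic and recognizes $r$. Finally, minimality follows directly from Lemma~\ref{cor:min-du-det-wta-exists}, which states that for \emph{any} finite scalar-basis $H$ of $\sfMon(\Sigma,\B)/_{\sim_r}$ the wta $\budwta(r,\sim_r,H)$ is minimal (its proof uses Theorem~\ref{lm:dimension-synt-factor-algebra-bounded} to bound the number of states of any equivalent bu-deterministic wta from below by $\deg\big(\sfMon(\Sigma,\B)/_{\sim_r}\big)=|H|$).

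The one place that requires a little care rather than routine bookkeeping is the observation that the generating set produced in Lemma~\ref{lm:tq-has-minimal-height} (and its pair-independent refinement via Lemma~\ref{lm:crucial-canc-pair-ind-imply-uniqueness-det}(1)) indeed lands inside $\{[\1.\zeta]_{\sim_r}\mid\zeta\in\T_\Sigma\}$, since Definition~\ref{def:wta(r,sim,H)} explicitly demands this; but this is immediate from the fact that the candidate set $\mathrm{H}_{\cA'}$ is built from trees $\mathrm{t}_{\cA'}(q)$ and that passing to a subset preserves the property. Everything else is an application of results already proved, chained in the order: slim $\to$ finite generating set $\to$ pair-independent scalar-basis $\to$ $\budwta(r,\sim_r,H)$, which is then automatically minimal and, under the two effectiveness hypotheses, constructible.
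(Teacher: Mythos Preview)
Your proposal is correct and follows essentially the same approach as the paper's own proof: reduce to a slim wta via Theorem~\ref{lm:construction-of-simple-det}, construct a finite generating set via Lemma~\ref{lm:tq-has-minimal-height}, refine it to a scalar-basis via Lemma~\ref{lm:crucial-canc-pair-ind-imply-uniqueness-det}(1) using the decidability hypothesis, and then invoke Lemma~\ref{cor:min-du-det-wta-exists} for minimality and Lemma~\ref{lm:wta(r,sim,H)}(3) for constructibility using the computability hypothesis. Your added remark that the generating set (and hence its pair-independent subset) lies in $\{[\1.\zeta]_{\sim_r}\mid\zeta\in\T_\Sigma\}$ is a useful explicit check that the paper leaves implicit.
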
 
   \begin{proof} By Theorem \ref{lm:construction-of-simple-det}, we can construct a slim wta which is equivalent to $\cA$. Hence we assume that $\cA$ itself is slim.
     By  using Lemma \ref{lm:tq-has-minimal-height}, we can construct a finite generating set  $\mathrm{H}_{\cA}$ of $\sfMon(\Sigma,\B)/_{\sim_{r}}$. 
     Since dependency in $\sfMon(\Sigma,\B)/_{\sim_r}$ is decidable, by Lemma~\ref{lm:crucial-canc-pair-ind-imply-uniqueness-det}(1), we can construct a finite scalar-basis $H\subseteq \mathrm{H}_{\cA}$ of $\sfMon(\Sigma,\B)/_{\sim_r}$.
  By Lemma \ref{cor:min-du-det-wta-exists} the $(\Sigma,\B)$-wta $\budwta(r,\sim_r,H)$ is defined and minimal.
       Since the decomposition mapping $\mathrm{dec}$ is computable, we can construct $\budwta(r,\sim_r,H)$ (cf. Lemma~\ref{lm:wta(r,sim,H)}(3)).
\end{proof}

Finally, we give an example of the construction of the minimal bu-deterministic $(\Sigma,\B)$-wta. 

  \begin{example}\rm \label{ex:B-implies-A} Here we show an example of the construction of a  minimal bu-deterministic wta as it is described in the proof of Theorem \ref{thm:minimization-theorem-new-2}.

We consider the ranked alphabet  $\Sigma=\{\gamma^{(1)},\alpha^{(0)}\}$. Each tree in $\T_\Sigma$ has the form $\gamma(\ldots\gamma(\alpha)\ldots)$ with $n$ occurrences of $\gamma$ for some $n\in\mathbb{N}$. We denote this tree by $\gamma^n(\alpha)$.
    Let us consider the  $(\Sigma,\Ratnum)$-wta $\cA=(Q,\delta,F)$ (cf. Figure~\ref{fig:making-wta-minimal}(left)) where
\begin{compactitem}
\item $Q=\{q_1,q_2,q_3\}$,
\item $\delta_0(\varepsilon,\alpha,q_1)=1$,  $\delta_1(q_1,\gamma,q_2)=\delta_1(q_2,\gamma,q_3)=\delta_1(q_3,\gamma,q_2)=1$, and $\delta_1(p,\gamma,q)=0$ for any other combination $p,q\in Q$, and
\item $F_{q_1}=F_{q_3}=2$ and $F_{q_2}=3$.
\end{compactitem}
Obviously, $\cA$ is total and bu-deterministic.  Moreover, $\cA$ is slim, because, e.g.,  $\state(\alpha)=q_1$, $\state(\gamma(\alpha))=q_2$, and $\state(\gamma^2(\alpha))=q_3$.
Also, for each $n \in \mathbb{N}$, we have
\[\sem{\cA}(\gamma^n(\alpha))=\begin{cases}
2 & \text{ if $n$ is even,} \\
3 & \text{ otherwise.} 
\end{cases}
\]

 \begin{figure}[t]
   \centering
     \begin{tikzpicture}

       \hspace*{-30mm}
       \begin{scope}
\tikzset{node distance=7em, scale=0.5, transform shape}
\node[state, rectangle] (alpha) {\Large $\alpha$};
\node[state, right of=alpha] (q1) {\Large $q_1$};
\node[state, rectangle, right of=q1] (left-gamma){\Large $\gamma$};
\node[state, right of=left-gamma] (q2){\Large $q_2$};
\node[state, rectangle, right of=q2] (right-gamma) {\Large $\gamma$};
\node[state, right of=right-gamma] (q3){\Large $q_3$};
\node[state, rectangle, above of=right-gamma] (up-gamma-right) {\Large $\gamma$};

\tikzset{node distance=2em}
\node[above of=q1] (w0)[yshift=-5em] {\Large 2};
\node[above of=alpha] (w1)[yshift=0.7em] {\Large 1};
\node[above of=q2] (w2)[yshift=-5em] {\Large 3};
\node[above of=left-gamma] (w3)[yshift=0.7em] {\Large 1};
\node[above of=right-gamma] (w4)[yshift=0.7em] {\Large 1};
\node[above of=q3] (w5)[yshift=-5em] {\Large 2};
\node[above of=up-gamma-right] (w7)[yshift=0.7em] {\Large 1};

\draw[->,>=stealth] (alpha) edge (q1);
\draw[->,>=stealth] (q1) edge (left-gamma);
\draw[->,>=stealth] (q2) edge (right-gamma);
\draw[->,>=stealth] (left-gamma) edge (q2);
\draw[->,>=stealth] (q2) edge (right-gamma);
\draw[->,>=stealth] (right-gamma) edge (q3);
\draw[->,>=stealth] (q3) edge[out=120, in=-20, looseness=1] (up-gamma-right);
\draw[->,>=stealth] (up-gamma-right) edge[out=200, in=60, looseness=1] (q2);
       \end{scope}

\hspace*{80mm}
       
          \begin{scope}
\tikzset{node distance=7em, scale=0.5, transform shape}
\node[state, rectangle] (alpha){\Large $\alpha$};
\node[state, right of=alpha] (1alpha){\Large $[1.\alpha]_{\sim_r}$};
\node[state, rectangle, right of=1alpha] (gamma) {\Large $\gamma$};
\node[state, right of=gamma] (1gammaalpha)[xshift=2em] {\Large $[1.\gamma(\alpha)]_{\sim_r}$};
\node[state, rectangle, above of=gamma] (up-gamma) {\Large $\gamma$};

\tikzset{node distance=2em}
\node[above of=alpha] (w1)[yshift=0.7em] {\Large 1};
\node[above of=1alpha] (w2)[yshift=-6em] {\Large 2};
\node[above of=gamma] (w3)[yshift=0.7em] {\Large 1};
\node[above of=up-gamma] (w4)[yshift=0.7em] {\Large 1};
\node[above of=1gammaalpha] (w5)[yshift=-7em] {\Large 3};

\draw[->,>=stealth] (alpha) edge (1alpha);
\draw[->,>=stealth] (1alpha) edge (gamma);
\draw[->,>=stealth] (gamma) edge (1gammaalpha);
\draw[->,>=stealth] (1gammaalpha) edge[out=130, in=-20, looseness=1] (up-gamma);
\draw[->,>=stealth] (up-gamma) edge[out=200, in=60, looseness=1] (1alpha);
       \end{scope}
       
\end{tikzpicture} 
\caption{\label{fig:making-wta-minimal} The $(\Sigma,\Ratnum)$-wta $\cA$ of Example \ref{ex:B-implies-A} (left) and $\budwta(r,\sim_r,\mathrm{H}_{\cA})$ (right).}
   \end{figure}

Let us abbreviate $\sem{\cA}$ by $r$. It is easy to show that, for every $m,n \in \mathbb{N}$ and $b_1,b_2\in \mathbb{Q}$, we have
\[b_1.\gamma^m(\alpha) \sim_r b_2.\gamma^n(\alpha) \ \text{ iff } \ \Big(b_1=b_2 \text{\ \ and \ \ } \big( m\!\!\!\mod (2) = n\!\!\!\mod (2)\big)\Big) \text{ or } b_1=b_2=0.\]
 Hence, for every $n\in \mathbb{N}$ and $b\in \mathbb{Q}$, we have
 \begin{align}\label{ex:what-is-bgamman}
   [b.\gamma^n(\alpha)]_{\sim_r}=\begin{cases}  \{b.\gamma^m(\alpha) \mid m \in \mathbb{N} \ \text{ and } \  m\!\!\!\!\mod (2) = n\!\!\!\!\mod (2)\} & \text{ if $b\ne 0$}\\
 \{\widetilde{0}\} & \text{otherwise.}
  \end{cases}
  \end{align}
Using the above, it is easy to see that dependency in $\sfMon(\Sigma,\B)/_{\sim_r}$ is decidable.

By Lemma \ref{lm:tq-has-minimal-height}, we construct the mapping $\mathrm{t}_\cA: Q \to \T_\Sigma$ with 
\[\mathrm{t}_\cA(q_1)=\alpha, \ \ \mathrm{t}_\cA(q_2)=\gamma(\alpha), \text{ and} \ \ \mathrm{t}_\cA(q_3)=\gamma^2(\alpha) \enspace.
\]

We have $1.\alpha \sim_r 1.\gamma^2(\alpha)$, i.e., $[1.\alpha]_{\sim_r}=[1.\gamma^2(\alpha)]_{\sim_r}$. Moreover, there does not exist $b\in \mathbb{Q}$ with $1.\alpha \sim_r b.\gamma(\alpha)$, hence the elements  $[1.\alpha]_{\sim_r}$ and $[1.\gamma(\alpha)]_{\sim_r}$ are independent. We obtain that $\mathrm{H}_{\cA}=\{[1.\alpha]_{\sim_r}, [1.\gamma(\alpha)]_{\sim_r}\}$, and this set is pair-independent. Thus there is no need to apply Lemma~\ref{lm:crucial-canc-pair-ind-imply-uniqueness-det}(1) in this example.
  
By using the characterization in \eqref{ex:what-is-bgamman}, we can compute the mapping $\mathrm{dec}$. In fact, for every $b\in \mathbb{Q}$ and $n \in \mathbb{N}$, we have
\begin{align}\label{eq:congruence-classes-another}
[b.\gamma^n(\alpha)]_{\sim_r} = 
\begin{cases}
[b.\alpha]_{\sim_r}=b \cdot[1.\alpha]_{\sim_r} & \text{ if \ $\big(n\!\!\mod(2)\big)=0$}\\
[b.\gamma(\alpha)]_{\sim_r}= b\cdot [1.\gamma(\alpha)]_{\sim_r} & \text{ if \ $\big(n\!\!\mod(2)\big)=1$}.
\end{cases}
\end{align}
In particular, for each $n \in \mathbb{N}$, we have
\[
  \mathrm{dec}([1.\gamma^n(\alpha)]_{\sim_r}) =
  \begin{cases}
(1,[1.\alpha]_{\sim_r}) & \text{ if \ $\big(n\!\!\mod(2)\big)=0$}\\
(1,[1.\gamma(\alpha)]_{\sim_r}) & \text{ if \ $\big(n\!\!\mod(2)\big)=1$}.
    \end{cases}
  \]

Finally, we construct the bu-deterministic $(\Sigma,\Ratnum)$-wta $\budwta(r,\sim_r,\mathrm{H}_{\cA})=(\mathrm{H}_{\cA},\delta',F')$ (cf. Figure~\ref{fig:making-wta-minimal}(right)), where
\begin{compactitem}
  \item $\mathrm{H}_{\cA}=\{[1.\alpha]_{\sim_r}, [1.\gamma(\alpha)]_{\sim_r}\}$,
\item $(\delta')_0(\varepsilon,\alpha,[1.\alpha]_{\sim_r}) = (\delta')_1([1.\alpha]_{\sim_r},\gamma,[1.\gamma(\alpha)]_{\sim_r})
  = (\delta')_1([1.\gamma(\alpha)]_{\sim_r},\gamma,[1.\alpha]_{\sim_r}) = 1$ and\\
  $(\delta')_0(\varepsilon,\alpha,[1.\gamma(\alpha)]_{\sim_r}) = 0$, and
  \item $(F')_{[1.\alpha]_{\sim_r}} = 2$ and $(F')_{[1.\gamma(\alpha)]_{\sim_r}} = 3$.
\end{compactitem}
By Lemma \ref{lm:wta(r,sim,H)}, we have $\sem{\budwta(r,\sim_r,\mathrm{H}_{\cA})}=r$.
By Theorem~\ref{thm:minimization-theorem-new-2}, the wta $\budwta(r,\sim_r,\mathrm{H}_{\cA})$ is minimal.
\hfill$\Box$
  \end{example}

  \subsection{A characterization of minimality}
  \label{subsect:characterization-of-minimlity}

Here we characterize the property of a bu-deterministic wta $\cA$ of being minimal by the properties that $\cA$ is slim and the degrees of $\sfMon(\Sigma,\B)/_{\ker(\sfh_\cA)}$ and $\sfMon(\Sigma,\B)/_{\sim_{\sem{\cA}}}$ coincide. We recall that, for each bu-deterministic wta $\cA$ (i.e., also for a non-minimal one), we have $\deg(\sfMon(\Sigma,\B)/_{\ker(\sfh_\cA)}) \ge \deg(\sfMon(\Sigma,\B)/_{\sim_{\sem{\cA}}})$.

  {\em In this subsection $\cA=(Q,\delta,F)$ denotes an arbitrary bu-deterministic $(\Sigma,\B)$-wta.}

\begin{lemma}\rm\label{lm:slim-then-h-A-surjective} If $\cA$ is slim, then $\im(\sfh_\cA)=B_{\le 1}^Q$ and $\deg(\sfMon(\Sigma,\B)/_{\ker(\sfh_\cA)})=|Q|$.
\end{lemma}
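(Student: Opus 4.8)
The plan is to prove the two claims in turn, the second being an immediate consequence of the first together with results already available.

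\medskip

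\textbf{Showing $\im(\sfh_\cA) = B^Q_{\le 1}$.} Recall from \eqref{equ:sfh-related-to-h} that $\sfh_\cA(b.\xi) = b \cdot \h_\cA(\xi)$ for each $b.\xi \in \rmMon(\Sigma,\B)$, and recall from Lemma~\ref{lm:properties-hA-of-budet-wta-det-new}(1) that $\im(\sfh_\cA) \subseteq B^Q_{\le 1}$ always holds, so only the reverse inclusion needs work. First I would handle $\0^Q$: since $\widetilde{\0} = \0.\xi$ for any $\xi \in \T_\Sigma$, we have $\sfh_\cA(\widetilde{\0}) = \0 \cdot \h_\cA(\xi) = \0^Q$, so $\0^Q \in \im(\sfh_\cA)$. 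Now let $v \in B^Q_{=1}$; then $v = v_{q} \cdot \1_{q}$ where $q = q_v$ is the unique state in $\supp(v)$, and $v_q \in B^{-\0}$. Because $\cA$ is slim, there is a tree $\zeta \in \T_\Sigma$ with $\state_\cA(\zeta) = q$; by Lemma~\ref{lm:properties-hA-of-budet-wta-det-new}(2), $\h_\cA(\zeta)_q \ne \0$, and by Lemma~\ref{lm:properties-hA-of-budet-wta-det-new}(3), $\h_\cA(\zeta) = \h_\cA(\zeta)_q \cdot \1_q$ with $\h_\cA(\zeta)_q \in B^{-\0}$. Set $b = v_q \otimes \big(\h_\cA(\zeta)_q\big)^{-1}$, which is well defined since $\B$ is a semifield. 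Then
\[
\sfh_\cA(b.\zeta) = b \cdot \h_\cA(\zeta) = b \cdot \big(\h_\cA(\zeta)_q \cdot \1_q\big) = (b \otimes \h_\cA(\zeta)_q) \cdot \1_q = v_q \cdot \1_q = v \enspace,
\]
so $v \in \im(\sfh_\cA)$. Hence $B^Q_{\le 1} = B^Q_{=1} \cup \{\0^Q\} \subseteq \im(\sfh_\cA)$, giving equality.

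\medskip

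\textbf{Showing $\deg\big(\sfMon(\Sigma,\B)/_{\ker(\sfh_\cA)}\big) = |Q|$.} By \eqref{eq:isomorphism}, $(\rmMon(\Sigma,\B),\widetilde{\0},\ttop)/_{\ker(\sfh_\cA)} \cong \sfM_{\mathrm{im}}(\cA) = (\im(\sfh_\cA),\0^Q,\delta_\cA)$, and in particular the underlying $\B$-scalar algebras satisfy $\sfMon(\Sigma,\B)/_{\ker(\sfh_\cA)} \cong (\im(\sfh_\cA),\0^Q)$. By the first part, $\im(\sfh_\cA) = B^Q_{\le 1}$, so $\sfMon(\Sigma,\B)/_{\ker(\sfh_\cA)} \cong (B^Q_{\le 1},\0^Q)$. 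By Observation~\ref{obs:isomorphism-preserves-degree}, isomorphic $\B$-scalar algebras have equal degree, so $\deg\big(\sfMon(\Sigma,\B)/_{\ker(\sfh_\cA)}\big) = \deg\big((B^Q_{\le 1},\0^Q)\big)$. It remains to observe that $(B^Q_{\le 1},\0^Q)$ has degree $|Q|$: the set $\{\1_q \mid q \in Q\}$ of $q$-unit vectors is a generating set, since every $v \in B^Q_{=1}$ equals $v_{q_v} \cdot \1_{q_v}$ and $\0^Q$ is the zero; and it is pair-independent, since for distinct $q, q' \in Q$ no scalar multiple of $\1_{q'}$ (which is supported on $\{q'\}$ or empty) can equal $\1_q$ (supported on $\{q\}$). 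Thus $\deg\big((B^Q_{\le 1},\0^Q)\big) = |Q|$, as was already noted at the start of the proof of Theorem~\ref{lm:dimension-synt-factor-algebra-bounded}, and the claim follows.

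\medskip

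I do not expect a genuine obstacle here: the argument is essentially a repackaging of Lemma~\ref{lm:properties-hA-of-budet-wta-det-new}, the isomorphism \eqref{eq:isomorphism}, and Observation~\ref{obs:isomorphism-preserves-degree}, with slimness supplying exactly the surjectivity of $\sfh_\cA$ onto $B^Q_{\le 1}$ that was missing in the general (non-slim) case. The only point requiring a little care is that surjectivity onto $B^Q_{\le 1}$ — rather than onto all of $B^Q$ — is all that can hold and all that is needed, and that the scalar $b$ in the construction above exists precisely because $\B$ is a semifield and $\h_\cA(\zeta)_q \ne \0$.
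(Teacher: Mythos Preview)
Your proof is correct and follows essentially the same approach as the paper: both handle $\0^Q$ via $\sfh_\cA(\widetilde{\0})$, then for each nonzero single-support vector use slimness to pick a tree $\zeta$ with the right state and invert $\h_\cA(\zeta)_q$ to find the required preimage, and both derive the degree statement from \eqref{eq:isomorphism} together with $\deg((B^Q_{\le 1},\0^Q))=|Q|$. Your version is slightly more explicit in justifying that $\{\1_q\mid q\in Q\}$ is a pair-independent generating set, which the paper treats as clear.
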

\begin{proof} Let $\cA$ be slim. Since $\im(\sfh_\cA) \subseteq B_{\le 1}^Q$, we prove $B_{\le 1}^Q \subseteq \im(\sfh_\cA)$.
  For each $\xi \in \T_\Sigma$, we have $\sfh_\cA(\widetilde{\0}) = \0^Q$. Hence $\0^Q \in \im(\sfh_\cA)$. Now let $b \in B^{-\0}$ and $q \in Q$. We prove that $b_q \in \im(\sfh_\cA)$.  Since $\cA$ is slim, there exists $\xi \in \T_\Sigma$ such that $q = \state_\cA(\xi)$. By Lemma~\ref{lm:properties-hA-of-budet-wta-det-new}(2), we have $\h_\cA(\xi)_q \ne \0$. Then
     \begin{align*}
    b_q &= b \cdot \1_q = (c \otimes \h_\cA(\xi)_q) \cdot \1_q
          \tag{with $c=b \otimes (\h_\cA(\xi)_q)^{-1}$}\\
    &= c \cdot (\h_\cA(\xi)_q \cdot \1_q) = c \cdot \h_\cA(\xi)
          \text{\ (by Lemma~\ref{lm:properties-hA-of-budet-wta-det-new}(3))} = c \cdot \sfh_\cA(\1.\xi)
          \tag{by \eqref{equ:sfh-related-to-h}}\\
    &= \sfh_\cA(c.\xi) \tag{because $\sfh_\cA$ is a scalar-linear mapping}\enspace.
    \end{align*}
    Hence $b_q \in \im(\sfh_\cA)$.  Thus $B_{\le 1}^Q \subseteq \im(\sfh_\cA)$.

Since $\im(\sfh_\cA)=B_{\le 1}^Q$, we have $\sfM_{\mathrm{im}}(\cA) =(B_{\le 1}^Q,\0^Q,\delta_\cA)$.
By~\eqref{eq:isomorphism}, this implies $\deg(\sfMon(\Sigma,\B)/_{\ker(\sfh_\cA)})=\deg((B_{\le 1}^Q,\0^Q))=|Q|$.
\end{proof}

\begin{lemma}\rm\label{lm:minimal-then-degrees-equal}
 If $\cA$ is minimal, then   $\deg(\sfMon(\Sigma,\B)/_{\ker(\sfh_\cA)})=\deg(\sfMon(\Sigma,\B)/_{\sim_{\sem{\cA}}})$.
 \end{lemma}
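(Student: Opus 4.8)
\textbf{Proof plan for Lemma~\ref{lm:minimal-then-degrees-equal}.}
The plan is to proceed by contraposition: I assume that $\deg(\sfMon(\Sigma,\B)/_{\ker(\sfh_\cA)}) \ne \deg(\sfMon(\Sigma,\B)/_{\sim_{\sem{\cA}}})$ and show that $\cA$ is not minimal. By Theorem~\ref{lm:dimension-synt-factor-algebra-bounded} applied to $\cA$ itself (combined with the chain of isomorphisms in the proof of that theorem, visualised in Figure~\ref{fig:overview-congruence-relations-isos}), we always have $\deg(\sfMon(\Sigma,\B)/_{\ker(\sfh_\cA)}) \ge \deg(\sfMon(\Sigma,\B)/_{\sim_{\sem{\cA}}})$, so the negation of the equality forces the strict inequality $\deg(\sfMon(\Sigma,\B)/_{\ker(\sfh_\cA)}) > \deg(\sfMon(\Sigma,\B)/_{\sim_{\sem{\cA}}})$.

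First I would abbreviate $\sem{\cA}$ by $r$ and reduce to a slim wta: by Theorem~\ref{lm:construction-of-simple-det} there is a slim $(\Sigma,\B)$-wta $\cA'$ equivalent to $\cA$, and by Lemma~\ref{obs:scalar-algebra-fin-gen-subalg-hom-Z} together with the isomorphism \eqref{eq:isomorphism} one has $\deg(\sfMon(\Sigma,\B)/_{\ker(\sfh_{\cA'})}) = |Q'| \le |Q| = \deg(\sfMon(\Sigma,\B)/_{\ker(\sfh_\cA)})$ (using Lemma~\ref{lm:slim-then-h-A-surjective} for the first equality, since $\cA'$ is slim). Now I produce a strictly smaller equivalent wta. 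Let $H$ be a finite scalar-basis of $\sfMon(\Sigma,\B)/_{\sim_r}$ contained in $\{[\1.\zeta]_{\sim_r}\mid \zeta\in\T_\Sigma\}$; such an $H$ exists because by Lemma~\ref{lm:MonSigmaB-modulo-simr-cancellative-det} the algebra $\sfMon(\Sigma,\B)/_{\sim_r}$ is cancellative, and — for $\cA'$ slim — by Lemma~\ref{thm:bu-det-r-construction-ofVLQ(r)-nnew-det}(2) a candidate set induced by $\cA'$ generates it, so Lemma~\ref{lm:crucial-canc-pair-ind-imply-uniqueness-det}(1) extracts a pair-independent (hence scalar-basis) subset. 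Then $|H| = \deg(\sfMon(\Sigma,\B)/_{\sim_r})$, and by Lemma~\ref{lm:wta(r,sim,H)}(1),(2) the wta $\budwta(r,\sim_r,H)$ is bu-deterministic with $\sem{\budwta(r,\sim_r,H)} = r = \sem{\cA}$ and has exactly $|H| = \deg(\sfMon(\Sigma,\B)/_{\sim_r})$ states. Since $\deg(\sfMon(\Sigma,\B)/_{\sim_r}) < \deg(\sfMon(\Sigma,\B)/_{\ker(\sfh_\cA)}) \le |Q|$ by the strict inequality and the existence argument of Lemma~\ref{lm:slim-then-h-A-surjective} relating $\deg(\sfMon(\Sigma,\B)/_{\ker(\sfh_\cA)})$ to $|Q|$, we obtain a bu-deterministic wta equivalent to $\cA$ with strictly fewer states, so $\cA$ is not minimal.

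The routine part is bookkeeping with the degree inequalities and invoking the already-established construction $\budwta(r,\sim_r,H)$ and its correctness. The main obstacle to watch is the precise relation between $\deg(\sfMon(\Sigma,\B)/_{\ker(\sfh_\cA)})$ and $|Q|$ when $\cA$ is \emph{not} assumed slim: Lemma~\ref{lm:slim-then-h-A-surjective} gives equality only for slim $\cA$, whereas in general one only has $\deg(\sfMon(\Sigma,\B)/_{\ker(\sfh_\cA)}) \le |Q|$ (via $\im(\sfh_\cA) \subseteq B^Q_{\le 1}$, Lemma~\ref{obs:scalar-algebra-fin-gen-subalg-hom-Z}, and \eqref{eq:isomorphism}). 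This is exactly why the contrapositive is the right formulation — if $\cA$ is minimal it is slim by Lemma~\ref{lm:minimal-implies-slim}, and then $\deg(\sfMon(\Sigma,\B)/_{\ker(\sfh_\cA)}) = |Q| \le |\mathrm{states\ of\ }\budwta(r,\sim_r,H)| = \deg(\sfMon(\Sigma,\B)/_{\sim_r})$, which combined with the reverse inequality from Theorem~\ref{lm:dimension-synt-factor-algebra-bounded} yields the desired equality of degrees directly (no contraposition even needed once slimness is in hand). I would present the argument in this second, cleaner order: assume $\cA$ minimal, hence slim by Lemma~\ref{lm:minimal-implies-slim}; get $|Q| = \deg(\sfMon(\Sigma,\B)/_{\ker(\sfh_\cA)})$ by Lemma~\ref{lm:slim-then-h-A-surjective}; build $\budwta(r,\sim_r,H)$ with $\deg(\sfMon(\Sigma,\B)/_{\sim_r})$ states, equivalent to $\cA$; minimality of $\cA$ gives $|Q| \le \deg(\sfMon(\Sigma,\B)/_{\sim_r})$; Theorem~\ref{lm:dimension-synt-factor-algebra-bounded} gives $\deg(\sfMon(\Sigma,\B)/_{\sim_r}) \le |Q|$; conclude equality throughout.
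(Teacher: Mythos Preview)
Your proposal is correct, and the ``cleaner order'' you settle on at the end is essentially the paper's own proof: minimal $\Rightarrow$ slim (Lemma~\ref{lm:minimal-implies-slim}), hence $\deg(\sfMon(\Sigma,\B)/_{\ker(\sfh_\cA)})=|Q|$ (Lemma~\ref{lm:slim-then-h-A-surjective}), and then $|Q|=\deg(\sfMon(\Sigma,\B)/_{\sim_{\sem{\cA}}})$ via the existence of an equivalent bu-deterministic wta with $\deg(\sfMon(\Sigma,\B)/_{\sim_{\sem{\cA}}})$ states. The only cosmetic difference is that the paper invokes Lemma~\ref{cor:min-du-det-wta-exists} directly for this last step, whereas you re-derive it from Lemma~\ref{lm:wta(r,sim,H)} and Theorem~\ref{lm:dimension-synt-factor-algebra-bounded}.
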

 \begin{proof} Let $\cA$ be minimal. By Lemma~\ref{lm:minimal-implies-slim}, $\cA$ is slim and hence,  by Lemma~\ref{lm:slim-then-h-A-surjective},   $\deg(\sfMon(\Sigma,\B)/_{\ker(\sfh_\cA)})=|Q|$.
Moreover, we also have $\deg(\sfMon(\Sigma,\B)/_{\sim_{\sem{\cA}}})=|Q|$ because, by   Lemma \ref{cor:min-du-det-wta-exists}, 
$ \deg(\sfMon(\Sigma,\B)/_{\sim_{\sem{\cA}}})$ is the number of the states of a minimal bu-deterministic $(\Sigma,\B)$-wta which recognizes $\sem{\cA}$.
\end{proof}

\begin{lemma}\rm\label{lm:degrees-equal-then-minimal}
 If $\cA$ is slim and $\deg(\sfMon(\Sigma,\B)/_{\ker(\sfh_\cA)})=\deg(\sfMon(\Sigma,\B)/_{\sim_{\sem{\cA}}})$, then $\cA$ is minimal.
  \end{lemma}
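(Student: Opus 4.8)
The plan is to prove the contrapositive-free, direct statement: assuming $\cA$ is slim and $\deg(\sfMon(\Sigma,\B)/_{\ker(\sfh_\cA)})=\deg(\sfMon(\Sigma,\B)/_{\sim_{\sem{\cA}}})$, show that $|Q|\le|Q'|$ for every bu-deterministic $(\Sigma,\B)$-wta $\cB=(Q',\delta',F')$ with $\sem{\cB}=\sem{\cA}$. The whole argument is a short chain of (in)equalities that strings together two results already available in the excerpt: Lemma~\ref{lm:slim-then-h-A-surjective} on the left end and Theorem~\ref{lm:dimension-synt-factor-algebra-bounded} on the right end.

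Concretely, first I would invoke Lemma~\ref{lm:slim-then-h-A-surjective}: since $\cA$ is slim, $\deg(\sfMon(\Sigma,\B)/_{\ker(\sfh_\cA)})=|Q|$. Combining this with the hypothesized degree equality immediately gives $\deg(\sfMon(\Sigma,\B)/_{\sim_{\sem{\cA}}})=|Q|$. Next, write $r=\sem{\cA}$ and let $\cB=(Q',\delta',F')$ be an arbitrary bu-deterministic $(\Sigma,\B)$-wta with $\sem{\cB}=r$. Because the m-syntactic congruence depends only on the recognized weighted tree language, $\sim_{\sem{\cB}}=\sim_r=\sim_{\sem{\cA}}$, so applying Theorem~\ref{lm:dimension-synt-factor-algebra-bounded} to $\cB$ yields $\deg(\sfMon(\Sigma,\B)/_{\sim_{\sem{\cA}}})=\deg(\sfMon(\Sigma,\B)/_{\sim_{\sem{\cB}}})\le|Q'|$. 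Chaining the two facts gives $|Q|=\deg(\sfMon(\Sigma,\B)/_{\sim_{\sem{\cA}}})\le|Q'|$, and since $\cB$ was arbitrary, $\cA$ is minimal.

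There is essentially no hard part here: all the real work was done in establishing Lemma~\ref{lm:slim-then-h-A-surjective} (that slimness forces $\sfh_\cA$ onto $B_{\le 1}^Q$, hence the kernel-quotient degree equals $|Q|$) and in Theorem~\ref{lm:dimension-synt-factor-algebra-bounded} (that $\deg(\sfMon(\Sigma,\B)/_{\sim_r})$ lower-bounds the state count of any recognizer). This lemma is just the remaining implication needed to complete the characterization of minimality: together with Lemma~\ref{lm:minimal-implies-slim} and Lemma~\ref{lm:minimal-then-degrees-equal} it shows that $\cA$ is minimal iff $\cA$ is slim and $\deg(\sfMon(\Sigma,\B)/_{\ker(\sfh_\cA)})=\deg(\sfMon(\Sigma,\B)/_{\sim_{\sem{\cA}}})$, which is exactly the content of the subsequent Corollary~\ref{minimal-iff-degrees-equal}. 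The one point to state carefully in the write-up is the equality $\sim_{\sem{\cB}}=\sim_{\sem{\cA}}$, which is immediate from $\sem{\cB}=\sem{\cA}$ together with the definition of $\sim_r$.
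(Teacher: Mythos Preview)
Your proposal is correct and follows essentially the same route as the paper: use Lemma~\ref{lm:slim-then-h-A-surjective} to turn slimness into $|Q|=\deg(\sfMon(\Sigma,\B)/_{\ker(\sfh_\cA)})$, combine with the hypothesis, and then bound any competitor's state count from below by $\deg(\sfMon(\Sigma,\B)/_{\sim_{\sem{\cA}}})$. The only cosmetic difference is that the paper cites Lemma~\ref{cor:min-du-det-wta-exists} for the last step (which itself unpacks to Theorem~\ref{lm:dimension-synt-factor-algebra-bounded}), whereas you invoke Theorem~\ref{lm:dimension-synt-factor-algebra-bounded} directly; the content is the same.
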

\begin{proof} Let $\cA$ be slim. By Lemma~\ref{lm:slim-then-h-A-surjective},   $\deg(\sfMon(\Sigma,\B)/_{\ker(\sfh_\cA)})=|Q|$. Then, by our assumption, we have 
$\deg(\sfMon(\Sigma,\B)/_{\sim_{\sem{\cA}}})=|Q|$. Hence $\cA$ is minimal because, by Lemma \ref{cor:min-du-det-wta-exists},
$\deg(\sfMon(\Sigma,\B)/_{\sim_{\sem{\cA}}})$ is the number of the states of a minimal bu-deterministic wta which recognizes $\sem{\cA}$.
\end{proof}

Now by Lemmas~\ref{lm:minimal-implies-slim} and \ref{lm:minimal-then-degrees-equal}   and Lemma \ref{lm:degrees-equal-then-minimal} we obtain the following characterization of minimality.

\begin{corollary}\rm\label{minimal-iff-degrees-equal} Let $\cA$ be a  bu-deterministic $(\Sigma,\B)$-wta. Then $\cA$  is minimal if and only if $\cA$ is slim and $\deg(\sfMon(\Sigma,\B)/_{\ker(\sfh_\cA)})=\deg(\sfMon(\Sigma,\B)/_{\sim_{\sem{\cA}}})$.
\end{corollary}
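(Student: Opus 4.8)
The plan is to obtain the corollary as an immediate synthesis of Lemmas~\ref{lm:minimal-implies-slim}, \ref{lm:minimal-then-degrees-equal}, and \ref{lm:degrees-equal-then-minimal}, so that no genuinely new argument is required at this point. First I would dispatch the ``only if'' direction: assuming $\cA$ is minimal, Lemma~\ref{lm:minimal-implies-slim} yields that $\cA$ is slim, and Lemma~\ref{lm:minimal-then-degrees-equal} yields the degree equality $\deg(\sfMon(\Sigma,\B)/_{\ker(\sfh_\cA)})=\deg(\sfMon(\Sigma,\B)/_{\sim_{\sem{\cA}}})$. For the ``if'' direction, assuming $\cA$ is slim and the two degrees coincide, Lemma~\ref{lm:degrees-equal-then-minimal} directly gives that $\cA$ is minimal.

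The real content lives inside those three lemmas, and the conceptual hinge is Lemma~\ref{lm:slim-then-h-A-surjective}: slimness forces $\sfh_\cA$ to be surjective onto $B^Q_{\le 1}$, so by the isomorphism \eqref{eq:isomorphism} one gets $\deg(\sfMon(\Sigma,\B)/_{\ker(\sfh_\cA)})=|Q|$. Combined with Lemma~\ref{cor:min-du-det-wta-exists}, which identifies $\deg(\sfMon(\Sigma,\B)/_{\sim_{\sem{\cA}}})$ with the state count of a minimal bu-deterministic wta equivalent to $\cA$, the asserted degree equality becomes exactly the statement that $|Q|$ equals that minimal count, i.e.\ that $\cA$ is minimal.

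I do not anticipate any obstacle at the level of the corollary itself; the only care needed is to verify that the chain of implications is complete. In particular one should observe that Lemma~\ref{lm:minimal-implies-slim} is used on both sides — it supplies slimness in the ``only if'' part and slimness is one of the two hypotheses consumed in the ``if'' part — and that the inequality $\deg(\sfMon(\Sigma,\B)/_{\ker(\sfh_\cA)})\ge \deg(\sfMon(\Sigma,\B)/_{\sim_{\sem{\cA}}})$ recalled before the statement is what makes ``equality of degrees'' the sharp criterion rather than a fortuitous coincidence. Accordingly, the write-up is two short lines — one per direction — each citing the relevant lemma.
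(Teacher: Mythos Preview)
Your proposal is correct and matches the paper's own proof exactly: the corollary is stated immediately after Lemmas~\ref{lm:minimal-implies-slim}, \ref{lm:minimal-then-degrees-equal}, and \ref{lm:degrees-equal-then-minimal}, with the paper simply remarking that the characterization follows from these three lemmas. Your additional commentary on the conceptual hinge (Lemma~\ref{lm:slim-then-h-A-surjective} and Lemma~\ref{cor:min-du-det-wta-exists}) accurately reflects how those lemmas are proved, though none of that is needed at the level of the corollary itself.
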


\bibliographystyle{alpha}
\phantomsection
\addcontentsline{toc}{chapter}{Bibliography} 
\bibliography{fund1.bbl}


\end{document}